\newtheorem{theorem}{Theorem} 
\newtheorem{lemma}[theorem]{Lemma}
\newtheorem{corollary}[theorem]{Corollary}
\newtheorem{observation}[theorem]{Observation}
\newtheorem{definition}[theorem]{Definition}
\title{The Geodesic Farthest-point Voronoi Diagram in a Simple Polygon\footnote{This work was supported by the NRF grant 2011-0030044 (SRC-GAIA) funded by the government of Korea and the MSIT(Ministry of Science and ICT), Korea, under the SW Starlab support program(IITP-2017-0-00905) supervised by the IITP(Institute for Information \& communications Technology Promotion).}}
\author{Eunjin Oh\thanks{Pohang University of Science and Technology,
		Korea. Email: {\tt{\{jin9082, heekap\}@postech.ac.kr}}} \and
	Luis Barba\thanks{Department of Computer Science, ETH Z\"{u}rich, Z\"{u}rich, Switzerland. Email: 
		{\tt{luis.barba@inf.ethz.ch}}}
	\and Hee-Kap Ahn\footnotemark[2]~\thanks{Corresponding author.} }
\newcommand{\floor}[1]{\ensuremath{\lfloor#1\rfloor}}
\newcommand{\ltopregion}[1]{\ensuremath{G_\mathrm{Ltop}(#1)}}
\newcommand{\rtopregion}[1]{\ensuremath{G_\mathrm{Rtop}(#1)}}
\newcommand{\lsideregion}[1]{\ensuremath{G_\mathrm{Lside}(#1)}}
\newcommand{\rsideregion}[1]{\ensuremath{G_\mathrm{Rside}(#1)}}
\newcommand{\inregion}[1]{\ensuremath{G_\mathrm{in}(#1)}}
\newcommand{\apexedset}{\ensuremath{A}}
\newcommand{\auxenvelope}{\ensuremath{U'}}
\newcommand{\auxtriangle}{\ensuremath{\tau_U'}}
\newcommand{\upperenvelope}[1]{\ensuremath{U(#1)}}
\newcommand{\uppertriangle}[1]{\ensuremath{\tau_U(#1)}}
\newcommand{\bd}{\partial}
\newcommand{\subchain}[2]{\ensuremath{\bd P[#1,#2]}}
\newcommand{\ch}{\mathsf{CH}} \newcommand{\fvd}{\mathsf{FVD}}
\newcommand{\rfvd}{\mathsf{rFVD}} 
\newcommand{\acell}[1]{\mathsf{aCell}(#1)}
\newcommand{\aacell}{\mathsf{aCell}}
\newcommand{\avd}{\mathsf{aVD}}
\newcommand{\complexity}[1]{|#1|}
\newcommand{\cell}[1]{\mathsf{Cell}(#1)}
\newcommand{\rcell}[1]{\mathsf{rCell}(#1)}
\newcommand{\interior}[1]{\mathrm{int}(#1)}
\newcommand{\apex}[1]{\textsc {a}(#1)}
\newcommand{\definer}[1]{\textsc{d}(#1)}
\newcommand{\pseudohalf}[2]{\ensuremath{D(#1,#2)}}
\newcommand{\tri}{\triangle}
\newcommand{\ff}[2][S]{\ensuremath{\textsc n(#2)}}
\begin{document}
\date{}
\maketitle
\begin{abstract}
  Given a set of point sites in a simple polygon, the geodesic farthest-point
  Voronoi diagram partitions the polygon into cells, at most
  one cell per site, such that every point in a cell has the same
  farthest site with respect to the geodesic metric. We present an $O(n\log\log n+ m\log m)$-time algorithm
  to compute the geodesic farthest-point Voronoi diagram of $m$ point sites in a simple $n$-gon.
  This improves the previously best known algorithm by Aronov et al.~[Discrete Comput. Geom. 9(3):217-255, 1993].
  In the case that all point sites are on the boundary of the simple polygon, we can compute the geodesic farthest-point Voronoi diagram in $O((n+m) \log\log n)$ time. 
\end{abstract}

\section{Introduction}
Let $P$ be a simple polygon with $n$ vertices.  For any two points $x$
and $y$ in $P$, the \emph{geodesic path} $\pi(x,y)$ is the shortest
path contained in $P$ connecting $x$ with $y$. Note that if the
line segment connecting $x$ with $y$ is contained in $P$,
then $\pi(x,y)$ is a line segment. Otherwise, $\pi(x,y)$ is a
polygonal chain whose vertices (other than its endpoints) are reflex
vertices of $P$. 
The \emph{geodesic distance} between $x$ and $y$, denoted by $d(x,y)$,
is the sum of the Euclidean lengths of the line segments in
$\pi(x,y)$. Throughout this paper, when referring to the distance
between two points in $P$, we mean the geodesic distance between them unless otherwise stated.
We refer the reader to the survey by Mitchell~\cite{m-gspno-00} in the handbook of computational geometry for more
information on geodesic paths and distances.

Let $S$ be a set of $m$ point sites contained in $P$.  For a
point $x\in P$, a (geodesic) \emph{$S$-farthest neighbor} of $x$, is a
site $\textsc{n}(P,S,x)$ (or simply $\ff{x}$) of $S$ that maximizes
the geodesic distance to~$x$.  
To ease the description, we assume that every vertex of $P$ has a unique $S$-farthest neighbor. 
This \emph{general position} condition was
also assumed by~\citet{aronov1993furthest} and~\citet{1-center}.

The \emph{geodesic farthest-point Voronoi diagram} of $S$ in $P$ 
is a subdivision of $P$ into \textit{Voronoi cells}.
Imagine that we decompose $P$ into Voronoi cells $\mathsf{Cell}(S,s)$ (or simply $\cell{s}$) for each site $s\in S$,
where $\mathsf{Cell}(S,s)$ is the set of points in $P$
that are closer to $s$ than to any other site of $S$.
Note that some cells might be empty.
The set $P \setminus \cup_{s \in S} \cell{s}$ defines the (farthest)
\emph{Voronoi tree} of $S$ with leaves on the boundary of $P$.  Each edge of this diagram is
either a line segment or a hyperbolic arc~\cite{aronov1993furthest}.
The Voronoi tree together with the set of Voronoi cells defines the
\emph{geodesic farthest-point Voronoi diagram} of $S$ (in $P$),
denoted by $\fvd[S]$ (or simply $\fvd$ if $S$ is clear from context).
We indistinctively refer to $\fvd$ as a tree or as a set of Voronoi cells.

There are similarities between the Euclidean farthest-point
Voronoi diagram and the geodesic farthest-point Voronoi diagram
(see~\cite{aronov1993furthest} for further references).  In the
Euclidean case, a site has a nonempty Voronoi cell if and only if it
is extreme, i.e., it lies on the boundary of the convex hull of the
set of sites.  Moreover, the clockwise sequence of Voronoi cells (at
infinity) is the same as the clockwise sequence of sites along the
boundary of the convex hull.  With these properties, the Euclidean farthest-point
Voronoi diagram can be computed in linear time if the convex hull of
the sites is known~\cite{aggarwal1989linear}.
In the geodesic case, a site with nonempty Voronoi cell lies on the
boundary of the geodesic convex hull of the sites.  The order of sites along the 
boundary of the geodesic convex hull is the same as the order of their Voronoi cells
along the boundary of $P$.
However, the cell of an extreme site may be empty, roughly because the
polygon is not large enough for the cell to
appear. In addition, the complexity of the bisector between two sites
can be linear in the complexity of the polygon.  

\paragraph{Previous Work.} 
Since the early 1980s many classical geometric problems have been
studied in the geodesic setting.  The problem of computing the
\emph{geodesic diameter} of a simple $n$-gon $P$ (and its
counterpart, the \emph{geodesic center}) received a lot of attention from the
computational geometry community.
The geodesic diameter of $P$ is the largest possible geodesic distance between
  any two points in $P$, and the geodesic center of $P$ is the point
  of $P$ that minimizes the maximum geodesic distance to the points in $P$.


\citet{c-tpca-82} gave the first
algorithm for computing the geodesic diameter of $P$, which runs in
$O(n^2)$ time using linear space.  \citet{suri1989computing} reduced
the time complexity to $O(n\log n)$ without increasing the space
complexity.  Later, \citet{hershberger1993matrix} presented a fast
matrix search technique, one application of which is a linear-time
algorithm for computing the diameter of $P$.  

The first algorithm for computing the geodesic center of $P$ was given by
\citet{at-cgcsp-85}, and runs in $O(n^4\log n)$ time.  This algorithm
computes a super set of the vertices of $\fvd[V]$, where $V$ is the
set of vertices of $P$.  In 1989, \citet{pollackComputingCenter}
improved the running time to $O(n\log n)$.  In a recent
paper, \citet{1-center} settled the complexity of this problem
by presenting a $\Theta(n)$-time algorithm to compute the geodesic
center of $P$.

The problem of computing the geodesic farthest-point Voronoi diagram
is a generalization of the problems of computing the geodesic center
and the geodesic diameter of a simple polygon.  For a set $S$ of $m$
points in $P$, \citet{aronov1993furthest} presented an algorithm to
compute $\fvd[S]$ in $O(n\log n+ m\log m)$ time.  While the best known
lower bound is $\Omega(n + m \log m)$, which is a lower bound known
for computing the geodesic convex hulls of $S$, it is not known
whether or not the dependence on $n$, the complexity of $P$, is linear
in the running time.  In fact, this problem was explicitly posed by
\citet[Chapter 27]{m-gspno-00} in the Handbook of Computational
Geometry.

\paragraph{Our Result.}
In this paper, we present an $O(n \log\log n+m\log m)$-time algorithm
for computing $\fvd[S]$ for a set $S$ of $m$ points in a simple
$n$-gon.  To do this, we present an $O((n+m)\log\log n)$-time
algorithm for the simpler case that all sites are on the boundary of
the simple polygon and use it as a subprocedure for the general
algorithm.

Our result is the first improvement on the computation of geodesic
farthest-point Voronoi diagrams since 1993~\cite{aronov1993furthest}.
It partially answers the question posed by Mitchell. Moreover, our
result suggests that the computation time of Voronoi diagrams has only
almost linear dependence in the complexity of the polygon.  We believe
our results could be used as a stepping stone to solve the question
posed by \citet[Chapter 27]{m-gspno-00}.  Indeed, after the
preliminary version~\cite{oba-fpgvdpbsp-16} of this paper had been
presented, Oh and Ahn~\cite{Oh-2017} presented an
$O(n+m\log m+ m\log^2 n)$-time algorithm for this problem.  They
observed that the adjacency graph of the Voronoi cells has complexity
smaller than the complexity of the Voronoi diagram and presented an
algorithm for the geodesic farthest-point Voronoi diagram based on a
polygon-sweep paradigm, which is optimal for a moderate-sized
point-set.

\paragraph{Outline.} 
We first assume that the site set is the vertex set of the input
simple polygon.  Then we present an algorithm for computing $\fvd[S]$,
which will be extended to handle the general cases in
Section~\ref{section:General set o sites} and
Section~\ref{sec:general}.  This algorithm consists of three
steps. Each section from Section~\ref{sec:first-step} to
Section~\ref{sec:third-step} describes a step of the algorithm. In the
first step, we compute the geodesic farthest-point Voronoi diagram
restricted to the boundary of the polygon.  In the second step, we
decompose recursively the interior of the polygon into smaller cells,
not necessarily Voronoi cells, until the complexity of each cell
becomes constant.  In the third step, we explicitly compute the
geodesic farthest-point Voronoi diagram in each cell and merge them to
complete the description of the Voronoi diagram.

In the first step, we compute the restriction of $\fvd[S]$ to $\bd P$
in linear time, where $\bd P$ denotes the boundary of $P$.  A similar
approach was used by \citet{aronov1993furthest}. However, their
algorithm spends $\Theta(n\log n)$ time and uses completely different
techniques.  The main tool used to speed up the algorithm is the
matrix search technique introduced by \citet{hershberger1993matrix}
which provides a ``partial'' description of $\fvd[S]\cap \bd P$ (i.e.,
the restriction of $\fvd[S]$ to the vertices of $P$.)  To extend it to
the entire boundary of $P$, we borrow some tools used by
\citet{1-center}.  This reduces the problem to the computation of
upper envelopes of distance functions which can be completed in linear
time.

In the second step, we recursively subdivide the polygon into cells.
To subdivide a cell whose boundary consists of $t$ geodesic paths, we
construct a closed polygonal path that visits roughly $\sqrt{t}$
endpoints of the $t$ geodesic paths at a regular
interval. Intuitively, to choose these endpoints, we start at the
endpoint of a geodesic path on the boundary of the cell. Then, we walk
along the boundary, choose another endpoint after skipping $\sqrt{t}$
of them, and repeat this.  We consider the geodesic paths, each
connecting two consecutive chosen endpoints.  The union of all these
geodesic paths can be computed in time linear in the complexity of the
cell~\cite{kpairpath} and subdivides the cell into smaller simple
polygons.  By recursively applying this procedure on each resulting
cell, we guarantee that after $O(\log \log n)$ rounds the boundary of
each cell consists of a constant number of geodesic paths.
While decomposing the polygon, we also compute $\fvd[S]$ restricted to
the boundary of each cell.  However, the total complexity of $\fvd[S]$
restricted to the boundary of each cell might be $\omega(n)$ in the
worst case. To resolve this problem, we subdivide each cell further so
that the total complexity of $\fvd[S]$ restricted to the boundary of
each cell is $O(n)$ for every iteration.  Each round can be completed
in linear time, which leads to an overall running time of
$O(n \log \log n)$. After the second step, we have $O(n\log\log n)$
cells in the simple polygon and we call them the \emph{base cells}.
%

In the third step, we explicitly compute the geodesic farthest-point
Voronoi diagram in each of the base cells by applying the linear-time
algorithm of computing the abstract Voronoi diagram given by Klein and
Lingas~\cite{klein1994}. To apply the algorithm, we define a new
distance function for each site whose Voronoi cell intersects the
boundary of each cell $T$ such that the distance function is
continuous on $T$ and the total complexity of the distance functions
for all sites is $O(n)$.  We show that the abstract Voronoi diagram
restricted to a base cell $T$ is exactly the geodesic farthest-point
Voronoi diagram restricted to $T$.  After computing the geodesic
farthest-point Voronoi diagrams for every base cell, they merge them
to complete the description of the Voronoi diagram.

For the case that the sites lie on the boundary of the simple polygon,
we cannot apply the matrix searching technique directly although the
other procedures still work.  To handle this, we apply the matrix
search technique with a new distance function to compute $\fvd[S]$
restricted to the vertices of $P$.  Then we consider the general case
that the sites are allowed to lie in the interior of the simple
polygon.  We subdivide the input simple polygon in a constant number
of subpolygons, and apply the previous algorithm for sites on the
boundary to these subpolygons.  The overall strategy is similar to the
one for sites on the boundary, but there are a few nontrivial
technical issues to be addressed.

\section{Preliminaries}\label{sec:preliminaries}
For any subset $A$ of $P$, let $\bd A$ and $\interior{A}$ denote the
boundary and the interior of $A$, respectively. For any two points $x,y\in \mathbb{R}^2$, we use $xy$ to denote
the line segment connecting $x$ and $y$.  Let $P$ be a simple
$n$-gon and $S$ be a set of $m$ point sites contained in $P$.
Let $V$ be the set of the vertices of $P$.
A vertex $v$ of a simple polygon is \textit{convex} (or
\textit{reflex}) if the internal angle at $v$ with respect to the simple polygon
is less than (or at least) $\pi$.

For any two points $x$ and $y$ on $\bd P$, let $\subchain{x}{y}$ denote
the portion of $\bd P$ from $x$ to $y$ in clockwise order.  We say
that three (nonempty) disjoint sets $A_1,A_2$ and $A_3$ contained in
$\bd P$ are in \emph{clockwise order} if $A_2\subset \subchain{a}{c}$
for any $a\in A_1$ and any $c\in A_3$. To ease notation, we say that
three points $x,y,z\in \bd P$ are in clockwise order if $\{x\}, \{y\}$
and $\{z\}$ are in clockwise order.

\subsection{Ordering Lemma}

Aronov et al.~\cite{aronov1993furthest} gave the following lemma which they call  
\emph{Ordering Lemma}. We make use of this lemma to compute $\fvd$ restricted to $\bd P$.
Before introducing the lemma, we need to define the \emph{geodesic convex hull} of a set $S$ of $m$ points in $P$.
We say a subset $A$ of $P$ is \textit{geodesically convex} if $\pi(x,y)\subseteq A$ for any two points $x,y \in A$.
The geodesic convex hull of $S$ is defined to be the intersection of all geodesic convex sets containing $S$.
It can be computed in $O(n+m\log m)$ time~\cite{guibasShortestPathQueries}.~\footnote{The paper~\cite{guibasShortestPathQueries} shows that
their running time is $O(n+m\log (n+m))$. But it is $O(n+m\log m)$. To see this,
observe that it is $O(n)$ for $m=O(n/\log n)$. Also, it is $O(n+m\log m)$ for $m=\Omega(n/\log n)$.}.

\begin{lemma}[{\cite[Corollary 2.7.4]{aronov1993furthest}}]
	\label{lem:ordering}
	The order of sites along $\bd \ch$ is the
	same as the order of their Voronoi cells along $\bd P$, where $\ch$ is the geodesic convex hull of $S$
	with respect to $P$.
\end{lemma}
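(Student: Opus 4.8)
The plan is to establish the correspondence by showing that the circular order of nonempty Voronoi cells along $\bd P$ refines the circular order of the sites that generate them along $\bd\ch$, and that no two cells can ``cross.'' First I would recall two structural facts that the proof will lean on. (i) Every site with a nonempty Voronoi cell lies on $\bd\ch$; this is because a site $s$ strictly inside $\ch$ is separated from any farthest point $x\in P$ by the geodesic convex hull boundary, so some vertex of $\ch$ on the far side is geodesically farther than $s$, and hence $\cell{s}=\emptyset$. (ii) For a site $s$ on $\bd\ch$, its cell $\cell{s}$, if nonempty, is a geodesically convex region and it intersects $\bd P$ in a single (possibly degenerate) connected arc. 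Fact (ii) is the crux: once we know each nonempty cell meets $\bd P$ in one connected piece, the cyclic order of the cells along $\bd P$ is well defined, and it remains only to match it to the order of sites on $\bd\ch$.

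The main step is a non-crossing argument. Suppose, for contradiction, that four points appear in clockwise order $a_1, a_2, a_3, a_4$ on $\bd P$ with $a_1,a_3 \in \cell{s}$ and $a_2,a_4 \in \cell{s'}$ for distinct sites $s\ne s'$. Then the geodesic paths $\pi(s,a_1)$ and $\pi(s,a_3)$ together with the subchain of $\bd P$ they cut off enclose a region $R$ contained in $\cell{s}$ (using geodesic convexity of the cell and the fact that geodesics between points of $\bd\ch$ stay in $\ch$), and similarly $\pi(s',a_2), \pi(s',a_4)$ bound a region $R'\subseteq\cell{s'}$. Because $a_1,a_2,a_3,a_4$ interleave on $\bd P$, the two ``bigons'' $R$ and $R'$ must cross: some geodesic from $s$ and some geodesic from $s'$ intersect at a point $q$. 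But then $d(s,q) = d(s',q)$ forces $q$ to lie on the bisector of $s$ and $s'$, contradicting that $q$ is in the interior of both $\cell{s}$ and $\cell{s'}$ (or, in the degenerate case, yielding that $s,s'$ and the crossing structure violate general position). This rules out interleaving, so the cyclic sequence of cells along $\bd P$ induces a consistent cyclic sequence of their generating sites.

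It then remains to check that this induced cyclic order of sites is exactly their order along $\bd\ch$. For this I would argue locally at $\bd\ch$: consecutive sites $s, s'$ on $\bd\ch$ are joined by a geodesic hull edge $\pi(s,s')$, and the bisector of $s$ and $s'$, restricted to the farthest-point diagram, separates $\cell{s}$ from $\cell{s'}$ and reaches $\bd P$ on the ``outer'' side of that hull edge; hence $\cell{s}$ and $\cell{s'}$ are adjacent along $\bd P$ in the same rotational sense. Combining this with the non-crossing property from the previous paragraph pins down the order completely. The step I expect to be the main obstacle is making the ``two bigons must cross'' claim fully rigorous, because one must handle (a) the case where a cell meets $\bd P$ in a single point rather than an arc, (b) the possibility that $\pi(s,a_i)$ shares a subpath with $\pi(s,a_j)$ before diverging, and (c) degeneracies where a crossing point lies on $\bd\ch$ itself; each of these is handled by appealing to the uniqueness of geodesics in a simple polygon and to the general position assumption on the vertices of $P$, but the bookkeeping is delicate. (Since the statement is cited as Corollary~2.7.4 of Aronov et al., an alternative is simply to invoke their proof; the sketch above is the route I would take to reprove it self-containedly.)
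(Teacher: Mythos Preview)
The paper does not prove this lemma; it is simply quoted from Aronov et al.\ and used as a black box, so there is no in-paper argument to compare your sketch against. That said, your sketch has a genuine gap in the non-crossing step. You form the region $R$ bounded by $\pi(s,a_1)$, $\pi(s,a_3)$, and an arc of $\bd P$, and assert $R\subseteq\cell{s}$ ``using geodesic convexity of the cell.'' But $s$ itself is essentially never in $\cell{s}$: in a \emph{farthest}-point diagram, $\cell{s}$ consists of points whose farthest site is $s$, so points near $s$ lie in other cells. Since $R$ has $s$ as a corner, $R$ cannot be contained in $\cell{s}$, and the crossing point $q$ you produce need not lie in $\cell{s}\cap\cell{s'}$. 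The equality $d(s,q)=d(s',q)$ then has no justification, and the contradiction evaporates.

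The standard repair is to drop the geodesics from $s$ altogether and use only connectedness of the cells (which is what Aronov et al.\ establish en route to their Corollary~2.7.4). Choose any path $\gamma\subseteq\overline{\cell{s}}$ from $a_1$ to $a_3$ and any path $\gamma'\subseteq\overline{\cell{s'}}$ from $a_2$ to $a_4$; because $a_1,a_2,a_3,a_4$ interleave on $\bd P$ and both paths lie in the simply connected region $P$, a Jordan-curve argument forces $\gamma$ and $\gamma'$ to meet at a point in $\overline{\cell{s}}\cap\overline{\cell{s'}}$ interior to $P$, contradicting general position (or, more carefully, the structure of the Voronoi tree). This gives the non-interleaving of boundary traces. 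Matching the resulting cyclic order on $\bd P$ to the order on $\bd\ch$ is then done via the separation property of bisectors: for $s,s'$ on $\bd\ch$, the bisecting curve separates $s$ from $\cell{s}$ in $P$. Your local argument about consecutive hull sites is headed in this direction but would need that separation statement made explicit to be complete.
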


\subsection{Apexed Triangles}
An \emph{apexed triangle} $\triangle = (a,b,c)$ with \emph{apex}
$\apex{\tri} = a$ is an Euclidean triangle contained in $P$ with an associated
distance function $g_\triangle(x)$
such that (1) $\apex{\tri}$ is a vertex of $P$, (2) there is an edge
of $\bd P$ containing both $b$ and $c$, and (3) there is a site 
$\definer{\tri}$ of $S$, called the \emph{definer} of $\triangle$,
such that
$$g_\triangle(x) = 
\begin{cases}

  \|x-\apex{\tri}\| + d(\apex{\tri},\definer{\tri}) = d(x, \definer{\tri}) & \text{if $x\in \triangle$}\\
  -\infty&\text{if $x\notin \triangle$},
\end{cases}
$$
where $\|x-y\|$ denote the Euclidean distance between $x$ and $y$.

Intuitively, $\triangle$ bounds a constant complexity region where the
geodesic distance function from $\definer{\tri}$ can be obtained by
looking only at the distance from $\apex{\tri}$.  We call the side of
an apexed triangle $\triangle$ opposite to the apex the \textit{bottom
  side} of $\triangle$.  Note that the bottom side of $\triangle$ is
contained in an edge of $P$.

The concept of the apexed triangle was introduced by \citet{1-center}
and was a key to their linear-time algorithm to compute the geodesic
center.  After computing the $V$-farthest neighbor of each vertex in linear time~\cite{hershberger1993matrix},
they show how to compute $O(n)$ apexed triangles in $O(n)$ 
time with the following property: for each point $p\in P$, there
exists an apexed triangle $\tri$ such that $p \in \triangle$ and
$\definer{\tri} = \textsc{n}(P,V,p)$.  By the definition of the apexed triangle,
we have $d(p,\textsc{n}(P,V,p))=g_\tri(p)$.  In other words, the distance from
each point of $P$ to its $V$-farthest neighbor is encoded in one of the
distance functions associated with these apexed triangles.  

More generally, we define a set of apexed triangles whose distance functions
encode the distances from the points of $P$ to their $S$-farthest neighbors.
We say a weakly simple polygon $\gamma$ is a \emph{funnel} of a point $p\in P$
if its boundary consists of three polygonal curves $\subchain{u}{v}$, $\pi(u,p)$ and $\pi(v,p)$ for some two points $u,v\in \bd P$.

\begin{definition}\label{def:cover}
  A set of apexed triangles \emph{covers} $\fvd[S]$ if for
  any site $s\in S$, the union of all apexed triangles with definer
  $s$ is a funnel $\gamma_s$ of $s$ such that $\mathsf{Cell}(S,s)\subset \gamma_s$.	
\end{definition}

\citet{1-center} gave the following lemma. In Section~\ref{section:General set o sites} and Section~\ref{sec:general},
we show that we can extend this lemma to compute a set of apexed triangles covering $\fvd[S]$
for any set $S$ of points in a simple polygon.
\begin{lemma}[\cite{1-center}]\label{lemma:Apexed triangles}
  Given a simple $n$-gon $P$ with vertex set $V$, we can compute a set 
  of $O(n)$ apexed triangles covering $\fvd[V]$ in $O(n)$ time.
\end{lemma}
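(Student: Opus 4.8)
The plan is to follow the three stages suggested by the statement: (i)~find the $V$-farthest neighbor of every vertex of $P$; (ii)~extend this to all of $\bd P$, obtaining for each site with a nonempty cell a maximal boundary arc on which it is a farthest neighbor; and (iii)~attach a funnel to each such arc and slice it into apexed triangles. For stage~(i) I would run the matrix-search algorithm of Hershberger and Suri~\cite{hershberger1993matrix}, which computes $\textsc{n}(P,V,v)$ for all $v\in V$ --- together with an implicit (shortest-path-tree) representation of the paths $\pi(v,\textsc{n}(P,V,v))$ --- in $O(n)$ time. For stage~(ii): by the Ordering Lemma (Lemma~\ref{lem:ordering}) the sites with nonempty cell appear along $\bd P$ in the same cyclic order as their cells, so the vertex data already pins down the cyclic sequence of cells and leaves only the precise transition points to be located; on each edge $e$ of $P$ the farthest-distance function is the upper envelope of functions of the form $x\mapsto\|x-a\|+d(a,s)$, one per candidate site $s$ (with $a$ the last vertex of $\pi(x,s)$), the candidate list on $e$ is a contiguous block of the cyclic order, any two such functions cross $O(1)$ times, and --- reusing the boundary machinery of~\cite{1-center} --- all these envelopes can be produced in $O(n)$ total time. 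The delicate point here is that a site may be a farthest neighbor in the interior of an edge without being one at either endpoint of that edge; this is controlled by the same ordering structure. The outcome of stage~(ii) is, for each site $s$ with nonempty cell, a single (by Lemma~\ref{lem:ordering}) maximal arc $\subchain{u_s}{v_s}$ of $\bd P$ on which $s$ is a farthest neighbor; these arcs tile $\bd P$.

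For stage~(iii), for each such $s$ set $\gamma_s:=\bigcup_{x\in\subchain{u_s}{v_s}}\pi(x,s)$, a funnel of $s$ with mouth $\subchain{u_s}{v_s}$, apex $s$, and two concave side chains $\pi(s,u_s)$ and $\pi(s,v_s)$. I claim $\cell{s}\subseteq\gamma_s$. Let $p\in\cell{s}$; the case $p\in\bd P$ is immediate, so assume $p$ is interior and extend the last segment of $\pi(s,p)$ straight past $p$ until it first meets $\bd P$, at a point $q$. Then $\pi(s,p)\cup pq$ is locally geodesic, hence equals $\pi(s,q)$, so $d(q,s)=d(q,p)+d(p,s)$; and for every site $s'$, $d(q,s')\le d(q,p)+d(p,s')\le d(q,p)+d(p,s)=d(q,s)$, so $s$ is a farthest neighbor of $q$. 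Therefore $q\in\subchain{u_s}{v_s}$ by definition of that arc, and $p\in\pi(s,q)\subseteq\gamma_s$.

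Next I would cut $\gamma_s$ by the shortest-path map of $s$ restricted to $\gamma_s$. In a funnel this map partitions $\gamma_s$ into triangles: one per vertex on the two side chains --- with apex that vertex, which is a reflex vertex of $P$ since it lies on a geodesic, and base an interval of $\subchain{u_s}{v_s}$ --- plus one triangle with apex $s$ (a vertex of $P$, as $S=V$) covering the part of $\gamma_s$ visible from $s$. Finally split each of these triangles along the edges of $P$ that its base spans, so that every resulting triangle has its base on a single edge of $P$. For a point $x$ in such a triangle $\triangle$ with apex $a$ we have $\pi(s,x)=\pi(s,a)\cup ax$, hence $d(x,s)=\|x-a\|+d(a,s)$; thus $\triangle$ is a valid apexed triangle with $\apex{\triangle}=a$, $\definer{\triangle}=s$, and exactly the prescribed distance function, and by construction the union of all triangles with definer $s$ equals $\gamma_s$. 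Since $\cell{s}\subseteq\gamma_s$ and $\gamma_s$ is a funnel of $s$, the whole collection covers $\fvd[V]$ in the sense of Definition~\ref{def:cover}.

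The main obstacle is to show that this construction yields only $O(n)$ apexed triangles and can be carried out in $O(n)$ time, even though a single funnel may be large and distinct funnels may overlap. The contributions of the mouths $\subchain{u_s}{v_s}$ --- to both the triangle count and the running time --- sum to $O(n)$ because the mouths tile $\bd P$; the real work is to bound the total size of the side chains $\pi(s,u_s)$ and $\pi(s,v_s)$ over all $s$, and to compute them all within that bound without building a separate shortest-path tree per site (which would cost $\Theta(nm)$). For the size bound I would use a charging argument showing that each vertex of $P$ is the apex of only $O(1)$ of the produced apexed triangles, exploiting the $O(n)$-complexity of the restriction of $\fvd[V]$ to $\bd P$ from stage~(ii), the fact that consecutive funnels meet at the shared endpoint of their mouths (where the corresponding edge of the Voronoi tree ends), and the angular nesting of the geodesics that may turn at a fixed vertex of $P$. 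For the running time I would extract all side chains simultaneously as a union of $O(n)$ geodesic paths, computed in linear time by the path-union result of~\cite{kpairpath}. Making this charging argument and the accompanying bookkeeping precise is the crux; granting it, the total output size and running time are $O(n)$.
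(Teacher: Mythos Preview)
The paper does not give its own proof of this lemma; it is imported from~\cite{1-center}, with a pointer to Lemmas~5.2--5.3 and Corollaries~6.1--6.2 there. The construction in~\cite{1-center} proceeds from the \emph{vertex} farthest-neighbor data only: for each transition edge $uv$ one forms the hourglass $H_{uv}$ bounded by $uv$, $\pi(v,\ff{u})$, $\subchain{\ff{v}}{\ff{u}}$ and $\pi(\ff{v},u)$, decomposes it via the shortest-path maps of the relevant sites into apexed triangles, and the crucial combinatorial bound is $\sum_e |H_e| = O(n)$ (their Corollary~3.8, invoked again in Section~\ref{sec:transition_edge_fvd} of the present paper). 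At no point is the exact boundary diagram $\fvd\cap\bd P$ required as input to that construction.

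Your route is different: you first compute $\fvd\cap\bd P$ exactly (stage~(ii)), and only then place the funnel $\gamma_s$ over the exact arc $\cell{s}\cap\bd P$. This introduces a circularity. In the present paper, the $O(n)$ computation of $\fvd\cap\bd P$ (Theorem~\ref{thm:restrict-boundary}) \emph{uses} the apexed triangles of Lemma~\ref{lemma:Apexed triangles}; your appeal to ``the boundary machinery of~\cite{1-center}'' does not escape this, because that machinery \emph{is} the hourglass/apexed-triangle apparatus you are trying to derive. Concretely, the ``functions of the form $x\mapsto\|x-a\|+d(a,s)$'' whose upper envelope you propose to take on each edge are, one for one, the functions $g_\triangle$ attached to apexed triangles --- so enumerating them in $O(n)$ total is exactly the statement you are proving. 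Without that apparatus, the only known route to $\fvd\cap\bd P$ is the $O(n\log n)$ algorithm of~\cite{aronov1993furthest}, which already breaks the time bound.

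Separately, you correctly flag that the $O(n)$ bound on the total length of the side chains --- equivalently, on the number of apexed triangles produced --- is ``the crux,'' and you leave it as a charging argument to be made precise. That is the entire content of the lemma; in~\cite{1-center} it follows from the hourglass complexity bound rather than from the per-vertex angular-nesting charging you sketch. So as written, the proposal has a genuine gap at the one step that carries the weight, and the surrounding scaffolding (stage~(ii)) rests on the result it is meant to establish.
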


While Lemma~\ref{lemma:Apexed triangles} is not explicitly stated by
\citet{1-center}, a closer look at the proofs of Lemmas 5.2 and 5.3,
and Corollaries 6.1 and 6.2 reveals that this lemma holds.
Lemma~\ref{lemma:Apexed triangles}
states that for each vertex $v$ of $P$, the set of apexed triangles with
definer $v$ forms a connected component. In particular, the union of
their bottom sides is a connected chain along $\bd P$.  Moreover,
these apexed triangles are interior disjoint by the definition of apexed triangles.

\subsection{The Refined Geodesic Farthest-point Voronoi Diagram}

Assume that we are given a set of $O(n+m)$ apexed triangles covering $\fvd[S]$.
We consider a refined version of $\fvd[S]$ which we call the  
\textit{refined geodesic farthest-point Voronoi diagram} defined as
follows: for each site $s\in S$, the Voronoi cell $\cell{s}$ of $\fvd[S]$
is subdivided by the apexed triangles with definer $s$. That is, for
each apexed triangle $\triangle$ with definer $s$, we define a
\emph{refined cell} $\rcell{\triangle} = \triangle \cap
\cell{s}$, where $\tri$ is the union of $\interior{\tri}$ and its bottom side (excluding the corners of $\tri$).  
Since any two apexed triangles $\triangle_1$ and
$\triangle_2$ with the same definer are interior disjoint,
$\rcell{\triangle_1}$ and $\rcell{\triangle_2}$ are also interior
disjoint.  We denote the set $P \setminus \cup_{\triangle}
\rcell{\triangle}$ by $\rfvd$.  Then, $\rfvd$ forms a tree
  consisting of arcs and vertices.  Notice that each arc of $\rfvd$
is a part of either the bisector of two sites or a side of
an apexed triangle.  Since we assume that the number of the apexed triangles is
$O(n+m)$, the complexity of $\rfvd$ is still $O(n+m)$.
(Lemma~2.8.3 in~\cite{aronov1993furthest} shows that the complexity of $\fvd$ is $O(n+m)$.)

\begin{lemma}
  \label{lem:ray_in_cell}
  For an apexed triangle $\tri$ and a point $x$ in $\rcell{\triangle}$,
  the line segment $xy$ is
  contained in $\rcell{\triangle}$, where $y$ is the point on the
  bottom side of $\triangle$ hit by the ray from $\apex{\triangle}$
  towards $x$.
\end{lemma}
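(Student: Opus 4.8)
The plan is to split the claimed inclusion $xy\subseteq\rcell{\triangle}$ into two parts, $xy\subseteq\triangle$ and $xy\subseteq\cell{\definer{\triangle}}$, and prove each separately. Throughout, write $a=\apex{\triangle}$ and $s=\definer{\triangle}$, and recall property~(3) of apexed triangles: $g_\triangle(p)=\|p-a\|+d(a,s)=d(p,s)$ for every $p\in\triangle$. Informally, every geodesic from a point of $\triangle$ to $s$ passes through the apex $a$, and this rigidity is exactly what drives the argument.

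For the inclusion $xy\subseteq\triangle$, note first that a Euclidean triangle is convex, so the closed segment between $x$ and $y$ lies in the closed triangle; it remains to check that each of its points lies in $\triangle$ in the sense used for refined cells, i.e.\ in $\interior{\triangle}$ together with the bottom side minus its two corners. Since $x\in\rcell{\triangle}\subseteq\triangle$, the point $x$ is neither the apex $a$ nor a corner of $\triangle$, so $x\in\interior{\triangle}$ or $x$ lies on the relative interior of the bottom side. In the latter case $y=x$ and there is nothing to prove. Otherwise $x\in\interior{\triangle}$, and the ray from $a$ through $x$ leaves $\triangle$ through the bottom side; it cannot meet a corner $b$ or $c$, since that would put $x$ on the side $ab$ or $ac$, contradicting $x\in\interior{\triangle}$. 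Hence $y$ lies on the relative interior of the bottom side and $xy\setminus\{y\}\subseteq\interior{\triangle}$, so $xy\subseteq\triangle$.

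For the inclusion $xy\subseteq\cell{s}$, fix an arbitrary $z\in xy$. The points $a$, $x$, $z$, $y$ are colinear and occur in this order along the ray from $a$, so $\|z-a\|=\|z-x\|+\|x-a\|$. Using $z\in\triangle$ (just proved) together with $x\in\triangle$, we obtain
\[
d(z,s)=g_\triangle(z)=\|z-a\|+d(a,s)=\|z-x\|+\bigl(\|x-a\|+d(a,s)\bigr)=\|z-x\|+d(x,s).
\]
Now let $s'$ be any site of $S$. By the triangle inequality for the geodesic metric, $d(z,s')\le d(z,x)+d(x,s')=\|z-x\|+d(x,s')$, and since $x\in\cell{s}$ we have $d(x,s')\le d(x,s)$. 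Combining, $d(z,s')\le\|z-x\|+d(x,s)=d(z,s)$, so $z\in\cell{s}$. Hence $z\in\triangle\cap\cell{s}=\rcell{\triangle}$, and as $z\in xy$ was arbitrary, $xy\subseteq\rcell{\triangle}$.

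I do not expect a genuine obstacle: the whole content is the one-line triangle-inequality computation in the last paragraph, which works precisely because distances inside $\triangle$ are routed through the apex. The only care needed is in the degenerate configurations — the apex and the corners of $\triangle$ being excluded from $\triangle$, and the possibility $x=y$ — which are handled above; and if one prefers the Voronoi cells to be relatively open, the same computation yields the required strict inequality.
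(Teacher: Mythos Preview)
Your proof is correct and follows essentially the same approach as the paper: both arguments hinge on the identity $d(z,s)=d(x,s)+\|z-x\|$ (geodesics through the apex) combined with the triangle inequality for every other site. You are more explicit than the paper about the inclusion $xy\subseteq\triangle$ and the corner/apex degeneracies, which the paper simply takes for granted.
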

\begin{proof}
  Let $p$ be a point on $xy$.  We
  have $d(\definer{\triangle},p) = d(\definer{\triangle},x)+d(x,p)$.
  On the other hand, $d(s,p) \leq d(s,x)+d(x,p)$ by the triangle
  inequality for any site $s$.  Since $d(s,x) <
  d(\definer{\triangle},x)$ for any site $s$ other than
  $\definer{\triangle}$, we have $d(s,p) < d(\definer{\triangle},p)$,
  which implies that $p$ lies in $\rcell{\triangle}$.
\end{proof}
\begin{corollary}\label{cor:ray-in-non-refined-cell}
  For any site $s\in S$ and any point $x\in \cell{s}$, the line
  segment $xy$ is contained in $\cell{s}$, where $y$ is the point on
  $\bd P$ hit by the ray from the neighbor of $x$ along $\pi(s,x)$
  towards $x$.
\end{corollary}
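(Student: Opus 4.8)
The plan is to reduce the statement to Lemma~\ref{lem:ray_in_cell} by exhibiting an apexed triangle that certifies the geodesic from $s$ to $x$. Recall that at this point we are given a set of apexed triangles covering $\fvd[S]$ in the sense of Definition~\ref{def:cover}. First I would use that $\cell{s}$ is contained in the funnel $\gamma_s$, and that $\gamma_s$ is exactly the union of all apexed triangles with definer $s$; hence the point $x \in \cell{s}$ lies in some apexed triangle $\triangle$ with $\definer{\triangle} = s$. Since $x$ lies in both $\triangle$ and $\cell{s}$, we have $x \in \triangle \cap \cell{s} = \rcell{\triangle}$, so Lemma~\ref{lem:ray_in_cell} is applicable to $\triangle$ and $x$.

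Next I would check that the ray in the corollary is exactly the ray in Lemma~\ref{lem:ray_in_cell}. By the defining property of an apexed triangle, $d(s,x) = g_\triangle(x) = \|x - \apex{\triangle}\| + d(\apex{\triangle}, s)$, and the segment $\apex{\triangle}x$ lies in $\triangle \subseteq P$. By the triangle inequality for the geodesic distance, this equality is only possible if $\apex{\triangle}x$ is itself the geodesic $\pi(\apex{\triangle},x)$ and $\apex{\triangle}$ lies on $\pi(s,x)$; thus $\pi(s,x)$ is the concatenation of $\pi(s,\apex{\triangle})$ with the straight segment $\apex{\triangle}x$. Consequently the neighbor of $x$ along $\pi(s,x)$ is $\apex{\triangle}$, and the ray from that neighbor towards $x$ coincides with the ray from $\apex{\triangle}$ towards $x$ (in the degenerate case where $\pi(s,x)$ is a single segment, $\apex{\triangle}$ lies on $sx$ and the two rays still share the same line).

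Then I would invoke Lemma~\ref{lem:ray_in_cell}: if $y'$ denotes the point where the ray from $\apex{\triangle}$ towards $x$ meets the bottom side of $\triangle$, then $xy' \subseteq \rcell{\triangle} \subseteq \cell{s}$. To conclude, I would show that $y' = y$, i.e., that $y'$ is the first point of $\bd P$ that this ray hits beyond $x$. This holds because the bottom side of $\triangle$ is contained in an edge of $P$, so $y' \in \bd P$, while the relative interior of $xy'$ lies in the interior of $\triangle$ and hence in the interior of $P$, so the ray cannot meet $\bd P$ strictly between $x$ and $y'$. Therefore $xy = xy' \subseteq \cell{s}$.

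The step I expect to require the most care is the handling of degenerate positions of $x$ on $\bd\triangle$. If $x$ lies on one of the two sides of $\triangle$ incident to $\apex{\triangle}$, it may fail to lie in the restricted version of $\triangle$ used to define $\rcell{\triangle}$, in which case one must select an apexed triangle with definer $s$ that contains $x$ in its interior-or-bottom-side part (or argue by continuity from nearby interior points); and if $x$ lies on the bottom side of $\triangle$, then $x \in \bd P$ and the ray already meets $\bd P$ at $x$, so the statement is trivial. Away from these cases the reduction above goes through directly, so the corollary really amounts to choosing the right apexed triangle and then quoting Lemma~\ref{lem:ray_in_cell}.
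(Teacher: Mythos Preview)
Your reduction via apexed triangles is correct, but it is more elaborate than what the paper intends. The paper gives no separate proof because the \emph{argument} of Lemma~\ref{lem:ray_in_cell} transfers verbatim: for any $p\in xy$ one has $d(s,p)=d(s,x)+d(x,p)$ (the last edge of $\pi(s,x)$ simply extends to $p$), while $d(s',p)\le d(s',x)+d(x,p)<d(s,x)+d(x,p)$ for every other site $s'$; hence $p\in\cell{s}$. This direct proof never mentions apexed triangles, so it does not rely on the covering assumption of Section~2.3 and sidesteps entirely the boundary-of-$\triangle$ degeneracies you flag at the end.

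Your route---pick $\triangle$ with $\definer{\triangle}=s$ containing $x$, identify $\apex{\triangle}$ with the penultimate vertex of $\pi(s,x)$, then quote Lemma~\ref{lem:ray_in_cell}---is a legitimate alternative and nicely illustrates how the apexed-triangle covering encodes the shortest-path map. The cost is exactly what you noticed: when $x$ sits on a non-bottom side of $\triangle$ (so $x\notin\rcell{\triangle}$ under the paper's half-open convention), you must either switch to a neighboring apexed triangle with the same definer or pass to the closure by a limiting argument. Both fixes work, but the paper's one-line triangle-inequality proof avoids the case analysis altogether.
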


Throughout this paper, we use $\complexity{C}$ to denote the
number of edges of $C$ for a simple polygon $C\subseteq P$.  
For a curve $\gamma$, we use $\complexity{\rfvd\cap\gamma}$ 
to denote the number of the refined cells intersecting
$\gamma$.
For ease of description, we abuse the term ray slightly such that
  the ray from $x\in P$ in a direction denotes the line segment $xy$ of the
  halfline from $x$ in the direction, where $y$ is the first point of $\bd P$
  encountered along the halfline from $x$.

From Section~\ref{sec:first-step} to Section~\ref{sec:third-step}, we will make the assumption that $S$ is the set
of the vertices of $P$. This assumption is general enough as we show
how to extend the result to the case when $S$ is an arbitrary set of
sites contained in $\bd P$ (Section~\ref{section:General set o sites}) and in $P$ (Section~\ref{sec:general}).
The algorithm for computing $\fvd[V]$ consists of three steps. 
Each section from Section~\ref{sec:first-step} to Section~\ref{sec:third-step} describes each step.

\section{Computing \texorpdfstring{$\fvd$}{FVD} Restricted to \texorpdfstring{$\bd P$}{bd P}}
\label{sec:first-step}
Using the algorithm in \cite{1-center}, we compute a set $\mathcal{A}$ of $O(n)$ apexed
triangles covering $\fvd[S]=\fvd[V]$ in $O(n)$ time.
Recall that the apexed triangles with the same definer are
interior disjoint and have their bottom sides on $\bd P$ whose union
forms a connected chain along $\bd P$. Thus, such apexed triangles can be sorted along $\bd P$
with respect to their bottom sides.

\begin{lemma}
  Given a set $\tau_s$ of all apexed triangles of $\mathcal{A}$ with definer~$s$
  for a site $s$ of $S$, we can sort the apexed
  triangles in $\tau_s$ along $\bd P$ with respect to their bottom
  sides in $O(|\tau_s|)$ time.
\end{lemma}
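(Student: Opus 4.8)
The plan is to reduce the sorting to tracing a single chain. Let $k=|\tau_s|$, and for each $\triangle\in\tau_s$ write its bottom side as $b_\triangle c_\triangle$ with $b_\triangle$ before $c_\triangle$ in clockwise order along $\bd P$. As recalled right after Lemma~\ref{lemma:Apexed triangles}, the bottom sides of the triangles in $\tau_s$ are pairwise interior-disjoint and their union is a connected sub-chain of $\bd P$; consequently, read in clockwise order, they \emph{tile} this sub-chain. Equivalently, after renumbering there is an ordering $\triangle_1,\dots,\triangle_k$ of $\tau_s$ with $c_{\triangle_i}=b_{\triangle_{i+1}}$ for every $1\le i<k$, in which $b_{\triangle_1}$ is the only clockwise-first endpoint that is not also the clockwise-last endpoint of another bottom side. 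Recovering this ordering is exactly the statement of the lemma.

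To do so, I would build a dictionary $D$ with $D[b_\triangle]=\triangle$ for every $\triangle\in\tau_s$ (in $O(k)$ time). For each $\triangle$ I then look up $c_\triangle$ in $D$; if it returns a triangle $\triangle'$, I record the successor $\sigma(\triangle)=\triangle'$ and flag $\triangle'$ as ``not the head''. By the tiling property every triangle except $\triangle_1$ is flagged exactly once, so the unique unflagged triangle is the head (if all are flagged, the chain is all of $\bd P$ and any triangle may be taken as the head). Finally $\triangle_1,\sigma(\triangle_1),\sigma(\sigma(\triangle_1)),\dots$ lists $\tau_s$ in clockwise order. Correctness is immediate from the tiling property, and the running time is $O(k)$ provided each dictionary operation takes $O(1)$ time.

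The only delicate point --- and the step I expect to be the real obstacle --- is realizing $D$ in \emph{deterministic} $O(k)$ time, since $b_\triangle$ and $c_\triangle$ are arbitrary points of $\bd P$ rather than small integers, and sorting arbitrary reals along one edge would already cost $\Omega(k\log k)$ in the comparison model. I would resolve this by using that each apexed triangle of $\mathcal{A}$ is produced, by the construction behind Lemma~\ref{lemma:Apexed triangles}, together with the edge of $\bd P$ carrying its bottom side: bucketing the triangles of $\tau_s$ by this edge index costs $O(k)$, and by the tiling property the occurring edges form one contiguous run of at most $k$ edges along $\bd P$, so the successor matching localizes to each edge, where the bottom sides are interior-disjoint collinear segments tiling a sub-segment. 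The remaining, construction-dependent task is to perform this one-dimensional matching on a single edge in time linear in the number of bottom sides on it; I expect this to follow either from the order in which the construction of Lemma~\ref{lemma:Apexed triangles} emits its apexed triangles, or from a bucketing argument over the $O(n)$ canonical points on $\bd P$ (vertices and ray hits) that the construction generates in boundary order. Pinning down this last step is where the work lies; the rest is the routine chain traversal above.
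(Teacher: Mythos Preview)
Your approach is essentially the paper's: it too builds a hash table to recover the adjacencies among the triangles of $\tau_s$ and then reads off the resulting linked list. The only cosmetic difference is that the paper keys on the two non-bottom sides (``chords'') of each apexed triangle rather than on the bottom-side endpoints; since adjacent triangles in $\tau_s$ meet along such a chord, this encodes exactly the same successor relation as your $c_{\triangle_i}=b_{\triangle_{i+1}}$. The paper does not engage with the deterministic-hashing issue you flag---it simply says ``hash-table'' and moves on---so your edge-bucketing refinement already goes beyond what the paper spells out.
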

\begin{proof}
  The bottom side of an apexed triangle is contained in an edge of 
  $\bd P$, and the other two sides are chords of $P$ (possibly flush
  with $\bd P$).  Assume that these chords are oriented from its apex
  to its bottom side.  Using a hash-table storing the chords of the
  apexed triangles in $\tau_s$, we can link each of these chords to
  its neighboring triangles (and distinguish between left and right
  neighbors).  In this way, we can retrieve a linked list with all the
  triangles in $\tau_s$ in sorted order along $\bd P$ in $O(|\tau_s|)$ time.
\end{proof}

\subsection{Computing the \texorpdfstring{$S$-Farthest}{S-Farthest} Neighbors of the Sites}
The following lemma was used by \citet{1-center} and is based on the
matrix search technique proposed by~\citet{hershberger1993matrix}.

\begin{lemma}[\cite{hershberger1993matrix}]\label{lemma:Matrix lemma}
	We can compute the $S$-farthest neighbor of each vertex of $P$ in
	$O(n)$ time.
\end{lemma}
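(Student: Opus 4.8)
The plan is to reduce the statement to a row-maxima computation on an implicitly defined distance matrix, and then invoke the matrix-searching technique of Hershberger and Suri~\cite{hershberger1993matrix}. List the vertices of $P$ as $v_1,\dots,v_n$ in clockwise order along $\bd P$, and let $D$ be the $n\times n$ matrix with $D[i][j] = d(v_i,v_j)$. Since $S = V$, the $S$-farthest neighbor of $v_i$ is the vertex attaining $\max_j D[i][j]$, so it suffices to compute all row maxima of $D$ in $O(n)$ time. The point, of course, is that we can afford neither to build $D$ explicitly nor to answer an arbitrary probe $D[i][j]$ in constant time.

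First I would record the structural property that drives everything: in a simple polygon, geodesic paths between boundary points do not cross transversally, so two such paths are either disjoint or share a common subpath. Hence, whenever four vertices are arranged along $\bd P$ so that the pairs $(v_i,v_j)$ and $(v_{i'},v_{j'})$ interleave, the paths $\pi(v_i,v_{j'})$ and $\pi(v_{i'},v_j)$ overlap, which yields a quadrangle inequality among the four pairwise distances. This makes $D$ totally monotone --- in the sense needed to compute row maxima --- on every block of consecutive rows and columns whose index ranges are disjoint and appear in order along $\bd P$. On such a block the standard linear-time row-maxima algorithm for totally monotone matrices touches only $O(\text{block size})$ entries of $D$, in a predictable access pattern, and the cyclic matrix $D$ is covered by a bounded number of such blocks.

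The hard part will be the cost of the probes: a single query $D[i][j] = d(v_i,v_j)$ is not an $O(1)$ operation, so merely plugging a shortest-path oracle into the row-maxima algorithm is too slow. This is precisely the gap that the matrix-searching technique of~\cite{hershberger1993matrix} closes: by recursively decomposing $P$ along balanced geodesic chords and maintaining shortest-path-tree structures in the resulting subpolygons, all the probes requested during one execution of the row-maxima algorithm on a block are served in $O(n)$ total time. I would therefore not reprove this machinery; since our instance is exactly theirs --- $S = V$ is the vertex set of $P$ --- the lemma follows by applying their result directly, giving the stated $O(n)$ running time.
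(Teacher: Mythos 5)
Your proposal is correct and matches the paper's treatment: the paper offers no independent proof of this lemma, attributing it directly to Hershberger and Suri's matrix-searching result, whose content is exactly the reduction you describe (all geodesic farthest vertex--vertex neighbors via row maxima of the implicitly represented, blockwise totally monotone distance matrix, with the probe cost amortized by their geodesic decomposition). Since you correctly defer the heavy machinery to the cited work rather than reproving it, your argument is essentially the same as the paper's citation-level justification.
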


Using Lemma~\ref{lemma:Matrix lemma}, we mark the vertices of $P$ that
are $S$-farthest neighbors of at least one vertex of $P$.  Let $M$
denote the set of marked vertices of $P$.  Note that $M$ consists of the vertices of $P$ each of whose
Voronoi region contains at least one vertex of~$P$.

We call an edge $uv$ a \emph{transition edge} if $\ff{u} \neq \ff{v}$.
Let $uv$ be a transition edge of $P$ such that $u$ is the clockwise
neighbor of $v$ along $\bd P$.  Recall that we already have $\ff{u}$
and $\ff{v}$ and note that $v,u, \ff{v}, \ff{u}$
are in clockwise order by Lemma~\ref{lem:ordering}.  Let $w$ be a vertex of $P$ such that $\ff{v},
w, \ff{u}$ are in clockwise order.  By Lemma~\ref{lem:ordering}, if
there is a point $x$ on $\bd P$ whose $S$-farthest neighbor is $w$, then
$x$ must lie on $uv$.  In other words, the Voronoi cell $\cell{w}$
restricted to $\bd P$ is contained in~$uv$ and hence, there is no
vertex $v'$ of $P$ such that $\ff{v'} = w$.
For a nontransition edge $uv$ such that
$\ff{u}=\ff{v}$, we know that  $\ff{u}=\ff{x}=\ff{v}$ for any
point $x\in uv$. 
Therefore, to complete the description of $\fvd$
restricted to $\bd P$, it suffices to compute $\fvd[S]$ restricted to the 
transition edges.  

\subsection{Computing \texorpdfstring{$\rfvd$}{rFVD} Restricted to a Transition Edge}
\label{sec:transition_edge_fvd}
Let $uv$ be a transition edge of $P$ such that $u$ is the clockwise neighbor
of $v$.  Without loss of generality, we
assume that $uv$ is horizontal and $u$ lies to the left of $v$.
Recall that if there is a site $s$ with $\cell{s}\cap uv \neq \emptyset$, then
$s$ lies in $\subchain{\ff{v}}{\ff{u}}$.
 Thus, to compute $\rfvd\cap uv$, it
is sufficient to consider the apexed triangles of $\mathcal{A}$ with definers in
$\subchain{\ff{v}}{\ff{u}}$.  Let $\apexedset$ be the set of apexed
triangles of $\mathcal{A}$ with definers in $\subchain{\ff{v}}{\ff{u}}$.

We give a procedure to compute $\rfvd \cap uv$ in
$O(|\apexedset|)$ time using the sorted lists of the apexed triangles
with definers in $\subchain{\ff{v}}{\ff{u}}$.  Once it is done for all
transition edges, we obtain the refined geodesic farthest-point Voronoi
diagram restricted to $\bd P$ in $O(n)$ time.  Let $s_1=
\ff{u},s_2,\ldots,s_{\ell}=\ff{v}$ be the sites lying on
$\subchain{\ff{v}}{\ff{u}}$ in counterclockwise order along $\bd P$.
See Figure~\ref{fig:boundary}.

\begin{figure}
	\begin{center}
		\includegraphics[width=0.4\textwidth]{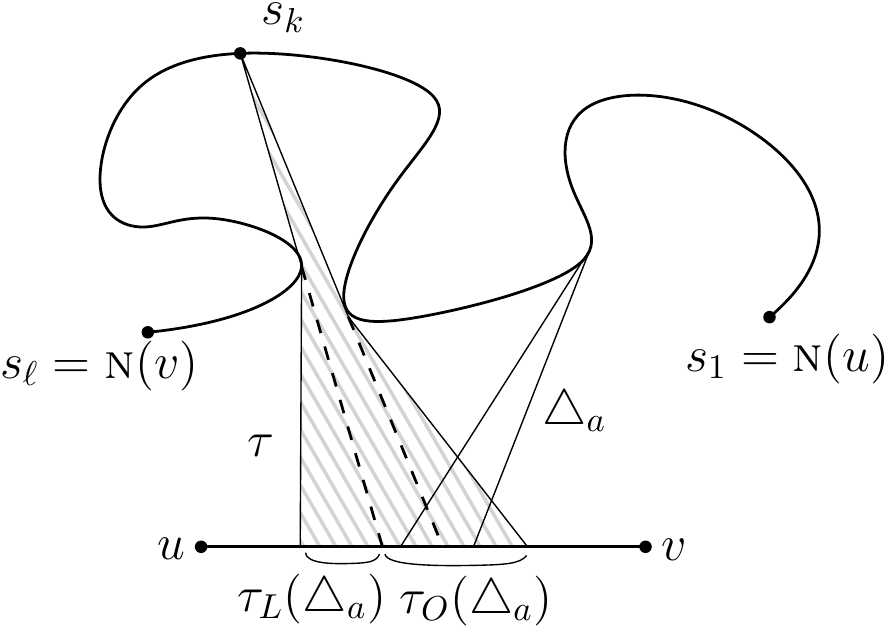}
		\caption {\small $\tau$ is the list of the three apexed triangles with definer $s_k$ sorted along $uv$.
			$\tri_a$ overlaps with the two right apexed triangles of $\tau$ while it does not overlap with the leftmost one.
			$\tau_R(\tri_a)$ is empty.}
		\label{fig:boundary}
	\end{center}
\end{figure}

\subsubsection{Upper Envelopes and \texorpdfstring{$\rfvd$}{rFVD}}
Consider any $t$ functions $f_1,\ldots,f_t$ with $f_j : D \rightarrow
\mathbb{R}\cup\{-\infty\}$ for $1 \leq j \leq t$, where $D$ is a subset of 
$\mathbb{R}^2$.  We define the \textit{upper envelope} of $f_i$'s as the
piecewise maximum of $f_i$'s. Moreover, we say that a function
$f_j$ \textit{appears} on the upper envelope if $f_j(x)\geq f_i(x)$ and $f_j(x)\in\mathbb{R}$ at some point $x\in D$
for any other functions $f_i$.

Each apexed triangle $\triangle\in A$ has a distance function $g_\triangle$ such that
$g_\triangle(x) = -\infty$ for a point $x\notin\triangle$ and
$g_\triangle(x)=d(\definer{\triangle},x)$ for a point $x\in\triangle$.
In this subsection, we restrict the domain of the distance functions
to $uv$.   By definition, the upper envelope of $g_\triangle$ for all apexed triangles
$\triangle \in \apexedset$ on $uv$ coincides with $\rfvd\cap uv$ in
its projection on $uv$.  We consider the sites one by
one from $s_1$ to $s_\ell$ in order and compute the upper envelope of $g_\triangle$ for all
apexed triangles $\triangle \in \apexedset$ on $uv$.

While the upper envelope of $g_\triangle$ for all apexed triangles
$\triangle \in \apexedset$ is continuous, the upper envelope
of $g_{\triangle'}$ of all apexed triangles $\triangle'$ with definers from
$s_1$ up to $s_k$ (we simply say the upper envelope for sites
from $s_1$ to $s_k$) might be discontinuous at some point on $uv$ for $1\leq k<\ell$. 
We let $\upperenvelope{s_k}$ be the leftmost connected component of the upper envelope for sites from $s_1$ to $s_k$ along $uv$.
By definition, $\upperenvelope{s_\ell}=\upperenvelope{\ff{v}}$ is the upper
envelope of the distance functions of all apexed triangles in
$\apexedset$.  Note that $\rcell{\triangle} \cap uv=\emptyset$ for some
apexed triangle $\triangle\in\apexedset$.  Thus the distance function
of some apexed triangle might not appear on $\upperenvelope{s_k}$.
Let $\uppertriangle{s_k}$ be the list of the apexed
triangles sorted in the order of their distance functions appearing on $U(s_k)$.
If $\definer{\triangle_i}\neq\definer{\triangle_{i+1}}$  for any two consecutive apexed triangles $\triangle_i$
and $\triangle_{i+1}$ of $\uppertriangle{s_k}$, the
bisector of $\definer{\triangle_i}$ and $\definer{\triangle_{i+1}}$
crosses the intersection of the bottom sides of $\triangle_i$ and
$\triangle_{i+1}$.

\subsubsection{Computing the Upper Envelope \texorpdfstring{$\upperenvelope{s_\ell}$}{U(sl)}}
Suppose that we have $\upperenvelope{s_{k-1}}$ and
$\uppertriangle{s_{k-1}}$ for some index $2\leq k\leq \ell$.  We
 compute $\upperenvelope{s_k}$ and $\uppertriangle{s_k}$ from
$\upperenvelope{s_{k-1}}$ and $\uppertriangle{s_{k-1}}$ as follows.
We use two auxiliary lists $\auxenvelope$ and
$\auxtriangle$ which are initially set to $\upperenvelope{s_{k-1}}$
and $\uppertriangle{s_{k-1}}$.  We update $\auxenvelope$ and
$\auxtriangle$ until they finally become $\upperenvelope{s_k}$ and
$\uppertriangle{s_k}$, respectively.
For simplicity, we use $U=\upperenvelope{s_k}$, $\tau_U=\uppertriangle{s_k}$ and $s=s_k$.

Let $\tau$ be the list of the apexed triangles of $A$ with definer $s$
sorted along $\bd P$ with respect to their bottom sides.  For any
apexed triangle $\tri$, we denote the list of the apexed triangles in
$\tau$ overlapping with $\triangle$ in their bottom sides by
$\tau_O(\triangle)$.  Also, we denote the lists of the apexed
triangles in $\tau\setminus \tau_O(\triangle)$ lying left to
$\triangle$ and lying right to $\triangle$ along $uv$ with respect to their
bottom sides by $\tau_L(\triangle)$ and $\tau_R(\triangle)$,
respectively.
See Figure~\ref{fig:boundary}.

Let $\triangle_a$ denote the the rightmost apexed
triangle of $\auxtriangle$ along $uv$.  With respect to $\triangle_a$, we
partition $\tau$ into three disjoint sublists $\tau_L(\triangle_a)$,
$\tau_O(\triangle_a)$ and $\tau_R(\triangle_a)$.
We can compute these sublists in $O(|\tau|)$ time.

\paragraph{Case 1 : Some apexed triangles in \texorpdfstring{$\tau$}{tau} overlap with
\texorpdfstring{$\tri_a$}{triangle a} (i.e. \texorpdfstring{$\tau_O(\tri_a)\neq\emptyset$}{}).}  Let $\triangle$
be the leftmost apexed triangle in $\tau_O(\triangle_a)$ along $uv$.  We compare
the distance functions $g_\triangle$ and $g_{\triangle_a}$ on
$\triangle_a\cap\triangle\cap uv$.  That is, we compare $d(x,s)$ and
$d(x,\definer{\triangle_a})$ for $x\in \triangle_a\cap\triangle\cap
uv$.

(1) If there is a point on $\triangle_a \cap
\triangle\cap uv$ that is equidistant from $s$ and $\definer{\triangle_a}$,
then $g_\triangle$ appears on $U$.
Moreover, the distance functions of the apexed triangles in
$\tau_O(\tri_a)\cup\tau_R(\triangle_a)$ also appear on $U$, but  
no distance function of the apexed triangles in $\tau_L(\triangle_a)$ appears on
$U$ by Lemma~\ref{lem:ordering}.  Thus we append 
the triangles in $\tau_O(\triangle_a)\cup\tau_R(\triangle_a)$.  We
also update $\auxenvelope$ accordingly.  Then, $\auxtriangle$ and
$\auxenvelope$ are $\tau_U$ and $U$,
respectively.

(2) If $d(x,\definer{\triangle_a}) > d(x,s )$ for all points $x\in
\triangle_a\cap\triangle\cap uv$, then $\triangle$ and its distance
function do not appear on $\tau_U$ and
$U$, respectively, by Lemma~\ref{lem:ordering}.
Thus we do nothing and scan the apexed triangles in
$\tau_O(\triangle_a)\cup\tau_R(\triangle_a)$, except $\triangle$, from
left to right along $uv$ until we find an apexed triangle $\triangle'$ such that
there is a point on $\triangle_a\cap\triangle'\cap uv$ which is
equidistant from $\definer{\triangle_a}$ and $s $.  Then we apply the
procedure in (1) with $\triangle'$ instead of $\triangle$.  If there
is no such apexed triangle, then 
$\auxtriangle$ and
$\auxenvelope$ are $\tau_U$ and $U$,
respectively.

(3) Otherwise, we have
$d(x,s ) > d(x,\definer{\triangle_a})$ for all points $x\in
\triangle_a\cap\triangle\cap uv$.
Then the distance function of
$\triangle_a$ does not appear on $U$.  Thus, we
remove $\triangle_a$ and its distance function from $\auxtriangle$ and
$\auxenvelope$, respectively.  We consider the apexed triangles in
$\tau_L(\triangle_a)$ from right to left along $uv$.  For an apexed triangle
$\triangle' \in \tau_L(\triangle_a)$, we do the following.  Since
$\auxtriangle$ is updated, we update $\triangle_a$ to the rightmost element
of $\auxtriangle$ along $uv$.  We check whether $d(x,s )\geq
d(x,\definer{\triangle_a})$ for all points $x\in
\triangle_a\cap\triangle'\cap uv$ if $\tri'$ overlaps with $\tri_a$.  If
so, we remove $\tri_a$ from $\auxtriangle$ and update $\triangle_a$ again.
We do this until we find an apexed triangle $\triangle' \in
\tau_L(\triangle_a)$ such that this test fails.
Then, there is a point on $\triangle'\cap\triangle_a\cap uv$ which is
equidistant from $\definer{\triangle_a}$ and $s $.  After we reach
such an apexed triangle $\triangle'$, we apply the procedure in (1)
with $\triangle'$ instead of $\triangle$.

\paragraph{Case 2 : No apexed triangle in \texorpdfstring{$\tau$}{tau} overlaps with
\texorpdfstring{$\tri_a$}{triangle a} (i.e.  \texorpdfstring{$\tau_O(\tri_a)=\emptyset$}{}).}  We cannot compare
the distance function of any apexed triangle in $\tau$ with the
distance function of $\triangle_a$ directly, so we need a different
method to handle this.
There are two possible subcases: either $\tau_L(\triangle_a)=\emptyset$ or
$\tau_R(\triangle_a)=\emptyset$.  Note that these are the only possible
subcases since the union of the apexed triangles with the same definer
is connected.  For the former subcase, the upper envelope of sites
from $s_1$ to $s $ is discontinuous at the right endpoint of the
bottom side of $\triangle_a$ along $uv$.  Thus $g_\triangle$ does not appear on
$U$ for any apexed triangle $\triangle \in \tau$.
Thus $\auxtriangle$ and
$\auxenvelope$ are $\tau_U$ and $U$,
respectively.
For the latter subcase, at most one of $s $ and
$\definer{\triangle_a}$ has a Voronoi cell in $\fvd[S]$ by Lemma~\ref{lem:ordering}.  
We can find a site ($s $ or $\definer{\triangle_a}$) which does
not have a Voronoi cell in $\fvd[S]$ in constant time once
we maintain some geodesic paths.
We describe this procedure at the end of this subsection.

If $s$ does not have a Voronoi cell in $\fvd[S]$, 
then $\auxtriangle$ and $\auxenvelope$ are $\tau_U$ and $U$,
respectively.  If $\definer{\triangle_a}$ does
not have a Voronoi cell in $\fvd[S]$, we remove all apexed
triangles with definer $\definer{\triangle_a}$ from $\auxtriangle$ and
their distance functions from $\auxenvelope$.  Since such apexed
triangles lie at the end of $\auxtriangle$ consecutively, this removal process takes the
time linear in the number of the apexed triangles.
We repeat this until the rightmost element of $\tau$ and the rightmost element of
$\auxtriangle$ overlap in their bottom sides along $uv$.
When the two elements overlap, we apply the procedure of Case 1.

In total, the running time for
computing $U(s_\ell)$ is $O(|A|)$ since each apexed triangle in $\apexedset$ is
removed from $\tau_U'$ at most once. Thus, we can compute $\rfvd\cap \bd P$ is $O(n)$ time in total.

\paragraph{Maintaining Geodesic Paths for Subcase of Case 2 :  \texorpdfstring{$\tau_O(\triangle_a)=\emptyset$ and $\tau_R(\triangle_a)=\emptyset$}{}.}
We maintain $\pi(s,x)$ and its
geodesic distance during the whole procedure (for all cases), where
$s$ is the site we consider and $x$ is the projection of
the rightmost \emph{breakpoint} of $U'$ onto $uv$.
That is, $x$ is the projection of the common endpoint of the two rightmost pieces of $U'$ onto $uv$.
Recall that $s$ changes from $s_1$ to $s_\ell$.
By definition, $x$ lies in the bottom side of the rightmost apexed triangle $\tri_a$ of $\tau_U'$.
Thus we can evaluate $d(\definer{\tri_a},x)$ in constant time. 
Note that the two points $s, x$ and the apexed triangle
$\tri_a$ change during the procedure.  Whenever they
change, we update $\pi(s,x)$ and its geodesic distance 
using the previous geodesic path. One of $s$ and $\definer{\tri_a}$ does not have a Voronoi cell in $\fvd[S]$ in this subcase.
But it is possible that neither $s$ nor $\definer{\tri_a}$ has a Voronoi cell in $\fvd[S]$. 
We can decide which site does not have a Voronoi cell in $\fvd[S]$ in constant time:
if $d(\definer{\triangle_a},x)>d(s ,x)$, then $s$ does not have a Voronoi cell.
Otherwise, $\definer{\tri_a}$ does not have a Voronoi cell.

We will show that the update of the geodesic path takes $O(n)$ time in total for all transition edges. 
Let $H_{uv}$ denote the region bounded by $uv$, $\pi(v,\ff{u})$,
$\subchain{\ff{v}}{\ff{u}}$ and $\pi(\ff{v},u)$. The sum of the complexities $|H_e|$ 
of $H_e$ for all transition edges $e$ is $O(n)$ and they can be
computed in $O(n)$ time~(Corollary 3.8~\cite{1-center}).  
Moreover, $|A|$ is $O(|H_{uv}|)$~(Lemma 5.2~\cite{1-center}).  
The total complexity of the shortest path trees rooted at $u$ and $v$ in $H_{uv}$
is $O(|H_{uv}|)$, and therefore we can compute them in $O(|H_{uv}|)$ time~\cite{shortest-path-tree}.  We compute them only one for each transition edge during
the whole procedure.

The edges in $\pi(s,x)$, except the edge adjacent to $x$, are also edges of the shortest path
trees, and thus we can update them by traversing the shortest path trees in time linear in the amount
of the changes on $\pi(s,x)$.
Therefore, the following lemma implies that maintaining $\pi(s,x)$ and its
length takes $O(|H_{uv}|)$ time for each transition edge $uv$.

\begin{lemma}
	The amount of the changes on $\pi(s,x)$ is $O(|H_{uv}|)$ during the whole procedure for $uv$.
\end{lemma}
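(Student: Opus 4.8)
The plan is to account for every edge that is inserted into or deleted from $\pi(s,x)$ during the procedure for $uv$, charging each such operation either to the monotone motion of $s$ along $\subchain{\ff{v}}{\ff{u}}$ or to one of the $O(|A|)=O(|H_{uv}|)$ events in which an apexed triangle is appended to or deleted from $\auxtriangle$ (recall that, as already noted, each apexed triangle of $A$ is appended and removed at most once). First I would fix the structure of $\pi(s,x)$. Since $H_{uv}$ is bounded only by the segment $uv$, the two geodesic paths $\pi(v,\ff{u})$ and $\pi(\ff{v},u)$, and the sub-chain $\subchain{\ff{v}}{\ff{u}}$ of $\bd P$, it is geodesically convex; as $s$ lies on $\subchain{\ff{v}}{\ff{u}}\subseteq\bd H_{uv}$ and $x\in uv\subseteq\bd H_{uv}$, we get $\pi(s,x)\subseteq H_{uv}$. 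Let $w$ be the vertex of $\pi(s,x)$ adjacent to $x$ (with $w=s$ when $sx$ is a segment). The funnel from $s$ to $uv$ has $\pi(s,u)$ and $\pi(s,v)$ as its two sides, so $\pi(s,w)$ is a prefix of $\pi(s,u)$ or of $\pi(s,v)$ — this is exactly why every edge of $\pi(s,x)$ but $wx$ is an edge of a shortest path tree rooted at $u$ or $v$ in $H_{uv}$. Finally, whenever $x$ lies on the bottom side of an apexed triangle $\triangle$ with $\definer{\triangle}=s$, the definition of an apexed triangle gives $\pi(s,x)=\pi(s,\apex{\triangle})\cup\apex{\triangle}x$, so in this (typical) situation $w=\apex{\triangle}$ and the portion $\pi(s,w)$ is determined by $s$ and $\triangle$ alone, independently of the exact position of $x$ on that bottom side.

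I would then bound the changes of $\pi(s,x)$ caused by the motion of $s$. The site $s$ runs through $s_1,\dots,s_\ell$ in order along $\subchain{\ff{v}}{\ff{u}}$, a single monotone sweep over a sub-chain of $\bd H_{uv}$. For a fixed root ($u$ or $v$), the set of boundary points of $H_{uv}$ whose shortest path to that root uses a given edge of the corresponding shortest path tree forms a sub-chain of $\bd H_{uv}$; hence that edge lies on $\pi(s,\cdot)$ for a contiguous interval of positions of $s$, and therefore $\pi(s,u)$ and $\pi(s,v)$ change by $O(|H_{uv}|)$ edges in total over the whole sweep. Since $\pi(s,w)$ is always a prefix of one of them, the changes of $\pi(s,w)$ attributable to $s$ are $O(|H_{uv}|)$, and the terminal edge $wx$ is rebuilt only when $w$ or $x$ changes, which contributes $O(|H_{uv}|)$ more.

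Next I would bound the changes caused by the motion of $x$. The point $x$, being the projection of the rightmost breakpoint of $\auxenvelope$, moves only when an apexed triangle is appended to or deleted from $\auxtriangle$, and there are $O(|A|)=O(|H_{uv}|)$ such events. While $s$ is fixed and $x$ stays on the bottom sides of apexed triangles with definer $s$, we have $w=\apex{\triangle}$ for the current triangle $\triangle$; the triangles with definer $s$ are visited by $x$ in the order of their bottom sides along $\bd P$, so their apexes are encountered along a monotone walk on the boundary of the funnel $\gamma_s$ (by the way the apexed triangles of \cite{1-center} are constructed), and consequently $\pi(s,\apex{\triangle})$ changes by $O(|\gamma_s|)$ edges in total for that $s$; summed over all sites this is $O(|A|)$. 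It remains to handle the transient states occurring in Case~1(3) and in the subcase $\tau_R(\triangle_a)=\emptyset$ of Case~2, where $x$ briefly sits on a triangle $\triangle_a$ with $\definer{\triangle_a}\neq s$ while consecutive triangles are being deleted from $\auxtriangle$; there $x$ moves monotonically to the left over precisely the triangles being deleted, and I would charge the edges of $\pi(s,x)$ that change during such a burst to those deleted triangles, using the fact that $w$ moves monotonically along the shortest path trees during the burst. Adding the three contributions, the total amount of change of $\pi(s,x)$ is $O(|H_{uv}|)$.

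The main obstacle is the last point: bounding the cost of the non-monotone, back-and-forth motion of $x$ within the processing of a single site, together with the transient states in which $\definer{\triangle_a}\neq s$. The intended charging scheme above sends these costs to the $O(|A|)$ append/delete events, but making it airtight requires verifying that, during a deletion burst, the vertex $w$ — and hence $\pi(s,x)$ — can be updated incrementally from the stored path without any edge of the shortest path trees being traversed a super-constant number of times across all bursts. Everything else is routine given the monotonicity of the sweep of $s$, the linear size of the two shortest path trees in $H_{uv}$, and the bound $|A|=O(|H_{uv}|)$.
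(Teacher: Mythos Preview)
Your reduction is sound: the lemma is equivalent to showing that no edge of the two shortest path trees in $H_{uv}$ is deleted from $\pi(s,x)$ more than a constant number of times. But that is precisely the content of the lemma, and you stop short of proving it. The paragraph labelled ``main obstacle'' is not a residual technicality --- it is the whole proof. Your charging scheme sends the cost of a deletion burst to the apexed triangles being deleted, but a single deleted triangle can cause many edges of $\pi(s,x)$ to disappear at once (your own Case~1 picture, and the paper's Figure~\ref{fig:handle_disjoint_triangle}(a), show exactly this), so ``$O(|A|)$ events'' does not by itself give ``$O(|A|)$ edge changes''. You need the monotonicity of $w$ that you invoke, and you do not establish it.

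There is also a misreading of the procedure. You call the situation $\definer{\triangle_a}=s$ ``typical'', but it is the opposite: while site $s=s_k$ is being processed, the rightmost triangle $\triangle_a$ of $\auxtriangle$ always has definer among $s_1,\dots,s_{k-1}$ until the final append, after which we immediately advance to $s_{k+1}$. So $\pi(s,x)$ almost never passes through $\apex{\triangle_a}$, and the clean identification $w=\apex{\triangle}$ on which your ``visit the apexes along $\gamma_s$'' argument rests is essentially never available.

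The paper does not charge to events at all. It fixes an edge $w_1w_2$ of a shortest path tree, extends it to the two points $w_1',w_2'$ where the supporting line meets $\bd P$, and uses this chord to split $\bd P$ into arcs. From the construction of the apexed triangles in~\cite{1-center} one gets that at most one site in $\subchain{w_2'}{w_2}$ has an apexed triangle whose apex lies in $\subchain{w_1}{v}$; together with the fact that once a triangle with definer in $\subchain{w_1}{v}$ is removed from $\auxtriangle$ no such triangle is ever re-inserted, this forces $w_1w_2$ to be deleted at most once in each of the three update types (left move of $x$, right move of $x$, advance of $s$). That geometric separation argument is what your proposal is missing.
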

\begin{proof}
We claim that each edge of the shortest path
trees is removed from $\pi(s,x)$ at most $O(1)$ times during the whole procedure for $uv$. 
Assume that we already have $\pi(s ,x)$ and we are to compute $\pi(s ',x')$.
There are three different cases: 
(1) $x'$ lies to the left of $x$ ($\tri_a$ is removed) along $uv$, 
(2) $x'$ lies to the right of $x$ (a new apexed triangle is inserted to $\tau_U'$) along $uv$,
and (3) we consider a new site (that is, $s'=s_{k+1}$ and $x'=x$.)

\begin{figure}
  \begin{center}
    \includegraphics[width=0.8\textwidth]{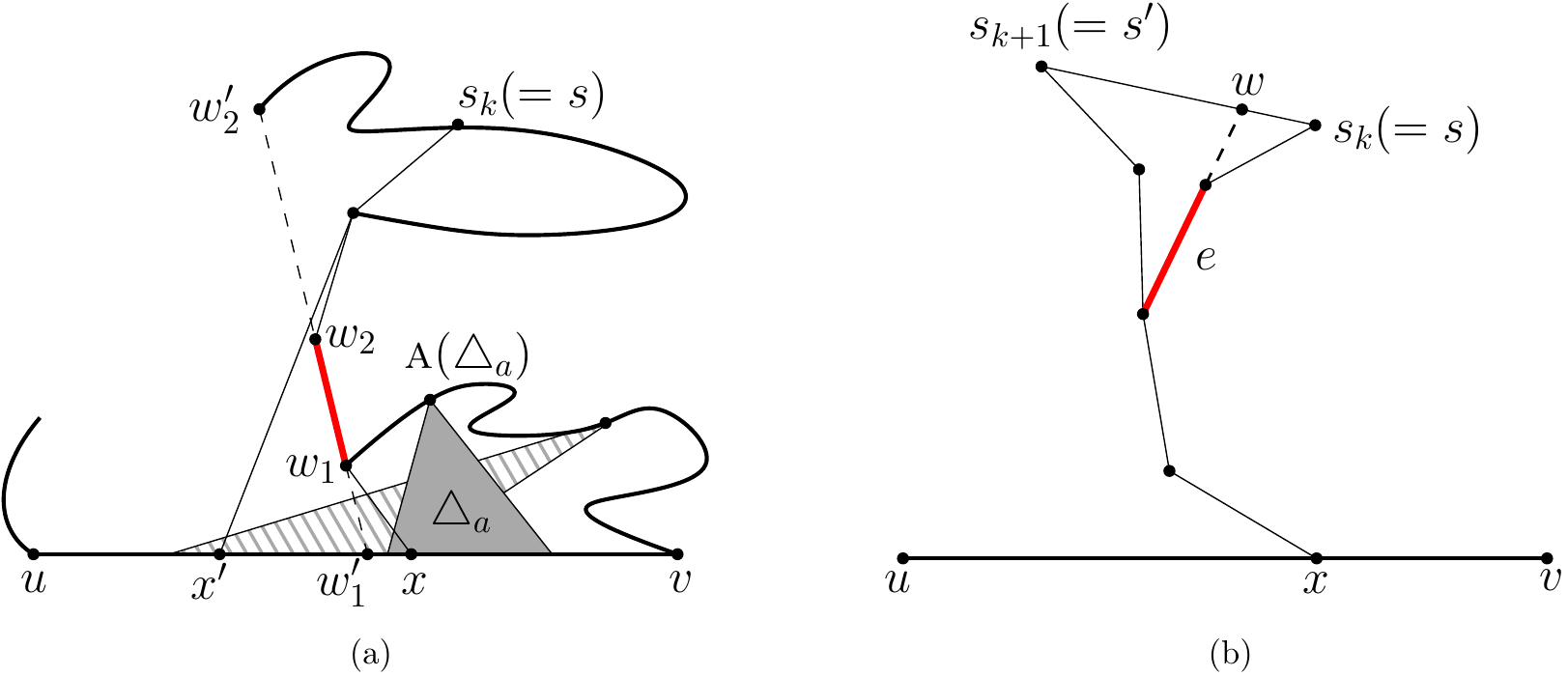}
    \caption {\small (a) When $\triangle_a$ is removed from
      $\auxtriangle$, we remove three edges from
      $\pi(s ,x)$ to obtain $\pi(s ,x')$. (b) The edge $e$
      appears on $\pi(s,x)$ for some $x \in uv$ only if $s \in
      \subchain{w}{v}$.}
    \label{fig:handle_disjoint_triangle}
  \end{center}
\end{figure}

For the first and the second cases, it is possible that we remove
more than one edge from $\pi(s ,x)$. We  prove the claim for the first case only.
The claim for the second case can be proved analogously.
See Figure~\ref{fig:handle_disjoint_triangle}(a).  Let $w_1w_2$ be an edge in $\pi(s ,x)$ which is not adjacent to
$x$ and is not in $\pi(s ,x')$ with $d(w_1,x)<d(w_2,x)$.
Let $w_1'$ and $w_2'$ be the points on $\bd P$ hit by the rays
from $w_1w_2$ towards $w_1$ and towards $w_2$, respectively. 
The right endpoint of the bottom side of $\tri_a$ lies to the right of $w_1'$ since $\pi(s ,x)$ contains $w_1w_2$.
Moreover, $s $ lies in $\subchain{w_2'}{w_2}$. Thus, $\apex{\triangle_a}$ lies in $\subchain{w_1}{v}$.

There are two possible subcases: $\definer{\triangle_a}$ is in $\subchain{w_2'}{w_2}$, or in $\subchain{w_1}{v}$.
There is at most one site $s'$ in $\subchain{w_2'}{w_2}$ such that an apexed triangle with definer $s'$ has its
apex in $\subchain{w_1}{v}$ by the construction of the set of apexed triangles in~\cite{1-center}.
(In this case, the apex lies in $\pi(w_1,v)$.) When $w_1w_2$ is deleted, all such apexed triangles
are also deleted from $\tau_U'$.
After $w_1w_2$ is deleted, no apexed triangle with definer in $\subchain{w_2'}{w_2}$ and with 
apex in $\subchain{w_1}{v}$ is inserted to $\tau_u'$ again. 
Therefore, the number of deletions of $w_1w_2$ due to the first subcase is only one.
For the second subcase, notice that once $\triangle_a$ is removed from $\auxtriangle$, no apexed
triangle with definer in $\subchain{w_1}{v}$ is added to $\auxtriangle$
again. Thus, the number of deletions of $w_1w_2$ due to the second subcase is also one. 

For the third case, $s '=s_{k+1}$ lies after $s $ from $s_1$ in counterclockwise order along $\bd P$.  It
occurs when we finish the procedure for handling $s $.  After we
consider the site $s '$, we do not consider any site from $s_1$ to
$s $ again.  Consider an edge $e$ removed from $\pi(s ,x)$ due to
this case.  Let $w$ be the point on $ss'$ hit by the 
extension of $e$.  See Figure~\ref{fig:handle_disjoint_triangle}(b).
If $\pi(s,x)$ contains $e$ for some $s \in
\subchain{\ff{v}}{\ff{u}}$ and some $x \in uv$, we have $s \in
\subchain{w}{v}$.  This means that once $e$ is removed due to the last
case, $e$ does not appear on the geodesic path $\pi(s,x)$ again in
the remaining procedure.  Thus, the number of deletions of each edge 
due to the last case is also one.
\end{proof}

Therefore, we can complete the first step in $O(n)$ time and we have the following theorem.
\begin{theorem}\label{thm:restrict-boundary}
  The geodesic farthest-point Voronoi diagram of the vertices of a
  simple $n$-gon $P$ restricted to the boundary of $P$ can be computed
  in $O(n)$ time.
\end{theorem}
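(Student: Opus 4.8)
The plan is to assemble the tools developed in this section and bound the total work by an amortized argument. First I would invoke Lemma~\ref{lemma:Apexed triangles} to compute a set $\mathcal{A}$ of $O(n)$ apexed triangles covering $\fvd[V]$ in $O(n)$ time, and, for each definer $s$, sort the triangles of $\mathcal{A}$ with definer $s$ along $\bd P$ with respect to their bottom sides; by the sorting lemma above this costs $O(|\tau_s|)$ time per definer, hence $O(n)$ overall. Next I would apply Lemma~\ref{lemma:Matrix lemma} to compute $\ff{v}$ for every vertex $v$ of $P$ in $O(n)$ time and mark the transition edges. On a non-transition edge $uv$ the farthest neighbor is constant along $uv$, so $\fvd$ there is trivial, and it remains to treat the transition edges.

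For each transition edge $uv$ I would run the procedure of Section~\ref{sec:transition_edge_fvd}: let $A_{uv}$ be the apexed triangles of $\mathcal{A}$ whose definers lie on $\subchain{\ff{v}}{\ff{u}}$, process the sites $s_1=\ff{u},\dots,s_\ell=\ff{v}$ in counterclockwise order, and incrementally maintain the leftmost component $\upperenvelope{s_k}$ of the upper envelope together with the sorted list $\uppertriangle{s_k}$, using Case~1 when some triangle of $\tau$ overlaps the current rightmost triangle $\tri_a$ of $\tau_U'$ and Case~2 (aided by the maintained geodesic path $\pi(s,x)$) otherwise. Since the upper envelope of all distance functions in $A_{uv}$ coincides in its projection onto $uv$ with $\rfvd\cap uv$, this yields $\rfvd\cap uv$; merging consecutive refined cells that share a definer then yields $\fvd\cap uv$, and combining with the trivial non-transition edges gives $\fvd\cap\bd P$.

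The core of the argument is that the total running time is $O(n)$. For a single transition edge $uv$ the procedure runs in $O(|A_{uv}|)$ time, because each apexed triangle of $A_{uv}$ is inserted into and deleted from $\tau_U'$ at most once and the scans over $\tau_L(\tri_a)$, $\tau_O(\tri_a)$ and $\tau_R(\tri_a)$ only touch triangles that are themselves being inserted or permanently discarded; and maintaining $\pi(s,x)$ and its length costs $O(|H_{uv}|)$ time over the whole procedure for $uv$ by the preceding lemma, where $H_{uv}$ is the region bounded by $uv$, $\pi(v,\ff{u})$, $\subchain{\ff{v}}{\ff{u}}$ and $\pi(\ff{v},u)$, and the shortest path trees rooted at $u$ and $v$ inside $H_{uv}$ have total complexity $O(|H_{uv}|)$. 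I would then sum over the transition edges using the two facts from~\cite{1-center} quoted above: $\sum_e|H_e| = O(n)$, with the $H_e$ computable in $O(n)$ time (Corollary~3.8), and $|A_e| = O(|H_e|)$ (Lemma~5.2). This gives $O\!\left(\sum_e (|A_e|+|H_e|)\right) = O(n)$ for the transition-edge step, and adding the $O(n)$ for the apexed triangles, for the sorting, and for the matrix search yields the claimed bound.

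The step I expect to be the main obstacle --- and the one I would write most carefully --- is the charging scheme behind ``each apexed triangle is deleted from $\tau_U'$ at most once'': I must verify that Cases~1(2), 1(3) and~2 never re-insert a triangle that was previously discarded, so that the amortized cost of the scans is $O(1)$ per triangle, and that between two successive sites no discarded triangle can reappear. The geodesic-path maintenance in the disjoint subcase of Case~2 is the other delicate point, but it is already packaged in the preceding lemma, so there I would only need to check that its hypotheses hold whenever Case~2 is invoked.
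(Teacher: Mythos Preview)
Your proposal is correct and follows essentially the same approach as the paper: the theorem is simply the summary of Section~\ref{sec:first-step}, and you have accurately reproduced all of its ingredients---the apexed-triangle construction and per-definer sorting, the matrix search for $\ff{v}$, the transition/non-transition dichotomy, the upper-envelope sweep of Section~\ref{sec:transition_edge_fvd}, the geodesic-path maintenance for the disjoint subcase, and the amortization via $|A_e|=O(|H_e|)$ and $\sum_e|H_e|=O(n)$. The one point you flag as the main obstacle (each apexed triangle is removed from $\tau_U'$ at most once) is exactly the one-line justification the paper gives for the $O(|A|)$ bound, so your emphasis is well placed.
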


\section{Decomposing the Polygon into Smaller Cells}
\label{sec:second-step}
Now we have $\rfvd \cap \bd P$ of size $O(n)$.  We add
the points in $\rfvd \cap \bd P$ (degree-$1$ vertices of $\rfvd$) to the vertex set of $P$, and apply
the algorithm to compute the apexed triangles 
with respect to the vertex set of $P$ again~\cite{1-center}.
There is no transition edge because no additional vertex has a Voronoi cell and every degree-1 Voronoi vertex
is a vertex of $P$.  Thus
the bottom sides of all apexed triangles are interior disjoint.  Moreover, we have the set $\mathcal{A}$ of the apexed triangles sorted along
$\bd P$ with respect to their bottom sides.

\begin{definition}
A simple polygon $P'\subseteq P$ is called a \textit{$t$-path-cell} for some $t\in\mathbb{N}$
if it is geodesically convex and all its vertices are on $\bd P$ among which at most $t$ are convex.
\end{definition}

In this section, we subdivide $P$ into
$t$-path-cells recursively for some $t\in\mathbb{N}$ until each cell
becomes a \emph{base cell}.  There are three types of base cells.  The first
type is a quadrilateral crossed by exactly one arc of $\rfvd$ through
two opposite sides, which we call an \textit{arc-quadrilateral}.  The
second type is a $3$-path-cell. Note that a $3$-path-cell is a
pseudo-triangle.  The third type is a region of $P$ whose boundary
consists of one convex chain and one geodesic path (concave curve), which we call a
\textit{lune-cell}.  Note that a convex polygon is a lune-cell whose
concave chain is just a vertex of the polygon.

Let $\{t_k\}$ be the sequence such that $t_1 = n$ and
$t_k=\floor{\sqrt{t_{k-1}}}+1$.  Initially, $P$ itself is a $t_1$-path-cell.
Assume that the $k$th iteration is completed. We show how to subdivide
each $t_k$-path-cell with $t_k>3$ into $t_{k+1}$-path-cells and base
cells in the $(k+1)$th iteration in Section~\ref{sec:subdivision}.
A base cell is not subdivided further.
While subdividing a cell into a number of smaller cells recursively, we compute the refined
geodesic farthest-point Voronoi diagram restricted to the boundary of
each smaller cell $C$ (of any kind) in $O(|C|+|\rfvd\cap\bd C|)$ time.
In Section~\ref{sec:third-step},
we show how to compute the refined geodesic farthest-point Voronoi
diagram restricted to a base cell $T$ in $O(|T|+\complexity{
\rfvd\cap\bd T})$ time once we have $\rfvd\cap\bd T$.
Once we compute $\rfvd$ restricted to every base cell, we obtain $\rfvd$.
%
%

\subsection{Subdividing a \texorpdfstring{$t$}{t}-path-cell into Smaller Cells}
\label{sec:subdivision}
In this subsection, we are to subdivide each $t_k$-path-cell into
$t_{k+1}$-path-cells and base cells.  If a $t_k$-path-cell is a lune-cell
or $t_k$ is at most three, the cell is a already base cell and we do
not subdivide it further.  Otherwise, we subdivide it using the algorithm described 
in this subsection.

The subdivision consists of three phases.  In Phase 1, we
subdivide each $t_k$-path-cell into $t_{k+1}$-path-cells by a curve
connecting at most $t_{k+1}$ convex vertices of the $t_k$-path-cell.  In Phase 2,
we subdivide each $t_{k+1}$-path-cell further along the
arcs of $\rfvd$ crossing the cell if there are such arcs.  In Phase 3,
we subdivide the cells that are created in Phase 2 and have vertices in $\interior{P}$
into $t_{k+1}$-path-cells and lune-cells.  

\subsubsection{Phase 1. Subdivision by a Curve Connecting at Most \texorpdfstring{$t_{k+1}$}{t k+1} Vertices}

\begin{figure}
  \begin{center}
    \includegraphics[width=0.9\textwidth]{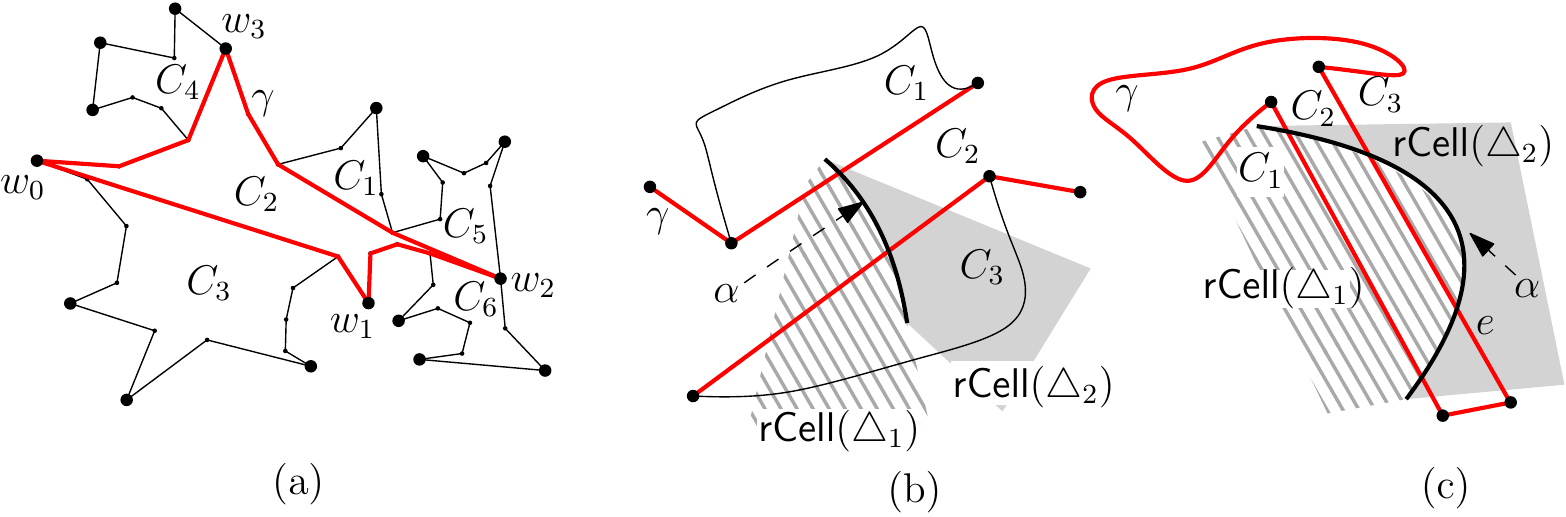}
    \caption {\small (a) A 16-path-cell. All convex vertices of the cell are marked with black
      disks. The region is subdivided into six 5-path-cells by the curve consisting of
      $\pi(w_0,w_1),\pi(w_1,w_2),\pi(w_2,w_3)$ and $\pi(w_3,w_0)$.
      (b) The arc $\alpha$ of $\rfvd$ intersects $C_1, C_2, C_3$ and
      crosses $C_2$.  (c) The arc $\alpha$ of $\rfvd$ intersects $C_1,
      C_2, C_3$ and crosses $C_2$.  Note that $\alpha$ does not cross
      $C_3$.}
    \label{fig:tpathcell}
  \end{center}
\end{figure}

Let $C$ be a $t_k$-path-cell computed in the $k$th iteration.  Recall
that $C$ is a simple polygon which has at most $t_k$ convex vertices.
Let $\beta$ be the largest integer satisfying that
$\beta\floor{\sqrt{t_k}}$ is less than the number of the convex
vertices of $C$.  Then we have $\beta \leq
\floor{\sqrt{t_k}}+1=t_{k+1}$.

We choose $\beta+1$ vertices $w_0,w_1,\ldots,w_\beta$ from the convex
vertices of $C$ at a regular interval as follows.
We choose an arbitrary convex vertex of $C$ and denote it by $w_0$.
Then we choose the $j\floor{\sqrt{t_k}}$th convex vertex of $C$ from
$w_0$ in clockwise order and denote it by $w_j$ for all
$j=1,\ldots,\beta$. We set $w_{\beta+1}=w_0$.
Then we construct the closed curve $\gamma_C$ (or simply $\gamma$ when
$C$ is clear from context) consisting of the geodesic paths
$\pi(w_0,w_1),\pi(w_1,w_2),\ldots,(w_\beta,w_0)$.
See Figure~\ref{fig:tpathcell}(a).  In other words, the closed curve
$\gamma_C$ is the boundary of the geodesic convex hull of
$w_0,\ldots,w_\beta$.  Note that $\gamma$ 
does not cross itself.  Moreover, $\gamma$ is contained in $C$
since $C$ is geodesically convex.

We compute $\gamma$ in time linear in the number of edges of $C$ using the 
algorithm in~\cite{kpairpath}.  This algorithm 
takes $k$ source-destination pairs as input, where both sources and
destinations are on the boundary of the polygon.  It returns the
geodesic path between the source and the destination for every input
pair assuming that the $k$ shortest paths do not cross (but possibly overlap)
one another.  Computing the $k$ geodesic paths takes $O(N+k)$ time
in total, where $N$ is the complexity of the polygon.  In our case, the
pairs $(w_j,w_{j+1})$ for $j=0,\ldots,\beta$ are $\beta+1$ input
source-destination pairs.  Since the geodesic paths for all input
pairs do not cross one another, $\gamma$ can be computed in
$O(\beta+\complexity{C}) =O(\complexity{C})$ time.
Then we compute $\rfvd\cap\gamma$ in
$O(|C|+\complexity{\rfvd\cap\bd C})$ time using $
\rfvd\cap\bd C$ obtained from the $k$th iteration.  We
will describe this procedure in Section~\ref{sec:cellboundary_fvd}.

The curve $\gamma$ subdivides $C$ into $t_{k+1}$-path-cells. To be specific, 
$C \setminus \gamma$ consists of at least $\beta+2$ connected
components.  Note that the closure of each connected component is a
$t_{k+1}$-path-cell.  Moreover, the union of the closures of all
connected components is exactly $C$ since $C$ is simple.  
These components define the \textit{subdivision of} $C$ \textit{induced by} $\gamma$.

\subsubsection{Phase 2. Subdivision along an Arc of \texorpdfstring{$\rfvd$}{rFVD}}
After subdividing $C$ into $t_{k+1}$-path-cells $C_1,\ldots,C_\delta$ 
$(\delta \geq \beta+2)$
by the curve $\gamma_C$, an arc $\alpha$ of $\rfvd$ may \emph{cross} $C_j$
for some $1 \leq j \leq \delta$. 
We say an arc $\alpha$ of $\rfvd$ \emph{crosses} a
cell $C'$ if $\alpha$ intersects at least two distinct edges of $C'$.
For example, in Figure~\ref{fig:tpathcell}(c), $\alpha$ crosses $C_2$
while $\alpha$ does not cross $C_3$ 
because $\alpha$ crosses only one edge of $C_3$.
In Phase 2, for each arc
$\alpha$ crossing $C_j$, we isolate the subarc $\alpha \cap C_j$.
That is, we subdivide $C_j$ further into three subcells so that
only one of them intersects $\alpha$. We call such a subcell an arc-quadrilateral. Moreover, for
an arc-quadrilateral $\square$ created by an arc $\alpha$ crossing $C_j$, we have
$\rfvd\cap\square  = \alpha \cap C_j$.

\begin{lemma}
  For a geodesic convex polygon $C$ with $t$ convex vertices ($t
  \in \mathbb{N}$), let ${\gamma}$ be a simple closed curve
  connecting at most $t$ convex vertices of $C$ lying on $\bd P$ 
  such that every two consecutive vertices are connected
  by a geodesic path.
   Then, each arc $\alpha$ of $\rfvd$ intersecting $C$ 
   intersects at most three cells in the subdivision of
  $C$ induced by ${\gamma}$ and at most two edges of $\gamma$.
  \label{lem:at_most_three}
\end{lemma}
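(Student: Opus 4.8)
The plan is to use the geodesic convexity of both $C$ and the subcells, together with the fact that each arc $\alpha$ of $\rfvd$ is part of a bisector (or an apexed-triangle side) and hence is itself a ``geodesically convex'' curve in the weak sense that it cannot re-enter a geodesically convex region it has left. First I would set up the structure of the subdivision induced by $\gamma$: the curve $\gamma$ is the boundary of the geodesic convex hull $K$ of the chosen vertices $w_0,\dots,w_{\beta}$, so $\gamma$ bounds one ``inner'' cell (namely $K$ itself) and $\beta+1$ ``outer'' cells, one hanging off each geodesic edge $\pi(w_j,w_{j+1})$ of $\gamma$. Each outer cell $C_j$ is the region of $C$ between $\pi(w_j,w_{j+1})$ and the boundary chain $\subchain{w_j}{w_{j+1}}$ of $C$; it is geodesically convex because $C$ is and because cutting along a geodesic path preserves geodesic convexity. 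The key point is that $\gamma$ meets $\bd C$ only at the $w_j$'s, so the cells of the subdivision are arranged cyclically: the inner cell $K$ is adjacent to every outer cell, and two outer cells $C_j, C_{j+1}$ are adjacent only along the single geodesic edge $\pi(w_{j+1},w_{j+2})$'s endpoint — more precisely, consecutive outer cells share only the vertex $w_{j+1}$, while non-consecutive outer cells are disjoint.

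Next I would analyze how $\alpha$ can traverse these cells. An arc $\alpha$ of $\rfvd$ is a connected curve; write its intersection with $C$ as a (connected) subarc $\alpha' = \alpha \cap C$ — I should first check it is connected, which follows from Lemma~\ref{lem:ray_in_cell}/Corollary~\ref{cor:ray-in-non-refined-cell} style reasoning or simply from the fact that $\rfvd$ is a tree inside $P$ and $C$ is geodesically convex, so $\alpha \cap C$ cannot split into two components. Now $\alpha'$ is a curve inside the geodesically convex polygon $C$, and I claim it crosses $\gamma$ at most twice. This is the crux: $\gamma = \bd K$ where $K$ is geodesically convex, and $\alpha'$, being a piece of a bisector, is a curve that is ``convex toward its owner site'' in the sense that any segment $xy$ of $\alpha'$ with both endpoints — actually the cleanest argument is that a bisector arc of two sites $s,s'$ is a curve along which $d(\cdot,s)=d(\cdot,s')$, and such a curve can cross the boundary of a geodesically convex region at most twice, because otherwise we could find two points $p,q$ on $\alpha' \cap K$ with the subarc of $\alpha'$ between them leaving $K$; but the geodesic $\pi(p,q)$ lies in $K$, and by a standard argument (the arc of a bisector between $p$ and $q$ together with $\pi(p,q)$ would bound a region forcing one of the two sites to lie inside, contradicting that both sites are outside $C \supseteq K$). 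I would state this as a sub-claim and prove it via the ``bisectors are monotone/convex'' machinery that is implicit in Aronov et al.~\cite{aronov1993furthest}.

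Granting that $\alpha'$ crosses $\gamma$ at most twice: these at most two crossing points lie on at most two of the geodesic edges of $\gamma$ (the two edges of $\gamma$ that $\alpha$ intersects), and they partition $\alpha'$ into at most three connected pieces. Each such piece lies entirely in the closure of one cell of the subdivision, since a piece of $\alpha'$ not meeting $\gamma$ stays within one component of $C \setminus \gamma$. Hence $\alpha'$ — and therefore $\alpha$ — intersects at most three cells of the subdivision and at most two edges of $\gamma$, which is exactly the statement. The main obstacle I anticipate is the sub-claim that a bisector arc crosses the boundary of a geodesically convex polygon at most twice; I would either cite the relevant convexity/monotonicity properties of geodesic bisectors from~\cite{aronov1993furthest} (bisectors are simple curves, and a bisector of two sites outside a convex body meets the body in a connected arc) or give a short direct argument: if $\alpha$ entered $K$, left, and re-entered, pick points $p,q \in \alpha \cap K$ on different ``entries''; the closed curve formed by the sub-arc of $\alpha$ from $p$ to $q$ and the geodesic $\pi(p,q)\subset K$ encloses a region $R$; one of the two sites defining $\alpha$ must be ``trapped'' so that shortest paths from points of $\alpha$ near the middle entry go through $R$, but then that site lies in $R \subseteq C$, contradicting that both sites defining an $\rfvd$-arc that merely passes through $C$ lie outside $C$. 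I would also handle the degenerate possibility that $\alpha$ overlaps a portion of $\gamma$ (since geodesic paths may overlap), in which case the overlapping portion is shared by the two incident cells and the count of ``intersected cells'' still does not exceed three by the same cyclic-adjacency structure of the subdivision.
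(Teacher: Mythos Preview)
Your approach differs substantially from the paper's and has a real gap at the crucial sub-claim. The paper does not reason about bisector arcs crossing geodesically convex boundaries at all. Instead it uses the apexed-triangle structure directly: an arc $\alpha$ of $\rfvd$ is either a straight segment (part of an apexed-triangle side), or it lies inside $\triangle_1\cap\triangle_2$ for two apexed triangles whose definers are the two sites. This intersection is a \emph{Euclidean} convex polygon, and since every vertex of $\gamma$ lies on $\partial P$ while $\interior{\triangle_1\cap\triangle_2}\subset\interior{P}$, no vertex of $\gamma$ is interior to $\triangle_1\cap\triangle_2$. From that one reads off that $\triangle_1\cap\triangle_2$ (and hence $\alpha$) meets at most two edges of~$\gamma$, and the at-most-three-cells bound follows since a cell meets $\alpha$ only if one of those two edges lies on its boundary.

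Your proposed argument for the key sub-claim ``$\alpha$ crosses $\partial K$ at most twice'' is where things break. The Jordan-curve/trapping sketch ends with ``that site lies in $R\subseteq C$, contradicting that both sites defining an $\rfvd$-arc that merely passes through $C$ lie outside $C$.'' But the sites need not lie outside $C$: in the first iteration $C=P$ and every site is a vertex of $P$, hence on $\partial C$; in later iterations sites can still lie on $\partial C$. Even granting that the sites are outside $K$, the inference that a site is forced into the enclosed region $R$ is not justified --- geodesics from points of $\alpha$ passing through $R$ does not place the site itself in $R$. Your structural picture of the subdivision is also off: the region bounded by $\gamma$ is not in general a single cell (the geodesic convex hull may be only weakly simple, and the paper explicitly allows $\geq\beta+2$ components of $C\setminus\gamma$), and consecutive outer cells can share more than a vertex when adjacent geodesic paths overlap. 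These issues do not doom the high-level strategy, but they explain why the paper's apexed-triangle route --- reducing everything to a Euclidean convex container that contains $\alpha$ and has no $\gamma$-vertex in its interior --- is the cleaner lever here.
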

\begin{proof}
  Consider an arc $\alpha$ of $\rfvd$ intersecting $C$.
  The arc $\alpha$ is a part of either a side of some apexed triangle or the bisector of two sites.
  For the first case, the arc $\alpha$ is a line segment. Thus $\alpha$
  intersects at most three cells in the subdivision of $C$ by
  ${\gamma}$ and at most two edges of $\gamma$. 
  For the second case, $\alpha$ is part of a hyperbola.  Let $s_1$ and
  $s_2$ be the two sites defining $\alpha$ in $\rfvd$.  The
  combinatorial structure of the geodesic path from $s_1$ (or $s_2$)
  to any point in $\alpha$ is the same. This means that $\alpha$ is
  contained in the intersection of two apexed triangles $\tri_1$ and $\tri_2$, one with definer $s_1$ and the other with definer $s_2$.
  Observe that $\tri_1 \cap \tri_2$ intersects $ {\gamma}$ at most twice
  and contains no vertex of $ {\gamma}$ in its interior.
  By construction, $\tri_1 \cap \tri_2$ intersects at most two edges $e_1$ and $e_2$
  of $ {\gamma}$, and thus so does $\alpha$.  For a cell $C'$ in the
  subdivision of $C$ by $ {\gamma}$, the arc $\alpha$ intersects $C'$ if
  and only if $C'$ contains $e_1$ or $e_2$ on its boundary.
  Thus there exist at most three such cells in the subdivision of $C$
  by $ {\gamma}$ and the lemma holds for the second case.
  See Figure~\ref{fig:tpathcell}(b-c).  
\end{proof}

First, we find $\alpha\cap C_j$ for every arc $\alpha$ of $\rfvd$ crossing $C_j$.
If $\rcell{\tri}\cap \bd C_j$ consists of at most two connected components (line segments) for 
every apexed triangle $\tri\in\mathcal{A}$,
we can do this by scanning all points in $\rfvd\cap \bd C_j$ along $\bd C_j$.
However, $\rcell{\tri}\cap \bd C_j$ might consist of more than
two connected components (line segments) for some apexed triangle $\tri\in\mathcal{A}$.  See Figure~\ref{fig:connected_component}.
Despite of this fact, we can compute all such arcs in $O(\complexity{\rfvd\cap \bd C_j})$ time by the following lemma.

\begin{figure}
  \begin{center}
    \includegraphics[width=0.4\textwidth]{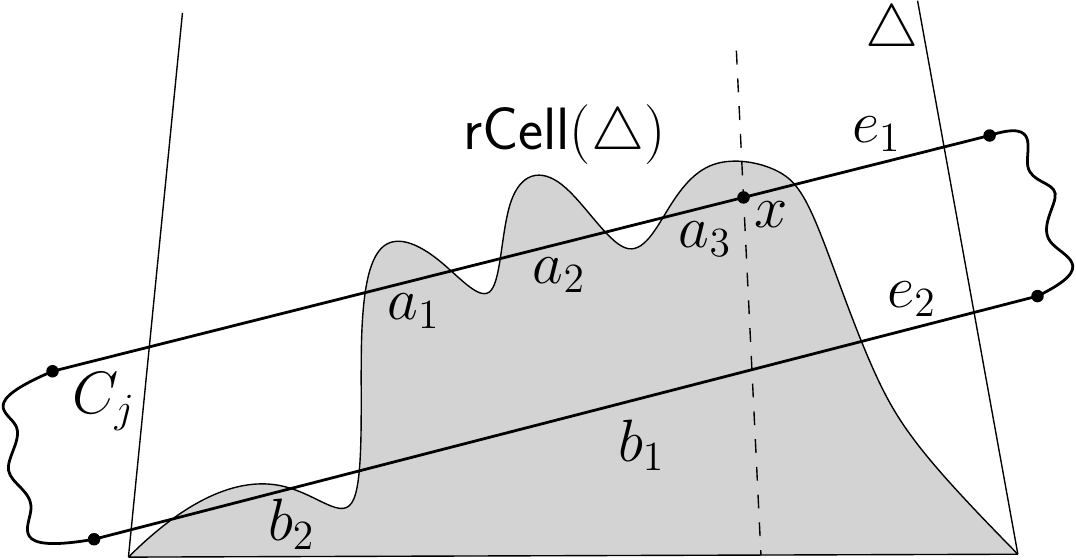}
    \caption {\small $\rcell{\tri} \cap \bd C_j$ consists of five
      connected components $a_i$ contained in $e_1$ and 
      $b_j$ contained in $e_2$ for $i=1,2,3$ and $j=1,2$.}
    \label{fig:connected_component}
  \end{center}
\end{figure}

\begin{lemma}  \label{lem:find_arcs}
  For every arc $\alpha$ of $\rfvd$ crossing $C_j$, we can find the part of $\alpha$ contained in $C_j$ in
  $O(\complexity{\rfvd \cap \bd C_j})$ time in total.  Moreover, for each such arc $\alpha$,
  the pair $(\tri_1,\tri_2)$ of apexed triangles 
  such that $\alpha \cap C_j = \{x\in C_j :
  g_{\triangle_1}(x)=g_{\triangle_2}(x)>0\}$ can be found in the same time.
\end{lemma}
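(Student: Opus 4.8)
I want to prove Lemma~\ref{lem:find_arcs}: given $\rfvd\cap\bd C_j$ (as a cyclic list of refined cells along $\bd C_j$, each labeled by its apexed triangle $\tri\in\mathcal{A}$), I can report every arc $\alpha$ of $\rfvd$ that \emph{crosses} $C_j$ together with the defining pair $(\tri_1,\tri_2)$, all in $O(|\rfvd\cap\bd C_j|)$ time. The starting point is Lemma~\ref{lem:at_most_three}: each crossing arc $\alpha$ enters $C_j$ through one edge and leaves through another edge of $\bd C_j$, so it is determined by exactly two "transition points" on $\bd C_j$ where the label of the refined cell changes from $\tri_1$ to $\tri_2$ (or $\tri_2$ to $\tri_1$). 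So if $\rcell{\tri}\cap\bd C_j$ were always a single arc for every $\tri$, I could just walk the cyclic list once, and every consecutive pair of distinct labels would correspond to an arc-crossing-$C_j$ event in the interior; reading off $(\tri_1,\tri_2)$ is immediate.

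**Handling the obstacle: multiple components.** The main difficulty, highlighted in the text and Figure~\ref{fig:connected_component}, is that $\rcell{\tri}\cap\bd C_j$ may split into several connected components, so a naive walk would "see" the label $\tri$ several times and could not locally decide which transitions bound a genuine crossing arc versus which ones are spurious (i.e. the arc re-enters the same cell on the same side, or the two transitions around a component lie on the \emph{same} edge of $\bd C_j$ and so do not cross $C_j$ in the sense of intersecting two distinct edges). My plan is: first walk $\bd C_j$ once and bucket the components of $\rfvd\cap\bd C_j$ by which edge of $\bd C_j$ they lie on (this is $O(|\rfvd\cap\bd C_j|)$, since $C_j$ is a $t_{k+1}$-path-cell with few edges — actually only $O(1)$ convex-vertex edges matter here, but in any case the number of edges is bounded by the list length). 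For each transition point $p$ on $\bd C_j$ (a breakpoint between refined cells $\rcell{\tri_1}$ and $\rcell{\tri_2}$), the germ of the arc through $p$ is a piece of $\{g_{\tri_1}=g_{\tri_2}>0\}$; this is a line segment or a hyperbolic arc whose supporting curve is determined in $O(1)$ from the pair $(\tri_1,\tri_2)$. I then need to match up the two endpoints of each crossing arc.

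**The matching step.** The key structural fact I will use is the one established inside the proof of Lemma~\ref{lem:at_most_three}: a bisector arc $\alpha$ between $s_1,s_2$ lies inside $\tri_1\cap\tri_2$ for apexed triangles $\tri_1,\tri_2$ with those definers, and $\tri_1\cap\tri_2$ is a convex region meeting $\bd C_j$ in at most two edges and containing no vertex of $\gamma$ in its interior; the same trivially holds when $\alpha$ is a segment side of an apexed triangle. Consequently the two endpoints of a crossing arc $\alpha$ are transition points carrying the \emph{same unordered label pair} $\{\tri_1,\tri_2\}$, and they lie on two \emph{distinct} edges of $\bd C_j$. So: I compute for each transition point its unordered pair of apexed-triangle labels, and I group transition points by this pair; within each group, discard any pair of transition points on the same edge of $\bd C_j$ (those bound a refined-cell component but not a $C_j$-crossing), and match the remaining ones across the (at most two) edges they occupy. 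Because for a fixed pair $\{\tri_1,\tri_2\}$ the region $\tri_1\cap\tri_2$ hits at most two edges and the arc germs at the surviving transition points all lie along one supporting curve, the across-edge matching is forced — I can just pair them up in order along the supporting curve, or even more simply, by Lemma~\ref{lem:at_most_three} there is at most one crossing arc per defining pair intersecting $C_j$ through $\gamma$'s edges, so each surviving group yields exactly one arc $\alpha$ and its endpoints are the (at most two) surviving transition points. Reading off $\alpha\cap C_j$ is then immediate: it is the portion of the supporting curve $\{g_{\tri_1}=g_{\tri_2}>0\}$ between those two endpoints, which lies in $C_j$ because $C_j$ is geodesically convex and the arc is a subset of the bisector.

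**Complexity.** Every step above — the boundary walk, bucketing components by edge, computing label pairs and arc germs, grouping by unordered pair (via a hash table keyed on the two triangle identifiers), pruning same-edge pairs, and emitting the arcs — touches each element of $\rfvd\cap\bd C_j$ a constant number of times, so the total is $O(|\rfvd\cap\bd C_j|)$, as required. The part I expect to need the most care in writing up is justifying that the per-pair matching is unambiguous, i.e. that I never mistakenly join two transition points that do not bound a common crossing arc; this is exactly where I will lean on the convexity of $\tri_1\cap\tri_2$ and on Lemma~\ref{lem:at_most_three} guaranteeing at most three cells and at most two $\gamma$-edges per arc, which together pin down the arc from its label pair alone.
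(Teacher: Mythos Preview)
Your approach is genuinely different from the paper's, and mostly sound, but the final ``even more simply'' shortcut is where it breaks.

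\textbf{Comparison.} The paper does not group by label pairs at all. It iterates over each apexed triangle $\tri$ with $\rcell{\tri}\cap\bd C_j\neq\emptyset$, observes (via Lemma~\ref{lem:at_most_three}) that $\rcell{\tri}$ meets at most two edges $e_1,e_2$ of $\bd C_j$ with $e_1$ nearer the apex, puts the components on $e_1$ into a queue, and then sweeps the components on $e_2$: for a component $r$ on $e_2$ it pops components $r'$ from the queue as long as the line through $\apex{\tri}$ and a point of $r'$ meets $r$, and checks whether $r,r'$ share a neighbouring refined cell $\rcell{\tri'}$; if so it emits the arc for $(\tri,\tri')$. Correctness is driven by Lemma~\ref{lem:ray_in_cell}: for any point $x$ on a crossing arc with label $\{\tri,\tri'\}$, the line through $x$ and $\apex{\tri}$ hits $e_2$ inside $\rcell{\tri}$, so the $e_1$--$e_2$ matching is geometrically forced. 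Your route replaces this apex-ray geometry by a combinatorial hash on unordered label pairs; what your approach buys is conceptual simplicity (no ray argument), while the paper's buys a completely local, queue-based scan with no need to argue that a label pair determines a unique arc of $\rfvd$.

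\textbf{The gap.} Your claim that after discarding same-edge pairs ``each surviving group yields exactly one arc $\alpha$ and its endpoints are the (at most two) surviving transition points'' is false. A single arc $\alpha$ with label $\{\tri_1,\tri_2\}$ can meet $C_j$ in \emph{two} connected components (this is exactly the case the paper singles out right after Lemma~\ref{lem:find_arcs}), both of which may cross from $e_1$ to $e_2$; that gives four transition points with label $\{\tri_1,\tri_2\}$, two on each edge, none discarded. Lemma~\ref{lem:at_most_three} bounds the number of $\gamma$-edges hit, not the number of crossings of any one edge. So you cannot skip the ``order along the supporting curve'' step you mentioned first: you must sort the (at most four) transition points of each group along the bisecting hyperbola or line, determine which consecutive pairs bound a piece inside $C_j$ (a constant-time local orientation test at one transition point fixes the parity, since an endpoint of $\alpha$---a vertex of $\rfvd$---may lie inside $C_j$ and make the group odd), and then for each such pair test whether its two endpoints lie on distinct edges. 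With that correction your argument goes through in $O(|\rfvd\cap\bd C_j|)$ time; without it, the matching is ambiguous.
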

\begin{proof}	
  For each apexed triangle $\tri\in\mathcal{A}$ intersecting $C_j$, we find all connected
  components of $\rcell{\tri}\cap\bd C_j$. Since we already have $\rfvd\cap \bd C_j$, 
  this takes $O(\complexity{\rfvd \cap \bd C_j})$ time for all apexed triangles in $\mathcal{A}$ intersecting $C_j$.
  There are at most two edges of $\bd
  C_j$ that are intersected by $\rcell{\tri}$ due to Lemma~\ref{lem:at_most_three}. Let 
  $e_1$ and $e_2$ be such edges, and we assume that $e_1$ contains the point in $\rcell{\tri}\cap\bd C_j$ closest to $\apex{\tri}$
  without loss of generality.  We insert all
  connected components of $\rcell{\tri} \cap e_1$ in the clockwise
  order along $\bd C_j$ into a queue.  Then, we consider the
  connected components of $\rcell{\tri} \cap e_2$ in the clockwise
  order along $\bd C_j$ one by one.

  To handle a connected component $r$ of $\rcell{\tri} \cap e_2$, we
  do the following.  Let $x$ be a point in the first element $r'$ of
  the queue.  If the line passing through $x$ and $\apex{\tri}$
  intersects $r$, then we remove $r'$ from the queue and check whether
  $r$ and $r'$ are incident to the same refined cell $\rcell{\tri'}$.
  If so, we compute the part of the arc defined by $\tri$ and $\tri'$ inside $C_j$, and
  return the part of the arc and the pair $(\tri,\tri')$.  
  If $r$ and $r'$ are not incident to the same refined cell, we remove $r'$ from the queue since.
  We repeat this until the line passing through a point of the first element of the queue
  does not intersect $r$. Then we handle the connected component of
  $\rcell{\tri} \cap e_2$ next to $r$.

  Every arc computed from this procedure is an arc of $\rfvd$ crossing $C_j$.
  The remaining work is to show that we can find all arcs of $\rfvd$ crossing $C_j$ using this procedure.
  Consider an arc $\alpha$ of $\rfvd$ crossing $C_j$.
  There are two connected
  components $r\subseteq e_1$ and $r'\subseteq e_2$ of $\rfvd(\tri) \cap \bd C_j$ incident to a point in $\alpha\cap \bd C_j$.
  The line passing through any point $x\in \alpha\cap C_j$ and $\apex{\tri}$  intersects $e_2$ once.
  Moreover, this intersection is in $\rcell{\tri}$ by Lemma~\ref{lem:ray_in_cell}.  Since $r'$ contains 
  the set of all such intersections, the line passing through a point in $r$ and
  $\apex{\tri}$ intersects $r'$. Thus the procedure finds $\alpha$.
\end{proof}

By Lemma~\ref{lem:at_most_three}, $\alpha \cap C_j$ consists of at most two connected components.
For the case that it consists of
exactly two connected components, we consider each connected component
separately.  In the following, we consider only the case that $\alpha \cap C_j$ is connected.

\begin{figure}
	\begin{center}
		\includegraphics[width=0.7\textwidth]{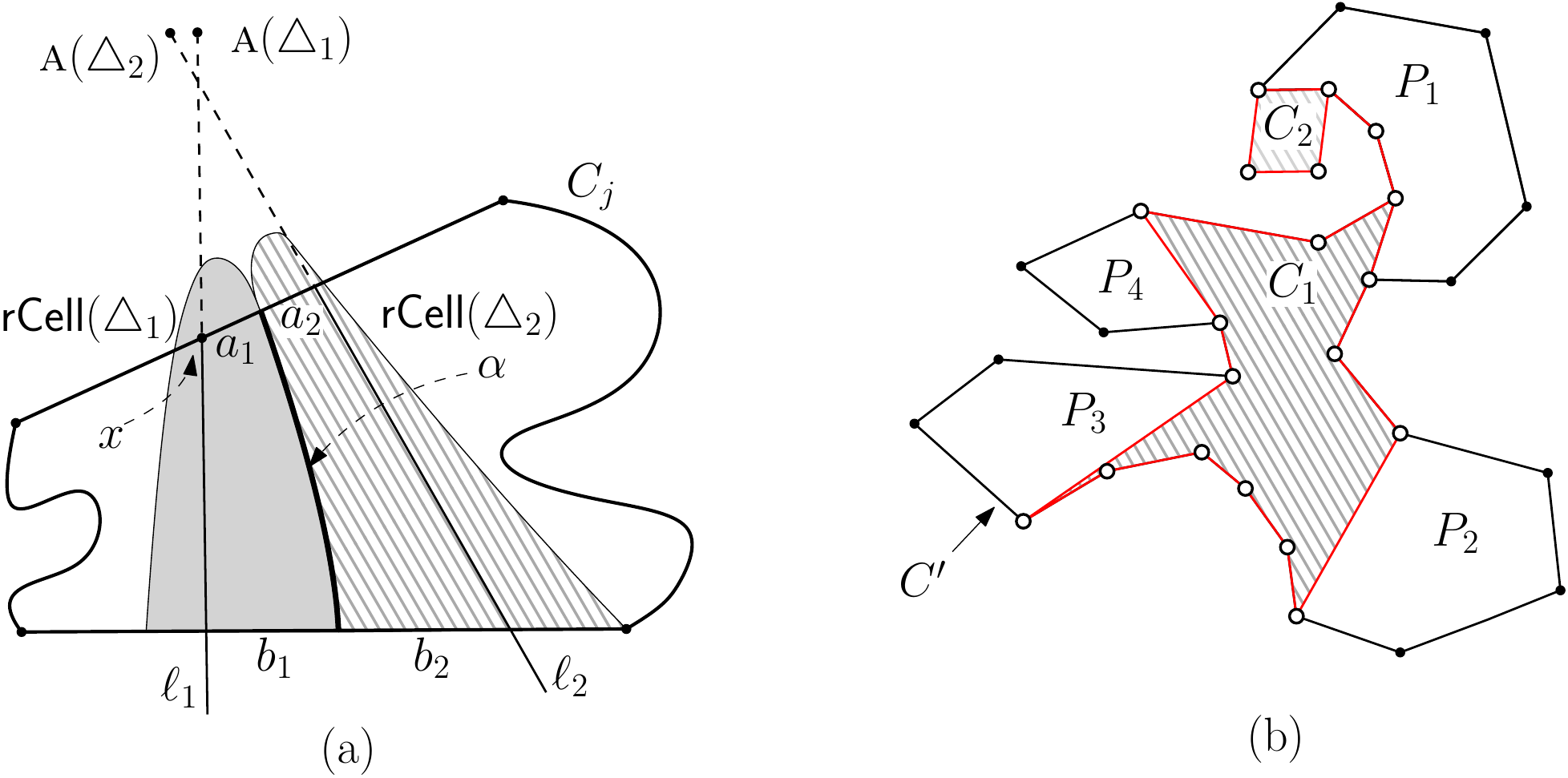}
		\caption {\small (a) The arc $\alpha$ of $\rfvd$ crosses $C_j$.
			We isolate $\alpha$ by subdividing $C_j$ into three subcells
			with respect to $\ell_1$ and $\ell_2$. (b) The vertices marked with
			empty disks are the vertices of $P$ while the others are vertices of arc-quadrilaterals lying
			in 	$\interior{P}$. We subdivide the cell into two $t$-path-cells $C_1, C_2$
			and four lune-cells $P_1,\ldots,P_4$.}
		\label{fig:compute_quadrilateral}
	\end{center}
\end{figure}

For an arc $\alpha$ crossing $C_j$, we subdivide $C_j$ further into
two cells with $t'$ convex vertices for $t' \leq t_{k+1}$ and one
arc-quadrilateral by adding two line segments bounding $\alpha$ so 
that no arc other than $\alpha$ intersects the arc-quadrilateral.  
Let $(\triangle_1, \triangle_2)$ be the pair of apexed
triangles defining $\alpha$.  Let $a_1, b_1$ (and $a_2,b_2$)
be the two connected components of $\rcell{\triangle_1}\cap\bd C_j$
(and $\rcell{\triangle_2}\cap\bd C_j$) incident to $\alpha$ such that
$a_1, a_2$ are adjacent to each other and $b_1,b_2$ are adjacent to
each other.  See Figure~\ref{fig:compute_quadrilateral}(a).
Without loss of generality, we assume that $a_1$ is closer than $b_1$
to $\apex{\triangle_1}$.  Let $x$ be any point on $a_1$.  Then the
$V$-farthest neighbor of $x$ is the definer of $\triangle_1$.  We consider
the line $\ell_1$ passing through $x$ and the apex of $\triangle_1$.
Then the intersection between $C_j$ and $\ell_1$ is contained in the
closure of $\rcell{\triangle_1}$ by Lemma~\ref{lem:ray_in_cell}.
Similarly, we find the line $\ell_2$ passing through the apex of
$\triangle_2$ and a point on $a_2$.

We subdivide $C_j$ into two cells with at most $t_{k+1}$ convex
vertices and one arc-quadrilateral by $\ell_1$ and $\ell_2$.
The quadrilateral bounded by the two lines and $\bd C_j$
is an arc-quadrilateral since $\alpha$ crosses the quadrilateral but
no other arcs of $\rfvd$ intersect the quadrilateral.
We do this for all arcs crossing $C_j$.  Note that no arc crosses
the resulting cells other than arc-quadrilaterals by the construction.
The resulting cells with at most $t_{k+1}$ convex vertices and
arc-quadrilaterals are the cells in the subdivision of $C$ obtained from Phase 2.
Therefore, we have the following lemma.

\begin{lemma}
  \label{lem:nonbase_cell_cross}
  No arc of $\rfvd$ crosses cells other than arc-quadrilaterals created in
  Phase 2.
\end{lemma}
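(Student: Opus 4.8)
The plan is to localize the argument to a single cell $C_j$ of the Phase~1 subdivision of $C$: every cell produced in Phase~2 lies inside some $C_j$, and arcs of $\rfvd$ in distinct $C_j$'s do not interact, so it suffices to fix one $C_j$ and show that no arc of $\rfvd$ crosses a non-arc-quadrilateral (``leftover'') cell obtained by carving the arc-quadrilaterals out of $C_j$. If $C_j$ survives Phase~2 unchanged, then by construction no arc crosses it and there is nothing to prove, so assume it is subdivided. Recall that for each arc $\alpha$ crossing $C_j$ the construction adds two chords $\ell_1^\alpha,\ell_2^\alpha$ of $C_j$ that, by Lemma~\ref{lem:ray_in_cell} (as already noted in the construction), satisfy $\ell_1^\alpha\cap C_j\subseteq\overline{\rcell{\triangle_1}}$ and $\ell_2^\alpha\cap C_j\subseteq\overline{\rcell{\triangle_2}}$, where $(\triangle_1,\triangle_2)$ is the pair of apexed triangles defining $\alpha$; these chords cut $\square_\alpha$ out of $C_j$, and since $\ell_1^\alpha,\ell_2^\alpha$ lie on the two sides of $\alpha$, we have $\alpha\cap C_j\subseteq\overline{\square_\alpha}$.

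The first step is to prove that the chords are \emph{tame}: since refined cells are pairwise interior-disjoint and their interiors avoid $\rfvd$, the relative interior of each chord $\ell_i^\alpha$ lies in the interior of a refined cell (barring degenerate alignments), so no two chords cross in $\interior{C_j}$ and no arc of $\rfvd$ meets the relative interior of a chord inside $\interior{C_j}$; in particular every chord is a single edge of each leftover cell it bounds, and a leftover cell shares boundary with $\square_\alpha$ only along (part of) one of $\ell_1^\alpha,\ell_2^\alpha$. Next I would argue by contradiction. Suppose an arc $\beta$ of $\rfvd$ crosses a leftover cell $D\subseteq C_j$, i.e.\ $\beta$ meets the relative interiors of two distinct edges $f_1,f_2$ of $D$. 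By tameness neither $f_1$ nor $f_2$ is a chord, so both are parts of edges of $C_j$. If $f_1,f_2$ lie on two distinct edges of $C_j$, then $\beta$ crosses $C_j$, so $\beta\cap C_j$ is contained in the union of the arc-quadrilaterals carved for $\beta$; as each of these is interior-disjoint from $D$ and shares with $D$ only a chord, $\beta\cap D$ lies on chords of $D$, contradicting $\beta\cap f_1\neq\emptyset$. If instead $f_1,f_2$ lie on a common edge $e$ of $C_j$, they are separated along $e$ by a vertex $z$ of $D$ in the relative interior of $e$, which must then be an endpoint of some chord $\ell$; but $\ell$ runs from $e$ into the region of $C_j$ bounded by $e$ and the sub-arc of $\beta$ between $f_1$ and $f_2$, and being a straight segment that (by tameness) cannot cross $\beta$ it is forced to return to $e$, contradicting that $\ell$ is a chord bounding the arc-quadrilateral of an arc crossing $C_j$ through two distinct edges. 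In all cases we reach a contradiction, so no arc of $\rfvd$ crosses $D$.

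I expect the tameness step to be the real obstacle: one must argue carefully that the chords $\ell_i^\alpha$, which Lemma~\ref{lem:ray_in_cell} places through closures of refined cells (themselves localized by Lemma~\ref{lem:at_most_three} to meet $\gamma_C$ in at most two edges and only near $\alpha$), genuinely avoid the rest of $\rfvd$ and one another inside $C_j$, so that an arc of $\rfvd$ can touch a leftover cell only along edges inherited from $\bd C$ or $\gamma_C$. Granting tameness, the remainder is the short case analysis above, driven by the definition of ``crosses'' and the sandwich property $\alpha\cap C_j\subseteq\overline{\square_\alpha}$.
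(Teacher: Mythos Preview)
The paper gives essentially no proof of this lemma: it is stated as an immediate consequence of the construction, with the single sentence ``Note that no arc crosses the resulting cells other than arc-quadrilaterals by the construction.'' So there is no paper argument to compare against; yours is already far more careful than what the authors wrote.

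Your outline is on the right track, and in fact tameness is cleaner than you fear. By Lemma~\ref{lem:ray_in_cell}, provided the point $x$ used to define $\ell_i^\alpha$ is chosen in the relative interior of the arc $a_i\subset\rcell{\triangle_i}\cap\bd C_j$ (which the construction permits), the entire segment $\ell_i^\alpha\cap C_j$ lies in the \emph{open} refined cell $\rcell{\triangle_i}$, not merely its closure. Since $\rfvd=P\setminus\bigcup_\triangle\rcell{\triangle}$, no arc of $\rfvd$ can meet the relative interior of any chord inside $C_j$; this disposes of your main worry. Note, however, that two chords lying in the \emph{same} refined cell (which happens when two crossing arcs share a defining apexed triangle) can cross one another, so ``no two chords cross'' is false in general. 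Fortunately you do not actually need that: all you use is that $\rfvd$ avoids the open chords, which gives both that $\beta\cap C_j\subseteq\overline{\square_\beta}$ for every crossing arc $\beta$ and that $\beta$ cannot meet a leftover cell along a chord edge.

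Your case analysis then goes through. In Case~B1 the sandwich $\beta\cap C_j\subseteq\overline{\square_\beta}$ plus interior-disjointness of $D$ and $\square_\beta$ forces $\beta\cap D\subseteq\bd\square_\beta$, and since $\beta$ avoids open chords this intersection lies on $\bd C_j$, contradicting $\beta$ meeting two distinct $C_j$-edges inside $D$. Your Case~B2 (both $f_1,f_2$ on a common edge $e$ of $C_j$) is the weakest part of the write-up: the claim that the chord $\ell$ emanating from the separating vertex $z$ is ``forced to return to $e$'' needs a cleaner justification. A simpler route is to observe that, locally at $z$, the face $D$ lies on exactly one side of $\ell$, so $f_1$ and $f_2$ cannot both be edges of $D$ while being separated by $z$ along $e$, unless another chord also lands between them---iterating then traps a crossing arc (the one whose quadrilateral uses $\ell$) between two intersections with $e$, contradicting that such an arc meets two distinct edges of $C_j$. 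Tightening that paragraph is all that remains.
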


\subsubsection{Phase 3. Subdivision by a Geodesic Convex Hull}
Note that some cell $C'$ with $t'$ convex vertices for $3< t ' \leq
t_{k+1}$ created in Phase~2 might be neither a $t'$-path-cell
nor a base cell. This is because a cell created in Phase~2 might have some vertices in $\interior{P}$.  
In Phase~3, we subdivide such cells further
into $t'$-path-cells and lune-cells.

To subdivide $C'$ into $t_{k+1}$-path-cells and lune-cells, we first
compute the geodesic convex hull $\ch$ of the vertices of $C'$ which come from the vertex set of $P$
in time linear in the number of edges in $C'$ using the
algorithm for computing $k$ shortest paths in \cite{kpairpath}.
Consider the connected components of $C' \setminus \bd \ch$.
They belong to one of two types defined as follows. 
A connected component of the first
type is enclosed by a closed simple curve which is part of $\bd \ch$.
For example, $C_1$ and $C_2$ in
Figure~\ref{fig:compute_quadrilateral}(b) are the connected
components belonging to this type.  
A connected component of the second type is enclosed by a
subchain of $\bd \ch$ from $u$ to $w$ in clockwise order and a
subchain of $\bd C'$ from $w$ to $u$ in counterclockwise order for
some $u, w \in \bd P$.  For example, $P_i$ in
Figure~\ref{fig:compute_quadrilateral}(b) is the connected
component belonging to the second type for $i=1,\ldots,4$.

By the construction, a connected component belonging to the first type
has all its vertices from the vertex set of $P$.  Moreover, it has at most $t'$ convex
vertices since $C'$ has $t'$ convex vertices.  Therefore, the closure
of a connected component of $C'\setminus \bd \ch$ belonging to the
first type is a $t'$-path-cell with $t'\leq t_{k+1}$.

Every vertex of $C'$ lying in $\interior{P}$ is convex with
respect to $C'$ by the construction of $C'$.  Thus, for a connected
component $P'$ belonging to the second type, 
the part of $\bd P'$
from $\bd C'$ is a convex chain with respect to $P'$.  Moreover,
the part of $\bd P'$ from $\bd \ch$ is the geodesic path between
two points, and thus it is a concave chain with respect to $P'$.  
Therefore, the closure of a connected component belonging to the
second type is a lune-cell.

Since $C'$ is a simple polygon, the union of the closures of all
connected components of $C' \setminus \ch$ is exactly the closure of $C'$.  The
closures of all connected components belonging to the first and the
second types are $t_{k+1}$-path-cells and lune-cells created at the end
of the $(k+1)$th iteration, respectively.  We compute the
$t_{k+1}$-path-cells and the lune-cells induced by $\bd \ch$.
Then, we compute $\rfvd\cap \bd \ch$
using the procedure in Section~\ref{sec:cellboundary_fvd}.

The resulting $t_{k+1}$-path-cells and base cells form the final
decomposition of $C$ of the $(k+1)$th iteration.

\subsubsection{Analysis of the Complexity}

We first give the combinatorial complexity of the refined geodesic farthest-point
Voronoi diagram restricted to the boundary of the cells from each iteration.
Note that an arc of $\rfvd$ might cross some $t_{k}$-path-cells in the decomposition 
at the end of the $k$th iteration for any $k$ while no arc of $\rfvd$ crosses cells
other than arc-quadrilaterals created in Phase~2.
The following lemma is used to prove the complexity.

\begin{lemma}
  \label{lem:arc_intersect_cell}
  An arc $\alpha$ of $\rfvd$ intersects at most nine $t_k$-path-cells and
  $O(k)$ base cells at the end of the $k$th iteration for any $k\in\mathbb{N}$.  Moreover,
  there are at most three $t_k$-path-cells that $\alpha$ intersects
  but does not cross at the end of the $k$th iteration.
\end{lemma}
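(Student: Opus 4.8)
The plan is to prove the statement by induction on $k$, tracking how an arc $\alpha$ of $\rfvd$ is ``split'' among the cells as we pass from the $k$th iteration to the $(k+1)$th. The base case $k=1$ is immediate: $P$ itself is the only $t_1$-path-cell, $\alpha$ intersects it, and there are no base cells. For the inductive step, fix an arc $\alpha$ and assume at the end of the $k$th iteration that $\alpha$ intersects at most nine $t_k$-path-cells, at most three of which it does not cross, plus $O(k)$ base cells. The $(k+1)$th iteration refines each $t_k$-path-cell via Phases 1--3; base cells are untouched, so the $O(k)$ base cells stay base cells and contribute $O(k) = O(k+1)$ as required. The work is to bound how many $t_{k+1}$-path-cells and new base cells arise inside each of the (at most nine) $t_k$-path-cells that $\alpha$ meets, and crucially to keep the ``intersects-but-does-not-cross'' count at three.

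First I would analyze one $t_k$-path-cell $C$ that $\alpha$ meets. In Phase~1 we cut $C$ by the curve $\gamma_C$; by Lemma~\ref{lem:at_most_three}, $\alpha$ intersects at most three of the resulting $t_{k+1}$-path-cells $C_1,\dots,C_\delta$ and crosses at most two edges of $\gamma_C$. In Phase~2, for each of these at most three cells that $\alpha$ \emph{crosses}, that cell is subdivided into two smaller path-cells and one arc-quadrilateral; by Lemma~\ref{lem:nonbase_cell_cross} and the construction, the arc-quadrilateral created \emph{by} $\alpha$ satisfies $\rfvd \cap \square = \alpha \cap C_j$, so $\alpha$ ends up inside that single arc-quadrilateral and no longer meets the two flanking path-cells at all (they lie on the far side of $\ell_1,\ell_2$). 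Phase~3 then only subdivides path-cells that have interior vertices, turning them into $t_{k+1}$-path-cells and lune-cells; $\alpha$ restricted to such a cell is again governed by Lemma~\ref{lem:at_most_three} applied to $\bd\ch$, so it meets at most three of the pieces there. The key bookkeeping point: whenever $\alpha$ actually \emph{crosses} a $t_k$-path-cell $C$, after the iteration $\alpha\cap C$ is confined to one arc-quadrilateral (a base cell), so that cell stops being a $t_{k+1}$-path-cell intersected by $\alpha$; only the at most three $t_k$-path-cells that $\alpha$ intersects but does not cross can still give rise to $t_{k+1}$-path-cells that $\alpha$ intersects.

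Putting this together: by induction there are at most three $t_k$-path-cells that $\alpha$ intersects without crossing. In each such $C$, $\alpha$ has an endpoint (of $\alpha\cap C$) in the interior of $C$ or on $\bd C$; Phases~1--3 subdivide $C$, and by the same planar/combinatorial argument as in Lemma~\ref{lem:at_most_three} (a line segment or a single hyperbolic arc confined to the intersection of two apexed triangles meets at most three cells of a geodesic-hull subdivision, and at most one of those containments is ``crossing''), $\alpha$ intersects at most three resulting cells, of which at most one is not crossed. That gives at most $3\times 3 = 9$ $t_{k+1}$-path-cells intersected by $\alpha$ and at most $3\times 1 = 3$ of them not crossed, re-establishing both bounds. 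The new base cells created inside the at most nine $t_k$-path-cells number $O(1)$ per cell (a bounded number of arc-quadrilaterals from Phase~2 and lune-cells adjacent to $\alpha$ from Phase~3), hence $O(1)$ total new base cells meeting $\alpha$, giving $O(k) + O(1) = O(k+1)$.

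The main obstacle is the precise accounting in the middle paragraph: one must argue carefully that after Phase~2 the arc $\alpha$ genuinely ``retreats'' into a single arc-quadrilateral and loses contact with the two neighboring path-cells, so that crossings do not proliferate geometrically across iterations, and one must handle the case (allowed by Lemma~\ref{lem:at_most_three}) where $\alpha\cap C$ has two connected components by treating each separately and checking the bounds still close. A secondary subtlety is that an arc-quadrilateral created by a \emph{different} arc $\alpha'$ inside a cell that $\alpha$ also meets could in principle be intersected by $\alpha$; one should observe that an arc-quadrilateral is a base cell with $\rfvd\cap\square = \alpha'\cap C_j$, so $\alpha\neq\alpha'$ cannot intersect its interior, which keeps $\alpha$ away from those pieces. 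Once these points are pinned down, the constants $9$ and $3$ and the $O(k)$ bound follow mechanically from the inductive hypothesis and Lemma~\ref{lem:at_most_three}.
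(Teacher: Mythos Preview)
Your inductive skeleton is sound, and you correctly identify that the crossed cells get absorbed into arc-quadrilaterals in Phase~2 (via Lemma~\ref{lem:nonbase_cell_cross}), so that only the ``intersects but does not cross'' cells can spawn new $t_{k+1}$-path-cells. The gap is in how you maintain the bound of three on those cells.

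You assert that inside each of the (at most three) $t_k$-path-cells $C$ that $\alpha$ intersects without crossing, after Phases~1--3 ``$\alpha$ intersects at most three resulting cells, of which at most one is not crossed,'' and you attribute this to Lemma~\ref{lem:at_most_three}. But Lemma~\ref{lem:at_most_three} says nothing about how many of the three subcells are crossed versus not crossed; it only bounds the total. A single parent cell $C$ could, for instance, contain both endpoints of $\alpha$, and after subdivision those endpoints can land in two distinct subcells, each of which $\alpha$ intersects without crossing; a tangential contact with an edge of $\gamma_C$ can produce yet another. So the per-cell bound ``at most one not crossed'' fails, and the product $3\times 1=3$ does not close the induction.

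The paper does not prove the ``at most three not crossed'' clause inductively at all. It gives a direct geometric argument: two of the three cells are simply the ones containing the endpoints of $\alpha$; for any additional cell that $\alpha$ intersects without crossing, $\alpha$ must touch a single boundary edge of that cell, and the paper shows (via the region bounded by the two tangent lines to the hyperbola at the endpoints of $\alpha$, which is contained in $\triangle_1\cap\triangle_2$) that two such edges would have to cross each other in their interiors, a contradiction. This global argument applies equally to the cells present after Phase~2, which is what lets the paper say there are at most three non-arc-quadrilateral cells meeting $\alpha$ at that stage; Phase~3 then multiplies by three to give nine. You should replace your per-cell inductive step with this direct argument. (A minor separate point: ``confined to one arc-quadrilateral'' should read ``confined to arc-quadrilaterals''---there can be several per parent cell---but that does not affect the $O(1)$ count of new base cells per iteration.)
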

\begin{proof}
  Let $\alpha$ be an arc of $\rfvd$. 
  If $\alpha$ is a line segment, the lemma holds directly. Thus, we consider the case that
  $\alpha$ is a part of hyperbola defined by a pair $(\tri_1,\tri_2)$ of apexed triangles.
  That is, $\alpha \subseteq \{x\in\tri_1\cap \tri_2 : g_{\tri_1}(x)=g_{\tri_2}(x) \geq 0\}$.
	
  We first show that there is at most one $t_k$-path-cell at the end of the $k$th
  iteration that $\alpha$ intersects but does not cross, and no endpoint of $\alpha$ is contained in.
  Assume to the contrary that there are two such $t_k$-path-cells. 
  Consider two edges $e_1$ and $e_2$ from the two cells which intersect $\alpha$.
  Notice that they are distinct.

  We claim that $e_1$ and $e_2$ intersect at some point other than their endpoints, which makes a contradiction.
  To prove the claim, we assume that the line containing
  $\apex{\tri_1}$ and $\apex{\tri_2}$ is the $x$-axis.  Then
  $\alpha$ is part of a hyperbola whose foci lie on the $x$-axis.
  The arc $\alpha$ does not intersect the $x$-axis.
  Let $h_1$ and $h_2$ be the lines tangent to the hyperbola containing $\alpha$ at the
  endpoints of $\alpha$.  We denote the region bounded by $h_1$, $h_2$
  and $\alpha$ by $R$.  Then to prove the claim, it suffices to
  show that $R$ is contained in $\tri_1\cap \tri_2$ because no vertex lies in the interior of $\tri_1\cap\tri_2$.
  Assume that $R$ is not contained in $\tri_1$.
  Then one of the sides of
  $\tri_1$ incident to $\apex{\tri_1}$ intersects $R$.  Thus, there is
  a line passing through $\apex{\tri_1}$ which intersects $\alpha$
  twice by the property of the hyperbola, which is a contradiction by Lemma~\ref{lem:ray_in_cell}.  
  The case that $R$ is not contained in $\tri_2$ is analogous.
  Therefore, the claim holds. Including the two $t_k$-path-cells containing an endpoint of $\alpha$,
  there are at most three $t_k$-path-cells that $\alpha$ intersects but does not cross.
  
  Now we show that $\alpha$ intersects at most nine $t_k$-path-cells
  and $O(k)$ base cells at the end of the $k$th iteration.  For $k=1$,
  $P$ itself is the decomposition of $P$, thus there exists only one
  cell.
  For $k\geq 2$, assume that the lemma holds for the $k'$th iterations
  for all $k' < k$.  

  We claim that the $k$th iteration creates
  a constant number of the arc-quadrilaterals that $\alpha$ intersect.
  Due to the assumption, $\alpha$ intersects at most nine
  $t_{k-1}$-path-cells at the end of the $(k-1)$th iteration. Thus, at the end of Phase 1 of the $k$th iteration,
  $\alpha$ crosses at most 27 $t_{k}$-path-cells by
  Lemma~\ref{lem:at_most_three}.  Note that $\alpha \cap C'$ might
  consist of two connected components for a $t_{k}$-path-cell $C'$
  created in Phase 1. See Figure~\ref{fig:tpathcell}(c).  In
  this case, we create two arc-quadrilaterals. If $\alpha \cap C'$ is
  connected, we create one arc-quadrilateral.  Thus, we create at most
  54 arc-quadrilaterals crossed by $\alpha$ in the $k$th iteration.
  Therefore, in the $k$th iteration, there are $O(k)$ arc-quadrilaterals intersecting $\alpha$.
  
  We claim that the number of the $t_k$-path-cells that $\alpha$ intersects at the end of the $k$th iteration is 
  at most nine.
  There are three cells from Phase~2 other than arc-quadrilaterals intersecting $\alpha$.
  Note that $\alpha$ does not cross more than two cells. Thus, it is sufficient to consider only these three cells.
  Each cell $C$ from Phase~2 intersecting $\alpha$ is subdivided into smaller cells in Phase~3.
  Due to Lemma~\ref{lem:at_most_three}, at most three smaller cells intersect $\alpha$.
  Thus, in total, the $k$th iteration creates at most nine cells of the $t_k$-path-cells that $\alpha$ intersects.
  Similarly, we can prove that 
  the $k$th iteration creates a constant number of lune-cells intersecting $\alpha$. 
  Therefore, the lemma holds.
\end{proof}

Now we are ready to prove the complexities of 
the cells and $\rfvd$ restricted to the cells in each iteration.
Then we finally prove that the running time of the algorithm  
in this section is $O(n\log\log n)$.

\begin{lemma}
  \label{lem:complexity_tpathcell}
  At the end of the $k$th iteration for any $k\in\mathbb{N}$, the following holds.
  \begin{itemize}
  \item[] \hspace{0\textwidth}\rlap{$\sum _{C : \textnormal{a }
        t_k\textnormal{-path-cell}} {\complexity{\rfvd\cap\bd C}} =
      O(n)$.}
  \item[] \hspace{0\textwidth}\rlap{$\sum _{C : \textnormal{a }
        t_k\textnormal{-path-cell}} {\complexity{C}}= O(n)$.}
  \item[] \hspace{0\textwidth}\rlap{$\sum _{T : \textnormal{a base
          cell}} {\complexity{\rfvd\cap\bd T}}= O(kn)$.}
  \item[] \hspace{0\textwidth}\rlap{$\sum _{T : \textnormal{a base
          cell}} {\complexity{T}}= O(kn)$.}
  \end{itemize}
\end{lemma}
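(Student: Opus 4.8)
The plan is to prove all four bounds simultaneously by induction on $k$, using the structural facts already established: $\rfvd\cap\bd P$ has complexity $O(n)$ (Theorem~\ref{thm:restrict-boundary}), each $t_k$-path-cell is geodesically convex with at most $t_k$ convex vertices, and—crucially—Lemma~\ref{lem:arc_intersect_cell}, which says each arc of $\rfvd$ meets at most nine $t_k$-path-cells and $O(k)$ base cells at the end of iteration $k$. The base case $k=1$ is immediate: $P$ itself is the only cell, there are no base cells, $\complexity{P}=n$, and $\complexity{\rfvd\cap\bd P}=O(n)$.

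\textbf{The $t_k$-path-cell bounds.} For the first bound, I would observe that every edge of $\bd C$ for a $t_k$-path-cell $C$ is either a piece of an edge of $P$, or a piece of a geodesic path $\gamma$ introduced in some Phase~1, or a piece of $\bd\ch$ introduced in some Phase~3, or a segment $\ell_1,\ell_2$ bounding an arc-quadrilateral in Phase~2. The refined cells meeting $\bd C$ are charged to refined cells meeting these generating curves. Since the $t_k$-path-cells at the end of iteration $k$ are pairwise interior-disjoint and tile $P$, a refined cell is counted for a bounded number of $t_k$-path-cells: Lemma~\ref{lem:arc_intersect_cell} bounds by a constant how many $t_k$-path-cells a single arc of $\rfvd$ can intersect, so $\sum_C \complexity{\rfvd\cap\bd C}$ is within a constant factor of $\complexity{\rfvd}=O(n+m)=O(n)$ (recall $S=V$ in this section). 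For the second bound, the total number of edges telescopes: at each iteration Phase~1 adds geodesic paths whose total complexity is $O(\sum_C \complexity{C})$ (the $k$-shortest-paths algorithm of~\cite{kpairpath} is linear in the cell complexity), Phase~2 adds $O(1)$ segments per arc-quadrilateral and only $O(1)$ arc-quadrilaterals per arc by Lemma~\ref{lem:arc_intersect_cell}, and Phase~3 adds a geodesic convex hull of total complexity again $O(\sum_{C'}\complexity{C'})$. Because the cells remain interior-disjoint and each original edge or arc-piece is re-used in only a bounded number of cells of a given level, $\sum_C\complexity{C}$ stays $O(n)$ through all iterations.

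\textbf{The base-cell bounds.} A base cell $T$ is created in some iteration $k'\le k$ and is never subdivided again, so the base cells present at the end of iteration $k$ are the disjoint union, over $k'=1,\dots,k$, of the base cells born at step $k'$. For each fixed $k'$, the same counting as above gives $\sum_{T\text{ born at }k'}\complexity{\rfvd\cap\bd T}=O(n)$ and $\sum_{T\text{ born at }k'}\complexity{T}=O(n)$, because these base cells are interior-disjoint, tile a subregion of $P$, and each arc of $\rfvd$ meets only $O(1)$ of them that are born at step $k'$ (restricting Lemma~\ref{lem:arc_intersect_cell}'s $O(k)$ bound to a single level). Summing over $k'=1,\dots,k$ yields the $O(kn)$ bounds.

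\textbf{Main obstacle.} The delicate point is not the telescoping but justifying the re-use bound: that when a curve $\gamma$ (or $\bd\ch$, or a line $\ell_i$) subdivides a cell $C$, a single refined cell of $\rfvd$ or a single edge is split among only $O(1)$ of the resulting pieces. For $\gamma$ this follows from Lemma~\ref{lem:at_most_three} (an arc meets $\le 3$ pieces, $\le 2$ edges of $\gamma$), and for arc-quadrilaterals from Lemma~\ref{lem:nonbase_cell_cross} together with the fact that each arc spawns $O(1)$ arc-quadrilaterals; the genuinely careful accounting is Phase~3, where one must check that a refined cell straddling $\bd\ch$ is shared by a bounded number of the first-type and second-type components—this again reduces to the "at most two crossing edges" property of the apexed-triangle pair defining an arc, exactly as in Lemma~\ref{lem:at_most_three}. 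I would phrase the whole argument as a single charging scheme: charge each (refined cell, cell) incidence and each (edge, cell) incidence back to a unique arc or original edge, and invoke Lemma~\ref{lem:arc_intersect_cell} to bound the total charge by $O(n)$ per level.
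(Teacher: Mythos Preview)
Your treatment of the first and third bounds is essentially the paper's argument: both follow directly from Lemma~\ref{lem:arc_intersect_cell} together with the fact that $\rfvd$ has $O(n)$ arcs, and your per-level decomposition for the third bound is just a mild rephrasing of the $O(k)$-base-cells statement.

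For the second and fourth bounds, however, the paper takes a substantially simpler route than your telescoping induction. The key observation you are missing is that every vertex of a $t_k$-path-cell is a vertex of $P$ (by the definition of $t$-path-cell, together with the fact that $\gamma$ in Phase~1 and $\bd\ch$ in Phase~3 are built from geodesic paths between $P$-vertices). Hence every edge of every $t_k$-path-cell is a chord of $P$; the $t_k$-path-cells are interior-disjoint, so their edges form a non-crossing family of chords, i.e.\ a subset of some triangulation of $P$. This immediately gives $\sum_C\complexity{C}=O(n)$ with a fixed constant, no induction needed. For the fourth bound the paper splits base-cell edges into those with both endpoints at vertices of $P$ (again triangulation chords, $O(n)$ total) and the remaining ones, which are created only when an arc-quadrilateral is cut out in Phase~2; since each arc spawns $O(1)$ arc-quadrilaterals per iteration, there are $O(kn)$ such edges.

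Your telescoping argument for $\sum_C\complexity{C}$ has a genuine gap as written. You assert that Phase~1 adds geodesic paths of total complexity $O(\sum_C\complexity{C})$ and then conclude that the sum ``stays $O(n)$ through all iterations,'' but an induction that adds a constant multiple of the current total at each step lets the implicit constant grow geometrically with $k$; after $\Theta(\log\log n)$ iterations this would not yield $O(n)$. The fix is precisely the triangulation-chord observation above, which replaces the recursive estimate by a global planarity bound independent of $k$. Once you have that, your per-level argument for the base-cell bounds also goes through, but the paper's chord/non-chord split is more direct.
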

\begin{proof}
  Let $\alpha$ be an arc of $\rfvd$.  The first and the third
  complexity bounds hold by Lemma~\ref{lem:arc_intersect_cell} and the
  fact that the number of the arcs of $\rfvd$ is $O(n)$.

  The second complexity bound holds since the set of all edges of the
  $t_k$-path-cells is a subset of the chords in some triangulation of
  $P$.  Any triangulation of $P$ has $O(n)$ chords.
  Moreover, each chord is incident to at most two $t_k$-path-cells.

  For the last complexity bound, the number of the edges of the base cells
  whose endpoints are vertices of $P$ is $O(n)$ since they are chords
  in some triangulation of $P$.  Thus we count the number of edges of the base cells
  which are not incident to vertices of $P$.  In Phase 1, we do not create any such
  edge.  In Phase 2, we create at most
  $O(1)$ such edges whenever we create one arc-quadrilateral.  All
  edges created in Phase 3 have their endpoints from the vertex set of $P$.
  Therefore, the total number of the edges of all base cells is
  asymptotically bounded by the number of arc-quadrilaterals, which is
  $O(kn)$.
\end{proof}

\begin{corollary}
  \label{lem:last_iter_base_cell}
  In $O(\log\log n)$ iterations, the polygon is subdivided into
  $O(n\log\log n)$ base cells.
\end{corollary}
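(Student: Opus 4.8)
The plan is to prove the corollary by combining the recurrence on the sequence $\{t_k\}$ with the per-iteration complexity bounds of Lemma~\ref{lem:complexity_tpathcell}. First I would analyze how many iterations are needed before every $t_k$-path-cell is a base cell. Recall $t_1 = n$ and $t_{k+1} = \floor{\sqrt{t_k}}+1$; a $t_k$-path-cell is no longer subdivided once $t_k \leq 3$ (or once it is a lune-cell). A standard calculation shows that iterating the square-root map brings $t_k$ down to a constant in $O(\log\log n)$ steps: writing $t_k \approx n^{1/2^{k-1}}$, we have $t_k = O(1)$ as soon as $2^{k-1} \geq \log n$, i.e. $k = \log\log n + O(1)$. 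So after $K = O(\log\log n)$ iterations, Phase~1 no longer applies to any remaining cell and the decomposition consists entirely of base cells. I would state this as the first step, being slightly careful that the ``$+1$'' in the recurrence does not spoil the bound (it does not, since once $t_k$ is below a fixed constant the recurrence keeps it there).

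The second step is to bound the total number of base cells produced over all $K$ iterations. The base cells are of three kinds: arc-quadrilaterals, $3$-path-cells (pseudo-triangles), and lune-cells. For the cells whose edges are all chords of $P$ (pseudo-triangles and the parts coming from Phase~3), the count in any single iteration is $O(n)$, since by Lemma~\ref{lem:complexity_tpathcell} the total complexity $\sum_C \complexity{C}$ over all $t_k$-path-cells at the end of iteration $k$ is $O(n)$, and each newly created cell has at least one edge. For the arc-quadrilaterals, Lemma~\ref{lem:arc_intersect_cell} says each of the $O(n)$ arcs of $\rfvd$ intersects only $O(1)$ arc-quadrilaterals created in a given iteration, so $O(n)$ arc-quadrilaterals are created per iteration as well; equivalently this follows from the $O(kn)$ bound on $\sum_T \complexity{T}$ in Lemma~\ref{lem:complexity_tpathcell} by taking the difference between consecutive iterations. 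Summing $O(n)$ new base cells over $K = O(\log\log n)$ iterations gives $O(n\log\log n)$ base cells in total. (Alternatively, one may simply quote the $k = K$ case of the fourth bullet of Lemma~\ref{lem:complexity_tpathcell}, which directly gives total base-cell complexity $O(Kn) = O(n\log\log n)$, hence $O(n\log\log n)$ base cells.)

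The third step is just to observe that when the process terminates, \emph{every} point of $P$ lies in some base cell: each iteration subdivides each non-base $t_k$-path-cell into $t_{k+1}$-path-cells and base cells whose closures cover the parent cell (Phases~1--3 each preserve this, as argued in Section~\ref{sec:subdivision}), and after $K$ iterations no non-base cell remains. So the base cells form a subdivision of $P$ into $O(n\log\log n)$ pieces, completing the proof.

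The main obstacle is really just the iteration-count estimate: one has to check that the floor and the ``$+1$'' in $t_{k+1} = \floor{\sqrt{t_k}}+1$ do not prevent the doubly-logarithmic decay. The clean way is to show by induction that $t_k \leq n^{1/2^{k-1}} + 2$ for all $k$ as long as $t_k > 3$, which handles the additive term, and then note $n^{1/2^{k-1}} \leq 2$ once $k \geq \log\log n + 1$. Everything else is bookkeeping on top of the already-established Lemmas~\ref{lem:arc_intersect_cell} and~\ref{lem:complexity_tpathcell}.
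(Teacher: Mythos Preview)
Your proposal is correct and is precisely the argument the paper leaves implicit: the corollary is stated without proof immediately after Lemma~\ref{lem:complexity_tpathcell}, and your write-up simply spells out the two pieces the reader is expected to supply, namely the doubly-logarithmic decay of the recurrence $t_{k+1}=\lfloor\sqrt{t_k}\rfloor+1$ and the invocation of the $O(kn)$ bound on $\sum_T |T|$ at $k=O(\log\log n)$. The care you take with the ``$+1$'' is more than the paper bothers with, but the approach is the same.
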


\begin{lemma}
  The subdivision in each iteration can be done in $O(n)$ time.
\end{lemma}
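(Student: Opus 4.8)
The plan is to bound the cost of the $(k+1)$th iteration by summing, over every $t_k$-path-cell $C$ produced in the $k$th iteration (base cells are final and never reprocessed), the work done on $C$ in Phases 1--3, and then to collapse the sum using the complexity bounds already in hand. The two facts I will lean on are Lemma~\ref{lem:complexity_tpathcell}, which gives $\sum_{C}\complexity{C}=O(n)$ and $\sum_{C}\complexity{\rfvd\cap\bd C}=O(n)$ with $C$ ranging over the $t_k$-path-cells at the end of the $k$th iteration, and Lemma~\ref{lem:arc_intersect_cell} together with Lemma~\ref{lem:at_most_three}, which bound how many cells a single arc of $\rfvd$ can meet or cross.

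First I would handle Phase~1. For a single $t_k$-path-cell $C$, the curve $\gamma_C$ is a union of $\beta+1\le\complexity{C}$ pairwise non-crossing geodesic paths between boundary vertices, so by the $k$-pair shortest-path algorithm of~\cite{kpairpath} it is computed in $O(\beta+\complexity{C})=O(\complexity{C})$ time, and $\complexity{\gamma_C}=O(\complexity{C})$. Computing $\rfvd\cap\gamma_C$ from $\rfvd\cap\bd C$ costs $O(\complexity{C}+\complexity{\rfvd\cap\bd C})$ by the procedure of Section~\ref{sec:cellboundary_fvd}. Summing over all $t_k$-path-cells and applying Lemma~\ref{lem:complexity_tpathcell}, Phase~1 runs in $O(n)$ time. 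It produces $t_{k+1}$-path-cells $C_1,\dots,C_\delta$ whose edges are edges of $C$ or of $\gamma_C$, so $\sum_j\complexity{C_j}=O(n)$; and every point of $\rfvd$ on some $\bd C_j$ lies on $\bigcup_C\bd C$ or on $\bigcup_C\gamma_C$, the latter carrying only $O(1)$ points of each arc by Lemma~\ref{lem:arc_intersect_cell} and Lemma~\ref{lem:at_most_three}, so $\sum_j\complexity{\rfvd\cap\bd C_j}=O(n)$ as well.

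Next, Phase~2. For each $C_j$, Lemma~\ref{lem:find_arcs} finds every arc of $\rfvd$ crossing $C_j$, with its defining pair of apexed triangles, in $O(\complexity{\rfvd\cap\bd C_j})$ time, hence $O(n)$ over all $C_j$. Isolating one crossing arc $\alpha$ inside $C_j$ amounts to drawing the two rays $\ell_1,\ell_2$ from apexes through boundary points already identified in the previous step and splitting the cyclic edge list of $C_j$: this is $O(1)$ surgery per crossing, plus the $O(\complexity{C_j})$ cost of distributing $C_j$'s edges among the pieces. Since each arc crosses only $O(1)$ of the $C_j$ (Lemma~\ref{lem:arc_intersect_cell} at the end of Phase~1) and there are $O(n)$ arcs, the number of (arc, $C_j$) crossing pairs is $O(n)$; with $\sum_j\complexity{C_j}=O(n)$ this gives $O(n)$ time for Phase~2 and $O(n)$ arc-quadrilaterals of constant complexity. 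Each new segment $\ell_i$ lies in the closure of a single refined cell, so it carries $O(1)$ points of $\rfvd$; thus the Phase-2 cells still have total complexity and total boundary $\rfvd$-complexity $O(n)$.

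Finally, Phase~3: for each Phase-2 cell $C'$ that is not an arc-quadrilateral I compute the geodesic convex hull of the vertices of $C'$ coming from $V$ in $O(\complexity{C'})$ time with~\cite{kpairpath}, subdivide $C'$ along $\bd\ch$ into $t_{k+1}$-path-cells and lune-cells, and compute $\rfvd\cap\bd\ch$ in $O(\complexity{C'}+\complexity{\rfvd\cap\bd C'})$ time via Section~\ref{sec:cellboundary_fvd}; all new edges come from $V$, so complexity does not blow up, and summing with the $O(n)$ bounds of the previous paragraph gives $O(n)$. Adding the three phases, the $(k+1)$th iteration takes $O(n)$ time. I expect the main obstacle to be the Phase~2 bookkeeping: one must verify that isolating the arcs crossing a cell---including the case where $\alpha\cap C_j$ splits into two components, treated componentwise---creates only a constant number of arc-quadrilaterals per arc and can be realized with $O(1)$ edge-list operations per crossing, so that the charges sum to $O(n)$ rather than picking up a logarithmic factor.
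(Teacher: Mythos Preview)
Your proposal is correct and follows essentially the same approach as the paper: bound each of Phases~1--3 on a per-cell basis using Lemma~\ref{lem:gamma_cap_fvd} (your Section~\ref{sec:cellboundary_fvd}), Lemma~\ref{lem:find_arcs}, and the $k$-pair shortest-path routine, then sum via Lemma~\ref{lem:complexity_tpathcell} and the $O(1)$-crossings bound from Lemmas~\ref{lem:at_most_three} and~\ref{lem:arc_intersect_cell}. Your write-up is in fact more explicit than the paper's about propagating the $O(n)$ bounds on $\sum\complexity{C_j}$ and $\sum\complexity{\rfvd\cap\bd C_j}$ between phases, which is only helpful.
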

\begin{proof}
  In Phase 1, we compute $\gamma$ and $\rfvd \cap \gamma$
  for each $t$-path-cell $C$ from the previous iteration.  The running
  time for this is linear in the total complexity of all
  $t$-path-cells in the previous iteration and $\rfvd$ restricted on
  the boundary of all $t$-path-cells by Lemma~\ref{lem:gamma_cap_fvd}, which is $O(n)$ by
  Lemma~\ref{lem:complexity_tpathcell}.
	
  In Phase 2, we first scan $\rfvd \cap \bd C'$ for all cells
  $C'$ from Phase 1 to find an arc of $\rfvd$ crossing some cell. This
  can also be done in linear time by Lemma~\ref{lem:find_arcs} and
  Lemma~\ref{lem:complexity_tpathcell}.  For each arc crossing some
  $t$-path-cell, we compute two line segments bounding the arc and
  subdivide the cell into two smaller regions and one
  arc-quadrilateral in $O(1)$ time.  Each arc of $\rfvd$ crosses at most $O(1)$
  cells from Phase 1, and the time for this step is $O(n)$ in
  total.
	
  In Phase 3, we further subdivide each cell which is not a
  base cell from Phase 2.  In the subdivision of a cell which
  is not a base cell $C$ in Phase 2, we first compute the
  geodesic convex hull $\ch$ of the vertices of $C'$ which are vertices of $P$.
  The geodesic convex hull can be computed in time linear in the
  complexity of $C'$.  By Lemma~\ref{lem:gamma_cap_fvd}, $
  \rfvd\cap\bd \ch$ can be computed in $O(\complexity{\rfvd\cap \bd C'} +
  \complexity{C'})$ time.  Note that all cells other than the base
  cells from Phase 2 are interior disjoint.  Moreover, the
  total number of the edges of such cells is $O(n)$.  Similarly, the
  total complexity of $\rfvd \cap \bd C'$ for all such cells $C'$ is
  $O(n)$.  Therefore, the $t$-path-cells and lune-cells can be
  computed in $O(n)$ time.
\end{proof}

\subsection{Computing \texorpdfstring{$\rfvd$}{rFVD} Restricted to a Curve Connecting Vertices of \texorpdfstring{$P$}{P}}
\label{sec:cellboundary_fvd}
In this section, we describe a
procedure to compute $\rfvd\cap\gamma$ in $O(\complexity{\rfvd\cap\bd C} + \complexity{C})$ time once 
we have $\rfvd\cap\bd C$, where $C\subseteq P$ is a geodesic convex polygon and $\gamma$ is a simple closed curve connecting
some convex vertices of $C$ lying on $\bd P$  in clockwise order along $\bd C$ by the geodesic paths connecting two consecutive vertices.
For an apexed triangle $\triangle$ with $\rcell{\triangle}\cap\gamma\neq\emptyset$, we have 
$\rcell{\triangle}\cap\bd C\neq\emptyset$ by Lemma~\ref{lem:ray_in_cell}.  Thus 
we consider only the apexed triangles $\triangle$ with
$\rcell{\triangle}\cap\bd C\neq\emptyset$.  Let $\mathcal{L}$ be the list
of all such apexed triangles sorted along $\bd P$ with respect to
their bottom sides. (Recall that the bottom sides of all apexed triangles are
interior-disjoint.  Moreover, the union
of them is $\bd P$ by the construction.) Note that $|\mathcal{L}|=
O(\complexity{\rfvd\cap\bd C})$.

Consider a line segment $ab$ contained in $P$.  Without loss of
generality, we assume that $ab$ is horizontal and $a$ lies to the left
of $b$.  Let $\triangle_a$ and $\triangle_b$ be the apexed triangles
that  maximize $g_{\triangle_a}(a)$ and $g_{\triangle_b}(b)$,
respectively. If there is a tie by more than one apexed triangle, we
choose an arbitrary one of them. Note that $\rcell{\tri_a}$ and $\rcell{\tri_b}$
contain $a$ and $b$ in their closures, respectively. With the two apexed triangles, we
define two sorted lists $\mathcal{L}_{ab}$ and $\mathcal{L}_{ba}$ as follows.
Let $\mathcal{L}_{ab}$ be the sorted list of the apexed triangles in
$\mathcal{L}$ which intersect $ab\setminus\{a,b\}$ and whose bottom sides on $\bd P$ lie from the
bottom side of $\triangle_a$ to the bottom side of $\triangle_b$ including $\tri_a$ and $\tri_b$ in
clockwise order along $\bd P$. Similarly, let $\mathcal{L}_{ba}$ be
the sorted list of the apexed triangles in $\mathcal{L}$ which
intersect $ab\setminus\{a,b\}$ and whose bottom sides lie from the bottom side of
$\triangle_b$ to the bottom side of $\triangle_a$ including $\tri_a$ and $\tri_b$ in clockwise order
along $\bd P$. Note that no apexed triangle other than $\tri_a$ and $\tri_b$
appears both $\mathcal{L}_{ab}$ and $\mathcal{L}_{ba}$. See Figure~\ref{fig:list_apex}(a).
	
The following lemma together with Section~\ref{sec:gamma_cap_fvd}
gives a procedure to compute $\rfvd\cap ab$.
This procedure is similar to the procedure in Section~\ref{sec:first-step} that computes
$\rfvd$ restricted to $\bd P$.

\begin{figure}
  \begin{center}
    \includegraphics[width=0.8\textwidth]{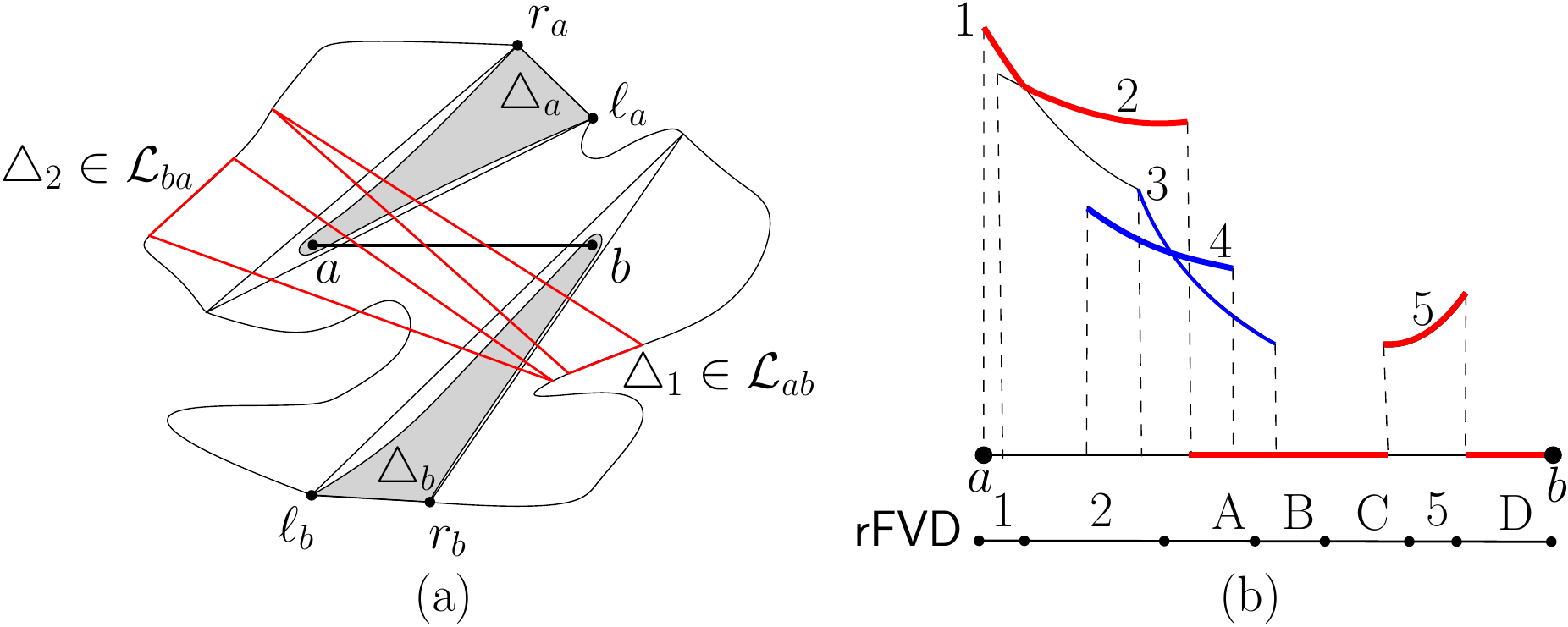}
    \caption {\small (a) The apexed triangles with their bottom sides
      in $\subchain{\ell_a}{r_b}$ and $\subchain{\ell_b}{r_a}$ are
      in $\mathcal{L}_{ab}$ and $\mathcal{L}_{ba}$,
      respectively. $\tri_1\in \mathcal{L}_{ab}$ and $\tri_2\in\mathcal{L}_{ba}$.  (b) The hyperbolic arcs are the graphs of the distance
      functions associated with the apexed triangles in
      $\mathcal{L}_{ab}$. Curves 1, 2, and 5 represent a partial upper envelope of all distance functions.  
      Note that curves 3 and 4 do not appear on the complete upper envelope
      which coincides with $\rfvd$ because curves A and B from $\mathcal{L}_{ba}$ appear on the complete upper envelope.
	}
    \label{fig:list_apex}
  \end{center}
\end{figure}

\begin{lemma}
	\label{lem:fvd-restricted-line}
  Let $C$ be a geodesic convex polygon, and let $a$ and $b$ be two points with
  $ab \subset C$.  Given the two sorted lists $\mathcal{L}_{ab}$ and
  $\mathcal{L}_{ba}$, we can compute $\rfvd\cap ab$ in 
  $O(|\mathcal{L}_{ab}|+|\mathcal{L}_{ba}|)$ time.
\end{lemma}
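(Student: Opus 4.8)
The plan is to compute $\rfvd\cap ab$ by building the upper envelope of the distance functions $g_\triangle$ restricted to $ab$, mimicking the transition-edge procedure of Section~\ref{sec:first-step}, but now sweeping from \emph{both} endpoints toward the interior. First I would observe that $\rfvd\cap ab$ coincides, in its projection onto $ab$, with the upper envelope of $g_\triangle$ over all apexed triangles $\triangle$ with $\rcell{\triangle}\cap ab\neq\emptyset$, and that by Lemma~\ref{lem:ray_in_cell} every such triangle satisfies $\rcell{\triangle}\cap\bd C\neq\emptyset$, hence appears in $\mathcal L$; moreover it must intersect $ab\setminus\{a,b\}$, so it lies in $\mathcal L_{ab}\cup\mathcal L_{ba}$. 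The key structural point I would use from Lemma~\ref{lem:ordering} is that the order in which sites' Voronoi cells meet $ab$ is determined by the cyclic order of the sites along $\bd P$: starting from $\tri_a$ (which contains $a$) and walking along $ab$ toward $b$, the definers we encounter form an initial clockwise run; symmetrically from $b$ toward $a$. Thus $\rfvd\cap ab$ splits into a left portion, described purely by triangles of $\mathcal L_{ab}$, and a right portion, described purely by triangles of $\mathcal L_{ba}$, meeting at a single breakpoint which is the bisector point of $\definer{\tri_a}$ and $\definer{\tri_b}$ (or, if one of those two sites has empty cell, the meeting is degenerate and handled as in Case~2 of Section~\ref{sec:first-step}).

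Concretely, I would run the incremental upper-envelope construction of Section~\ref{sec:first-step} twice. From the left, process the triangles of $\mathcal L_{ab}$ in clockwise order of their bottom sides, maintaining an auxiliary envelope $U'$ and its triangle list exactly as in the $\upperenvelope{s_k}$ computation; each triangle is appended or causes removals, and since a triangle is removed at most once the left sweep costs $O(|\mathcal L_{ab}|)$. Do the same from the right with $\mathcal L_{ba}$, in counterclockwise order, obtaining a right partial envelope in $O(|\mathcal L_{ba}|)$ time. Then I would merge the two partial envelopes: since each is monotone along $ab$ and shares only the ``pivot'' triangles $\tri_a,\tri_b$, I scan the left envelope left-to-right and the right envelope right-to-left and find the unique crossing of the two, which is a point where $g$ from a left triangle equals $g$ from a right triangle; everything left of it is taken from the left envelope, everything right of it from the right envelope. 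The case analysis for whether this crossing exists (equivalently whether both $\definer{\tri_a}$ and $\definer{\tri_b}$ retain nonempty cells on $ab$) is exactly the subcase distinction of Case~2 in Section~\ref{sec:first-step}, again $O(1)$ per pivot provided we have the relevant geodesic distances, which we do since $\tri_a,\tri_b$ have constant complexity.

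The main obstacle I anticipate is justifying the clean left/right split: a priori a triangle from $\mathcal L_{ba}$ could ``interrupt'' triangles from $\mathcal L_{ab}$ in the middle of $ab$, or the two partial envelopes could cross more than once. Ruling this out is precisely where Lemma~\ref{lem:ordering} does the work --- the cyclic order of sites along $\bd P$ forces the definers appearing on $\rfvd\cap ab$, read from $a$ to $b$, to be a contiguous clockwise arc of sites, so once the sweep from $a$ leaves the arc $\subchain{\definer{\tri_b}}{\definer{\tri_a}}$ it never returns, giving a single crossing point. I would spell out this monotonicity argument carefully, and note that the same Lemma~\ref{lem:ordering}-based reasoning already underlies the correctness of the one-sided sweep in Section~\ref{sec:first-step}, so no new ideas beyond bookkeeping are needed. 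Everything else --- computing $\rcell{\triangle}\cap ab$ for a constant-complexity triangle, comparing two distance functions on an overlap interval, amortizing removals --- is identical to Section~\ref{sec:first-step} and contributes only $O(|\mathcal L_{ab}|+|\mathcal L_{ba}|)$ overall.
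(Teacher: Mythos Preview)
Your central structural claim---that $\rfvd\cap ab$ cleanly splits into a left block coming from $\mathcal L_{ab}$ and a right block coming from $\mathcal L_{ba}$, meeting at a single crossing---is false. The paper's own Figure~\ref{fig:list_apex}(b) is a counterexample: the complete upper envelope on $ab$ reads (left to right) as pieces $1,2$ from $\mathcal L_{ab}$, then $A,B$ from $\mathcal L_{ba}$, then $5$ from $\mathcal L_{ab}$ again. The Ordering Lemma controls the order of Voronoi cells along $\bd P$, not along an arbitrary interior segment $ab$; in particular it does \emph{not} force the definers appearing on $ab$ to be a single contiguous clockwise arc. So your merge step (``find the unique crossing'') is wrong, and with it the whole left/right decomposition.

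A second, more technical gap: you propose to reuse the sweep of Section~\ref{sec:transition_edge_fvd} verbatim on each list. That sweep (i) maintains only the \emph{leftmost} connected component of the running envelope, and (ii) resolves its Case~2 (non-overlapping rightmost triangles) by maintaining geodesic paths, at cost $O(|H_{uv}|)$ amortised over the whole boundary of $P$. Neither carries over here. For (i), the partial envelope for $\mathcal L_{ab}$ alone can be disconnected (again see Figure~\ref{fig:list_apex}(b): pieces $1,2$ and $5$ are separated by a gap), so keeping only the leftmost component loses information you need for the merge. For (ii), invoking geodesic-path maintenance inside each cell would cost $O(n)$ per segment, which blows the budget; the paper explicitly flags this (``we must avoid maintaining geodesic paths as it takes $O(n)$ time'') and replaces it with a new constant-time test: in the problematic case the current triangle $\triangle$ and the rightmost envelope triangle $\triangle_r$ necessarily \emph{intersect} (even though their traces on $ab$ are disjoint), so one can compare $g_\triangle$ and $g_{\triangle_r}$ on $R=\triangle\cap\triangle_r$ directly. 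That observation is the actual content of the lemma and is missing from your proposal. The fix is to maintain all connected components of the partial envelope, use the intersection test for the non-overlap case, and then merge the two partial envelopes by a full simultaneous left-to-right scan rather than looking for a single crossing.
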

\begin{proof}
  Recall that the upper envelope of $g_\triangle$ on $ab$ for all apexed
  triangles $\triangle \in \mathcal{L}_{ab}\cup\mathcal{L}_{ba}$
  (simply, the upper envelope for
  $\mathcal{L}_{ab}\cup\mathcal{L}_{ba}$) coincides with $\rfvd\cap
  ab$ in its projection on $ab$ by definition.  Thus we compute the
  upper envelope for $\mathcal{L}_{ab}\cup\mathcal{L}_{ba}$.  To this
  end, we compute a ``partial'' upper envelope of $g_\triangle$ on
  $ab$ for all apexed triangles $\triangle \in \mathcal{L}_{ab}$.
  After we do this also for the apexed triangles in $\mathcal{L}_{ba}$, we
  merge the two ``partial'' upper envelopes on $ab$ to obtain the
  complete upper envelope of $g_\triangle$ on $ab$ for all apexed
  triangles $\triangle \in \mathcal{L}_{ab}\cup\mathcal{L}_{ba}$.
	
  A \textit{partial upper envelope} for $\mathcal{L}_{ab}$ is the
  upper envelope for $A \subseteq \mathcal{L}_{ab}$ satisfying that
  $\triangle \in \mathcal{L}_{ab}$ 
  belongs to $A$ if $\rcell{\triangle}\cap ab \neq \emptyset$. Here, an apexed triangle $\tri\in A$
  does not necessarily have a refined Voronoi cell on $ab$.  Thus, a partial upper
  envelope for $\mathcal{L}_{ab}$ (and $\mathcal{L}_{ba}$) is not necessarily
  unique. The upper envelope of two
  partial upper envelopes, one for $\mathcal{L}_{ab}$ and one for
  $\mathcal{L}_{ba}$, is the complete upper envelope for
  $\mathcal{L}_{ab}\cup\mathcal{L}_{ba}$ by definition. 
  See  Figure~\ref{fig:list_apex}(b).  

  In the following, we show how to compute one of the partial upper
  envelopes for $\mathcal{L}_{ab}$.  A partial upper envelope for
  $\mathcal{L}_{ba}$ can be computed analogously.
  Then the complete upper envelope can be
  constructed in $O(|\mathcal{L}_{ab}|+|\mathcal{L}_{ba}|)$ time by scanning
  the two partial upper envelopes along $ab$.

  For any two apexed triangles $\triangle_1, \triangle_2 \in
  \mathcal{L}_{ab}$ such that $\triangle_1$ comes before $\tri_2$ in the sorted
  list $\mathcal{L}_{ab}$, 
  $\rcell{\triangle_1}\cap ab$ lies to the left of $\rcell{\triangle_2}\cap
  ab$ along $uv$ if they exist.  If it is not true, there is a point 
  contained in $\rcell{\triangle_1} \cap \rcell{\triangle_2}$ by
  Lemma~\ref{lem:ray_in_cell}, which contradicts that all refined
  cells are pairwise disjoint.  
  With this property, 
  a  partial upper envelope for $\mathcal{L}_{ab}$ can be constructed in
  a way similar to the procedure for computing $\rfvd \cap \bd P$ in
  Section~\ref{sec:transition_edge_fvd}.  The difficulty here is that
  we must avoid maintaining geodesic paths as it takes $O(n)$ time,
  which is too much for our purpose.

  We consider the apexed triangles in $\mathcal{L}_{ab}$ from
  $\triangle_a$ to $\triangle_b$ one by one as follows.  Let $U$ be
  the current partial upper envelope of the distance functions of the 
  apexed triangles from $\triangle_a$ to $\triangle'$ of
  $\mathcal{L}_{ab}$ and $\tau$ be the list of the apexed triangles
  whose distance functions restricted to $ab$ appear on $U$ in the
  order in which they appear on $U$. Note that $U$ is not necessarily continuous.
  We maintain all connected components of $U$ here while 
  we maintain only one connected component of $U$ in Section~\ref{sec:first-step}.
  We show how to update $U$ to a
  partial upper envelope of the distance functions of the apexed triangles
  from $\triangle_a$ to $\triangle$, where $\triangle$ is the apexed
  triangle next to $\triangle'$ in $\mathcal{L}_{ab}$.  Let
  $\triangle_r$ be the last element in $\tau$ and $\mu$ be the line
  segment contained in $ab$ such that $g_{\triangle_r}(x) = U(x) > 0$
  for every point $x \in \mu$.
	



  There are three possibilities: (1) $\triangle\cap\mu\neq\emptyset$.  In
  this case, we compare the distance functions of $\triangle$ and
  $\triangle_r$ on $\triangle \cap\mu$. Depending on the result, we
  update $U$ and $\tau$ as we did in
  Section~\ref{sec:transition_edge_fvd}.  (2) $\triangle\cap\mu=\emptyset$
  and $\triangle\cap ab$ lies to the right of $\mu$.  We append
  $\triangle$ to $\tau$ at the end and update $U$ accordingly.  (3)
  $\triangle\cap\mu=\emptyset$ and $\triangle\cap ab$ lies to the left of
  $\mu$.  We have to use a method different from the one in
  Section~\ref{sec:transition_edge_fvd} to handle this case. Here,
  contrast to the case in Section~\ref{sec:transition_edge_fvd},
  $\triangle_r$ intersects $\triangle$.
  Thus, we can check whether $\rcell{\triangle}\cap ab = \emptyset$ or
  $\rcell{\triangle_r}\cap ab = \emptyset$ easily as follows.  Consider the
  set $R=\triangle\cap\triangle_r$. The distance functions associated
  with $\triangle$ and $\triangle_r$ have positive values on $R$, and thus
  we can compare the geodesic distances from $\definer{\triangle}$ and
  $\definer{\triangle_r}$ to any point in $R$.  Depending on the result,
  we can check in constant time whether $\rcell{\triangle}$ and
  $\rcell{\triangle_r}$ intersect the connected regions
  $\triangle\setminus R$ and $\triangle_r\setminus R$ containing
  $\apex{\triangle}$ and $\apex{\triangle_r}$, respectively.  If
  $\rcell{\triangle}$ does not intersect the connected region
  $\triangle\setminus R$ containing $\apex{\triangle}$, then
  $\rcell{\triangle_r}$ does not intersect $ab$.  This also holds for
  $\rcell{\triangle_r}$.  Depending on the result, we apply the procedure
  in Section~\ref{sec:transition_edge_fvd}.
	
  In this way, we append an apexed triangle $\tri$ to $\tau$ if $\rcell{\tri}\cap ab\neq\emptyset$.
  Similarly, we remove some apexed triangle $\triangle$ from $\tau$
  only if $\rcell{\triangle}\cap ab =\emptyset$.
  Thus, by definition, $U$ is a partial upper envelope of the distance
  functions for $\mathcal{L}_{ab}$.
	
  As mentioned above, we do this also for $\mathcal{L}_{ba}$.
  Then we compute the upper envelope of 
  the two resulting partial upper envelopes, which is the
  complete upper envelope for $\mathcal{L}_{ab}\cup\mathcal{L}_{ba}$.
  This takes $O(|\mathcal{L}_{ab}|+|\mathcal{L}_{ba}|)$ time.
\end{proof}

\begin{corollary}
  \label{lem:const_gamma}
  Let $C$ be a geodesic convex polygon and $E$ be a set of $O(1)$ line
  segments which are contained in $C$.  Then $\rfvd\cap ab$ 
  for all $ab \in E$ can be computed in $O(\complexity{\rfvd\cap\bd C})$ time.
\end{corollary}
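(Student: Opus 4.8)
The goal is to compute $\rfvd\cap ab$ for all $ab\in E$ in time $O(\complexity{\rfvd\cap\bd C})$, given that $\complexity{E}=O(1)$. The only tool available is Lemma~\ref{lem:fvd-restricted-line}, which for a single segment $ab\subset C$ computes $\rfvd\cap ab$ in $O(|\mathcal{L}_{ab}|+|\mathcal{L}_{ba}|)$ time, \emph{provided} the two sorted lists $\mathcal{L}_{ab}$ and $\mathcal{L}_{ba}$ are already in hand. So the plan is: (1) build, once, the global sorted list $\mathcal{L}$ of all apexed triangles $\triangle$ with $\rcell{\triangle}\cap\bd C\neq\emptyset$, sorted along $\bd P$ by bottom side; (2) for each of the $O(1)$ segments $ab\in E$, extract $\mathcal{L}_{ab}$ and $\mathcal{L}_{ba}$ from $\mathcal{L}$; (3) invoke Lemma~\ref{lem:fvd-restricted-line} on each segment.

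For step~(1), recall from the discussion preceding Lemma~\ref{lem:fvd-restricted-line} that $|\mathcal{L}|=O(\complexity{\rfvd\cap\bd C})$, and $\mathcal{L}$ can be read off directly from $\rfvd\cap\bd C$: each refined cell appearing on $\bd C$ contributes its defining apexed triangle, and since the bottom sides of all apexed triangles are interior-disjoint and their union is $\bd P$, sorting these triangles along $\bd P$ by bottom side costs $O(|\mathcal{L}|)=O(\complexity{\rfvd\cap\bd C})$ time (e.g.\ by walking along $\bd C$ in clockwise order, or via the hash-table/linked-list trick used earlier in the paper). For step~(2), given a segment $ab$ I first identify the apexed triangles $\triangle_a,\triangle_b$ maximizing $g_{\triangle_a}(a)$ and $g_{\triangle_b}(b)$; these can be found by a single scan of $\mathcal{L}$, evaluating each distance function at $a$ and at $b$ in $O(1)$ time, for a total of $O(|\mathcal{L}|)$ per endpoint. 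Then I scan $\mathcal{L}$ once more, testing each triangle for intersection with $ab\setminus\{a,b\}$, and distribute it into $\mathcal{L}_{ab}$ or $\mathcal{L}_{ba}$ according to whether its bottom side lies on the clockwise arc of $\bd P$ from $\triangle_a$'s bottom side to $\triangle_b$'s, or the other way; since $\mathcal{L}$ is already sorted along $\bd P$, this scan preserves sortedness in each output list, and $|\mathcal{L}_{ab}|+|\mathcal{L}_{ba}|\le |\mathcal{L}|+2=O(\complexity{\rfvd\cap\bd C})$. Step~(3) is then immediate from Lemma~\ref{lem:fvd-restricted-line}, costing $O(|\mathcal{L}_{ab}|+|\mathcal{L}_{ba}|)=O(\complexity{\rfvd\cap\bd C})$ per segment.

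Summing over the $O(1)$ segments of $E$: step~(1) is done once in $O(\complexity{\rfvd\cap\bd C})$ time, and steps~(2)--(3) cost $O(\complexity{\rfvd\cap\bd C})$ per segment, hence $O(\complexity{\rfvd\cap\bd C})$ in total since $|E|=O(1)$. Note that $\complexity{C}$ does not even enter the bound here, because $C$ is used only implicitly (the segments lie in $C$, so the relevant triangles all meet $\bd C$); this is consistent with the stated bound $O(\complexity{\rfvd\cap\bd C})$, which is no larger than the $O(\complexity{\rfvd\cap\bd C}+\complexity{C})$ appearing in Lemma~\ref{lem:fvd-restricted-line}'s companion statements.

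I do not expect a genuine obstacle here — the corollary is a bookkeeping consequence of Lemma~\ref{lem:fvd-restricted-line}. The one point that needs a little care is verifying that the lists $\mathcal{L}_{ab},\mathcal{L}_{ba}$ extracted from $\mathcal{L}$ genuinely satisfy the input hypotheses of Lemma~\ref{lem:fvd-restricted-line} — in particular that they are sorted along $\bd P$ with respect to bottom sides and that $\triangle_a,\triangle_b$ are the only triangles appearing in both — but this follows directly from the definitions of $\mathcal{L}_{ab}$ and $\mathcal{L}_{ba}$ given just before the lemma, combined with the fact that the bottom sides partition $\bd P$. A secondary subtlety is that $E$ may contain segments whose endpoints are not on $\bd C$ (the segments of $\gamma$ produced in Phase~1 connect convex vertices of $C$, which \emph{are} on $\bd C$, but in general one should not assume it); since Lemma~\ref{lem:fvd-restricted-line} only requires $ab\subset C$, this causes no difficulty.
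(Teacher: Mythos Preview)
Your proposal is correct and matches what the paper intends: the corollary is stated without proof as an immediate consequence of Lemma~\ref{lem:fvd-restricted-line}, and you have filled in the bookkeeping correctly---build $\mathcal{L}$ once, then for each of the $O(1)$ segments scan $\mathcal{L}$ to locate $\triangle_a,\triangle_b$ and to extract the two sorted sublists, and finally invoke the lemma. The only point worth noting is that your justification that $\triangle_a,\triangle_b\in\mathcal{L}$ (namely, that any apexed triangle whose refined cell meets $ab\subset C$ must also meet $\bd C$, by Lemma~\ref{lem:ray_in_cell}) is exactly the observation the paper makes in the paragraph preceding Lemma~\ref{lem:fvd-restricted-line}, so the argument is complete.
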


Due to Lemma~\ref{lem:fvd-restricted-line}, we can compute $\rfvd \cap \gamma$ in $O(\sum_{ab\in\gamma}{|\mathcal{L}_{ab}|+|\mathcal{L}_{ba}|})$ time
once we compute $\mathcal{L}_{ab}$ and $\mathcal{L}_{ba}$ for all edges $ab$ of $\gamma$.
Recall that every apexed triangle in $\mathcal{L}_{ab}\cup\mathcal{L}_{ba}$ intersects $ab$ by
the definitions of $\mathcal{L}_{ab}$ and $\mathcal{L}_{ba}$.
Since every apexed triangle intersects $\gamma$ in at most two edges, each apexed triangle in $\mathcal{L}$
is contained in $\mathcal{L}_{ab}\cup\mathcal{L}_{ba}$ for at most two edges $ab$ of $\gamma$. 
Therefore, once we have $\mathcal{L}_{ab}$ and $\mathcal{L}_{ba}$ for every edge $ab$ of $\gamma$,
we can compute $\rfvd\cap\gamma$ in $O(|\mathcal{L}|)=O(|\rfvd\cap \bd C|)$ time.
The remaining procedure is computing $\mathcal{L}_{ab}$ and $\mathcal{L}_{ba}$
for all edges $ab$ of $\gamma$.

\subsubsection{Computing \texorpdfstring{$\mathcal{L}_{ab}$}{Lab} and \texorpdfstring{$\mathcal{L}_{ba}$}{Lba} for
  All Edges \texorpdfstring{$ab$}{ab} of \texorpdfstring{$\gamma$}{gamma}}
\label{sec:gamma_cap_fvd}

We show how to compute $\mathcal{L}_{ab}$ and
$\mathcal{L}_{ba}$ for all edges $ab$ of $\gamma$ in
$O(|\mathcal{L}|+\complexity{C})$ time.
By definition, every vertex of $\gamma$ is a vertex of $P$.
Let $ab$ be an edge of $\gamma$,
where $b$ is the clockwise neighbor of $a$ along $\gamma$.  The edge $ab$ 
is a chord of $P$ and divides $P$
into two subpolygons such that $\gamma \setminus ab$ is contained in
one of the subpolygons. Let $P_1(ab)$ be the subpolygon containing
$\gamma \setminus ab$ and $P_2(ab)$ be the other subpolygon. By the
construction, $P_2(ab)$ and $P_2(e')$ 
are disjoint in their interior 
for any edge $e' $ of $\gamma$ other than $ab$. 
For an apexed triangle in $\mathcal{L}_{ab}\cup\mathcal{L}_{ba}$,
either its bottom side lies in $\bd P_2(ab)$ or its apex lies in $\bd P_2(ab)$. Moreover, if its apex
lies in $\bd P_j(ab)$, so does its definer for $j=1,2$ since every apexed triangle in $\mathcal{L}_{ab}\cup\mathcal{L}_{ba}$ intersects $ab\setminus\{a,b\}$.

Using this, we compute $\mathcal{L}_{ab}$ and $\mathcal{L}_{ba}$ for all edges $ab$ in
$\gamma$ as follows.
Initially, we set $\mathcal{L}_{ab}$ and $\mathcal{L}_{ba}$ for all edges $ab$ to $\emptyset$.
We update them by scanning the apexed triangles in
$\mathcal{L}$ from the first to the last.  When we handle an apexed
triangle $\triangle \in \mathcal{L}$, we first find the edge $ab$ of
$\gamma$ such that $P_2(ab)$ contains the bottom side of $\triangle$ 
and check whether $\triangle \cap ab=\emptyset$. If it is nonempty, we
append $\triangle$ to $\mathcal{L}_{ab}$ or $\mathcal{L}_{ba}$ accordingly.
Since we have $\tri_a$ and $\tri_b$ (recall that $a$ and $b$ are also vertices of $P$),
we can decide if $\mathcal{L}_{ab}$ or $\mathcal{L}_{ba}$ contains $\tri$ in constant time.  
Otherwise, we do nothing.
We repeat this with the apexed triangles in $\mathcal{L}$ one by one in order
  until the last one of $\mathcal{L}$ is handled. Then we scan $\mathcal{L}$ again
and update $\mathcal{L}_{ab}$ and $\mathcal{L}_{ba}$
analogously, except that we find the edge
$ab$ of $\gamma$ such that $P_2(ab)$ contains the \textit{definer} of $\triangle$.
This means that we scan $\mathcal{L}$ twice in total, once with respect to the bottom sides and once
with respect to the definers.

This can be done in $O(|\mathcal{L}|)$ time in total for all edges of $\gamma$ and
all apexed triangles in $\mathcal{L}$.
To see this, observe that the order of any three apexed triangles appearing on $\mathcal{L}$ is the same as the order
of their definers (and their bottom sides) appearing on $\bd P$.  Thus
to find the edge $ab$ of $\gamma$ such that $P_2(ab)$ contains 
the definer (or the
bottom side) of $\triangle$, it is sufficient to check at most two
edges: the edge $e'$ such that $P_2(e')$ contains the bottom side of
the apexed triangle previous to $\triangle$ in $\mathcal{L}$ and the clockwise
neighbor of $e'$.  
Thus we can find the edge $ab$ such that $\mathcal{L}_{ab}$ or $\mathcal{L}_{ba}$ contains $\tri$
for each triangle $\tri$ in $\mathcal{L}$ in constant time.
In the first scan, we simply append $\tri$ to one of the two sorted lists, but in the second scan,
we find the location of $\tri$ in one of the two sorted lists. The second scan can also be 
done in $O(|\mathcal{L}|)$ time since the order of apexed triangles in $\mathcal{L}_{ab}$ 
(and $\mathcal{L}_{ba}$) is the same as their order in $\mathcal{L}$.
Therefore, this procedure takes in $O(|\mathcal{L}|)$ time in total.

The following lemmas summarize this section.

\begin{lemma}
  \label{lem:gamma_cap_fvd}
  Let $C\subseteq P$ be a geodesic convex polygon and 
  ${\gamma}$ be a simple closed curve
  connecting some convex vertices of $C$ lying on $\bd P$ 
  such that two consecutive vertices in clockwise order are connected
  by a geodesic path.
  Once $\rfvd\cap\bd C$ is computed, $\rfvd\cap{\gamma}$ can be
  computed in $O(\complexity{\rfvd\cap\bd C} + \complexity{C})$ time.
\end{lemma}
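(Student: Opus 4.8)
The plan is to assemble the lemma from the two ingredients developed just above it: Lemma~\ref{lem:fvd-restricted-line}, which computes $\rfvd\cap ab$ for a single segment $ab$ from the sorted lists $\mathcal{L}_{ab},\mathcal{L}_{ba}$, and the batched construction of Section~\ref{sec:gamma_cap_fvd}, which produces all of these lists at once. First I would extract from the given $\rfvd\cap\bd C$ the list $\mathcal{L}$ of all apexed triangles $\triangle$ with $\rcell{\triangle}\cap\bd C\neq\emptyset$, sorted along $\bd P$ with respect to their (interior-disjoint) bottom sides; this takes $O(\complexity{\rfvd\cap\bd C})$ time and $|\mathcal{L}|=O(\complexity{\rfvd\cap\bd C})$. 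By Lemma~\ref{lem:ray_in_cell}, any apexed triangle whose refined cell meets $\gamma$ already has $\rcell{\triangle}\cap\bd C\neq\emptyset$, so $\mathcal{L}$ contains every apexed triangle relevant to $\rfvd\cap\gamma$, and it suffices to work with $\mathcal{L}$.

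Next I would run the procedure of Section~\ref{sec:gamma_cap_fvd} to compute $\mathcal{L}_{ab}$ and $\mathcal{L}_{ba}$ for every edge $ab$ of $\gamma$. Since every vertex of $\gamma$ is a vertex of $P$, and every apexed triangle of $\mathcal{L}_{ab}\cup\mathcal{L}_{ba}$ has either its bottom side or its apex (hence its definer) in $\bd P_2(ab)$, two linear scans of $\mathcal{L}$ — one keyed by bottom sides, one keyed by definers — place each apexed triangle into the correct lists. Because the order of triangles along $\mathcal{L}$ agrees with the order of both their bottom sides and their definers along $\bd P$, the edge of $\gamma$ receiving $\triangle$ can be located by advancing a pointer over the edges of $\gamma$, so this whole construction runs in $O(|\mathcal{L}|+\complexity{\gamma})=O(\complexity{\rfvd\cap\bd C}+\complexity{C})$ time, the $\complexity{C}$ term accounting for iterating over the $O(\complexity{C})$ edges of $\gamma$ and initializing their (initially empty) lists.

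Finally, for each edge $ab$ of $\gamma$ I would invoke Lemma~\ref{lem:fvd-restricted-line} on $(\mathcal{L}_{ab},\mathcal{L}_{ba})$ to obtain $\rfvd\cap ab$ in $O(|\mathcal{L}_{ab}|+|\mathcal{L}_{ba}|)$ time, and then concatenate these pieces in clockwise order along $\gamma$ to get $\rfvd\cap\gamma$. The time of this stage is $\sum_{ab\in\gamma}(|\mathcal{L}_{ab}|+|\mathcal{L}_{ba}|)$, and the key accounting step is that each apexed triangle of $\mathcal{L}$ lies in $\mathcal{L}_{ab}\cup\mathcal{L}_{ba}$ for at most two edges $ab$ of $\gamma$: a triangle is placed in these lists only when it meets $ab\setminus\{a,b\}$, and it meets at most two of the edges of $\gamma$. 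Hence the sum collapses to $O(|\mathcal{L}|)=O(\complexity{\rfvd\cap\bd C})$, and combining with the previous stages yields the claimed $O(\complexity{\rfvd\cap\bd C}+\complexity{C})$ bound.

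The main obstacle, as already in Section~\ref{sec:cellboundary_fvd}, is correctness rather than the time bound: a single apexed triangle can contribute a disconnected refined cell along $\gamma$ (Figure~\ref{fig:connected_component}), so a naive left-to-right upper-envelope sweep that uses only $\mathcal{L}$ in sorted order can miss pieces. This is precisely why the construction splits the triangles crossing a segment into the two sublists $\mathcal{L}_{ab}$ and $\mathcal{L}_{ba}$ and takes the upper envelope of the two partial upper envelopes; the fact that this reproduces $\rfvd\cap ab$ is exactly the content of Lemma~\ref{lem:fvd-restricted-line}, so the only new correctness point to check here is that the batched list-construction of Section~\ref{sec:gamma_cap_fvd} assigns every triangle of $\mathcal{L}$ to the correct side of the correct edge, which follows from the monotone correspondence between the order along $\mathcal{L}$ and the orders of bottom sides and definers along $\bd P$.
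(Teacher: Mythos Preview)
Your proposal is correct and follows essentially the same approach as the paper: Lemma~\ref{lem:gamma_cap_fvd} is stated there as a summary of Section~\ref{sec:cellboundary_fvd}, and your argument assembles exactly the same three ingredients (extract $\mathcal{L}$ from $\rfvd\cap\bd C$, build all $\mathcal{L}_{ab},\mathcal{L}_{ba}$ via the two scans of Section~\ref{sec:gamma_cap_fvd}, then apply Lemma~\ref{lem:fvd-restricted-line} edge by edge) with the same accounting that each apexed triangle meets at most two edges of $\gamma$. Your additional remarks on correctness and on where the $\complexity{C}$ term arises are accurate and consistent with the paper.
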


\begin{lemma}
  Each iteration takes $O(n)$ time and the algorithm in this section
  terminates in $O(\log\log n)$ iterations. Thus the algorithm in this section takes $O(n\log \log n)$ time.
\end{lemma}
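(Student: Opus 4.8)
The plan is to obtain the lemma by simply multiplying two bounds that have already been established in this section: the cost of a single iteration and the number of iterations. By the preceding lemma, one iteration of the subdivision procedure — that is, running Phases~1--3 on all $t_k$-path-cells produced by the previous iteration and computing $\rfvd$ restricted to the boundary of every newly created cell — takes $O(n)$ time. By Corollary~\ref{lem:last_iter_base_cell}, the procedure halts after $O(\log\log n)$ iterations. Hence the total running time is $O(\log\log n)\cdot O(n)=O(n\log\log n)$.

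The one point that needs to be stated carefully is that the per-iteration bound is genuinely uniform in $k$, so I would make explicit why this holds. At the end of the $k$th iteration Lemma~\ref{lem:complexity_tpathcell} guarantees $\sum_{C}\complexity{C}=O(n)$ and $\sum_{C}\complexity{\rfvd\cap\bd C}=O(n)$, the sums taken over the $t_k$-path-cells, and these are exactly the cells processed in the $(k+1)$st iteration (base cells, once created, are never touched again). Therefore Phase~1, together with the computation of $\rfvd\cap\gamma_C$ via Lemma~\ref{lem:gamma_cap_fvd}, costs $O(n)$ in total; Phase~2 costs $O(n)$ by Lemma~\ref{lem:find_arcs} since every arc of $\rfvd$ crosses only $O(1)$ cells of Phase~1 (Lemmas~\ref{lem:at_most_three} and~\ref{lem:nonbase_cell_cross}); and the geodesic-convex-hull step of Phase~3 costs $O(n)$ because the cells it is applied to are interior-disjoint with total complexity $O(n)$. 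Summing these gives $O(n)$ for the whole $(k+1)$st iteration, independent of $k$.

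For completeness I would also recall the reason behind the $O(\log\log n)$ iteration count, although it is already recorded in Corollary~\ref{lem:last_iter_base_cell}: since $t_1=n$ and $t_{k+1}=\floor{\sqrt{t_k}}+1$, an easy induction shows $t_k\le n^{1/2^{\,k-1}}+2$, so $t_k\le 3$ once $k=O(\log\log n)$; at that point each $t_k$-path-cell is a pseudo-triangle, hence a base cell, and the recursion stops (lune-cells are likewise base cells and are never subdivided). I do not expect any real obstacle here — all the substantive work, namely the amortized complexity bounds of Lemma~\ref{lem:complexity_tpathcell} underpinning the $O(n)$ per-iteration cost and the iterated-square-root argument bounding the depth, is already in place — so the proof is just the assembly of these facts.
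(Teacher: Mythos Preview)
Your proposal is correct and matches the paper's approach: the lemma is stated there as a summary with no separate proof, since the $O(n)$ per-iteration bound is the content of the earlier ``subdivision in each iteration'' lemma and the $O(\log\log n)$ iteration count is Corollary~\ref{lem:last_iter_base_cell}. Your write-up simply makes explicit the assembly of these two facts (and the uniformity in $k$ via Lemma~\ref{lem:complexity_tpathcell}), which is exactly what the paper leaves implicit.
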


\section{Computing \texorpdfstring{$\rfvd$}{rFVD} in the Interior of a Base Cell}
\label{sec:third-step}
In the second step of the algorithm described in Section~\ref{sec:second-step},
we obtained a subdivision of $P$ into $O(n\log\log n)$ base cells.
Moreover, we have $\rfvd\cap \bd T$ for every such base cell $T$.
For a concave chain of $\bd T$, we define the \emph{angle-span}
of the chain as follows. While traversing the chain from one endpoint to the other,
consider the turning angle at each vertex of the chain, other than the two endpoints,
which is the angle turned at the vertex. 
The angle-span of the chain is set to the sum of the turning angles.
For a technical reason, we define the angle-span of a point as $0$.

Our goal in this section is to compute $\rfvd\cap T$ using $\rfvd \cap \bd T$ in  $O(\complexity{\rfvd\cap\bd T})$ time.
To make the description easier, we first make four assumptions: (1) $T$ is a lune-cell, and 
(2) $\rcell{\triangle}\cap \bd T$ is
connected and contains the bottom side of $\tri$ for
any apexed triangle $\triangle$ with $\rcell{\tri}\cap \bd T\neq\emptyset$. 
(3) If $\apex{\tri}$ is on $\bd T$, the closure of $\rcell{\tri}$ does not coincide
	with $\tri$.
(4) The maximal concave chain of $\bd T$ has angle-span at most $\pi/2$.
In Sections~\ref{sec:assump-1},~\ref{sec:assum-2}, and~\ref{sec:assum-4} 
we generalize the algorithm to compute $\rfvd\cap T$ without these
assumptions.

\subsection{Linear-time Algorithms for Computing Abstract Voronoi Diagrams}
We first introduce the algorithms for computing abstract Voronoi diagrams
by Klein~\cite{Klein} and Klein and Lingas~\cite{klein1994}, 
which will be used for our algorithm.
Abstract Voronoi diagrams are based on systems of simple curves~\cite{Klein}.
Let $S=\{1,\ldots,N\}$.
Each site is represented by an index in $S$. Any pair $(i,j)$ of indices in $S$ 
has a simple unbounded curve $B(i,j)$ which is called a \emph{bisecting curve}.
The bisecting curve $B(i,j)$ partitions the plane into two unbounded open domains, $D(i,j)$ and
$D(j,i)$.
Then the abstract Voronoi diagram $\avd[S]$ under the family $\{B(i,j) \mid i\neq j\in S\}$ is defined as follows.
\begin{align*}
	\aacell(i,S) &= \bigcap_{j\in S} D(i,j),\\
	\avd[S]&=\bigcup_{i\in S}\overline{\aacell}(i,S),
\end{align*}
where $\overline{A}$ is the closure of a point set $A\subseteq P$.
The abstract Voronoi diagram can be computed in $O(N\log N)$ time if the family of bisecting curves
is \emph{admissible}~\cite{Klein}.
\begin{definition}[{\cite[Definition 2.1.2]{Klein}}]
	\label{def:admissible}
	The family $\{B(i,j) \mid i\neq j\in S\}$ is called \emph{admissible} if the followings hold.
	\begin{enumerate}
		\item Given any two indices $i,j\in S$, we can obtain their bisecting curve $B(i,j)$ in constant time. (This condition is assumed implicitly in~\cite{Klein}.)
		\item The intersection of any two bisecting curves consists of finitely many connected components.
		\item For each nonempty subset $S'$ of $S$ with $|S'|\geq 3$,
		\begin{enumerate}
			\item[A.] $\aacell(i,S')$ is path-connected and has a nonempty interior, for each $i\in S'$.
			\item[B.] $\mathbb{R}^2$ is the union of $\overline{\aacell}(i,S')$ over all indices $i\in S'$.
		\end{enumerate}
	\end{enumerate}
\end{definition}

Klein and Lingas~\cite{klein1994} presented a linear-time algorithm for computing the abstract Voronoi diagram 
for an admissible family of bisecting curves if a \emph{Hamiltonian curve} of the abstract Voronoi diagram is given.
\begin{definition}[{\cite[Lemma 3 and Definition 4]{klein1994}}]
	\label{def:hamiltonian}
	The family of bisecting curves is \emph{Hamiltonian} with respect to a simple and unbounded curve $H$ if $H$ has the following properties.
	\begin{enumerate}
		\item $H$ is homeomorphic to a line.
		\item For any $S'\subseteq S$ with $|S'|\geq 2$, $\aacell(i,S')$ is visited by $H$ exactly once for every $i\in S'$.
	\end{enumerate}
	In this case, we call $H$ a \emph{Hamiltonian curve} of $\avd[S]$.
\end{definition}

Using the algorithms in~\cite{Klein,klein1994}, we can compute the nearest-point Voronoi diagrams under a variety of metrics.
However, these algorithms do not work for computing Euclidean farthest-point Voronoi diagrams
because some site may not have their (nonempty) Voronoi cells in the diagram (thus they violate 3A in Lemma~\ref{def:admissible}).
In our case, we will show that every site has a nonempty Voronoi cell, which allows us to compute $\rfvd$
using the algorithm in~\cite{klein1994}.

\subsection{New Distance Function}
\label{sec:define_func}
Recall that our goal is to compute $\rfvd\cap T$ from $\rfvd\cap \bd T$.
We cannot apply the algorithm in~\cite{klein1994} directly because the geodesic metric does not satisfy
the first condition in Definition~\ref{def:admissible}.
Thus we propose a new distance function whose corresponding system of bisecting curves satisfies  
the conditions in Definition~\ref{def:admissible} and Definition~\ref{def:hamiltonian}.

Let $\tri$ be an apexed triangle having its refined Voronoi cell on $\bd T$. Without loss of generality, we assume that 
the bottom side of $\tri$ is horizontal. 
We partition $\mathbb{R}^2$ into five regions, as depicted in Figure~\ref{fig:pseudo_dist_func}(b),
with respect to $\triangle$.  
Consider five halflines $\ell_1$, $\ell_2$, $\ell_3$, $\ell_4$ and
$\ell_5$ starting from $\apex{\tri}$ as follows.
The halflines $\ell_1$ and $\ell_2$ go towards the left and the right corners of $\tri$,
respectively.  The halflines $\ell_3$ and $\ell_5$ are orthogonal
to $\ell_2$ and $\ell_1$, respectively.  
The halfline $\ell_4$ bisects the angle of $\triangle$ at $\apex{\triangle}$
but does not intersect $\interior{\triangle}$.

Consider the region partitioned by the five halflines.
We denote the region bounded by $\ell_1$ and $\ell_2$
that contains $\triangle$ by $\inregion{\tri}$. 
The remaining four
regions are denoted by $\lsideregion{\tri}$, $\ltopregion{\tri}$,
$\rtopregion{\tri}$, and $\rsideregion{\tri}$ in the clockwise order
from $\inregion{\tri}$ around $\apex{\tri}$.

For a point $x \in \lsideregion{\tri} \cup \ltopregion{\tri}$, let
$\hat{x}_\triangle$ denote the orthogonal projection of $x$ on the line containing $\ell_1$.
 Similarly, for a point $x \in
\rsideregion{\tri} \cup \rtopregion{\tri} \setminus \ell_4$, 
let $\hat{x}_\triangle$
denote the orthogonal projection of $x$ on the line containing $\ell_2$.
For a point $x \in \inregion{\tri}$, we set
$\hat{x}_\triangle = x$.
When $\triangle$ is clear in the context, we simply use $\hat{x}$ to denote
$\hat{x}_\triangle$.

We define a new distance function $f_\triangle$ : $\mathbb{R}^2 \rightarrow
\mathbb{R}$ for each apexed triangle $\triangle$ with
$\rcell{\triangle}\cap\bd T\neq\emptyset$ as follows.

\begin{align*}
  f_\triangle(x) &=
  \begin{cases}
    d(\apex{\triangle},\definer{\triangle})-\|\hat{x}_\tri-\apex{\triangle}\| & \quad \text{if } x \in \ltopregion{\tri}\cup\rtopregion{\tri},\\
    d(\apex{\triangle},\definer{\triangle})+\|\hat{x}_\tri-\apex{\triangle}\|
    & \quad \text{otherwise,}
  \end{cases}
\end{align*}
where $\|x-y\|$ denote the Euclidean distance between $x$ and $y$.
Note that $f_\triangle$ is continuous.  Each contour curve,
that is a set of points with the same function value, consists of two
line segments and at most one circular arc. See
Figure~\ref{fig:pseudo_dist_func}(c). 

Here, we assume that there is no pair
$(\tri_1,\tri_2)$ of apexed triangles such that two sides, one from
$\tri_1$ and the other from $\tri_2$, are parallel.  If there exists such a
pair, contour curves for two apexed triangles may overlap.  
We will show how to avoid this assumption in Section~\ref{sec:assum-3} 
by slightly perturbing the distance function defined in this section.

 By the definition of
 $f_\triangle$, the following lemma holds.
\begin{lemma}
  \label{lem:max_pseudo_dist}
  The difference of $f_\triangle(x_1)$ and $f_\triangle(x_2)$
  is less than or equal to $\|x_1-x_2\|$ for any two
  points $x_1,x_2 \in \mathbb{R}^2$, where $\|x-y\|$ is the Euclidean distance
  between $x$ and $y$.
\end{lemma}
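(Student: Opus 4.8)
The plan is to verify the Lipschitz bound $|f_\triangle(x_1) - f_\triangle(x_2)| \le \|x_1 - x_2\|$ by case analysis on the five regions $\inregion{\tri}, \lsideregion{\tri}, \ltopregion{\tri}, \rtopregion{\tri}, \rsideregion{\tri}$ into which the line $\ell_1 \cup \ell_2 \cup \ell_3 \cup \ell_4 \cup \ell_5$ partitions $\mathbb{R}^2$. The key observation is that, up to the additive constant $d(\apex{\triangle},\definer{\triangle})$, the function $f_\triangle$ is built entirely out of the map $x \mapsto \|\hat{x}_\tri - \apex{\triangle}\|$ (with a sign that is $+$ on the ``side'' and ``in'' regions and $-$ on the two ``top'' regions), and $\hat{x}_\tri$ is an orthogonal projection of $x$ onto one of the two lines through $\apex{\tri}$. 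Since orthogonal projection onto a line is $1$-Lipschitz, within any single region the claim is immediate: $|f_\triangle(x_1)-f_\triangle(x_2)| = |\,\|\hat{x_1}-\apex{\tri}\| - \|\hat{x_2}-\apex{\tri}\|\,| \le \|\hat{x_1}-\hat{x_2}\| \le \|x_1-x_2\|$, using the reverse triangle inequality and then contractivity of the projection (and in $\inregion{\tri}$ we have $\hat{x}=x$ so it is trivial).

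The work is therefore in the cross-region cases, where $x_1$ and $x_2$ lie in different regions. First I would reduce this to pairs of \emph{adjacent} regions: because $f_\triangle$ is continuous (stated in the excerpt), if $x_1$ and $x_2$ lie in non-adjacent regions we can walk the segment $x_1x_2$ and insert the points $y_1, \dots, y_r$ where it crosses the separating halflines, so that consecutive points lie in the closure of a common region; summing the per-segment bounds and using that the $y_i$ are collinear and in order on $x_1x_2$ (so $\sum \|y_{i}-y_{i+1}\| = \|x_1-x_2\|$) gives the global bound. Actually, since each separating halfline emanates from $\apex{\tri}$, any line segment crosses the arrangement of the five halflines in at most a bounded number of points, and the collinearity argument makes the telescoping clean. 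So it remains to handle a pair $x_1 \in R$, $x_2 \in R'$ with $R, R'$ adjacent, where I can assume WLOG that the segment $x_1 x_2$ meets the common boundary halfline of $R$ and $R'$ at a single point $p$; then $f_\triangle(x_1) - f_\triangle(x_2) = (f_\triangle(x_1)-f_\triangle(p)) + (f_\triangle(p) - f_\triangle(x_2))$ reduces to the within-region case on each piece \emph{provided} the two within-region formulas agree at $p$, which is exactly continuity of $f_\triangle$ across that halfline.

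The one genuinely delicate point — and the main obstacle — is the boundary between the two ``top'' regions (where the sign is $-$) and the neighbouring ``side''/``in'' regions (where the sign is $+$), namely the halflines $\ell_1$ and $\ell_2$ themselves, and also the boundary $\ell_4$ between $\ltopregion{\tri}$ and $\rtopregion{\tri}$. On $\ell_1$ and $\ell_2$ the projection $\hat{x}_\tri = x$, so the ``$+$'' side gives value $d(\apex{\tri},\definer{\tri}) + \|x - \apex{\tri}\|$ while the ``$-$'' side gives $d(\apex{\tri},\definer{\tri}) - \|x-\apex{\tri}\|$; these agree only at $\apex{\tri}$, so $f_\triangle$ is \emph{not} continuous there unless one examines the geometry carefully — but in fact the ``top'' regions $\ltopregion{\tri}, \rtopregion{\tri}$ are separated from $\inregion{\tri}$ by $\lsideregion{\tri}$ and $\rsideregion{\tri}$, and $\ell_1$ bounds $\ltopregion{\tri}$ against $\lsideregion{\tri}$ where $\hat{x}_\tri$ is the projection onto the \emph{line containing} $\ell_1$, and on $\ell_1$ that projection is the identity, so on $\ell_1$ the side-value is $d + \|x-\apex{\tri}\|$ and the top-value is $d - \|x-\apex{\tri}\|$; reconciling this requires noting that $\ell_3$ (resp. $\ell_5$), being orthogonal to $\ell_2$ (resp. $\ell_1$) through $\apex{\tri}$, is the actual boundary of the ``top'' region, and on $\ell_3$ (resp. $\ell_5$) the projection onto the line of $\ell_2$ (resp. $\ell_1$) is precisely $\apex{\tri}$, so $\|\hat{x}_\tri - \apex{\tri}\| = 0$ and both formulas give $d(\apex{\tri},\definer{\tri})$ — hence continuity, and hence the adjacent-region reduction goes through. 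I would therefore state carefully which halfline is the true common boundary of each adjacent pair, check the matching of formulas there (all reduce to the observation that on the relevant halfline either $\hat{x}_\tri = x$ or $\hat{x}_\tri = \apex{\tri}$), and then invoke the projection-contraction estimate on each side; assembling these with the collinear-telescoping argument completes the proof.
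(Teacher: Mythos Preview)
Your argument is correct. You should know, though, that the paper supplies no proof at all for this lemma: it simply asserts ``By the definition of $f_\triangle$, the following lemma holds'' and moves on. So you are not deviating from the paper's route so much as filling in what the authors left implicit.

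Your exposition in the final paragraph meanders: you first (incorrectly) name $\ell_1$ as the boundary between $\ltopregion{\tri}$ and $\lsideregion{\tri}$, then correct yourself to $\ell_5$. The argument becomes much cleaner once you notice that on $\lsideregion{\tri}\cup\ltopregion{\tri}$ the function $f_\triangle(x)$ equals $d(\apex{\tri},\definer{\tri})$ plus the \emph{signed} coordinate of the orthogonal projection of $x$ onto the line through $\ell_1$ (positive in the direction of $\ell_1$, negative in the opposite direction): the sign flip in the definition of $f_\triangle$ exactly cancels the sign flip in $\|\hat{x}_\tri-\apex{\tri}\|$ as the projection passes through $\apex{\tri}$. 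The same holds on $\rsideregion{\tri}\cup\rtopregion{\tri}$ with $\ell_2$, and on $\inregion{\tri}$ we have $f_\triangle(x)=d(\apex{\tri},\definer{\tri})+\|x-\apex{\tri}\|$. Each of these three pieces is manifestly $1$-Lipschitz (the first two are affine with unit-norm gradient, the third by the reverse triangle inequality), and they agree on the interfaces $\ell_1$, $\ell_2$, $\ell_4$. With this observation your telescoping-along-the-segment reduction needs to handle only three regions and three interfaces rather than five of each, and the ``genuinely delicate point'' you worry about disappears entirely.
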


\begin{figure}
  \begin{center}
    \includegraphics[width=0.75\textwidth]{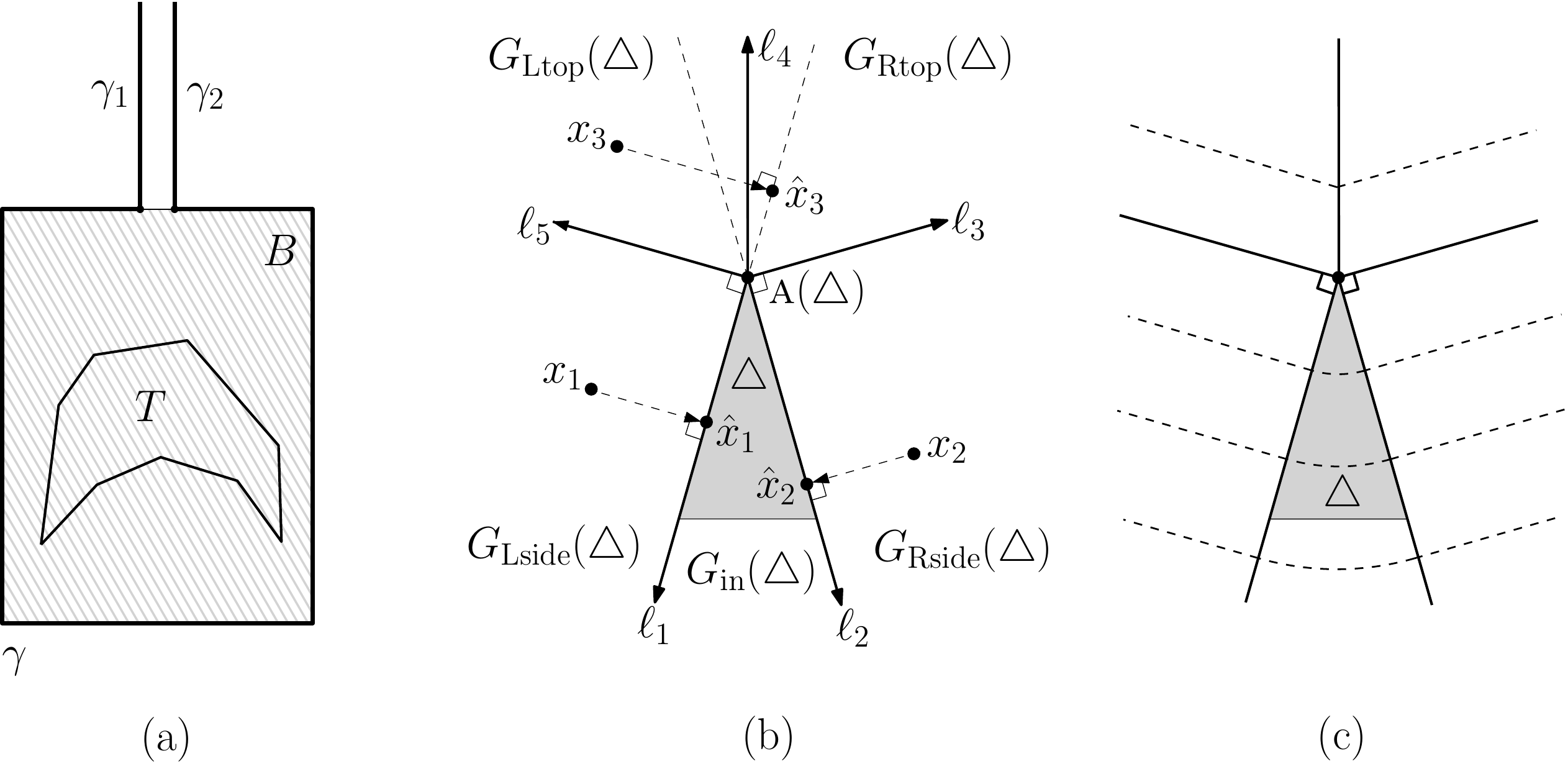}
    \caption {\small (a) The thick black curve $\gamma$ is a Hamiltonian curve of the abstract Voronoi diagram. 
    	(b) Five regions are defined by the five halflines 
      from $\ell_1$ to $\ell_5$. The gray triangle is $\tri$. (c) The dashed curves are 
      contour curves with respect to $f_\tri$.}
    \label{fig:pseudo_dist_func}
  \end{center}
\end{figure}

\subsection{Algorithm for Computing \texorpdfstring{$\rfvd\cap T$}{rFVD cap T}}
\label{sec:computing_abstract}
To compute the geodesic farthest-point Voronoi diagram restricted to
$T$, we apply the algorithm in \cite{klein1994} that computes the abstract Voronoi diagram. 
Let $A$ be the set of all apexed triangles having their refined Voronoi cells on $\bd T$.
In our problem,
we regard the apexed triangles in $A$ as the sites.
 For two apexed triangles
$\triangle_1$ and $\triangle_2$ in $A$, we define the bisecting curve 
$B(\triangle_1,\triangle_2)$ as the set $\{x \in \mathbb{R}^2 :
f_{\triangle_1}(x) = f_{\triangle_2}(x)\}$.
The bisecting curve partitions $\mathbb{R}^2$ into two regions 
$\pseudohalf{\tri_1}{\tri_2}$ and $\pseudohalf{\tri_2}{\tri_1}$
such that 
$f_{\triangle_1}(x) > f_{\triangle_2}(x)$ for $x\in \pseudohalf{\tri_1}{\tri_2}$ and
$f_{\triangle_2}(x) > f_{\triangle_1}(x)$ for $x\in \pseudohalf{\tri_2}{\tri_1}$.
We denote the abstract Voronoi diagram for the apexed triangles by
$\avd$ and the cell of $\triangle\in A$ on $\avd$ by $\acell{\triangle}$.

To apply the algorithm in~\cite{klein1994}, we show that the family of the bisecting curves is
admissible and Hamiltonian in the following subsection.
We also prove that $\avd\cap T$ is exactly $\rfvd\cap T$.
After computing $\avd$, we traverse $\avd$ and extract $\avd$ lying inside $T$.
This takes $O(|\rfvd \cap\bd T|)$ time since no refined cell $\rcell{\tri}$ contains
a vertex of $T$ in its interior.

In addition, to apply the algorithm in~\cite{klein1994}, we have to choose a Hamiltonian curve $\gamma$.
This algorithm requires $\gamma\cap\avd$ to be given.
To do this, we first choose an arbitrary box $B$ containing $T$. 
We compute one Voronoi cell of $\avd$ directly in $O(|\rfvd\cap\bd T|)$ time by considering all
apexed triangles in $A$.
We also choose two arbitrary curves $\gamma_1$ and $\gamma_2$ with endpoints on the same edge of $\bd B$ 
which are contained in the Voronoi cell.
See Figure~\ref{fig:pseudo_dist_func}(a).
Then we can compute $\gamma$ consisting of $\gamma_1, \gamma_2$ and a part of $\bd B$ such that
$\gamma$ contains the four corners of $B$.
Note that $\gamma$ is homeomorphic to a line.
We will see that the order of the refined Voronoi cells along $\bd T$ coincides with
the order of the Voronoi cells along $\bd B$ in $\avd$ in Corollary~\ref{cor:ordering-aVD}.
Therefore, we can obtain $\avd\cap\gamma$ in $O(|\rfvd \cap\bd T|)$ time
once we have $\rfvd\cap\bd T=\avd\cap\bd T$.

\subsection{Properties of Bisecting Curves and Voronoi Diagrams}

\subsubsection{\texorpdfstring{$\avd\cap T$ Coincides with $\rfvd\cap T$}{aFVD cap T coincides with rFVD cap T}}
Recall that $T$ is a lune-cell, which is bounded by a convex
chain and a concave chain. Also, recall that the bottom side of every apexed triangle of $A$
is contained in $\bd T$.
The following technical lemmas are used to prove that
$\rcell{\tri}\cap T$ coincides with $\acell{\tri}\cap T$
for any apexed triangle $\tri \in A$.
	For a halfline $\ell$, we let $\bar{\ell}$ be the directed line
containing $\ell$ with the same direction as $\ell$. 

\begin{figure}
  \begin{center}
     \includegraphics[width=0.8\textwidth]{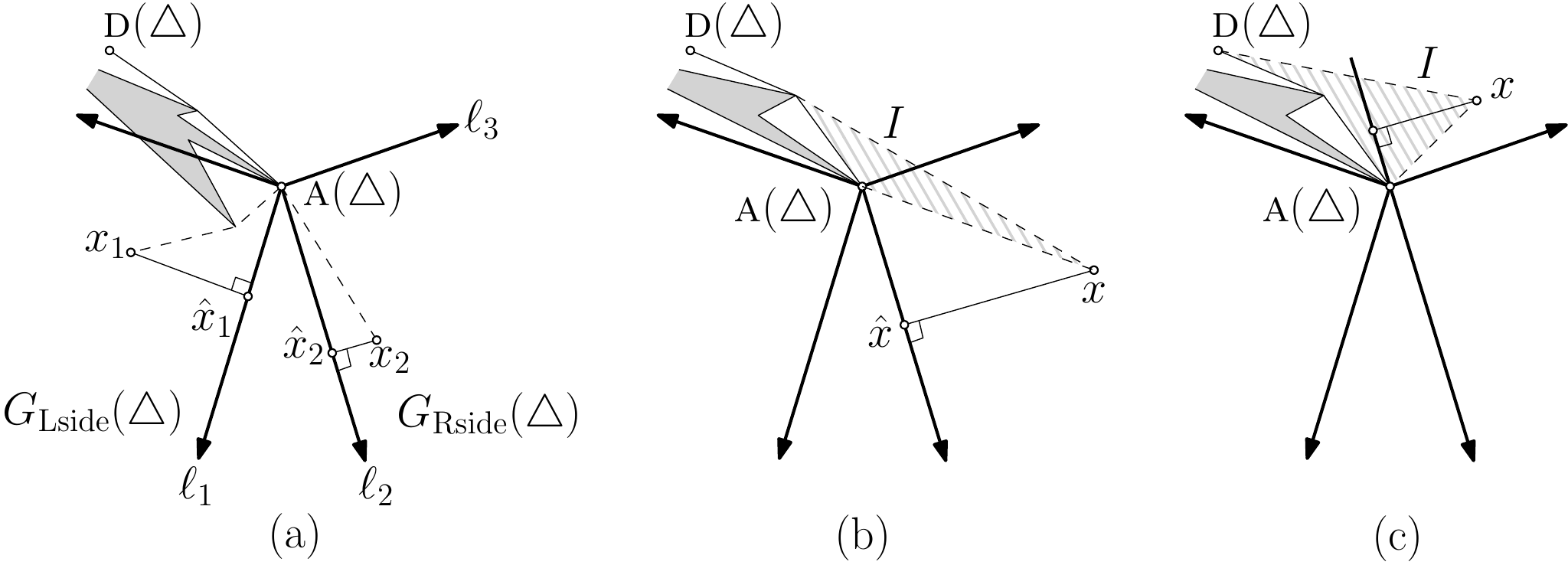}
    \caption {\small (a) If $\pi(x,\definer{\tri})$ contains $\apex{\tri}$, we have
    	$g_\tri (x)\leq d(\definer{\tri},x)$. (b) Since $x$ lies in $\rsideregion{\tri}$, the
    	angle at $\apex{\tri}$ is at least $\pi/2$, and thus $f_\tri (x)< d(\definer{\tri},x)$.
    	(c) By triangle inequality, $f_\tri(x) < d(\definer{\tri},\hat{x})$. Also, we have
    	$d(\definer{\tri},\hat{x})< d(\definer{\tri},x)$. Therefore 
    	we have $f_\tri (x)< d(\definer{\tri},x)$.}
    \label{fig:apexed_trinagle_intersect}
  \end{center}
\end{figure}

\begin{lemma}\label{lem:contain-vertex}
	For any apexed triangle $\tri$ of $A$, we have
	$f_\tri(x)< d(\definer{\tri},x)$ for any point $x$ such that
	$\pi(\definer{\tri},x)$ contains $x'$ with $f_\tri(x')< d(\definer{\tri},x')$. 
\end{lemma}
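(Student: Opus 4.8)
The plan is to combine the $1$-Lipschitz property of $f_\tri$ (Lemma~\ref{lem:max_pseudo_dist}) with the fact that the geodesic distance from $\definer{\tri}$ increases at unit rate along a geodesic path emanating from $\definer{\tri}$. First I would use that $x'$ lies on $\pi(\definer{\tri},x)$ to write $d(\definer{\tri},x)=d(\definer{\tri},x')+d(x',x)$, since the portion of $\pi(\definer{\tri},x)$ from $x'$ to $x$ is itself a geodesic path between these two points.

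Next I would bound the variation of $f_\tri$ along that subpath. The subpath of $\pi(\definer{\tri},x)$ from $x'$ to $x$ is a polygonal chain $x'=p_0,p_1,\dots,p_k=x$ (with interior vertices at reflex vertices of $P$) whose total Euclidean length equals $d(x',x)$. Applying Lemma~\ref{lem:max_pseudo_dist} to each edge $p_{i-1}p_i$ and summing yields $f_\tri(x)-f_\tri(x')\le\sum_{i=1}^{k}\|p_{i-1}-p_i\|=d(x',x)$. This step uses that $f_\tri$ is defined on all of $\mathbb{R}^2$, hence along the whole path, which holds by the definition of $f_\tri$.

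Finally I would chain the inequalities: by the hypothesis $f_\tri(x')<d(\definer{\tri},x')$ together with the two displayed facts, $f_\tri(x)\le f_\tri(x')+d(x',x)<d(\definer{\tri},x')+d(x',x)=d(\definer{\tri},x)$, which is the claim. The degenerate cases $x'=x$ and $x'=\definer{\tri}$ are subsumed by the same chain of inequalities.

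I do not anticipate a substantial obstacle: the only subtlety is that Lemma~\ref{lem:max_pseudo_dist} is a pointwise bound, so one must ``integrate'' it along the polygonal geodesic subpath as above — a routine step that is the crux of the argument.
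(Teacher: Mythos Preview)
Your proof is correct and follows essentially the same approach as the paper: combine the additivity $d(\definer{\tri},x)=d(\definer{\tri},x')+d(x',x)$ with the $1$-Lipschitz bound from Lemma~\ref{lem:max_pseudo_dist}. The paper is marginally more direct, applying Lemma~\ref{lem:max_pseudo_dist} once between $x'$ and $x$ and using $\|x-x'\|\le d(x',x)$, rather than summing along the segments of the geodesic subpath; your edge-by-edge summation is a harmless elaboration of the same idea.
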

\begin{proof}
	By definition, $d(\definer{\tri},x)=d(\definer{\tri},x')+d(x,x')$. Also,
	$f_\tri(x) \leq f_\tri(x')+ \|x-x'\|$ by Lemma~\ref{lem:max_pseudo_dist}.
	Thus $f_\tri(x) < d(\definer{\tri},x') + d(x,x') = d(\definer{\tri}, x)$.
\end{proof}

\begin{lemma}\label{lem:contain-apex}
	For any apexed triangle $\tri$ of $A$, we have
	$f_\tri(x)\leq d(\definer{\tri},x)$ for any point $x$ such that
	$\pi(\definer{\tri},x)$ contains $\apex{\tri}$. The equality holds 
	if and only if $x$ lies in $\tri$.
\end{lemma}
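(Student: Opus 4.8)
The plan is to reduce everything to the apex, in the same spirit as the proof of Lemma~\ref{lem:contain-vertex}. First I would note that the apex is its own ``projection point'': since $\apex{\tri}$ is the common origin of the five halflines, one checks straight from the definition of $f_\tri$ that $f_\tri(\apex{\tri})=d(\apex{\tri},\definer{\tri})$. Next, the hypothesis that $\pi(\definer{\tri},x)$ contains $\apex{\tri}$ splits the geodesic distance as $d(\definer{\tri},x)=d(\apex{\tri},\definer{\tri})+d(\apex{\tri},x)$, and $d(\apex{\tri},x)\ge\|\apex{\tri}-x\|$ because Euclidean distance never exceeds geodesic distance. Feeding these into the $1$-Lipschitz property of $f_\tri$ (Lemma~\ref{lem:max_pseudo_dist}) gives
\[
f_\tri(x)\le f_\tri(\apex{\tri})+\|\apex{\tri}-x\|=d(\apex{\tri},\definer{\tri})+\|\apex{\tri}-x\|\le d(\apex{\tri},\definer{\tri})+d(\apex{\tri},x)=d(\definer{\tri},x),
\]
which is the claimed inequality.

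For the equality statement I would track which of these ``$\le$'' are tight. The second one is tight iff $d(\apex{\tri},x)=\|\apex{\tri}-x\|$, i.e.\ iff the segment $\apex{\tri}x$ is contained in $P$. For the first one I would avoid Lemma~\ref{lem:max_pseudo_dist} and instead read $f_\tri$ off its definition region by region: if $x\in\ltopregion{\tri}\cup\rtopregion{\tri}$ then $f_\tri(x)\le d(\apex{\tri},\definer{\tri})\le d(\apex{\tri},\definer{\tri})+\|\apex{\tri}-x\|$ with equality only at $x=\apex{\tri}$; otherwise $f_\tri(x)=d(\apex{\tri},\definer{\tri})+\|\hat{x}_\tri-\apex{\tri}\|$, and since $\hat{x}_\tri$ is either $x$ itself (when $x\in\inregion{\tri}$) or the foot of a perpendicular dropped from $x$ to a line through $\apex{\tri}$, we always have $\|\hat{x}_\tri-\apex{\tri}\|\le\|\apex{\tri}-x\|$, with equality exactly when $\hat{x}_\tri=x$, that is, when $x$ lies in the closed cone bounded by $\ell_1$ and $\ell_2$ that contains $\tri$. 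Hence equality in the displayed chain holds iff $x$ lies in this closed cone and $\apex{\tri}x\subseteq P$.

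It then remains to see that these two conditions together say exactly $x\in\tri$. If $x\in\tri$, then $x$ lies in the cone and $\apex{\tri}x\subseteq\tri\subseteq P$, so both conditions hold and the chain is tight, recovering $f_\tri(x)=d(\definer{\tri},x)$. Conversely, suppose $x$ lies in the cone but $x\notin\tri$. The bottom side of $\tri$ is the chord of the cone that separates $\apex{\tri}$ from the rest of the cone, so the segment $\apex{\tri}x$ must cross the bottom side at a point interior to it; but the bottom side lies in an edge of $\bd P$, and $P$ lies locally on only one side of that edge, so $\apex{\tri}x$ leaves $P$ just past the crossing, contradicting $\apex{\tri}x\subseteq P$. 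I expect this last geometric point to be the only delicate one: one must make sure that ``in the cone but not in $\tri$'' really places the bottom side strictly between $\apex{\tri}$ and $x$, and that crossing the bottom side (a subsegment of an edge, not the whole edge) forces the segment out of $P$; the boundary cases where $x$ sits on $\ell_1$, on $\ell_2$, or at $\apex{\tri}$ are handled directly and are consistent with $x\in\tri$.
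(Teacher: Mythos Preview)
Your proof is correct and follows essentially the same approach as the paper's: both split $d(\definer{\tri},x)$ at the apex, bound $f_\tri(x)$ by $d(\apex{\tri},\definer{\tri})+\|\apex{\tri}-x\|$ via the observation $\|\hat{x}_\tri-\apex{\tri}\|\le\|\apex{\tri}-x\|$, and then compare $\|\apex{\tri}-x\|$ to $d(\apex{\tri},x)$. The paper compresses this into two sentences and leaves implicit the reason why ``$x$ in the cone $\inregion{\tri}$ and $x\in P$'' forces $x\in\tri$ (namely, that the bottom side lies on an edge of $\bd P$), whereas you spell out this step and the region-by-region equality analysis explicitly; your treatment of the equality case is in fact more careful than the paper's.
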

\begin{proof}
	If $x\in\tri$, the lemma holds immediately. Thus we assume that $x$ is not in $\tri$ and
	show that $f_\tri(x)<d(\definer{\tri},x)$.
	The Euclidean distance between $\apex{\tri}$ and $\hat{x}$ is less than the
	Euclidean distance between $\apex{\tri}$ and $x$. Since $d(\apex{\tri},x)$ is at least
	their Euclidean distance, we have $f_\tri(x)< d(\definer{\tri},x)$. 
	See Figure~\ref{fig:apexed_trinagle_intersect}(a).
\end{proof}

\begin{lemma}
  For an apexed triangle $\tri$ of $A$ such that the edge of $\pi(\apex{\tri},\definer{\tri})$ incident to
  $\apex{\tri}$ lies to the left of $\bar{\ell}_4$, we have 
  $f_\triangle(x) \leq d(\definer{\tri},x)$ for any point $x \in T\setminus \ltopregion{\tri}$.
  The equality holds if and only if  $x$ lies in $\triangle$.
  \label{lem:pseudo_dist_farthest_case1}
\end{lemma}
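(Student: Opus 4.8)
Write $a=\apex{\tri}$, $s=\definer{\tri}$, and let $e_a$ be the edge of $\pi(a,s)$ incident to $a$. The whole statement follows from one geometric fact together with Lemma~\ref{lem:contain-apex}: \emph{for every $x\in T\setminus\ltopregion{\tri}$ the geodesic $\pi(s,x)$ passes through $a$.} Indeed, granting this, Lemma~\ref{lem:contain-apex} gives $f_\tri(x)\le d(s,x)$ with equality exactly when $x\in\tri$; and when $x\in\tri$ the equality is in any case immediate, since $\tri\subseteq\inregion{\tri}$, where $\hat x=x$ and hence $f_\tri(x)=d(a,s)+\|x-a\|=g_\tri(x)=d(s,x)$. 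The region $\inregion{\tri}\cap T$ needs no further work: the part of the cone $\inregion{\tri}$ lying beyond the bottom side of $\tri$ is separated from $\tri$ by that bottom side, which lies on $\bd P$ and, by the standing assumptions of this section, on $\bd T$; since $T$ lies on the $\tri$-side of its boundary we get $\inregion{\tri}\cap T\subseteq\tri$. So it remains to prove the displayed fact for $x$ in $\lsideregion{\tri}$, $\rsideregion{\tri}$, or $\rtopregion{\tri}$, which are disjoint from $\interior{\tri}$ and hence will automatically yield the strict inequality.

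\textbf{The shadow of the apex.} I would establish the fact by analysing the shortest‑path map rooted at $s$. Because $e_a$ cannot enter $\interior{\tri}$, the hypothesis that $e_a$ lies to the left of $\bar\ell_4$ confines the direction of $e_a$ to the angular sector at $a$ between $\ell_1$ and $\ell_4$ on the left; equivalently $e_a$ leaves $a$ into $\lsideregion{\tri}\cup\ltopregion{\tri}$ (possibly along $\ell_1$ or $\ell_4$). By the apexed‑triangle property every geodesic from $s$ meeting $\interior{\tri}$ runs through $a$, so the shadow $W=\{y\in P:a\in\pi(s,y)\}$ is the wedge emanating from $a$, delimited on one side by $e_a$ and wrapping around $a$ through $\interior P$, and $W\supseteq\interior{\tri}\subseteq\inregion{\tri}$. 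With $e_a$ confined to the left as above, this wedge covers every direction from $a$ except (at most) those of $\ltopregion{\tri}$ -- the asymmetry between $\ltopregion{\tri}$ and $\rtopregion{\tri}$ is exactly what the hypothesis buys, and under the companion hypothesis (``$e_a$ to the right of $\bar\ell_4$'') the two roles are exchanged. Finally one must check that all of $T\setminus\ltopregion{\tri}$ lies in $W$, not merely a neighbourhood of $a$; here I would invoke that $T$ is a lune‑cell with the bottom side of $\tri$ on $\bd T$, and that $\rcell{\tri}\cap\bd T$ is connected and contains that bottom side, which pins $T$ inside the wedge where the description of $W$ stays valid. Then apply Lemma~\ref{lem:contain-apex}.

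\textbf{Main obstacle, and a fallback.} The delicate point is the last step of the previous paragraph: making the shape of the shadow wedge rigorous and, above all, upgrading ``$x\in W$'' from points near $a$ to all of $T$ -- this is precisely where the lune‑cell structure of $T$ and the assumption on $\rcell{\tri}\cap\bd T$ must be used, and it is the easiest thing to get wrong. A more robust route, avoiding a global shadow analysis, treats the three outer regions directly using Lemmas~\ref{lem:contain-apex} and~\ref{lem:contain-vertex}: for $x\in\rsideregion{\tri}$ let $\hat x$ be its projection onto the line of $\ell_2$; since $e_a$ lies left of $\bar\ell_4$ and $\hat x$ lies on the ray $\ell_2$, the angle $\angle(s,a,\hat x)$ is at least $\pi/2$, which in the relevant range forces $\hat x\in\tri$, so $d(s,\hat x)=g_\tri(\hat x)=f_\tri(x)$; because $x\hat x\perp a\hat x$ and $\pi(s,\hat x)$ reaches $\hat x$ along $a\hat x$, tracking the shortest path past $\hat x$ gives $d(s,\hat x)<d(s,x)$, hence $f_\tri(x)<d(s,x)$ (this is the chain of estimates in Figure~\ref{fig:apexed_trinagle_intersect}(b)–(c)). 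The cases $x\in\lsideregion{\tri}$ and $x\in\rtopregion{\tri}$ go analogously, using the projection onto $\ell_1$ and, for $\rtopregion{\tri}$, the minus sign in the definition of $f_\tri$; when the relevant projection falls outside $\tri$ one instead shows $\pi(s,x)$ reaches $a$ and applies Lemma~\ref{lem:contain-apex}, or simply uses $f_\tri(x)=d(a,s)-\|\hat x-a\|\le 0\le d(s,x)$ once $\|\hat x-a\|\ge d(a,s)$. The remaining bookkeeping -- the boundary rays $\ell_1,\ell_2,\ell_3$ and the degenerate case $a=s$ -- is routine and is handled by continuity of $f_\tri$ and $d(s,\cdot)$.
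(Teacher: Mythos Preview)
Your primary reduction is based on a claim that is false in general: it is \emph{not} true that $\pi(s,x)$ passes through $a$ for every $x\in T\setminus\ltopregion{\tri}$. The paper's proof explicitly splits on whether $a\in\pi(s,x)$, and the substantive work lies in the case $a\notin\pi(s,x)$. Your attempt to justify the claim via a ``shadow wedge'' cannot succeed, because nothing in the hypotheses forces $T$ to lie in that wedge; the lune-cell structure and the connectedness of $\rcell{\tri}\cap\bd T$ do not give this.

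Your fallback is closer to what the paper actually does, but it has a real gap in the $\lsideregion{\tri}$ case. The hypothesis that $e_a$ lies to the left of $\bar\ell_4$ makes the situation \emph{asymmetric}: combined with the fact that $e_a$ cannot point into $\inregion{\tri}$, the paper deduces that $e_a$ lies to the right of $\bar\ell_2$, so the angle at $a$ between $e_a$ and the direction $\ell_2$ is at least $\pi/2$. This is exactly what drives the $\rsideregion{\tri}$ and $\rtopregion{\tri}$ estimates. But the corresponding angle with $\ell_1$ can be arbitrarily small, so the ``analogous'' projection argument does \emph{not} work for $\lsideregion{\tri}$. The paper handles $\lsideregion{\tri}$ by invoking the standing Assumption~4 (the maximal concave chain of $\bd T$ has angle-span at most $\pi/2$): since $\lsideregion{\tri}$ is a $\pi/2$ wedge at $a$, any $x\in\lsideregion{\tri}\cap T$ with $a\notin\pi(s,x)$ would force the concave chain through $a$ to turn by more than $\pi/2$, a contradiction. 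You never use this assumption, and without it the $\lsideregion{\tri}$ case does not go through. (A smaller point: in your $\rsideregion{\tri}$ argument the assertion ``$\hat x\in\tri$'' is not justified and need not hold; the paper instead works with the geodesic convex hull $I$ of $s,a,x$ and compares $d(s,\hat x)$ with $d(s,x)$ via the angle at $a$ in $I$, then closes with Lemma~\ref{lem:contain-vertex} when $x$ is not a vertex of $I$.)
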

\begin{proof}
	Let $x$ be a point in $T\setminus \ltopregion{\tri}$. 
	If $\pi(\definer{\tri},x)$ contains $\apex{\tri}$, the lemma holds by Lemma~\ref{lem:contain-apex}.
	Thus we assume that $\pi(\definer{\tri},x)$ does not contain $\apex{\tri}$. Then 
	$x$ does not lie in $\lsideregion{\tri}$ since the maximal concave curve of $\bd T$ has 
	angle-span at most $\pi/2$ by the assumption made in the beginning of this section
	and the angle at $\apex{\tri}$ in $\lsideregion{\tri}$ is $\pi/2$. See Figure~\ref{fig:apexed_trinagle_intersect}(a). 
	By construction of the apexed triangles and the assumption of the lemma, 
	the edge of $\pi(\apex{\tri},\definer{\tri})$ incident to
	$\apex{\tri}$ lies to the right of $\bar{\ell}_2$. Consider the \emph{interior} $I$ of the geodesic convex hull of $\definer{\tri}$, $\apex{\tri}$ and $x$. Note that $\apex{\tri}$ is a vertex of $I$.

	Consider the case that $x$ is a vertex of $I$. Then $x$ lies in $\rsideregion{\tri}\cup\rtopregion{\tri}$.
	If $x$ lies in $\rsideregion{\tri}$,
	the angle at $\apex{\tri}$ with respect to $I$ is at least $\pi/2$.
	Thus $d(\definer{\tri}, \hat{x})$ is at most $d(\definer{\tri},x)$, and thus the lemma holds 
	for this case. See Figure~\ref{fig:apexed_trinagle_intersect}(b).
	If $x$ lies in $\rtopregion{\tri}$,
	we know that $f_{\tri}(x)< d(\definer{\tri},\hat{x})$ by triangle
	inequality, and $d(\definer{\tri},\hat{x})<
	d(\definer{\tri},x)$. Thus the lemma holds for this case. See Figure~\ref{fig:apexed_trinagle_intersect}(c). 
	
	Now consider the case that $x$ is not a vertex of $I$. Let $x'$ be the vertex of $I$ contained in
	$\pi(\apex{\tri},x)$ which is closest to $x$. 
	We can prove that $f_\tri(x')< d(\definer{\tri},x')$ as we did for the previous cases that
	$x$ is a vertex of $I$ since $x'$ is a vertex of $I$. Then the lemma holds by Lemma~\ref{lem:contain-vertex}.
\end{proof}

\begin{lemma}
	For an apexed triangle $\tri$ of $A$ such that $\pi(\apex{\tri},\definer{\tri})$ does not overlap
	with the maximal concave curve of $\bd T$, 
	we have 
	$f_\triangle(x) \leq d(\definer{\tri},x)$ for any point $x \in T$.
	The equality holds if and only if  $x$ lies in $\triangle$.
	\label{lem:pseudo_dist_farthest_case2}
\end{lemma}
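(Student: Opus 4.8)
The plan is to follow the three-way split that defines $f_\tri$ and to hand off most of the work to Lemma~\ref{lem:pseudo_dist_farthest_case1}, isolating a single residual region for a direct argument. First I would clear the two cheap cases. If $x\in\tri$ then $\hat x_\tri=x$ (as $\tri\subseteq\inregion{\tri}$), so $f_\tri(x)=d(\apex{\tri},\definer{\tri})+\|x-\apex{\tri}\|=d(\definer{\tri},x)$ by the definition of an apexed triangle, which is the claimed equality. If $x\in T\setminus\tri$ and $\pi(\definer{\tri},x)$ passes through $\apex{\tri}$, then Lemma~\ref{lem:contain-apex} gives the strict inequality $f_\tri(x)<d(\definer{\tri},x)$. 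So it remains to prove $f_\tri(x)<d(\definer{\tri},x)$ for $x\in T\setminus\tri$ with $\apex{\tri}\notin\pi(\definer{\tri},x)$.

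Next I would reduce to one region. By the construction of the apexed triangles, the edge of $\pi(\apex{\tri},\definer{\tri})$ incident to $\apex{\tri}$ leaves $\tri$ through its apex, so its direction falls to one side of $\bar\ell_4$; assume without loss of generality that it lies to the left of $\bar\ell_4$ (the other case is the mirror image under the left--right symmetry of the construction that swaps $\ell_1,\ell_5,\lsideregion{\tri},\ltopregion{\tri}$ with $\ell_2,\ell_3,\rsideregion{\tri},\rtopregion{\tri}$, and the hypothesis on the concave chain is symmetric in this exchange). Then Lemma~\ref{lem:pseudo_dist_farthest_case1} already yields $f_\tri(x)\le d(\definer{\tri},x)$ on all of $T\setminus\ltopregion{\tri}$, with equality only for $x\in\tri$. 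Since $\tri\subseteq\inregion{\tri}$ is disjoint from $\ltopregion{\tri}$, the ``equality iff $x\in\tri$'' claim will follow once we prove the strict inequality $f_\tri(x)<d(\definer{\tri},x)$ for every $x\in T\cap\ltopregion{\tri}$ with $\apex{\tri}\notin\pi(\definer{\tri},x)$. This is exactly the region omitted by Lemma~\ref{lem:pseudo_dist_farthest_case1}, and it is here that the hypothesis that $\pi(\apex{\tri},\definer{\tri})$ does not overlap the maximal concave chain of $\bd T$ has to be used.

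For this core case I would set $f_\tri(x)=d(\apex{\tri},\definer{\tri})-\|\hat x_\tri-\apex{\tri}\|$ (the branch of $f_\tri$ valid on $\ltopregion{\tri}$), let $v$ be the vertex where $\pi(\definer{\tri},x)$ branches off from $\pi(\definer{\tri},\apex{\tri})$ --- possibly $v=\definer{\tri}$, but $v\ne\apex{\tri}$ since $\apex{\tri}\notin\pi(\definer{\tri},x)$, and $v$ precedes $\apex{\tri}$ on $\pi(\definer{\tri},\apex{\tri})$. Splitting both geodesics at $v$ reduces the target inequality to $d(v,\apex{\tri})<d(v,x)+\|\hat x_\tri-\apex{\tri}\|$. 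To prove this I would reuse the geodesic-convex-hull technique from the proof of Lemma~\ref{lem:pseudo_dist_farthest_case1}: let $I$ be the interior of the geodesic convex hull of $\{\definer{\tri},\apex{\tri},x\}$; since the first edge of $\pi(\apex{\tri},\definer{\tri})$ points out of $\tri$, $\apex{\tri}$ is a vertex of $I$, and the goal is to show that the interior angle of $I$ at $\apex{\tri}$ is at least $\pi/2$, which forces $\hat x_\tri$ to be no farther from $\definer{\tri}$ than $x$ is and, with the triangle inequality along $\pi(\definer{\tri},\apex{\tri})$ followed by the segment $\apex{\tri}\hat x_\tri$, closes the estimate. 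The hard part will be precisely this angle bound: when $\definer{\tri}$ and $x$ both lie behind and to the left of $\tri$ the directions $\apex{\tri}\to\definer{\tri}$ and $\apex{\tri}\to x$ can subtend a small angle, so the bound needs extra input --- and this is where ``$\pi(\apex{\tri},\definer{\tri})$ misses the concave chain'' enters: it forces any geodesic from $\definer{\tri}$ reaching a point of $T$ on the far side of $\apex{\tri}$ to detour around the convex chain of $T$ instead of hugging the concave chain, which lengthens $\pi(v,x)$ relative to $\pi(v,\apex{\tri})$ enough to obtain $d(v,\apex{\tri})<d(v,x)+\|\hat x_\tri-\apex{\tri}\|$ (equivalently, it pushes the angle of $I$ at $\apex{\tri}$ up to $\pi/2$). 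Making this detour estimate precise --- naming which convex vertices of $T$ the path $\pi(\definer{\tri},x)$ is forced to round, and bounding the extra length from below --- is the one genuinely delicate step; the rest is bookkeeping and appeals to Lemmas~\ref{lem:contain-vertex}, \ref{lem:contain-apex}, and \ref{lem:pseudo_dist_farthest_case1}.
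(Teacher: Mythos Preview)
Your reduction is fine and matches the paper: clear $x\in\tri$, clear the case $\apex\tri\in\pi(\definer\tri,x)$ via Lemma~\ref{lem:contain-apex}, normalize so the first edge of $\pi(\apex\tri,\definer\tri)$ lies left of $\bar\ell_4$, and quote Lemma~\ref{lem:pseudo_dist_farthest_case1} to dispose of everything except $x\in T\cap\ltopregion\tri$. The divergence is entirely in how you handle that residual region, and there your plan has a real gap.

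You try to attack $\ltopregion\tri$ with the same angle-at-$\apex\tri$ mechanism that drives Lemma~\ref{lem:pseudo_dist_farthest_case1}. But as you yourself note, in $\ltopregion\tri$ both $\definer\tri$ and $x$ sit to the left of $\apex\tri$, so the interior angle of the geodesic hull at $\apex\tri$ can be arbitrarily small; the $\pi/2$ bound simply fails here, and your fallback ``detour around the convex chain'' story is both vague and pointed at the wrong chain. The paper never tries to salvage an angle bound at $x$. Instead it uses the hypothesis to locate an \emph{intermediate} point on $\pi(\definer\tri,x)$ at which Lemma~\ref{lem:pseudo_dist_farthest_case1} already applies, and then propagates via Lemma~\ref{lem:contain-vertex}. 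Concretely: because $\pi(\apex\tri,\definer\tri)$ does not overlap the maximal concave chain of $\bd T$, the geodesic $\pi(\definer\tri,x)$ for $x\in T\cap\ltopregion\tri$ must either pass through $\apex\tri$ (handled by Lemma~\ref{lem:contain-apex}) or pass through some vertex $x'$ of the concave chain; and that vertex necessarily lies in $\rtopregion\tri\cup\rsideregion\tri$. Since $x'\in T\setminus\ltopregion\tri$ and $x'\notin\tri$, Lemma~\ref{lem:pseudo_dist_farthest_case1} gives $f_\tri(x')<d(\definer\tri,x')$, and Lemma~\ref{lem:contain-vertex} then yields $f_\tri(x)<d(\definer\tri,x)$. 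That is the whole argument; no delicate length estimate is needed. Replace your last paragraph with this intermediate-vertex step and the proof closes.
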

\begin{proof}
	Without loss of generality, we assume that the edge of $\pi(\apex{\tri},\definer{\tri})$ incident to
	$\apex{\tri}$ lies to the left of $\bar{\ell}_4$. By Lemma~\ref{lem:pseudo_dist_farthest_case1},
	the lemma holds, except for a point $x$ in $\ltopregion{\tri}$.
	Since  $\pi(\apex{\tri},\definer{\tri})$ does not overlap
	with the maximal concave curve of $\bd T$, 
	$\pi(x,\definer{\tri})$ contains $\apex{\tri}$ or a vertex $x'$ of the maximal concave curve for any point $x$ in $\ltopregion{\tri}$. If $\pi(x,\definer{\tri})$ contains $\apex{\tri}$,
	the lemma holds by Lemma~\ref{lem:contain-apex}. Otherwise, $x'$ lies on $\rtopregion{\tri}\cup\rsideregion{\tri}$. Thus we have $f_\tri(x')< d(\definer{\tri},x')$.
	Therefore the lemma holds by Lemma~\ref{lem:contain-vertex}.
\end{proof}

The following lemma implies that $\avd\cap T$, that is
the abstract Voronoi diagram of $A$ restricted to $T$, 
coincides with $\rfvd \cap T$.
\begin{lemma}\label{lem:acell}
  For an apexed triangle $\triangle$ and a point $x \in T \cap
  \rcell{\triangle}$, $x$ lies in $\acell{\triangle}$.
\end{lemma}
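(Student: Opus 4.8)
The goal is to show that any point $x$ in a refined cell $\rcell{\triangle}$ that also lies in the lune-cell $T$ belongs to the abstract Voronoi cell $\acell{\triangle}$, i.e., $f_\triangle(x) > f_{\triangle'}(x)$ for every other apexed triangle $\triangle' \in A$. The plan is to compare both sides of this inequality with the genuine geodesic distances $d(\definer{\triangle},x)$ and $d(\definer{\triangle'},x)$, using the new distance functions as an upper bound on one side and the defining property of $\rcell{\triangle}$ on the other.

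First I would observe the starting point: since $x \in \rcell{\triangle} \subseteq \triangle$, the definition of the apexed triangle gives $f_\triangle(x) = g_\triangle(x) = d(\definer{\triangle},x)$, because inside $\triangle$ we have $\hat{x}_\triangle = x$ and the ``otherwise'' branch of $f_\triangle$ reduces to $d(\apex{\triangle},\definer{\triangle}) + \|x - \apex{\triangle}\| = d(\definer{\triangle},x)$. So it suffices to prove $f_{\triangle'}(x) < d(\definer{\triangle},x)$ for each $\triangle' \neq \triangle$ in $A$. Now, since $x \in \rcell{\triangle}$, the point $x$ is strictly closer to $\definer{\triangle}$ than to any other site; in particular, $d(\definer{\triangle'},x) < d(\definer{\triangle},x)$ whenever $\definer{\triangle'} \neq \definer{\triangle}$ (and equality holds if $\definer{\triangle'} = \definer{\triangle}$). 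Hence it is enough to show the key inequality
\[
  f_{\triangle'}(x) \le d(\definer{\triangle'},x) \qquad \text{for all } x \in T,
\]
with strict inequality, or at least strictness whenever $x \notin \triangle'$, so that the chain $f_{\triangle'}(x) \le d(\definer{\triangle'},x) \le d(\definer{\triangle},x) = f_\triangle(x)$ can be made strict. (One must be a little careful in the tie case $\definer{\triangle'} = \definer{\triangle}$: then $x \in \rcell{\triangle}$ with $\triangle_1,\triangle_2$ interior-disjoint forces $x \notin \triangle'$, so the strictness in Lemmas~\ref{lem:pseudo_dist_farthest_case1} and~\ref{lem:pseudo_dist_farthest_case2} kicks in.)

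The core of the argument is precisely establishing this key inequality $f_{\triangle'}(x) \le d(\definer{\triangle'},x)$ for every $\triangle' \in A$ and every $x \in T$ — and this is exactly what Lemmas~\ref{lem:pseudo_dist_farthest_case1} and~\ref{lem:pseudo_dist_farthest_case2} provide, using the standing assumptions that $T$ is a lune-cell and its maximal concave chain has angle-span at most $\pi/2$. So the plan is: for each $\triangle'$, split into the two cases according to whether $\pi(\apex{\triangle'},\definer{\triangle'})$ overlaps the maximal concave curve of $\bd T$; if it does not, apply Lemma~\ref{lem:pseudo_dist_farthest_case2} directly (covering all of $T$); if it does overlap, then the region $\ltopregion{\triangle'}$ (the one not handled by Lemma~\ref{lem:pseudo_dist_farthest_case1}) must lie ``outside'' $T$ on the concave side, so $x \in T$ already implies $x \in T \setminus \ltopregion{\triangle'}$ and Lemma~\ref{lem:pseudo_dist_farthest_case1} applies. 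The main obstacle is this last geometric claim: arguing that when $\pi(\apex{\triangle'},\definer{\triangle'})$ runs along the concave boundary of $T$, the ``bad'' cone $\ltopregion{\triangle'}$ emanating from $\apex{\triangle'}$ cannot reach into $T$ — this needs the convexity of $T$ on one side, the concavity on the other, and the $\pi/2$ angle-span bound to control how far the relevant halflines $\ell_4, \ell_5$ can swing. Once the key inequality is in hand, the lemma follows by stitching the three comparisons together, as above.
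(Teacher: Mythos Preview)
Your reduction is correct up to the point you yourself flag as the ``main obstacle'': the claim that when $\pi(\apex{\triangle'},\definer{\triangle'})$ overlaps the maximal concave curve of $\bd T$, the region $\ltopregion{\triangle'}$ is disjoint from $T$. This claim is false in general, and the paper's Figure~\ref{fig:apexed_trinagle_intersect2} depicts exactly a point $x\in T\cap\ltopregion{\triangle'}$. In that configuration $\apex{\triangle'}$ sits on the concave chain of $\bd T$, the bottom side of $\triangle'$ lies on an adjacent edge of the concave chain, and the cone $\ltopregion{\triangle'}$ swings back across a portion of $T$; the $\pi/2$ angle-span bound is not enough to forbid this. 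So Lemmas~\ref{lem:pseudo_dist_farthest_case1} and~\ref{lem:pseudo_dist_farthest_case2} alone do not yield $f_{\triangle'}(x)\le d(\definer{\triangle'},x)$ for every $\triangle'\in A$ and every $x\in T$, and your chain of inequalities breaks precisely here.

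The paper closes this gap with a different idea. It argues by contradiction, picking $\triangle'$ to \emph{maximize} $f_{\triangle'}(x)$ among all competitors. Lemmas~\ref{lem:pseudo_dist_farthest_case1} and~\ref{lem:pseudo_dist_farthest_case2} then force the bad case $x\in\ltopregion{\triangle'}$ with $\apex{\triangle'}$ on the concave chain. At that point, Assumption~3 (the closure of $\rcell{\triangle'}$ does not coincide with $\triangle'$) guarantees the existence of a neighboring apexed triangle $\triangle_a$ whose bottom side sits on the edge of $\bd T$ adjacent to $w=\apex{\triangle'}$ and which satisfies $d(\definer{\triangle_a},w)>d(\definer{\triangle'},w)$; a short planar computation comparing $\|w-\hat{x}_{\triangle_a}\|$ with $\|w-\hat{x}_{\triangle'}\|$ then gives $f_{\triangle_a}(x)>f_{\triangle'}(x)$, contradicting maximality. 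So the missing ingredient is this ``neighbor-triangle'' argument, not a geometric exclusion of $\ltopregion{\triangle'}$ from $T$.
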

\begin{proof}
  Assume to the contrary that there are an apexed triangle $\tri\in A$ and a
  point $x \in T \cap \rcell{\triangle}$ such that $x \notin
  \acell{\triangle}$.  This means that there is another apexed triangle
  $\triangle'$ such that $f_\triangle(x) \leq
  f_{\triangle'}(x)$.  Among all such apexed triangles, we choose the
  one with the maximum $f_{\triangle'}(x)$.
  Without loss of generality, we assume that the edge of $\pi(\apex{\tri'},\definer{\tri'})$ incident to
  $\apex{\tri'}$ lies to the left of $\bar{\ell}_4$.
  We claim that $x$ lies in $\ltopregion{\tri'}$ and $\pi(\definer{\tri'},\apex{\tri'})$ overlaps 
  with the maximal concave curve of $\bd T$, . Otherwise, we have 
  $f_{\triangle'}(x) \leq d(\definer{\tri'},x)$ by Lemmas~\ref{lem:pseudo_dist_farthest_case1}
  and~\ref{lem:pseudo_dist_farthest_case2}.  
  By definition, we have $f_{\triangle'}(x) \leq
  d(\definer{\tri'},x) < d(\definer{\tri},x) = f_\tri(x)$, which is a
  contradiction.  
	\begin{figure}
		\begin{center}
			\includegraphics[width=0.3\textwidth]{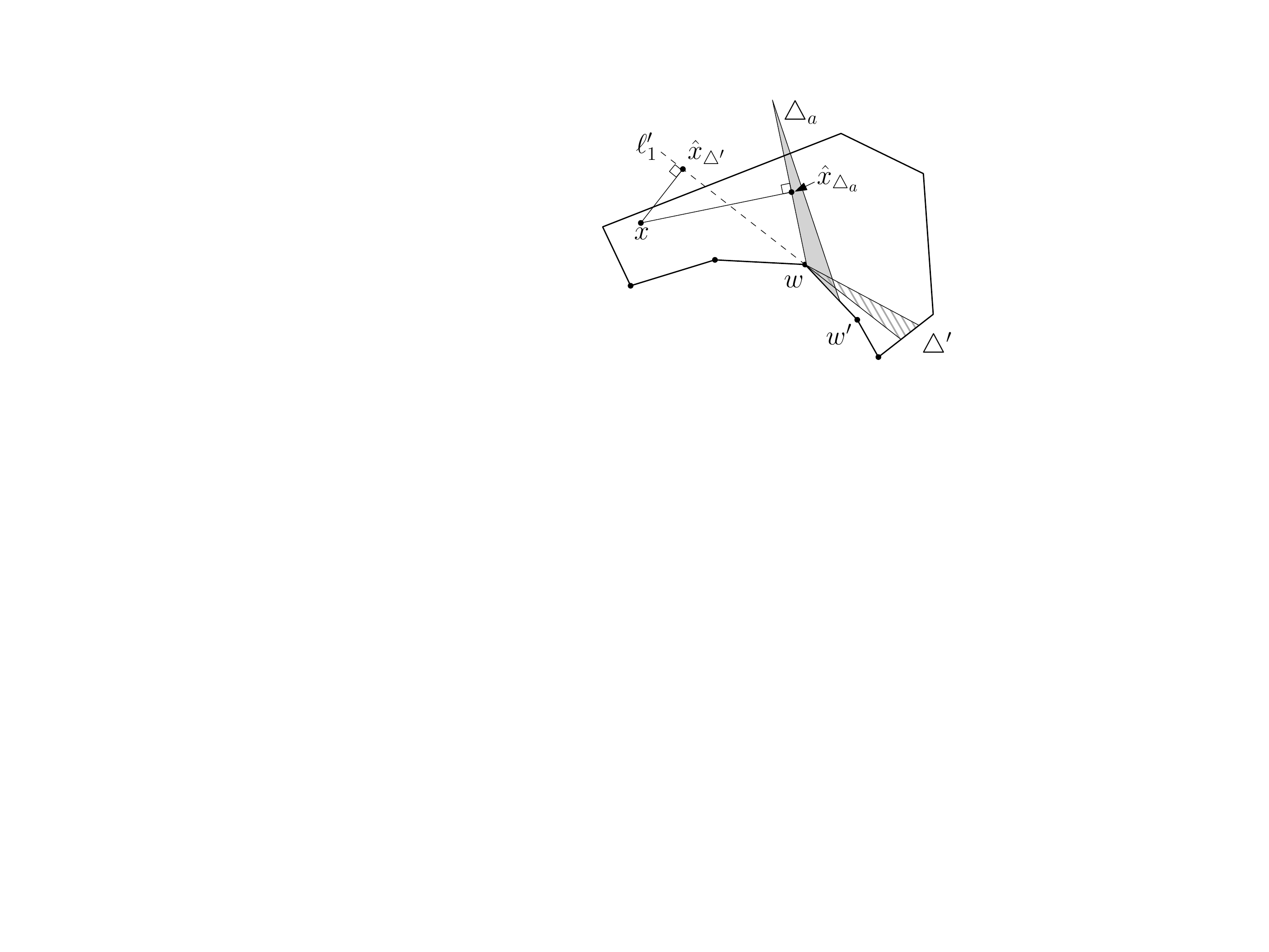}
			\caption {\small For $x$ lying in $T\cap \ltopregion{\tri'}$,
				$\|\hat{x}_{\tri_a}-w\|<\|\hat{x}_{\triangle'}-w\|$, and thus
				$f_{\tri_a}(x)>f_{\tri'}(x)$. }
			\label{fig:apexed_trinagle_intersect2}
		\end{center}
	\end{figure}

  In the following, we show that there is another apexed
  triangle $\tri_a$ such that $f_{\triangle'}(x) <
  f_{\triangle_a}(x)$.  This is a contradiction as we chose the apexed
  triangle $\tri'$ with maximum $f_{\tri'}(x)$. Recall that we assume in the beginning of this 
  section that $\tri'$ does not coincide with the closure of ${\rcell{\tri'}}$.  
  Let $w$ be $\apex{\tri'}$ and $w'$ be the clockwise neighbor of $w$ along $\bd T$.  See
  Figure~\ref{fig:apexed_trinagle_intersect2}.  In this case, there
  is another apexed triangle $\triangle_a$ such that
  $d(\definer{\tri_a},w) > d(\definer{\tri'},w)$ and the bottom side of $\tri_a$ is
  contained in $ww'$. Otherwise, $w$ is in $\rcell{\tri'}$, and thus $\tri'$ coincides with 
  the closure of $\rcell{\tri'}$, which is a contradiction.  
  Let $\ell_1'$ be the line containing the side of
  $\tri'$ which is closer to $w'$ other than its bottom side.
  The point $\hat{x}_{\tri_a}$ lies on the line passing through $w$ and $\apex{\tri_a}$. 
  If it lies on the halfline starting from $w$ in direction opposite to $\apex{\tri_a}$, the  
  claim holds immediately. Thus we assume that it lies on the halfline starting from $w$ in direction to
$\apex{\tri_a}$.
   Then $x$ lies in the side of $\ell_1'$ containing $w'$ since $x\in \ltopregion{\tri}$. 
  Moreover, $\hat{x}_{\tri_a}$ and $x$ lie in different sides of
  $\ell_1'$ since
  $\tri_a$ has its bottom side on the line containing $ww'$.
  Therefore, we have
  $\|w-\hat{x}_{\tri_a}\|\leq\|w-\hat{x}_{\triangle'}\|$.  This implies that 
  $f_{\tri_a}(x) = d(\definer{\tri_a},w)-\|w-\hat{x}_{\tri_a}\| \geq
  d(\definer{\tri_a},w)-\|w-\hat{x}_{\triangle'}\| >
  d(\definer{\tri'},w)-\|w-\hat{x}_{\triangle'}\|=
  f_{\tri'}(x)$, which is a
  contradiction.
\end{proof}

\begin{corollary}
  \label{lem:projection_fvd}
  The abstract Voronoi diagram with respect to the functions $f_\tri$
  restricted to $T$
  for all apexed triangles $\tri \in A$ coincides with
  the refined geodesic farthest-point
  Voronoi diagram restricted to $T$.
\end{corollary}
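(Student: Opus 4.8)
The plan is to derive the corollary directly from Lemma~\ref{lem:acell} together with the set-up guaranteed by the covering property of the apexed triangles. The corollary claims that $\avd \cap T = \rfvd \cap T$, so it suffices to show that for every apexed triangle $\triangle \in A$ the two cells agree on $T$, i.e. $\acell{\triangle} \cap T = \rcell{\triangle} \cap T$; taking the union over all $\triangle \in A$ then yields the claim, since both $\avd$ and $\rfvd$ are by definition the complements (inside $T$) of the respective unions of open cells.

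First I would record the easy inclusion. Lemma~\ref{lem:acell} states that if $x \in T \cap \rcell{\triangle}$ then $x \in \acell{\triangle}$; hence $\rcell{\triangle}\cap T \subseteq \acell{\triangle}\cap T$ for every $\triangle \in A$. The opposite inclusion is then obtained by a counting/partition argument rather than a second direct geometric proof: the refined cells $\{\rcell{\triangle}\}_{\triangle \in A}$, restricted to $T$, are pairwise interior-disjoint (any two apexed triangles with the same definer are interior-disjoint, and those with distinct definers have disjoint refined cells because refined cells refine Voronoi cells), and by the covering assumption every point of $T$ lies in some apexed triangle whose definer is its $S$-farthest neighbor, so the closures $\overline{\rcell{\triangle}}\cap T$ cover $T$. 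The same is true for the abstract cells by the admissibility properties (3A, 3B in Definition~\ref{def:admissible}), which we establish in the following subsection: the $\acell{\triangle}\cap T$ are interior-disjoint and their closures cover $T$. Two families of interior-disjoint relatively open sets whose closures both cover the same region $T$, one family contained cell-by-cell in the other, must coincide cell-by-cell; otherwise some $\acell{\triangle}\cap T$ would strictly contain $\rcell{\triangle}\cap T$ and would therefore meet the interior of another refined cell $\rcell{\triangle''}\cap T \subseteq \acell{\triangle''}\cap T$, contradicting interior-disjointness of the abstract cells. Hence $\acell{\triangle}\cap T = \rcell{\triangle}\cap T$ for all $\triangle \in A$, and taking complements in $T$ gives $\avd \cap T = \rfvd \cap T$.

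I expect the main obstacle to be a purely bookkeeping one: making sure that the family $A$ used to define $\avd$ is exactly the set of apexed triangles whose refined cells meet $\bd T$, and that every refined cell meeting $T$ actually meets $\bd T$ as well — this is where the four standing assumptions of the section are used, in particular assumption~(2) and Lemma~\ref{lem:ray_in_cell}, which guarantee that $\rcell{\triangle}\cap T \neq \emptyset$ forces $\rcell{\triangle}\cap \bd T \neq \emptyset$, so that no apexed triangle relevant to the interior of $T$ is omitted from $A$. Once this is in place, the argument is essentially formal: the geometric content is entirely absorbed into Lemma~\ref{lem:acell} and the admissibility of the bisecting curves, and the corollary follows by the interior-disjointness-plus-covering dichotomy sketched above.
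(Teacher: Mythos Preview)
Your argument is correct and is exactly how one unpacks the corollary from Lemma~\ref{lem:acell}; the paper states it without proof as an immediate consequence, and your cell-by-cell partition argument is the natural way to fill this in.

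One point to clean up: you appeal to the admissibility properties (3A, 3B) ``established in the following subsection'' to get that the abstract cells are interior-disjoint and that their closures cover $T$. This is both unnecessary and creates an apparent circularity, since the proof of Lemma~\ref{lem:condition-connected} in that subsection itself cites Corollary~\ref{lem:projection_fvd}. You do not need admissibility here at all: pairwise disjointness of the $\acell{\triangle}$ is immediate from the definition $\acell{\triangle}=\bigcap_{\triangle'\neq\triangle} D(\triangle,\triangle')$ (a point where $f_\triangle$ is strictly maximal cannot have $f_{\triangle'}$ strictly maximal for $\triangle'\neq\triangle$), and the closures cover $T$ simply because at each point some $f_\triangle$ attains the maximum among the finitely many functions. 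Replace the admissibility reference with this direct observation and the argument is self-contained.
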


\subsubsection{The Family of Bisecting Curves is Admissible}
Conditions~1 and~2 of Definition~\ref{def:admissible} hold due to Lemma~\ref{lem:pseudo_bisector}.
Condition~3A holds due to Lemmas~\ref{lem:condition-connected} and~\ref{lem:condition-nonempty}.
Condition~3B holds by the definition of the new distance function.
	
\begin{lemma}
 For an apexed triangle $\triangle$ in any subset $A'$ of $A$, $\aacell(\tri, A')$ is connected.
 \label{lem:condition-connected}
\end{lemma}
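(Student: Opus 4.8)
The plan is to prove the stronger statement that every point of $\aacell(\tri,A')$ can be joined, by a path lying in $\aacell(\tri,A')$, to the relative interior $I_\tri$ of the bottom side of $\tri$; since $I_\tri$ is connected and, as I argue first, $I_\tri\subseteq\aacell(\tri,A')$, this gives path-connectedness of the cell. For the inclusion, note that $\aacell(\tri,A')=\bigcap_{\tri'\in A',\,\tri'\neq\tri}\pseudohalf{\tri}{\tri'}\supseteq\bigcap_{\tri'\in A}\pseudohalf{\tri}{\tri'}=\acell{\tri}$, so it suffices that $I_\tri\subseteq\acell{\tri}$; this follows from Lemma~\ref{lem:acell} (equivalently Corollary~\ref{lem:projection_fvd}), because $I_\tri$ lies on $\bd T$ and, by assumption~(2) of this section, $I_\tri\subseteq\rcell{\tri}$, hence $I_\tri\subseteq T\cap\rcell{\tri}\subseteq\acell{\tri}$.

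To build the path from a point $x\in\aacell(\tri,A')$ I would use the explicit shape of $f_\tri$. Writing $C_\tri=d(\apex{\tri},\definer{\tri})$, the function $f_\tri-C_\tri$ equals the Euclidean distance $\|\,\cdot-\apex{\tri}\|$ on the central cone $\inregion{\tri}$ and equals an affine ``axial coordinate'' $\langle\,\cdot-\apex{\tri},\hat u_i\rangle$ (with $\hat u_i$ the direction of $\ell_i$) on each of the two flanking cones; the three pieces agree in value along $\ell_1,\ell_2,\ell_4$ and in gradient along $\ell_1,\ell_2$, so $f_\tri$ is convex and, by Lemma~\ref{lem:max_pseudo_dist}, $1$-Lipschitz. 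The decisive easy case is $x\in\tri$: along the segment of the ray from $\apex{\tri}$ through $x$ that runs from $x$ toward $I_\tri$, which stays inside $\inregion{\tri}$, $f_\tri$ increases at unit rate while every competitor $f_{\tri'}$ increases at rate at most one, so $f_\tri-f_{\tri'}$ is non-decreasing there and the whole sub-segment stays in $\aacell(\tri,A')$ and terminates on $I_\tri$ (this is the same phenomenon as Lemma~\ref{lem:ray_in_cell}, now proved from the Lipschitz bound rather than geodesic reasoning). For a point $x$ lying in a flanking region $\lsideregion{\tri},\ltopregion{\tri},\rtopregion{\tri},\rsideregion{\tri}$, or in $\inregion{\tri}$ ``beyond'' $\tri$, I would first move $x$ into $\tri$ along a carefully chosen direction on which $f_\tri$ does not decrease and no competitor can overtake it — using the angle-span assumption~(4) together with Lemmas~\ref{lem:contain-apex}, \ref{lem:pseudo_dist_farthest_case1}, \ref{lem:pseudo_dist_farthest_case2} to bound the competitors — and then apply the easy case; concatenating the two pieces yields the desired path, and since $A'\subseteq A$ was arbitrary the lemma follows.

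The hard part is exactly the flanking cones, and especially the ``top'' cones $\ltopregion{\tri}$ and $\rtopregion{\tri}$: there $f_\tri$ is only affine, and along rays from $\apex{\tri}$ it actually \emph{decreases}, so the naive ``walk straight toward the bottom side'' argument fails — a competitor $f_{\tri'}$ could dominate $f_\tri$ somewhere along the chosen segment while $f_\tri$ dominates it both earlier and later, which would break the walk. Ruling this out cannot be done by treating $f_{\tri'}$ as an arbitrary $1$-Lipschitz function; instead I would use that $f_{\tri'}$ has the same convex conic-plus-affine form, so that the restriction of $f_\tri-f_{\tri'}$ to any line is a difference of two convex functions of the special shape (affine, then distance-to-a-point, then affine) and hence changes sign only a controlled number of times. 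Proving this structural fact about pairwise sign changes of $f_\tri-f_{\tri'}$ along lines, and matching it against the five-region decompositions around both $\apex{\tri}$ and $\apex{\tri'}$, is where the real work lies; once it is in place, the region-by-region path construction and the remaining Lipschitz estimates are routine bookkeeping.
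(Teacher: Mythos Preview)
Your opening moves are fine: the inclusion $I_\tri\subseteq\aacell(\tri,A')$ follows from Lemma~\ref{lem:acell} exactly as you say, and the ``easy case'' $x\in\tri$ is correct (walk along the ray from $\apex{\tri}$; $f_\tri$ grows at unit rate, every competitor at rate $\le 1$ by Lemma~\ref{lem:max_pseudo_dist}). Your observation that $f_\tri$ is convex is also right---indeed $f_\tri(x)-C_\tri=\sup_{u\in U}\langle x-\apex{\tri},u\rangle$ with $U$ the set of unit directions into $\inregion{\tri}$, a supremum of linear functions.

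The gap is the hard case, and it is a real gap, not just missing bookkeeping. Two concrete obstructions to your plan: (i) from a point in $\ltopregion{\tri}$ or $\lsideregion{\tri}$, the only direction along which $f_\tri$ grows at \emph{unit} rate is the direction of $\ell_1$, and moving that way keeps you in the left flank forever---you never enter $\tri$, so ``first move into $\tri$, then apply the easy case'' does not work as stated; (ii) the competitor bounds you cite (Lemmas~\ref{lem:contain-apex}--\ref{lem:pseudo_dist_farthest_case2}) hold only for points of $T$, whereas $\aacell(\tri,A')$ lives in all of $\mathbb{R}^2$ and your path will typically leave $T$. Your fallback idea of controlling sign changes of $f_\tri-f_{\tri'}$ along lines is plausible but you have not carried it out, and it is precisely the content of the lemma.

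The paper takes a different and shorter route that sidesteps both obstructions. It argues by contradiction: from every point of $\aacell(\tri,A')$ there is a halfline (in the unit-rate direction you already identified) that stays in the cell and on which $f_\tri\to\infty$; hence any two components contain points $a,b$ with $f_\tri(a)=f_\tri(b)$. On the segment $ab$ some competitor $\tri'$ must dominate; the paper then does a short case analysis on how $\inregion{\tri'}$ meets $ab$ (empty intersection $\Rightarrow$ $f_{\tri'}$ linear on $ab$; $ab\subset\inregion{\tri'}\Rightarrow$ convex; one endpoint inside $\Rightarrow$ monotone; both endpoints outside with the cone in the middle $\Rightarrow$ the two flanking linear pieces would have to be increasing on one side and decreasing on the other, which is geometrically impossible for any placement of $\apex{\tri'}$). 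Each case contradicts $f_{\tri'}(a)<f_\tri(a)=f_\tri(b)>f_{\tri'}(b)$. The point is that by reducing to a single segment and a single competitor, the paper needs only the piecewise structure of $f_{\tri'}$ on a line, not a global path construction through all five regions of $\tri$.
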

\begin{proof}
  Here we use $\aacell(\tri)$ to denote $\aacell(\tri,A')$ for an apexed triangle $\tri\in A'$. 
  By Corollary~\ref{lem:projection_fvd}, we have $\rcell{\triangle}\subseteq\acell{\tri}$.
  Assume to the contrary that there are at least two connected components of $\acell{\tri}$.
  Note that one of them contains $\rcell{\tri}$.
  
  	\begin{figure}
  	\begin{center}
  		\includegraphics[width=0.85\textwidth]{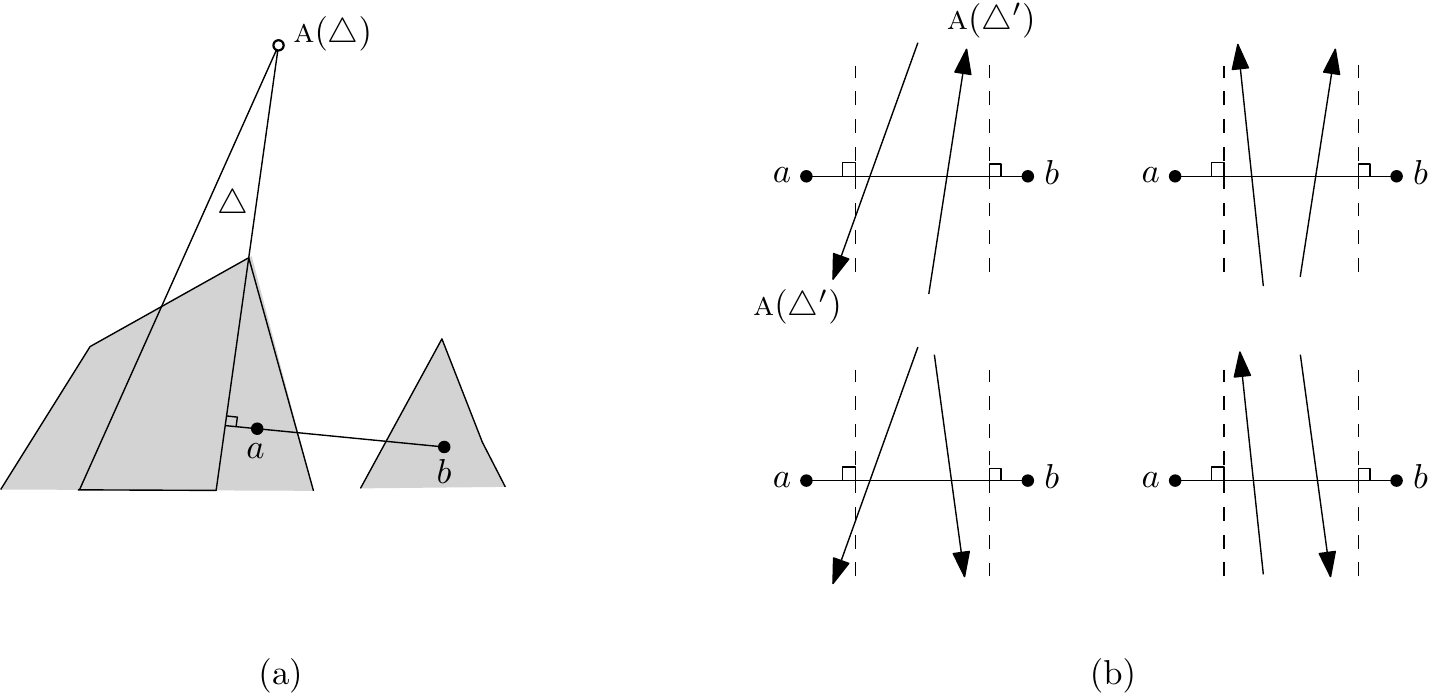}
  		\caption {\small (a) There are two points $a$ and $b$ in different connected components
  		of $\acell{\tri}$ (the gray region) such that $f_\tri(a)=f_\tri(b)$.
  		(b) The arrows denote the position of $\apex{\tri'}$ so that
  	the functions $f_{\tri'}$ with
  	domain $h_1$ is increasing, and $f_{\tri'}$ with domain $h_3$ is decreasing.}
  		\label{fig:abstract-connected}
  	\end{center}
  \end{figure}

  For any point $x\in \acell{\tri}$, there is a halfline from $x$ such that
    every point in the halfline, except $x$, has distance value $f_\tri(\cdot)$ larger than
    $f_\tri(x)$. 
  This can be shown in a way similar to
  Lemma~\ref{lem:ray_in_cell} together with  Lemma~\ref{lem:max_pseudo_dist}.
  Thus there are two points $a$ and $b$ from different connected components of $\acell{\tri}$
  such that $f_\tri(a)=f_\tri(b)$. See Figure~\ref{fig:abstract-connected}(a).
  
  Since $a$ and $b$ are in different connected components of $\acell{\tri}$, there is another
  triangle $\tri'$ in $A'$ such that $f_{\tri}(x)< f_{\tri'}(x)$ for some point $x\in ab$.
  Consider $f_{\tri'}$ restricted to the domain $ab$. Here, to make the description easier,
  we consider $ab$ as a line segment on $\mathbb{R}$, and each point on $ab$ as a real number. 
  Without loss of generality, we assume that $a$ is smaller than $b$.
  If $\inregion{\tri'}$ does not intersect $ab$, the function is linear in $ab$, and thus
  $f_{\tri}(x)\geq  f_{\tri'}(x)$ for any point $x\in ab$. This is a contradiction.
  Thus we assume that $\inregion{\tri'}$ intersects $ab$.

  We claim that $ab \setminus \inregion{\tri'}$ consists of two connected components. 
  If $\inregion{\tri'}$ contains $ab$, the function $f_{\tri'}$ restricted to $ab$ is
  convex. Thus $f_{\tri}(x)\geq f_{\tri'}(x)$ for any point $x\in ab$.
  If $\inregion{\tri'}$ contains only one endpoint of $ab$, 
  the function $f_{\tri'}$ restricted to $ab$ increases or decreases.
  Thus $f_{\tri}(x)\geq f_{\tri'}(x)$ for any point $x\in ab$. Therefore, the claim holds.
  
  Let $h_1$ and $h_3$ denote the connected components of $ab \setminus \inregion{\tri'}$ containing $a$ and $b$, respectively.
  The function $f_{\tri'}$ with
  domain $h_1$ is increasing, and $f_{\tri'}$ with domain $h_3$ is decreasing. However, it is not possible. To see this, see Figure~\ref{fig:abstract-connected}(b). 
  There are four cases on the sides of $\tri'$: the clockwise angle from each side of $\tri'$
  to $ab$ is at least $\pi/2$ or not. The arrows denote the position of $\apex{\tri'}$ so that
  the function $f_{\tri'}$ with
  domain $h_1$ is increasing, and $f_{\tri'}$ with domain $h_3$ is decreasing.
  Each of the four cases makes a contradiction. Therefore, the lemma holds.
\end{proof}

\begin{lemma}
	\label{lem:condition-nonempty}
	For an apexed triangle $\triangle$ in any subset $A'$ of $A$,
        $\aacell(\tri,A')$ is nonempty.
\end{lemma}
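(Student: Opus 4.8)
The plan is to deduce the statement for an arbitrary $A'\subseteq A$ from two facts: that $\aacell(\tri,A)$ itself is nonempty, and that a cell can only grow when the site set shrinks. First I would recall that, by the way $A$ was fixed in Section~\ref{sec:computing_abstract}, every apexed triangle $\tri\in A$ has $\rcell{\tri}\cap\bd T\neq\emptyset$. Since $T$ contains its boundary, any point of $\rcell{\tri}\cap\bd T$ lies in $T\cap\rcell{\tri}$, so Lemma~\ref{lem:acell} produces a point $x$ with $x\in\acell{\tri}=\aacell(\tri,A)$; hence $\aacell(\tri,A)\neq\emptyset$ for every $\tri\in A$.

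Next I would invoke monotonicity under restriction of the index set. For any pair $\tri_1,\tri_2$ of apexed triangles the region $\pseudohalf{\tri_1}{\tri_2}=\{x\in\mathbb{R}^2 : f_{\tri_1}(x)>f_{\tri_2}(x)\}$ depends only on the pair and not on the ambient set, so for $A'\subseteq A$
\[
\aacell(\tri,A)=\bigcap_{\tri'\in A\setminus\{\tri\}}\pseudohalf{\tri}{\tri'}\ \subseteq\ \bigcap_{\tri'\in A'\setminus\{\tri\}}\pseudohalf{\tri}{\tri'}=\aacell(\tri,A').
\]
Combining this inclusion with $\aacell(\tri,A)\neq\emptyset$ from the previous paragraph immediately yields $\aacell(\tri,A')\neq\emptyset$. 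Moreover, since the functions $f_\tri$ are continuous (Section~\ref{sec:define_func}) each $\pseudohalf{\tri}{\tri'}$ is open, and $A'$ is finite, so $\aacell(\tri,A')$ is open; a nonempty open set has nonempty interior, which together with Lemma~\ref{lem:condition-connected} supplies exactly Condition~3A of Definition~\ref{def:admissible}.

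I do not expect a serious obstacle. The one point that needs a careful check is that the witness point handed to us can be taken on $\bd T$ — this is what makes its existence follow from $\tri\in A$ rather than from some stronger nonemptiness hypothesis — and that Lemma~\ref{lem:acell} genuinely applies to boundary points of $T$, which it does, since its statement quantifies over all $x\in T\cap\rcell{\tri}$ and the cell $\acell{\tri}$ is a region of $\mathbb{R}^2$, not of $\interior{T}$. Everything else is the elementary set-inclusion argument above.
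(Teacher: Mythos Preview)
Your proof is correct and follows the same approach as the paper: both rely on the fact that $\rcell{\tri}\cap T$ is nonempty (by the definition of $A$) and is contained in $\aacell(\tri,A')$ via Lemma~\ref{lem:acell}, with the monotonicity $\aacell(\tri,A)\subseteq\aacell(\tri,A')$ bridging to arbitrary $A'\subseteq A$. The paper compresses both steps into the single sentence ``$\aacell(\tri,A')$ contains $\rcell{\tri}$,'' whereas you spell out the monotonicity and the open/nonempty-interior remark explicitly, but the argument is the same.
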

\begin{proof}
	Every apexed triangle $\tri$ in $A$ has a refined Voronoi cell in $T$.
	Since $\aacell(\tri,A')$ contains $\rcell{\tri}$, it is nonempty.
\end{proof}
\begin{lemma}
  \label{lem:pseudo_bisector}
  For any two apexed triangles $\triangle_1$ and $\triangle_2$ in $A$, the
  set $\{x \in \mathbb{R}^2 : f_{\triangle_1}(x) = f_{\triangle_2}(x)\}$ is a
  curve consisting of $O(1)$ algebraic curves. 
\end{lemma}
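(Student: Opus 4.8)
The plan is to exploit the fact that, away from a fixed partition of $\mathbb{R}^2$ into $O(1)$ convex sectors emanating from $\apex{\tri}$, the function $f_\tri$ is extremely simple: on one sector it is the Euclidean distance to $\apex{\tri}$ shifted by a constant, and on the others it is affine. First I would normalize the definition of $f_\tri$. The orthogonal projection onto the line through $\ell_1$ (resp.\ $\ell_2$) is linear, so on each of $\lsideregion{\tri},\ltopregion{\tri},\rtopregion{\tri},\rsideregion{\tri}$ the quantity $\|\hat{x}_\tri-\apex{\tri}\|$ equals $|\langle x-\apex{\tri},\hat u\rangle|$ for a fixed unit vector $\hat u$ (the direction of $\ell_1$ or $\ell_2$), and the sign of $\langle x-\apex{\tri},\hat u\rangle$ is constant on each of these regions. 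A short case check, using $\ell_5\perp\ell_1$ and $\ell_3\perp\ell_2$, shows that the two cases in the definition of $f_\tri$ merge into a single affine expression on $\lsideregion{\tri}\cup\ltopregion{\tri}$ and another on $\rsideregion{\tri}\cup\rtopregion{\tri}$. Hence there are exactly three rays from $\apex{\tri}$, namely $\ell_1$, $\ell_2$ and $\ell_4$, splitting the plane into three convex sectors on which $f_\tri$ equals, respectively, $d(\apex{\tri},\definer{\tri})+\|x-\apex{\tri}\|$ on $\inregion{\tri}$, $d(\apex{\tri},\definer{\tri})+\langle x-\apex{\tri},\hat u_1\rangle$ on the sector bounded by $\ell_1,\ell_4$, and $d(\apex{\tri},\definer{\tri})+\langle x-\apex{\tri},\hat u_2\rangle$ on the sector bounded by $\ell_2,\ell_4$, where $\hat u_i$ is the unit vector from $\apex{\tri}$ along $\ell_i$.

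Next I would overlay the two sector partitions induced by $\triangle_1$ and by $\triangle_2$. This is the arrangement of six rays (three per apex), so it has $O(1)$ faces, each convex (an intersection of at most four halfplanes). On a single face $\kappa$ each of $f_{\triangle_1}, f_{\triangle_2}$ is one of the three expressions above, so $f_{\triangle_1}-f_{\triangle_2}$ restricted to $\kappa$ is one of: (i) an affine function (affine minus affine), with zero set a line; (ii) $\|x-\apex{\triangle_i}\|$ minus an affine function of $x$ (up to sign), whose zero set, after squaring, is a conic; or (iii) $\|x-\apex{\triangle_1}\|-\|x-\apex{\triangle_2}\|$ minus a constant, which occurs only on the single face $\inregion{\triangle_1}\cap\inregion{\triangle_2}$ and whose zero set is a branch of a hyperbola (a line if the constant vanishes). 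In each case the zero set lies on an algebraic curve of degree at most two, so its intersection with the convex face $\kappa$ has $O(1)$ algebraic arcs. Summing over the $O(1)$ faces, $\{x:f_{\triangle_1}(x)=f_{\triangle_2}(x)\}$ is a union of $O(1)$ algebraic arcs, and since $f_{\triangle_1}-f_{\triangle_2}$ is continuous these arcs join along the rays of the overlay into a curve.

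The step I expect to be the main obstacle is ruling out that the bisecting set contains a two-dimensional piece, i.e.\ that $f_{\triangle_1}-f_{\triangle_2}\equiv 0$ on some face. In case (i) this would force the two affine pieces to have equal gradients; the available gradients are $\hat u_1,\hat u_2$ for $\triangle_1$ and for $\triangle_2$, which are the directions of the two chord-sides of each apexed triangle, so equality would make a side of $\triangle_1$ parallel to a side of $\triangle_2$, contradicting the non-degeneracy assumption stated just before the lemma (and secured by the perturbation in Section~\ref{sec:assum-3}). In case (ii) the equation $\|x-\apex{\triangle_i}\|=(\text{affine in }x)$ simply has no two-dimensional solution set, since the left side is not affine; in case (iii) a two-dimensional solution set would require $\apex{\triangle_1}=\apex{\triangle_2}$ together with $d(\apex{\triangle_1},\definer{\triangle_1})=d(\apex{\triangle_2},\definer{\triangle_2})$, a coincidence excluded by general position. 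Hence the bisecting set is genuinely one-dimensional, and a curve of $O(1)$ algebraic pieces. Finally, this description is explicit and of constant size, so $B(\triangle_1,\triangle_2)$ can be produced in $O(1)$ time from the constant-size encodings of $\triangle_1$ and $\triangle_2$, and any two such bisecting curves, being bounded-degree algebraic curves, meet in finitely many connected components — giving Conditions~1 and~2 of Definition~\ref{def:admissible}.
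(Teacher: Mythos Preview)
Your approach is essentially the paper's: overlay the sector decompositions induced by $\triangle_1$ and $\triangle_2$ and observe that on each face $f_{\triangle_1}-f_{\triangle_2}$ is algebraic of bounded degree, so the zero set meets each face in $O(1)$ algebraic arcs. Your reduction from five sectors per triangle to three, by merging each side/top pair into a single affine piece via $\ell_5\perp\ell_1$ and $\ell_3\perp\ell_2$, is a clean simplification the paper does not make; it also makes the role of the non-parallelism hypothesis transparent in ruling out a two-dimensional bisecting set.

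The one genuine difference in method is how the ``it is a curve'' part is obtained. The paper does not argue this from the algebra at all; it simply invokes Lemma~\ref{lem:condition-connected} (connectedness of $\aacell(\triangle_i,\{\triangle_1,\triangle_2\})$ for $i=1,2$), which immediately gives that the bisecting set is homeomorphic to a line. Your closing sentence (``these arcs join along the rays of the overlay into a curve'') only uses continuity to glue the pieces and one-dimensionality to exclude full-dimensional patches; it does not by itself exclude, say, an isolated closed loop or a branch point of the zero set. Since the abstract-Voronoi framework requires $B(\triangle_1,\triangle_2)$ to be a simple unbounded curve, you should either appeal to Lemma~\ref{lem:condition-connected} as the paper does, or add a short topological argument (e.g., that on each face $f_{\triangle_1}-f_{\triangle_2}$ changes sign across its zero set, so the zero set is the common boundary of two connected open regions and hence a simple unbounded curve).
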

\begin{proof}
  For any two apexed triangles $\tri_1$ and $\tri_2$, the set $\{x \in \mathbb{R}^2 :
  f_{\triangle_1}(x) = f_{\triangle_2}(x)\}$ is a curve homeomorphic to a line
  by Lemma~\ref{lem:condition-connected}.
  Consider the subdivision of $\mathbb{R}^2$ by overlaying the two
  subdivisions as depicted in Figure~\ref{fig:pseudo_dist_func}(b), 
  one from $\triangle_1$ and the other from $\triangle_2$.
  There are at most nine cells in the subdivision of $\mathbb{R}^2$.  In each
  cell, $f_{\triangle_1}$ and $f_{\triangle_2}$ are algebraic
  functions.  Thus, the set $\{x \in C : f_{\triangle_1}(x) =
  f_{\triangle_2}(x)\}$ is an algebraic curve for each cell $C$.
\end{proof}

\subsubsection{The Family of Bisecting Curves is Hamiltonian}
We show that $\gamma$ is a Hamiltonian curve. Recall that $B$ is a box containing $T$ and
$\gamma$ is a curve that contains a part of $\bd B$ and is homeomorphic to a line.
See Figure~\ref{fig:pseudo_dist_func}(a).

\begin{lemma}
  \label{lem:assumption_a4}
  For every subset $A'\subseteq A$ and $\tri\in A'$, $\aacell(\tri,A')$ is intersected by $\gamma$ exactly once.
\end{lemma}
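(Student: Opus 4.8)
The plan is to verify the second condition of Definition~\ref{def:hamiltonian} directly, reducing it to a statement about one pair of apexed triangles at a time. Since $\aacell(\tri,A')=\bigcap_{\tri'\in A'\setminus\{\tri\}}D(\tri,\tri')$ and $\gamma$ is homeomorphic to a line, we have $\gamma\cap\aacell(\tri,A')=\bigcap_{\tri'\in A'\setminus\{\tri\}}\bigl(\gamma\cap D(\tri,\tri')\bigr)$, a finite intersection of subsets of a topological line; if each $\gamma\cap D(\tri_1,\tri_2)$ is connected (an interval along $\gamma$), then so is $\gamma\cap\aacell(\tri,A')$. So it suffices to prove: (i) for every ordered pair $(\tri_1,\tri_2)$ of distinct apexed triangles of $A$, $\gamma$ meets the bisecting curve $B(\tri_1,\tri_2)$ at most once (equivalently $\gamma\cap D(\tri_1,\tri_2)$ and $\gamma\cap D(\tri_2,\tri_1)$ are each connected); and (ii) $\gamma\cap\aacell(\tri,A')\neq\emptyset$ for every $A'\subseteq A$ and $\tri\in A'$.

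For (ii) I would use the monotonicity of abstract Voronoi cells, $A'\subseteq A\Rightarrow\aacell(\tri,A)\subseteq\aacell(\tri,A')$, so it is enough that $\gamma$ meets $\aacell(\tri,A)$ for every $\tri\in A$. By Corollary~\ref{cor:ordering-aVD} each cell $\aacell(\tri,A)$, $\tri\in A$, meets $\bd B$ in a single nonempty arc, and these arcs occur around $\bd B$ in the same cyclic order as the refined cells around $\bd T$. By construction $\gamma$ coincides with $\bd B$ except on one short open subarc $a_0$ of a single edge of $\bd B$, where it instead runs through $\gamma_1,\gamma_2$ inside the one cell $\aacell(\tri_0,A)$ that was computed directly; $a_0$ itself lies in $\aacell(\tri_0,A)$. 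Choosing $a_0$ short enough that it does not cover the whole arc $\aacell(\tri,A)\cap\bd B$ of any $\tri$, every cell $\aacell(\tri,A)$ still meets $\gamma$, which gives (ii).

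For (i) I would split $\gamma$ into the part $\beta$ that follows $\bd B$ and the part $\delta$ (formed by $\gamma_1,\gamma_2$ and their unbounded extensions) contained in the single cell $\aacell(\tri_0,A)$. Taking $B$ large enough (say containing all apexed triangles of $A$), the bisecting curve $B(\tri_1,\tri_2)$ --- an unbounded simple curve homeomorphic to a line by Lemma~\ref{lem:condition-connected} (applied to $A'=\{\tri_1,\tri_2\}$, which makes $D(\tri_1,\tri_2)$ and $D(\tri_2,\tri_1)$ connected) together with Lemma~\ref{lem:pseudo_bisector} --- meets $\bd B$ in at most two points, so $D(\tri_1,\tri_2)\cap\bd B$ is a single arc; together with the cyclic ordering of Corollary~\ref{cor:ordering-aVD} this makes $\beta\cap D(\tri_1,\tri_2)$ a single arc. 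For $\delta$: if $\tri_0\in\{\tri_1,\tri_2\}$ then $\aacell(\tri_0,A)$ lies entirely in one of $D(\tri_1,\tri_2),D(\tri_2,\tri_1)$, so $\delta$ does not cross $B(\tri_1,\tri_2)$; if $\tri_0\notin\{\tri_1,\tri_2\}$, I would use that $\gamma_1,\gamma_2$ can be routed close to $\bd B$, where by Corollary~\ref{cor:ordering-aVD} the sign of $f_{\tri_1}-f_{\tri_2}$ is constant, together with the halfline property used in the proof of Lemma~\ref{lem:condition-connected} (from every point of a cell there is a halfline along which its distance function strictly increases, which then cannot leave the cell), to show $\delta$ stays on one side of $B(\tri_1,\tri_2)$. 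Finally I would check that the arcs $\beta\cap D(\tri_1,\tri_2)$ and $\delta\cap D(\tri_1,\tri_2)$ join up across the two endpoints on the shared edge into a single arc, so $\gamma$ crosses $B(\tri_1,\tri_2)$ at most once.

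The main obstacle is controlling $\delta$ against bisecting curves $B(\tri_1,\tri_2)$ with $\tri_0\notin\{\tri_1,\tri_2\}$, because $\aacell(\tri_0,A)$ may straddle such a bisecting curve (at points where $\tri_0$ beats both $\tri_1$ and $\tri_2$): one must genuinely exploit the freedom to route $\gamma_1,\gamma_2$ near $\bd B$ and the fact that $\gamma$ passes through all four corners of $B$, and make the ``$a_0$ short enough / $\gamma_i$ close enough to $\bd B$'' choices precise; verifying that the stitching at the shared edge never manufactures a second crossing is the delicate step. A secondary technical point is handling $\gamma$ as an unbounded curve near infinity so that ``connected subarc of $\gamma$'' is the correct reading of ``visited exactly once''; choosing $B$ large enough to force $|B(\tri_1,\tri_2)\cap\bd B|\le 2$ keeps this under control.
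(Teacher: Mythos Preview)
Your reduction is logically sound in outline—if each $\gamma\cap D(\tri_1,\tri_2)$ is connected and each cell is hit, then every $\gamma\cap\aacell(\tri,A')$ is a nonempty intersection of intervals along a line, hence an interval—but the execution has two genuine gaps.

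First, you invoke Corollary~\ref{cor:ordering-aVD} several times: to assert that each $\aacell(\tri,A)$ meets $\bd B$ in a single arc, to read off the cyclic order of those arcs, and to claim the sign of $f_{\tri_1}-f_{\tri_2}$ is constant near $a_0$. But Corollary~\ref{cor:ordering-aVD} is a corollary \emph{of} Lemma~\ref{lem:assumption_a4}; using it here is circular. Second, the obstacle you flag for $\tri_0\notin\{\tri_1,\tri_2\}$ is real and not removed by routing $\gamma_1,\gamma_2$ ``close to $\bd B$''. The bisecting curve $B(\tri_1,\tri_2)$ can pass through $\aacell(\tri_0,A)$, and nothing you cite controls the sign of $f_{\tri_1}-f_{\tri_2}$ there. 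Concretely: if the short missing arc $a_0$ lies strictly inside the arc $D(\tri_1,\tri_2)\cap\bd B$, then $\beta\cap D(\tri_1,\tri_2)$ already has two components, and to recover a single interval you would need the two unbounded ends $\gamma_1,\gamma_2$ to land on \emph{opposite} sides of $B(\tri_1,\tri_2)$—which you never establish and which need not follow from proximity to $\bd B$.

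The paper takes a shorter route that avoids pairwise bisectors entirely. The key geometric input is the wedge containment $\inregion{\tri}\setminus\tri\subset\aacell(\tri,A')$ for every $A'\ni\tri$: along any ray from a point of the bottom side of $\tri$ in the direction away from $\apex{\tri}$, $f_\tri$ increases at unit Euclidean speed, the maximum allowed by Lemma~\ref{lem:max_pseudo_dist}, so $\tri$ keeps winning. This immediately gives $\aacell(\tri,A')\cap\bd B\neq\emptyset$. That the intersection is a \emph{single} arc is then derived from connectedness of every cell (Lemma~\ref{lem:condition-connected}) together with Assumption~2 on $\rcell{\tri}\cap\bd T$: two boundary arcs for $\aacell(\tri,A')$ would force two components of $\rcell{\tri}$ on $\bd T$. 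Finally, $\gamma\setminus\bd B$ lies inside a single cell of $\avd[A']$ by the choice of $\gamma_1,\gamma_2$, so the detour cannot create a second visit. What your plan is missing is precisely this wedge argument and its pairing with Assumption~2; the pairwise-bisector reduction does not give you access to either and forces you to re-prove, case by case, facts that follow in one stroke from the cell-level argument.
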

\begin{proof}
  We claim that $\aacell(\triangle)$ is incident to $\bd B$ for any apexed triangle $\triangle \in A'$.
  Since the halfline from a point $x$ in $\rcell{\tri}$ in direction opposite to $\apex{\tri}$ is
  contained in $\rcell{\tri}$  
  and the bottom side of $\tri$ is contained in $\rcell{\tri}$,
  the region $\inregion{\tri}\setminus \tri$ is contained in
  $\rcell{\triangle}$, and thus is contained in $\acell{\tri}$.  
  Since $\inregion{\tri}\setminus \tri$
  intersects $\bd B$, all $\acell{\triangle}$ are incident
  to $\bd B$.
  
  We claim that $\aacell(\tri)$ is intersected by $\bd B$ exactly once for every apexed triangle $\tri \in A'$. Otherwise, $\rcell{\tri}$ intersects the boundary of $\bd T$ more than once since
  $\aacell(\tri')$ is connected and contains $\rcell{\tri'}$ for every apexed triangle $\tri'\in A'$. This contradicts the 
  assumption made in the beginning of this section: $\rcell{\triangle}\cap \bd T$ is
  connected. Therefore, the claim holds.
  
  For the part of $\gamma$ not contained in $\bd B$, recall that   
  we chose $\gamma$ such that $\gamma\setminus \bd B$ is contained in $\aacell(\tri')$ for some $\tri' \in A'$.
  Therefore, $\aacell(\tri)$ is intersected by $\gamma$ exactly once.
\end{proof}
\begin{corollary}\label{cor:ordering-aVD}
	The order of $\rcell{\cdot}$ along $\bd T$ coincides with the order of $\aacell(\cdot,A)$ along $\bd B$.
\end{corollary}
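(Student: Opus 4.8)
The plan is to carry the cyclic order of the refined cells along $\bd T$ over to the abstract cells along $\bd B$ through the annular region between the two curves, exploiting that every abstract cell meets each of $\bd T$ and $\bd B$ in a single arc and that distinct abstract cells are interior-disjoint.

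First I would, if necessary, enlarge $B$ so that $\overline T\subseteq\interior{B}$; then $N:=\overline B\setminus\interior{T}$ is topologically a closed annulus with inner boundary $\bd T$ and outer boundary $\bd B$. By Lemma~\ref{lem:assumption_a4} applied with $A'=A$ together with the connectedness of $\aacell(\tri,A)$ from Lemma~\ref{lem:condition-connected}, the set $J_\tri:=\aacell(\tri,A)\cap\bd B$ is a single arc for every $\tri\in A$, so the $J_\tri$ occur in a well-defined cyclic order along $\bd B$. On the other side, Corollary~\ref{lem:projection_fvd} gives $\aacell(\tri,A)\cap T=\rcell{\tri}\cap T$, and by the standing assumption that $\rcell{\tri}\cap\bd T$ is connected and contains the bottom side of $\tri$, the sets $I_\tri:=\rcell{\tri}\cap\bd T$ are pairwise disjoint connected arcs that cover $\bd T$ up to finitely many points of $\rfvd$, and their cyclic order along $\bd T$ is by definition the order of $\rcell{\cdot}$ along $\bd T$. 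Hence it suffices to show that the cyclic order of $\{I_\tri\}$ along $\bd T$ coincides with that of $\{J_\tri\}$ along $\bd B$.

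For this I would exhibit, for each $\tri\in A$, a connected ``radial strip'' joining $I_\tri$ to $J_\tri$ inside $N\cap\overline{\aacell(\tri,A)}$: namely the wedge $W_\tri:=(\inregion{\tri}\setminus\tri)\cap\overline B$. Just as in the proof of Lemma~\ref{lem:assumption_a4} — using that $f_\tri$ is monotone along rays emanating from $\apex{\tri}$, which follows from the reasoning behind Lemma~\ref{lem:ray_in_cell} together with the Lipschitz bound of Lemma~\ref{lem:max_pseudo_dist} — one has $W_\tri\subseteq\overline{\aacell(\tri,A)}$. The set $W_\tri$ is connected, its closure contains the bottom side of $\tri$ (a subsegment of $\bd T$ lying in $I_\tri$), and because $\inregion{\tri}\setminus\tri$ is unbounded it meets $\bd B$, necessarily within $J_\tri$ since $W_\tri\subseteq\aacell(\tri,A)$ and $\aacell(\tri,A)\cap\bd B=J_\tri$. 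Moreover $W_\tri$ does not re-enter $\interior T$: the bottom side of $\tri$ lies on an edge of $P$ that bounds $T$, so $\tri$ and hence $\apex{\tri}$ lie on one side of the supporting line of that edge while $\inregion{\tri}\setminus\tri$ lies strictly on the other side, and $T$ — being geodesically convex with that edge on its boundary — cannot cross to the far side. Thus each $W_\tri$ is a connected chord of the annulus $N$ from a point of $I_\tri$ on $\bd T$ to a point of $J_\tri$ on $\bd B$, and the chords are pairwise interior-disjoint because the cells $\aacell(\tri,A)$ are.

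It then remains to invoke the elementary topological fact that pairwise interior-disjoint connected chords of an annulus, with distinct inner endpoints and distinct outer endpoints, are non-crossing, so the cyclic order of their inner endpoints along the inner boundary equals that of their outer endpoints along the outer boundary. Applying this to $\{W_\tri\}_{\tri\in A}$ gives that $\{I_\tri\}$ along $\bd T$ and $\{J_\tri\}$ along $\bd B$ are in the same cyclic order, which is the claim; and since $\gamma$ agrees with $\bd B$ except for a detour through a single cell, the same order is seen along $\gamma$, so $\avd\cap\gamma$ is determined by $\rfvd\cap\bd T=\avd\cap\bd T$. The only step I expect to require real care is the claim that $W_\tri$ avoids $\interior T$, i.e.\ that $T$ lies in the closed halfplane bounded by the supporting line of $\tri$'s bottom side on the side containing $\apex{\tri}$: this rests on the bottom side lying on an edge of $P$ (so $T$ is locally on that side) together with geodesic convexity of the lune-cell $T$ propagating this globally. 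Everything else is a bookkeeping combination of Lemmas~\ref{lem:assumption_a4}, \ref{lem:condition-connected}, \ref{lem:ray_in_cell}, \ref{lem:max_pseudo_dist}, Corollary~\ref{lem:projection_fvd}, and the annulus non-crossing property.
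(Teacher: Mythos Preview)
Your approach is essentially the paper's intended one: the corollary is stated without proof, as an immediate consequence of the argument for Lemma~\ref{lem:assumption_a4}, and your explicit use of the wedges $W_\tri=(\inregion{\tri}\setminus\tri)\cap\overline B$ as pairwise-disjoint chords of the annulus is exactly the picture that proof suggests.

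There is, however, a genuine gap in the one step you yourself flag. Your justification that $W_\tri$ avoids $\interior T$ rests on two premises --- that the bottom side of $\tri$ lies on an edge of $\bd P$, and that $T$ is geodesically convex --- neither of which is available here. After the trimming carried out in Section~\ref{sec:assum-2}, the bottom side of $\tri$ is the segment $xy\subset\bd T$, and $\bd T$ need not coincide with $\bd P$ along that edge (indeed the subdivisions of Section~\ref{sec:avoiding-assumption} introduce edges interior to $P$). Moreover, the paper defines only $t$-path-cells to be geodesically convex; lune-cells are not asserted to have this property, and after the further cuts in Sections~\ref{sec:assum-2} and~\ref{sec:assum-4} there is no reason $T$ should be geodesically convex. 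So your half-plane argument does not go through.

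The conclusion $W_\tri\cap\interior T=\emptyset$ is nevertheless correct, and follows directly from ingredients you already invoke. By the proof of Lemma~\ref{lem:assumption_a4} one has $W_\tri\subseteq\acell{\tri}$; by Corollary~\ref{lem:projection_fvd} one has $\acell{\tri}\cap T=\rcell{\tri}\cap T\subseteq\tri$; and by definition $W_\tri\cap\tri=\emptyset$. Hence $W_\tri\cap\interior T=\emptyset$. With this replacement for the flagged paragraph, the remainder of your argument --- the annulus non-crossing property applied to the disjoint connected chords $W_\tri$ --- is correct and completes the proof.
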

\subsection{Dealing with Cases Violating the Assumptions} 
\label{sec:avoiding-assumption}
In the previous subsections, we made the following four assumptions.
Note that the last one is made in Subsection~\ref{sec:define_func}
for defining the distance function $f_\tri(\cdot)$.
\begin{enumerate}
	\item  $T$ is a lune-cell.
	\item  $\rcell{\triangle}\cap \bd T$ is
connected and contains the bottom side of $\tri$ for
any apexed triangle $\triangle$ with $\rcell{\tri}\cap \bd T\neq\emptyset$.
	\item  If $\apex{\tri}$ is on $\bd T$, the closure of $\rcell{\tri}$ does not coincide
	with $\tri$.
	\item The maximal concave chain of $\bd T$ has angle-span at most $\pi/2$.
	\item  There is no pair $(\tri_1,\tri_2)$ of apexed triangles of $A$ such that two sides, one from
	$\tri_1$ and the other from $\tri_2$, are parallel.
\end{enumerate}

\begin{figure}
  \begin{center}
    \includegraphics[width=0.55\textwidth]{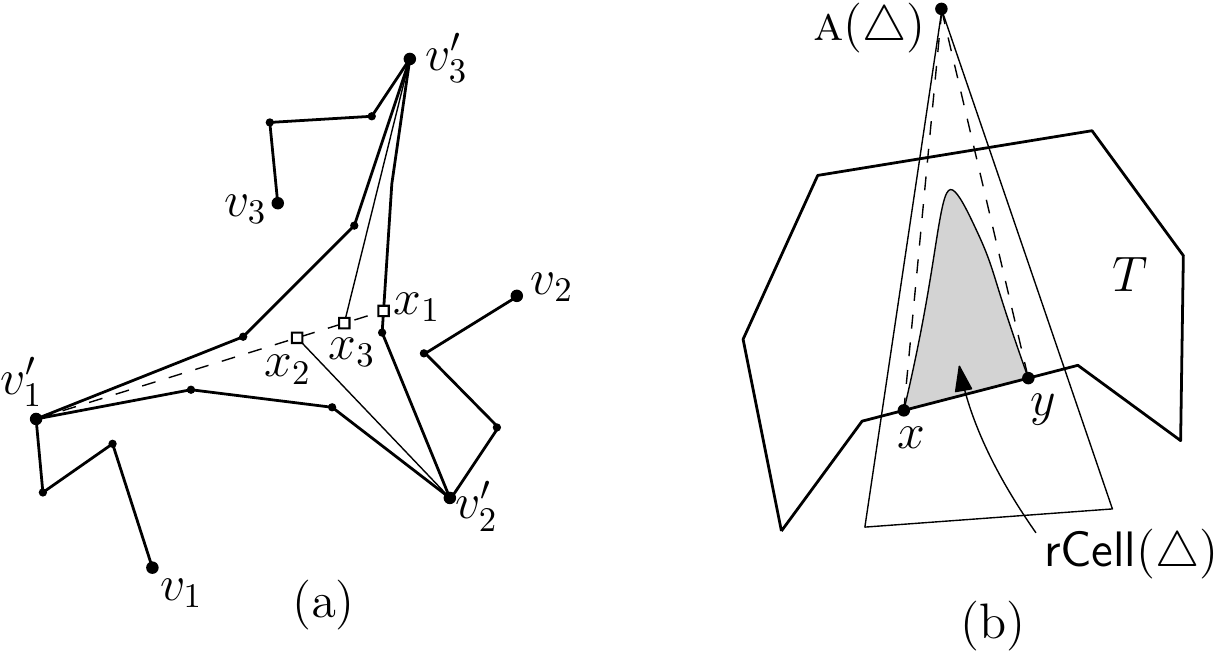}
    \caption {\small (a) The pseudo-triangle is subdivided into
      interior-disjoint four lune-cells. 
      (b) We trim an apexed triangle $\tri$ so that the bottom side of $\tri$ is contained in
      $\rcell{\tri} \cap \bd T$.
      }
    \label{fig:subdivide_pseudotriangle}
  \end{center}
\end{figure}

\subsubsection{Satisfying Assumption~1}\label{sec:assump-1}
  To satisfy Assumption~1, we subdivide each base cell further into subcells so that
  each subcell satisfies Assumption~1.
  For a pseudo-triangle $T$, we subdivide $T$ into four subcells as depicted
  in Figure~\ref{fig:subdivide_pseudotriangle}(a).  Let $v_1, v_2$ and 
  $v_3$ be three corners of $T$.  
  Consider the three vertices $v_1',v_2',v_3'$ of $T$ such that the maximal common
  path for $\pi(v_i,v_j)$ and $\pi(v_i,v_k)$ is $\pi(v_i,v_i')$ for
  $i=1,2,3$, where $j$ and $k$ are two distinct indices other than
  $i$.
  
  First, we find a line segment $v_1'x_1\subset T$ such that $v_1'x_1
  \cap \bd T = \{v_1',x_1\}$.  Then, we find two line segments
  $v_i'x_i \subset T$ such that $v_i'x_i \cap \bd T = \{v_i'\}$ and $x_i \in
  v_1'x_1$ for $i=2,3$.  This takes $O(|T|)=O(\complexity{
  	\rfvd\cap\bd T})$ time.  Then the three line segments $v_i'x_i$
  subdivide $T$ into four lune-cells $T_1, T_2, T_3$ and $T_4$ for $i=1,2,3$.  
  Note that to apply the
  algorithm in this section, $\rfvd \cap \bd T_j$ must be given. It
  can be computed in $O(\complexity{\rfvd\cap\bd T})$ time by
  Corollary~\ref{lem:const_gamma}.  
  Moreover, the total complexity
  of $\rfvd\cap\bd T_j$ for $j=1,2,3,4$
  is $O(\complexity{\rfvd\cap \bd T})$.
  Then we handle each lune-cell separately.
  Now every base cell is a lune-cell.
  
  \subsubsection{Satisfying Assumptions~2 and~3}\label{sec:assum-2}
  We first subdivide each lune-cell $T$ further to satisfy the first part of Assumption~2 and Assumption~3 using a set of line segments with both endpoints on $\bd T$ as follows.
For each endpoint $a$ of each connected component of $\rcell{\tri}\cap \bd T$ for an apexed
	triangle $\tri$ with $\rcell{\tri}\cap \bd T\neq\emptyset$, we compute
the ray from $a$ in direction opposite to $\apex{\tri}$ that intersects
	$T$, it it exists, as we did in Phase~2 of 
	the subdivision. By Lemma~\ref{lem:ray_in_cell}, each such ray is contained
	in the refined cell of its corresponding apexed triangle.
Let $\mathcal{R}$ be the set of all such rays. 
If $\apex{\tri}$ is in $\bd T$ and its $S$-farthest neighbor is $\definer{\tri}$ for some
apexed triangle $\tri$, we also add the sides of $\tri$ other than its bottom side to $\mathcal{R}$.
We can obtain $\mathcal{R}$ in  $O(\complexity{\rfvd\cap\bd T}+\complexity{T})$ time
as we did in Phase~2. 

Then we subdivide $T$ with respect to the rays in $\mathcal{R}$. Since no ray in $\mathcal{R}$
intersects an arc of $\rfvd$ in $T$, the sum of
$O(\complexity{\rfvd\cap\bd T'}+\complexity{T'})$ for all
subcells $T'$ of $T$ is $O(\complexity{\rfvd\cap\bd T})$.  
We can compute this subdivision in $O(\complexity{\rfvd\cap\bd T}+\complexity{T})$ time.
Moreover, each subcell is a lune-cell.  
The refined cell of every apexed triangle appears on each subcell at most once, and thus
the first part of Assumption~2 is satisfied.
Also, if $\apex{\tri}$ is on $T$ and the closure of $\rcell{\tri}$ is $\tri$,
we know that only one subcell $T'$ intersects the interior of $\tri$, and it coincides
with $\tri$. Thus we already know $\rfvd$ restricted to $T'$, which is simply
$\rcell{\tri}$. Thus we do not need to apply the algorithm in this section.
Therefore, every subcell $T'$ such that we do not have $\rfvd$ restricted to $T'$ yet
satisfies Assumption~3.
Also, these subcells still satisfy Assumption~1.

Then we decompose $\tri$ into two triangles by the line passing through $\apex{\tri}$ and $v$
for every $\tri$ such that $\rcell{\tri}\cap\bd T'$ contains a convex vertex $v$ of a subcell
$T'$ (there are at most two such vertices). Note that $v$ lies in the interior of $P$ in this case. 
Only one of the triangles intersects the interior of $T'$ by the definition of $T'$.
We replace $\tri$ with the triangle.
Then, $\rcell{\tri}\cap\bd T'$ is contained in an edge of $T'$ for each apexed triangle $\tri \in A$ if $\rcell{\tri}\cap \bd T'\neq\emptyset$.
Let $x$ and $y$ be the two endpoints of
$\rcell{\tri}\cap\bd T'$.  See
Figure~\ref{fig:subdivide_pseudotriangle}(b).  We trim $\triangle$
into the triangle whose corners are $x, y$ and $\apex{\tri}$.  From
now on, when we refer an apexed triangle $\triangle$, we mean
its trimmed triangle.  Then $\rcell{\triangle}\cap T'$ is still
contained in $\triangle$ by Lemma~\ref{lem:ray_in_cell}.  
We do this
for all apexed triangles.  Every apexed triangle with
$\rcell{\triangle}\cap\bd T' \neq\emptyset$ has its bottom side on $\bd
T'$, and thus Assumption~2 is satisfied.

\subsubsection{Satisfying Assumption~4}\label{sec:assum-4}
Since every base cell satisfies Assumptions~1,2 and~3, the
  maximal concave chain of each base cell has angle-span at most $\pi$ by the following lemma,
but it is possible that the angle-span is larger than $\pi/2$.

\begin{lemma}\label{lem:concave-monotone}
  For every base cell $T$ satisfying Assumptions~1,2 and~3, the maximal
  concave curve of $\bd T$ has angle-span at most $\pi$. 
\end{lemma}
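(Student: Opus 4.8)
The plan is to argue by contradiction, using the ``star-shapedness from the apex'' of refined cells (Lemma~\ref{lem:ray_in_cell}, and its consequence $\inregion{\tri}\setminus\tri\subseteq\rcell{\tri}$ established in the proof of Lemma~\ref{lem:assumption_a4}), the three assumptions, and the boundary ordering of Voronoi cells (Lemma~\ref{lem:ordering}). Suppose the maximal concave chain $\sigma$ of $\bd T$ has angle-span larger than $\pi$. By Assumption~1, $\bd T$ is the union of $\sigma$ and a convex chain; being the concave chain of a lune-cell, $\sigma$ is a geodesic path such that $T$ lies entirely on one side of $\sigma$ and $\sigma$ turns monotonically, so its angle-span equals the total turn along $\sigma$. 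Hence the tangent direction of $\sigma$ rotates by more than $\pi$ from its first endpoint $u$ to its last endpoint $w$, so there is an edge $e'$ of $\sigma$ whose direction is anti-parallel to the direction of the first edge $e$ of $\sigma$; the sub-chain $\sigma'$ of $\sigma$ between $e$ and $e'$ is a convex arc turning by exactly $\pi$, and $T$ lies in the closed strip bounded by the two parallel lines spanned by $e$ and $e'$, on the concave side of $\sigma'$.

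Next I would read off the local apexed-triangle structure along $\sigma$. By Assumption~2, every apexed triangle $\tri$ with $\rcell{\tri}\cap\bd T\neq\emptyset$ has its bottom side on $\bd T$ and meets $\bd T$ in a connected subset of a single edge of $\bd T$; since the refined cells are pairwise interior-disjoint and cover $\bd T$ except for the vertices of $\rfvd$, the chain $\sigma$ is partitioned, in order, into the bottom sides of a sequence $\tri_1,\dots,\tri_N$ of apexed triangles. Each apex $\apex{\tri_i}$ is a vertex of $P$ and so cannot lie in $\interior{T}$; and by Assumption~3, $\overline{\rcell{\tri_i}}\neq\tri_i$ whenever $\apex{\tri_i}\in\bd T$, so $\apex{\tri_i}$ lies on the closed side of the line spanned by the bottom side of $\tri_i$ opposite to $T$. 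Consequently each $\rcell{\tri_i}$ contains the region $\inregion{\tri_i}\setminus\tri_i$, which lies on the $T$-side of that bottom line, so $\rcell{\tri_i}$ genuinely extends from its bottom side into $T$, doing so star-shapedly from the direction of $\apex{\tri_i}$ by Lemma~\ref{lem:ray_in_cell}.

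To finish, let $\tri_p$ be the apexed triangle whose bottom side contains $e$ and $\tri_q$ the one whose bottom side contains $e'$. Their bottom sides are anti-parallel segments of $\sigma$ lying on $\bd T$, separated along $\sigma$ by the half-turn arc $\sigma'$, and the $T$-sides of their bottom lines face each other across the strip containing $T$. I would argue that $\rcell{\tri_p}$ and $\rcell{\tri_q}$, each extending from its bottom side into that strip from opposite ends of $\sigma'$, must then appear along $\bd T$ in a cyclic order that reverses their order along $\sigma$ (or else their interiors overlap); this contradicts both the interior-disjointness of the refined cells and the fact that the footprints of Voronoi cells along $\bd T$ occur in the cyclic order inherited from $\bd\ch$, i.e.\ an analogue of Lemma~\ref{lem:ordering}. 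Hence the angle-span of $\sigma$ is at most $\pi$.

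The heart of the proof, and the main obstacle, is making that last step rigorous: showing that two apexed triangles whose bottom sides lie on $\sigma$ with anti-parallel directions really do force either an overlap of $\rcell{\tri_p}$ and $\rcell{\tri_q}$ or a reversal of the boundary order. This requires a careful case analysis on the position of each of $\apex{\tri_p},\apex{\tri_q}$ relative to the five halfline regions $\inregion{\cdot},\ltopregion{\cdot},\rtopregion{\cdot},\lsideregion{\cdot},\rsideregion{\cdot}$ of the other triangle, distinguishing --- as in the four-case split in the proof of Lemma~\ref{lem:condition-connected} --- whether the clockwise angle from each side of a triangle down to its bottom line is at least $\pi/2$; and it requires handling separately the subcase where an apex lies on the convex chain of $T$ (the configuration Assumption~3 is meant to tame) and the subcase where the turn of $\sigma$ is concentrated at reflex vertices of $P$ that $\sigma$ hugs, so that the bottom lines of $\tri_p$ and $\tri_q$ run very close to $\bd P$.
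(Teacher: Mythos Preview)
Your proposal is incomplete, and you say so yourself: the final paragraph describes the ``heart of the proof'' as an unresolved obstacle and sketches a case analysis you do not carry out. As written, the argument stops at exactly the point where the work begins. There is also a logical slip earlier: you write that Assumption~3 implies ``$\apex{\tri_i}$ lies on the closed side of the line spanned by the bottom side of $\tri_i$ opposite to $T$.'' Assumption~3 only says that when $\apex{\tri}\in\bd T$ the closure of $\rcell{\tri}$ is strictly smaller than $\tri$; it does not place the apex on one side of the bottom line. Your appeal to the ordering lemma is also shaky: Lemma~\ref{lem:ordering} orders cells along $\bd P$, not along $\bd T$, and the version you would need (Corollary~\ref{cor:ordering-aVD}) is proved only after Assumption~4 is in place, so invoking it here risks circularity.

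More to the point, the global strategy of picking two far-apart anti-parallel bottom sides and arguing that their refined cells collide is harder than necessary. The paper's proof is local and avoids all of this. It observes that the bottom sides of the apexed triangles meeting $\gamma$ tile $\gamma$ (by Assumption~2), so they can be listed consecutively. For any two consecutive apexed triangles $\tri_1,\tri_2$ there is an arc of $\rfvd$ separating $\rcell{\tri_1}$ from $\rcell{\tri_2}$ inside $T$; that arc lies in $\tri_1\cap\tri_2$, so a side of $\tri_1$ meets a side of $\tri_2$ on the non-$T$ side of $\gamma$. This pairwise overlap of every consecutive pair is what bounds the total turn by $\pi$: a concave chain whose consecutive supporting triangles all overlap on its convex side cannot turn through more than a half-revolution. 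No antipodal pair, no case split on the five halfline regions, and no appeal to boundary ordering is needed.
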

\begin{proof}
  Consider the maximal concave curve $\gamma$ of $\bd T$ and every
  apexed triangle $\tri$ such that $\rcell{\tri}$ intersects $\gamma$.
  The bottom sides of such apexed triangles are pairwise interior
  disjoint and are contained in $\gamma$ by the
  assumptions. Thus the apexed triangles can be sorted along $\gamma$ with respect to
  their bottom sides.  Consider any two apexed triangles $\tri_1$ and
  $\tri_2$ appearing consecutively on the sorted list.  There is an
  arc of $\rfvd$ induced by $(\tri_1,\tri_2)$ intersecting $\bd T$.
  Thus a side of $\tri_1$ intersects a side of $\tri_2$ at a point
  lying outside of $\gamma$.  In other words, the interior of $\tri_1$
  intersects the interior of $\tri_2$, or a side of one of $\tri_1$
  and $\tri_2$ contains a side of the other triangle. Since this
  holds for every pair of consecutive apexed triangles along $\gamma$,
  the lemma holds.
\end{proof}
For each base cell $T$ whose maximal concave curve $\gamma$
  has angle-span larger than $\pi/2$, we subdivide $T$ into at most three
  subcells so that the maximal concave curve of every subcell has
  angle-span at most $\pi/2$.  While traversing $\gamma$ from one endpoint
  $v$ to the other endpoint $u$, we accumulate the turning angles at
  the vertices we traverse.
  Once the accumulated turning angle exceeds $\pi/2$ at a vertex $v'$
  of $\gamma$, we subdivide $T$ into three cells by the line through $v'u'$, where
  $u'$ is the vertex next to $v'$ along $\gamma$.
  Clearly, the part of $\gamma$ from $v$ to $v'$
  has angle-span at most $\pi/2$. The part of $\gamma$
  from $u'$ to $u$ also has angle-span at most $\pi/2$ by Lemma~\ref{lem:concave-monotone}.
  Therefore, each of the three subcells 
  has a maximal concave chain of angle-span at most $\pi/2$. 
  As
  we did before, we compute $\rfvd \cap \bd T'$ for every subcell $T'$ in
  $O(|\rfvd \cap \bd T|+|T|)$ time in total. Then now every base cell
  still satisfies Assumptions~1 and~3, and the first part of
  Assumption~2. But it is possible that the second part of
  Assumption~2 is violated. In this case, we trim each apexed triangle
  again as we did before.

  \subsubsection{Satisfying Assumption~5}\label{sec:assum-3}
  For every apexed triangle $\tri$, we modify $f_\tri(\cdot)$ as follows by defining $\hat{x}_\triangle$ differently.
  The algorithm~\cite{klein1994} computes the abstract Voronoi diagram restricted to each 
  side of a given Hamiltonian curve. In our case, it is sufficient to compute the
  abstract Voronoi diagram restricted to the side of $\gamma$ containing $T$.
  Thus, we restrict $f_\tri(\cdot)$ to be defined in the side of $\gamma$ containing $T$.
  
  First, we perturb $\ell_3$ and $\ell_5$ slightly such that $\ell_3$
  and $\ell_5$ are circular arcs with common endpoint $\apex{\tri}$ and
  the other endpoints on $\bd B$.  
  We choose a sufficiently large number $r(\tri)$ 
  which is the common radius of $\ell_3$ and $\ell_5$. The rules for choosing $r(\tri)$
  will be described later.
  We let the center of $\ell_3$
  lie on the halfline from $\apex{\tri}$ in the direction opposite
  to $\ell_1$. Note that the center is fixed since the radius $r(\tri)$ is fixed.
  Similarly, we let the center of $\ell_5$ lie on the halfline from $\apex{\tri}$
  in the direction opposite to $\ell_2$.
  See Figure~\ref{fig:assumptoin4}(a).
  Then, for a point $x
  \in \lsideregion{\tri}\cup\ltopregion{\tri}$, we map $x$ into the
  point $\hat{x}_\tri$ on the line containing $\ell_1$ such that $r(\tri)=\|\hat{x}_\tri - c
  \|=\|x - c\|$, for some point $c$ on the halfline from $\apex{\tri}$ in the
  direction opposite to $\ell_1$. Note that
  $\hat{x}_\tri$ is unique. Similarly, we define $\hat{x}_\tri$
  for a point $x \in \rsideregion{\tri}\cup\rtopregion{\tri}$.
  Now, each contour curve consists of three circular
  arcs. See Figure~\ref{fig:assumptoin4}(b).
  
  \begin{figure}
  	\begin{center}
  		\includegraphics[width=0.5\textwidth]{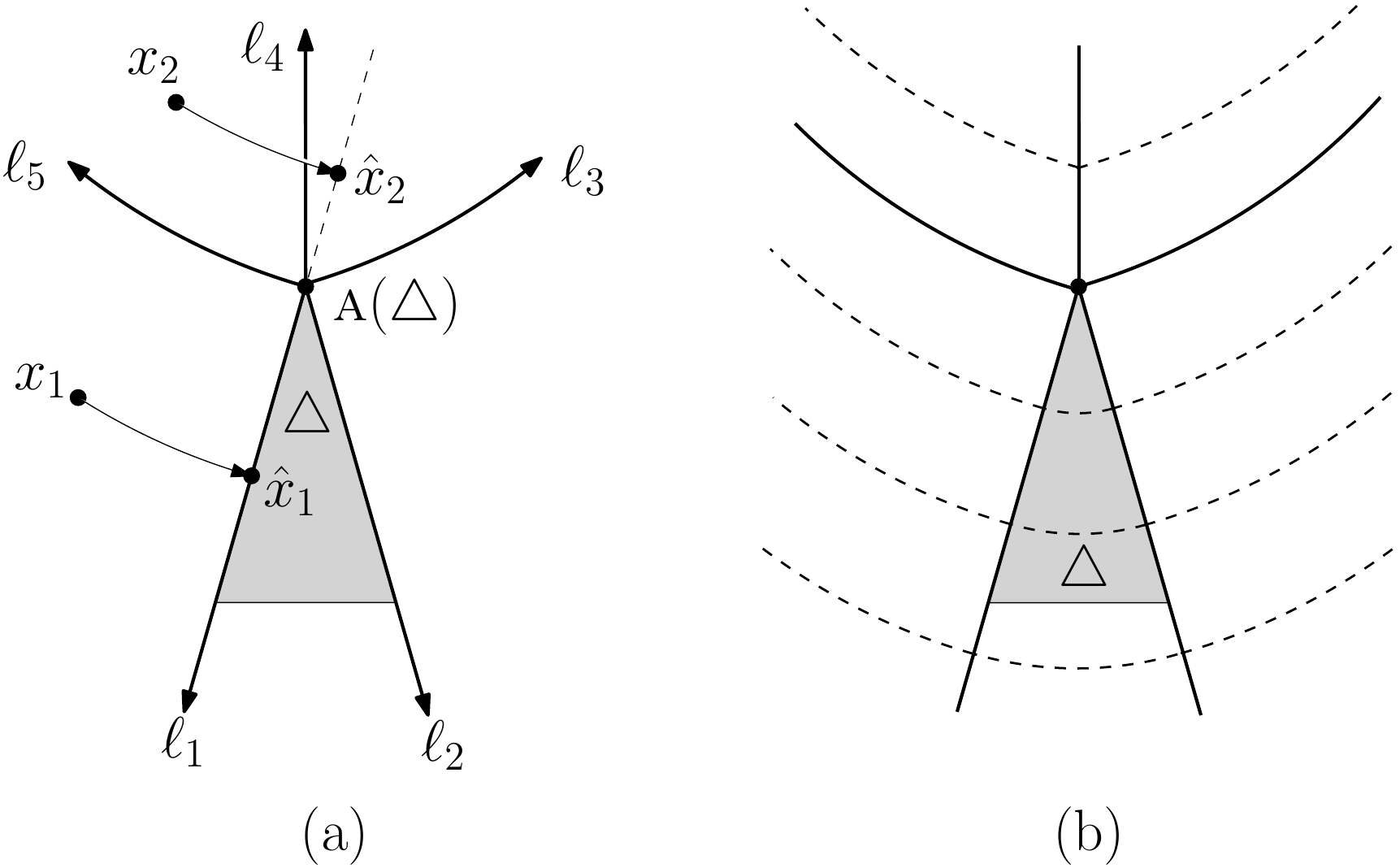}
  		\caption {\small (a) The box $B$ is not depicted. Now $\ell_3$ and $\ell_5$ are circular arcs.
  			(b) A contour curve consists of three circular arcs.
  		}
  		\label{fig:assumptoin4}
  	\end{center}
  \end{figure}

  There are three rules with regard to choosing $r(\tri)$: (1)
  $r(\tri)\neq r(\tri')$ for any two distinct apexed triangles $\tri$
  and $\tri'$, (2) $r(\tri)$ is larger than the diameters of $B$ and $P$ for every $\tri$, 
  and (3) for any two apexed triangles $\tri$ and $\tri'$ such that
  the bottom side of $\tri$ is contained in the maximal concave curve of $T$ and
  the bottom side of $\tri'$ is contained in the maximal convex curve of $T$,
   we have $r(\tri) < r(\tri')$. We can choose $r(\tri)$ for every apexed triangle $\tri$ in
   time linear in the number of the apexed triangles in $A$.
   Here, we need Rule~(1) to satisfy Lemma~\ref{lem:pseudo_bisector},
   Rule~(2) to satisfy Lemmas~\ref{lem:contain-vertex},~\ref{lem:contain-apex}
   and~\ref{lem:pseudo_dist_farthest_case1}, and Rule~(3) to satisfy Lemma~\ref{lem:acell}.

  All previous lemmas and corollaries, except Lemma~\ref{lem:pseudo_dist_farthest_case1}, 
  hold for the new distance function. For Lemma~\ref{lem:pseudo_dist_farthest_case1},
  we can prove that the clockwise angle from $\ell_1$ to the line passing through
  $\apex{\tri}$ and $\ell_5\cap \bd B$ 
  is at most a constant $\pi/\alpha$ 
  depending only on the ratio between the maximum of $r(\tri)$ and the diameter of $B$ (or $P$).
  Since we can choose the maximum of $r(\tri)$ to be a constant times the diameter of $B$ (or $P$), we can regard this ratio as a constant.
  Then we replace $\pi/2$ with $\pi/\alpha$ in Assumption~4, and subdivide a base cell using at most
  $\alpha$ lines instead of subdividing it using only one line in Section~\ref{sec:assum-4}. Then we can obtain base cells
  satisfying Assumption~4, and Lemma~\ref{lem:pseudo_dist_farthest_case1} holds.
With the new distance function, we can compute $\rfvd\cap T$
without any assumptions in $O(|\rfvd\cap \bd T|+|T|)$ time.
\begin{lemma}
	Given a base cell $T$ constructed by the subdivision algorithm in Section~\ref{sec:subdivision}, 
	$\rfvd\cap T$ can be computed in $O(\complexity{\rfvd\cap\bd T}+\complexity{T})$ time.
\end{lemma}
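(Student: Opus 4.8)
The plan is to dispatch the three kinds of base cells separately and reduce everything to the abstract Voronoi diagram machinery set up above. If $T$ is an arc-quadrilateral, then by the construction in Phase~2 we have $\rfvd\cap T=\alpha\cap C_j$, a single arc that was already computed when $T$ was created, so nothing remains to do. If $T$ is a pseudo-triangle ($3$-path-cell) or a lune-cell, I would first run the subdivisions of Section~\ref{sec:avoiding-assumption} to replace $T$ by $O(1)$ lune-subcells $T'$ each satisfying Assumptions~1--5 (Assumption~1 by Section~\ref{sec:assump-1}, Assumptions~2 and~3 by Section~\ref{sec:assum-2}, Assumption~4 by Section~\ref{sec:assum-4}, Assumption~5 by Section~\ref{sec:assum-3}). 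Each such step cuts $T$ by $O(1)$ line segments none of which crosses an arc of $\rfvd$ inside $T$; hence by Corollary~\ref{lem:const_gamma} the diagram $\rfvd\cap\bd T'$ is computable in $O(\complexity{\rfvd\cap\bd T})$ time, and, crucially, $\sum_{T'}\bigl(\complexity{\rfvd\cap\bd T'}+\complexity{T'}\bigr)=O(\complexity{\rfvd\cap\bd T}+\complexity{T})$, so the total cost of all the reductions is within budget. It therefore suffices to prove the statement for a single lune-cell $T$ that already satisfies all five assumptions.

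For such a $T$, let $A$ be the set of apexed triangles with $\rcell{\tri}\cap\bd T\neq\emptyset$; since every refined cell meeting $\bd T$ appears on $\bd T$ exactly once (Assumption~2) we have $|A|=O(\complexity{\rfvd\cap\bd T})$. I would treat the elements of $A$ as the sites of an abstract Voronoi diagram $\avd$ with bisecting curves $B(\tri_1,\tri_2)=\{x:f_{\tri_1}(x)=f_{\tri_2}(x)\}$ built from the distance functions $f_\tri$ of Section~\ref{sec:define_func}. By Lemmas~\ref{lem:pseudo_bisector}, \ref{lem:condition-connected}, \ref{lem:condition-nonempty} and the definition of $f_\tri$, this family is admissible in the sense of Definition~\ref{def:admissible}, and by Lemma~\ref{lem:assumption_a4} it is Hamiltonian with respect to the curve $\gamma$ described in Section~\ref{sec:computing_abstract}. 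Moreover, by Corollary~\ref{lem:projection_fvd}, $\avd\cap T$ is exactly $\rfvd\cap T$, so computing $\avd$ and cutting out the part inside $T$ yields the desired answer.

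To feed the algorithm of Klein and Lingas~\cite{klein1994}, I would prepare its required input $\avd\cap\gamma$: choose a box $B\supseteq T$, compute one cell of $\avd$ directly by comparing the $O(|A|)$ functions $f_\tri$ in $O(|A|)$ time, pick curves $\gamma_1,\gamma_2$ inside that cell with endpoints on one side of $\bd B$, and let $\gamma$ consist of $\gamma_1$, $\gamma_2$ and a part of $\bd B$ through the four corners of $B$. Since by Corollary~\ref{cor:ordering-aVD} the circular order of the cells $\acell{\tri}$ along $\bd B$ equals the order of the $\rcell{\tri}$ along $\bd T$, and $\rfvd\cap\bd T=\avd\cap\bd T$ is already known, $\avd\cap\gamma$ follows in $O(|A|)$ time. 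Running the linear-time algorithm of~\cite{klein1994} then produces $\avd$ in $O(|A|)$ time, and a single traversal of $\avd$ extracts $\avd\cap T=\rfvd\cap T$ in $O(|A|)=O(\complexity{\rfvd\cap\bd T})$ time, using that no refined cell contains a vertex of $T$ in its interior so the boundary crossings already pin down where $\avd$ enters and leaves $T$.

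Summing the cost over the $O(1)$ lune-subcells $T'$ of $T$ gives $O(\complexity{\rfvd\cap\bd T}+\complexity{T})$ overall. The main obstacle I expect is not any single step but the bookkeeping in the second paragraph: one must check that every subdivision used to enforce Assumptions~1--5 (i) is itself computable within the claimed time, (ii) does not increase the asymptotic total complexity of the pieces, because the cuts avoid arcs of $\rfvd$, and (iii) leaves the earlier assumptions intact (e.g.\ re-trimming the apexed triangles after the Assumption~4 split so that Assumption~2 still holds). Once this telescoping is in place, the rest is a direct invocation of the admissibility/Hamiltonicity lemmas and the linear-time abstract Voronoi diagram algorithm.
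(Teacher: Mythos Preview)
Your approach is essentially the paper's own: the lemma is stated as a summary of Section~5 without a separate proof, and you have correctly reconstructed the argument by dispatching arc-quadrilaterals trivially, reducing the remaining base cells via Section~\ref{sec:avoiding-assumption} to lune-cells satisfying all assumptions, and then invoking the admissibility/Hamiltonicity lemmas together with the Klein--Lingas algorithm and Corollary~\ref{lem:projection_fvd}.

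One inaccuracy worth flagging: the subdivision in Section~\ref{sec:assum-2} (enforcing Assumptions~2 and~3) does \emph{not} use $O(1)$ line segments; it introduces one ray per endpoint of each connected component of $\rcell{\tri}\cap\bd T$, so the number of resulting subcells can be $\Theta(\complexity{\rfvd\cap\bd T})$, not $O(1)$. Consequently your appeal to Corollary~\ref{lem:const_gamma} (which requires a constant number of segments) does not apply to that step. The fix is exactly the reason you already state parenthetically: each such ray lies entirely inside a single refined cell (Lemma~\ref{lem:ray_in_cell}), so $\rfvd$ restricted to it is trivial, and the sum $\sum_{T'}(\complexity{\rfvd\cap\bd T'}+\complexity{T'})$ telescopes to $O(\complexity{\rfvd\cap\bd T}+\complexity{T})$ as the paper argues directly. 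With that correction, your plan matches the paper.
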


\paragraph{Remark on the space complexity.}
By Lemma~\ref{lem:complexity_tpathcell}, $\rfvd$ restricted to all
cells in the final iteration of Step~2 is of complexity
$O(n\log\log n)$. Thus the space complexity is $O(n\log\log n)$.  We
can improve the space complexity to $O(n)$ as follows.  When we
recursively apply the subdivision of Step~2 to each cell, we give a
specific order of the cells: when the recursion is completed for one
cell, we apply the subdivision to one of its neighboring cells.
Moreover, when we obtain a base cell, we apply Step~3 without waiting
until Step~2 is completed.  Assume that we complete the recursions for
two adjacent cells. Then we have $\rfvd$ restricted to these cells. We
merge two Voronoi diagrams and discard the information on the common
boundary of these cells.  In this way, the part of $\rfvd$ we have is
of complexity $O(n)$ and the cells we maintain are of complexity
$O(n)$ in total at any time.  Therefore, we can compute $\rfvd$ in
$O(n\log\log n)$ time using $O(n)$ space.

\begin{theorem}
  The geodesic farthest-point Voronoi diagram of the vertices of a
  simple $n$-gon can be computed in $O(n\log\log n)$ time using $O(n)$
  space.
\end{theorem}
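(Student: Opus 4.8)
The plan is to assemble the three steps of the algorithm (here $S$ is the vertex set $V$, so $|S|=n$). First I would invoke Theorem~\ref{thm:restrict-boundary} to compute $\rfvd\cap\bd P$ in $O(n)$ time, using the $S$-farthest neighbors of the vertices (Lemma~\ref{lemma:Matrix lemma}) and a covering set of $O(n)$ apexed triangles (Lemma~\ref{lemma:Apexed triangles}); after adding the degree-$1$ vertices of $\rfvd$ to the vertex set and recomputing the apexed triangles, I obtain a set $\mathcal{A}$ of $O(n)$ apexed triangles with interior-disjoint bottom sides, sorted along $\bd P$. Next I would run the recursive subdivision of Section~\ref{sec:second-step}: starting from $P$ as a $t_1$-path-cell with $t_1=n$ and iterating $t_{k+1}=\floor{\sqrt{t_k}}+1$, each iteration refines every non-base $t_k$-path-cell (Phases~1--3) into $t_{k+1}$-path-cells, arc-quadrilaterals, and lune-cells, while maintaining $\rfvd$ restricted to the boundary of every new cell via Lemma~\ref{lem:gamma_cap_fvd} and Lemma~\ref{lem:find_arcs}. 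By Corollary~\ref{lem:last_iter_base_cell} this terminates after $O(\log\log n)$ iterations with $O(n\log\log n)$ base cells. Finally, for each base cell $T$ I would apply the abstract-Voronoi-diagram machinery of Section~\ref{sec:third-step} (preceded by the preparatory subdivisions of Section~\ref{sec:avoiding-assumption} that enforce the five assumptions) to compute $\rfvd\cap T$ from $\rfvd\cap\bd T$ in $O(\complexity{\rfvd\cap\bd T}+\complexity{T})$ time, glue the pieces along shared cell boundaries to obtain $\rfvd$, and recover $\fvd$ from $\rfvd$ by discarding the arcs that are sides of apexed triangles — linear in $\complexity{\rfvd}=O(n)$.

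For the running time, Step~1 costs $O(n)$; each iteration of Step~2 costs $O(n)$ by the complexity bounds of Lemma~\ref{lem:complexity_tpathcell}, and there are $O(\log\log n)$ of them, so Step~2 costs $O(n\log\log n)$; Step~3 costs $\sum_{T}O(\complexity{\rfvd\cap\bd T}+\complexity{T})$ over all base cells $T$, and by the last two bounds of Lemma~\ref{lem:complexity_tpathcell} applied at the final iteration $k=O(\log\log n)$ both $\sum_T\complexity{\rfvd\cap\bd T}$ and $\sum_T\complexity{T}$ are $O(kn)=O(n\log\log n)$. The final merge along shared base-cell boundaries is linear in the same total, hence $O(n\log\log n)$. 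Summing gives an overall running time of $O(n\log\log n)$.

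For the space bound, the naive implementation keeps $\rfvd$ restricted to all $O(n\log\log n)$ base-cell boundaries at once, which is $O(n\log\log n)$ space. I would instead traverse the recursion tree of Step~2 in depth-first order: subdivide a cell, recurse fully on its children one at a time, run Step~3 on each base cell as soon as it is produced, and whenever two adjacent cells have both been completed, merge their Voronoi diagrams and discard the information on their common boundary. Because only arc-quadrilaterals can be crossed by arcs of $\rfvd$ (Lemma~\ref{lem:nonbase_cell_cross}) and an arc touches only $O(1)$ cells per iteration (Lemma~\ref{lem:arc_intersect_cell}), at any moment the cells we still maintain form an interior-disjoint collection whose edges lie in a fixed triangulation of $P$ plus $O(n)$ additional (arc-quadrilateral) edges, so their total complexity is $O(n)$; the portion of $\rfvd$ currently stored is correspondingly $O(n)$. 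Hence $O(n)$ space suffices throughout, and the theorem follows.

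The main obstacle is making the amortized bookkeeping of the preceding two paragraphs precise rather than plausible: one must verify that Lemma~\ref{lem:complexity_tpathcell} genuinely bounds the work of \emph{every} iteration (not just on average) by $O(n)$, so that the $O(\log\log n)$ iteration count multiplies cleanly, and, for the space argument, that the depth-first schedule never forces more than an $O(n)$-size frontier of active cells together with their boundary copies of $\rfvd$ — which is exactly where Lemma~\ref{lem:nonbase_cell_cross} and Lemma~\ref{lem:arc_intersect_cell} are needed to keep the count of arc-quadrilaterals along any active branch, and hence the number of non-triangulation edges currently in memory, under control.
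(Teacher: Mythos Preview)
Your proposal is correct and matches the paper's argument essentially verbatim: the three-step pipeline (Theorem~\ref{thm:restrict-boundary} for $\rfvd\cap\bd P$, the $O(\log\log n)$-round subdivision with per-round cost $O(n)$ via Lemma~\ref{lem:complexity_tpathcell}, and the per-base-cell abstract-Voronoi computation of Section~\ref{sec:third-step}) is exactly how the paper arrives at the time bound, and your depth-first schedule with immediate Step-3 processing and merge-and-discard on adjacent completed cells is precisely the paper's ``Remark on the space complexity'' preceding the theorem. Your final paragraph correctly flags that the $O(n)$-frontier claim is asserted rather than fully itemized in the paper as well; the justification you sketch via Lemmas~\ref{lem:nonbase_cell_cross} and~\ref{lem:arc_intersect_cell} is the right one and is slightly more explicit than what the paper writes.
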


\section{Sites Lying on the Boundary of a Simple Polygon}\label{section:General set o sites}
In this section, we show that the results presented in the previous sections are general
enough to work for an arbitrary set $S$ of sites contained
in the boundary of $P$.
In this case, we assume without loss of generality that
all sites of $S$ are vertices of $P$. This can be achieved
by splitting each edge $uv$ that contain a site $s$ of $S$
into two, $us$ and $sv$, with a new vertex $s$.

We decompose the boundary of $P$ into chains
of consecutive vertices that share the same $S$-farthest neighbor and
edges of $P$ whose endpoints have distinct $S$-farthest neighbors.
The following lemma is a counterpart of Lemma~\ref{lemma:Matrix lemma}.
Lemma~\ref{lemma:Matrix lemma}
is the only place where it was assumed that $S$ is the set of
vertices of $P$.
The algorithm for computing a set of apexed triangles in~\cite{1-center} is based on Lemma~\ref{lemma:Matrix lemma}.
By replacing Lemma~\ref{lemma:Matrix lemma} with Lemma~\ref{lemma:Matrix lemma for general S}, this algorithm works for a set of sites on the boundary.

\begin{lemma}\label{lemma:Matrix lemma for general S}
  Given a set $S$ of $m$ sites contained in $\bd P$, we can compute
  the $S$-farthest neighbor of each vertex of $P$ in $O(n+m)$ time.
\end{lemma}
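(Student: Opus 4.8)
The plan is to mimic the proof of Lemma~\ref{lemma:Matrix lemma}, i.e.\ to reduce the problem to one invocation of the matrix-searching machinery of Hershberger and Suri~\cite{hershberger1993matrix}; the only genuinely new point is that the diagram we need is the farthest-point diagram of $S$, which is now a \emph{proper} subset of the vertex set, so the relevant distance matrix is rectangular. As observed above, splitting each edge of $P$ that carries a site produces in $O(n+m)$ time a simple polygon $P'$ with $N:=n+m$ vertices in which every site of $S$ is a vertex, and finding the $S$-farthest neighbor of each vertex of $P$ is a special case of finding it for each vertex of $P'$. Let $M$ be the $N\times m$ matrix whose rows are the vertices of $P'$ in clockwise order along $\bd P'$, whose columns are the sites of $S$ in clockwise order along $\bd P'$, and with $M[i][j]=d(v_i,s_j)$. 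Computing, for every $i$, a column index attaining the maximum of row $i$ solves the problem.

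The crux is that $M$, up to a cyclic shift of its rows and of its columns, is totally monotone in the sense demanded by matrix searching. This can be seen in two ways. First, $M$ is obtained from the full vertex-to-vertex geodesic distance matrix of $P'$ by deleting the columns not indexed by $S$, and Hershberger and Suri prove that this full matrix has the required (cyclically) totally monotone structure; the property survives deletion of columns (and of rows). Second, and more conceptually, the cyclic monotonicity of the row-maximum column is exactly the assertion that, as a point travels clockwise along $\bd P$, its $S$-farthest neighbor moves monotonically along $\bd\ch$ --- this is the Ordering Lemma (Lemma~\ref{lem:ordering}). Furthermore, the entries of $M$ that the matrix-searching algorithm queries are geodesic distances between a vertex of $P'$ and a site of $S$, and Hershberger and Suri show that such queries, when issued in the structured order produced by the search, can be answered in $O(1)$ amortized time using shortest-path data on $P'$; hence the search runs in time linear in the number of rows plus columns, that is $O(N+m)=O(n+m)$.

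To determine the cyclic shift that makes $M$ totally monotone, I would fix an arbitrary vertex $v_0$ of $P'$, build the shortest-path tree of $P'$ rooted at $v_0$ in $O(N)=O(n+m)$ time~\cite{shortest-path-tree}, read off $d(v_0,s_j)$ for all $j$, and take the maximum to obtain $\textsc{n}(P',S,v_0)$; this pins down which column must become the first one when the rows are cyclically shifted to begin at $v_0$, and the resulting reordered matrix is totally monotone by the previous paragraph. Running the algorithm of~\cite{hershberger1993matrix} on it produces a row-maximizing column for every row in $O(n+m)$ time, and restricting attention to the rows corresponding to vertices of $P$ yields their $S$-farthest neighbors. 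All steps cost $O(n+m)$, which proves the lemma.

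The main obstacle I anticipate is essentially one of citation bookkeeping: one must check that the machinery of~\cite{hershberger1993matrix}, stated there for the square vertex-to-vertex distance matrix of a simple polygon, applies unchanged to the rectangular vertex-to-site matrix used here. This is immediate once the sites are turned into vertices, because $M$ is then literally a column-submatrix of the matrix treated in~\cite{hershberger1993matrix}, and both the total-monotonicity structure and the amortized $O(1)$ entry-access bound are inherited by any such submatrix. The one new step --- cyclically anchoring the matrix via $\textsc{n}(P',S,v_0)$ --- is dispatched by the single shortest-path-tree computation above.
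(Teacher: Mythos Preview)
Your approach is correct but takes a different route from the paper's. You restrict to the $N\times m$ column-submatrix of the full $N\times N$ geodesic distance matrix and argue that total monotonicity and the amortized entry-access machinery of~\cite{hershberger1993matrix} survive column deletion. The paper instead sidesteps the submatrix issue with a weighting trick: it sets $w(v)=D_P$ (any constant exceeding the geodesic diameter of $P$) for $v\in S$ and $w(v)=0$ otherwise, defines $d^*(p,q)=d(p,q)+w(p)+w(q)$, and then invokes the \emph{additively weighted} variant of Hershberger--Suri's matrix search (their Sections~6.1 and~6.3) on the full square matrix. The weight forces every row maximum to occur at a site, so the $d^*$-farthest vertex is automatically the $S$-farthest neighbor. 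The paper's argument is thus a one-line citation to an explicitly stated result and needs no anchoring step, whereas yours is more self-contained but asks the reader to verify that the rectangular submatrix inherits both the monotone structure and, more delicately, the amortized $O(1)$ entry evaluation---true, but not entirely immediate, since in~\cite{hershberger1993matrix} the matrix search and the shortest-path bookkeeping are interleaved rather than black-boxed.
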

\begin{proof}
  Let $w:P\to \mathbb{R}$ be a real valued function on the vertices of
  $P$ such that for each vertex $v$ of $P$,
$$w(v) = \left\{\begin{array}{ll} 
    D_P&\text{if $v\in S$}\\
    0&\text{otherwise,}
  \end{array}\right.$$
where $D_P$ is any fixed constant larger than the geodesic diameter of
$P$. Recall that the diameter of $P$ can be computed in linear
time~\cite{hershberger1993matrix}.

For each vertex $v\in P$, we want to identify
the $S$-farthest neighbor $\ff{v}$. To this end,
we define a new distance function $d^*:P\times P\to \mathbb{R}$ such that for
any two points $p$ and $q$ of $P$, $d^*(p,q) = d(p,q) + w(p) + w(q)$.
Using a result from Hershberger and Suri~\cite[Section 6.1 and
6.3]{hershberger1993matrix}, we can compute the
farthest neighbor of each vertex of $P$ with respect to $d^*$ in $O(n+m)$ time. 

By the definition of the function $w$, the maximum distance from any
vertex of $P$ is achieved at a site of $S$.  Therefore, the farthest
neighbor from a vertex $v$ of $P$ with respect to $d^*$ is indeed the
$S$-farthest neighbor, $\ff{v}$, of~$v$.
\end{proof}

\begin{theorem}\label{thm:boundary}
  The geodesic farthest-point Voronoi diagram of $m$ points on the
  boundary of a simple $n$-gon can be computed in $O((n+m)\log\log n)$
  time.
\end{theorem}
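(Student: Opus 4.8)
The plan is to run the entire three-step algorithm of Sections~\ref{sec:first-step}--\ref{sec:third-step} on an enlarged polygon and argue that, even though that polygon has $n+m$ vertices, the recursion of Step~2 still bottoms out after $O(\log\log n)$ rounds. First I would make all sites vertices of $P$ by splitting every edge carrying a site, obtaining a simple polygon $P'$ with $n'\le n+m$ vertices. The crucial observation is that each newly inserted vertex has interior angle exactly $\pi$ and is therefore \emph{reflex} in the sense of Section~\ref{sec:preliminaries}; hence $P'$ has at most $n$ \emph{convex} vertices, i.e.\ $P'$ is an $n$-path-cell. Next, using Lemma~\ref{lemma:Matrix lemma for general S} in place of Lemma~\ref{lemma:Matrix lemma}, the apexed-triangle construction of~\cite{1-center} produces a set of $O(n+m)$ apexed triangles covering $\fvd[S]$ in $O(n+m)$ time; as noted in Section~\ref{section:General set o sites}, Lemma~\ref{lemma:Matrix lemma} is the only place in Sections~\ref{sec:first-step}--\ref{sec:third-step} where the hypothesis $S=V$ was used, so from here on all three steps apply to $P'$ and $S$.

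Step~1 (Section~\ref{sec:first-step}) now runs verbatim: the only input-dependent costs there are proportional to the number of apexed triangles, which is $O(n+m)$, and to the cost of maintaining the geodesic paths $\pi(s,x)$, which is bounded by $\sum_e|H_e|=O(n+m)$ (Corollary~3.8 of~\cite{1-center}). Hence, exactly as in Theorem~\ref{thm:restrict-boundary} but with every ``$O(n)$'' replaced by ``$O(n+m)$'', we obtain $\rfvd\cap\bd P'$ and a fresh set of $O(n+m)$ apexed triangles in $O(n+m)$ time. The degree-$1$ vertices of $\rfvd$ we add to the vertex set again lie in the interiors of edges of $P'$ and so are non-convex, so the polygon handed to Step~2 is still an $n$-path-cell.

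Step~2 (Section~\ref{sec:second-step}) is where the factor $\log\log n$, rather than $\log\log(n+m)$, arises: the sequence $t_1=n,\ t_k=\floor{\sqrt{t_{k-1}}}+1$ starts at the \emph{number of convex vertices} of the starting cell, which for $P'$ is at most $n$, so the recursion reaches $t_k\le 3$ after $O(\log\log n)$ rounds (Corollary~\ref{lem:last_iter_base_cell}). Each round still costs time linear in the polygon complexity, i.e.\ $O(n+m)$, and the four sums of Lemma~\ref{lem:complexity_tpathcell} at the end of round $k$ are $O(n+m)$, $O(n+m)$, $O(k(n+m))$, and $O(k(n+m))$, because the edges of $t_k$-path-cells and the base-cell edges with endpoints in $V(P')$ are chords of a triangulation of $P'$ ($O(n+m)$ of them) and the base cells contribute only $O(k)$ extra edges per arc of $\rfvd$, of which there are $O(n+m)$. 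Thus with $k=O(\log\log n)$ we obtain $O((n+m)\log\log n)$ base cells with $\sum_T|T|=\sum_T|\rfvd\cap\bd T|=O((n+m)\log\log n)$, all produced within $O((n+m)\log\log n)$ time. Step~3 (Section~\ref{sec:third-step}) then computes $\rfvd\cap T$ for each base cell in $O(|T|+|\rfvd\cap\bd T|)$ time, summing to $O((n+m)\log\log n)$; merging the pieces (with the space-saving traversal order of the Remark) yields $\fvd[S]$ within the same bound.

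The main obstacle is the bookkeeping of the earlier lemmas under the substitution $S\subsetneq V$ together with ``polygon complexity $=n+m$ but convex-vertex count $=n$'': one must verify that every ``$O(n)$'' in Sections~\ref{sec:first-step}--\ref{sec:third-step} that genuinely measures the size of the subdivision becomes ``$O(n+m)$'', while the recursion-depth factor $k$ stays $O(\log\log n)$ because it is driven only by the convex vertices. In particular, one checks that Lemma~\ref{lem:arc_intersect_cell} and Lemma~\ref{lem:complexity_tpathcell} are unaffected by the straight vertices introduced by edge-splitting (these never create a new convex vertex, a new arc of $\rfvd$, or a new triangulation chord beyond the $O(n+m)$ already counted), and that the geodesic-path maintenance of Step~1 and the abstract-Voronoi machinery of Step~3 depend only on the $O(n+m)$ apexed triangles and the $O(n+m)$ complexity of $\rfvd$. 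Once this is in place, adding the per-step bounds gives the claimed $O((n+m)\log\log n)$ running time.
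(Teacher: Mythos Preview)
Your proposal is correct and takes essentially the same approach as the paper: add the sites as degenerate (angle-$\pi$) vertices, replace Lemma~\ref{lemma:Matrix lemma} by Lemma~\ref{lemma:Matrix lemma for general S} so that the apexed-triangle machinery goes through, and then run Steps~1--3 on the enlarged polygon. The one point you make explicit that the paper leaves implicit at this spot---that the inserted vertices are non-convex, so the starting cell is still an $n$-path-cell and the Step~2 recursion depth is $O(\log\log n)$ rather than $O(\log\log(n+m))$---is exactly the observation the paper spells out a few lines later when proving the refined bound of Theorem~\ref{thm:few}.
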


\subsection{Few Convex or Few Reflex Vertices}
We can compute $\fvd[P,S]$ in
$O((n+m)\log\log \min\{c,r\})$ time for a simple $n$-gon $P$ and a set
$S$ of $m$ points on the boundary of $P$, where $c$ is the number of
the convex vertices of $P$ and $r$ is the number of the reflex
vertices of $P$. 
To achieve this, we apply two different algorithms depending on
whether $r\geq c$ or $r<c$. 

\paragraph{Few Convex Vertices.}  In the case that $r\geq c$, we
simply apply the algorithm in Theorem~\ref{thm:boundary}. We give a
tighter analysis that the running time is $O((n+m)\log\log c)$.
A basic observation is that the $\log\log n$
factor in the running time of Theorem~\ref{thm:boundary} is the number
of iterations of the subdivision in the second phase described in
Section~\ref{sec:second-step}. In the second phase, we choose every
$\lfloor\sqrt{t}\rfloor$th \emph{convex} vertices of a $t$-path-cell
along its boundary to subdivide the cell into $(\lfloor\sqrt{t}\rfloor+1)$-path-cells 
and base cells for some $t\in\mathbb{R}$.
They recursively subdivide $t'$-path-cells for $t'>3$ until every cell becomes a $3$-path-cell
or a base cell. Initially, we are given $P$ as a $c$-path-cell.
This implies that the
number of iterations of the subdivision is indeed $\log\log c$. 
Therefore, we can obtain $\fvd[P,S]$ in $O((n+m)\log\log \min\{c,r\})$
time if $r\geq c$.

\paragraph{Few Reflex Vertices.}
We first compute $\fvd[P,S]$ restricted to the boundary of $P$ in
$O(n+m)$ time using Theorem~\ref{thm:restrict-boundary}.  Then we
subdivide $P$ into a number of lune-cells and $r$-path-cells as
follows.  Recall that a lune-cell is a subpolygon of $P$ whose
boundary consists of a convex chain and a concave chain.  We compute
the geodesic convex hull $\ch$ of the reflex vertices of $P$.  The
interior of $\ch$ consists of a number of connected regions. Note that
each connected region has complexity $O(r)$.  In other words, each
connected region is an $r$-path-cell. Moreover, the total complexity
of all connected regions is $O(r)$.

Consider the connected regions of $P\setminus \ch$. The boundary of each connected region
consists of a part of the boundary of $\ch$ and a convex chain connecting some convex vertices of $P$ and sites of $S$.
Thus, each connected region is a lune-cell. See Figure~\ref{fig:CH}.

\begin{figure}
	\begin{center}
		\includegraphics[width=0.3\textwidth]{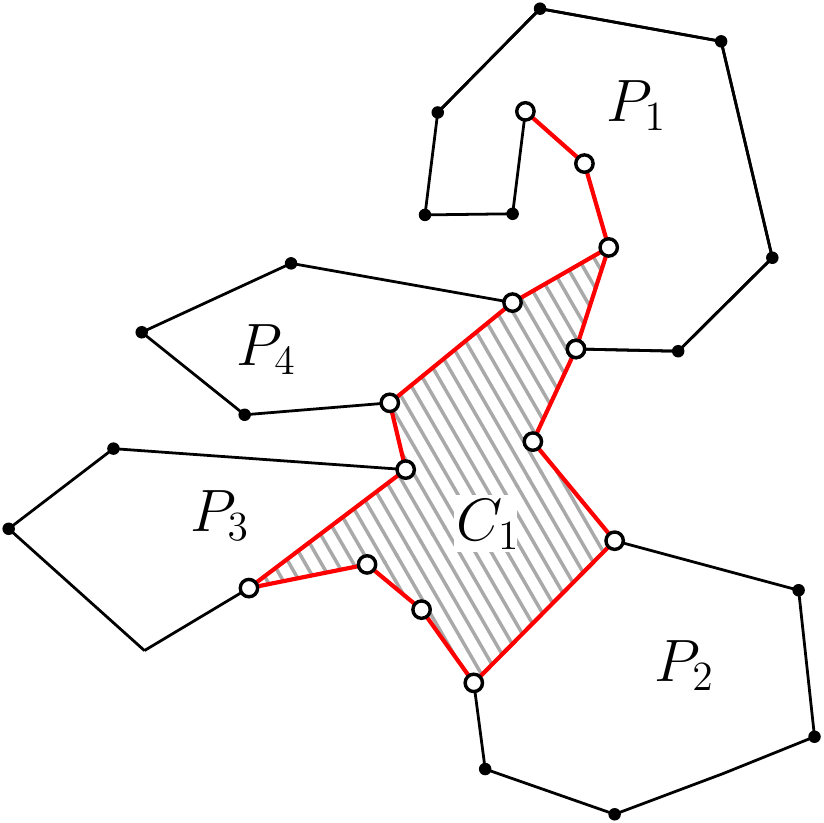}
		\caption {\small The geodesic convex hull (dashed region) of the reflex vertices of $P$ 
			subdivides $P$ into lune-cells ($P_i$'s for $i=1,2,3,4$) and one $t$-path-cell
			($C_1$).\label{fig:CH}}
	\end{center}
\end{figure}

We compute $\fvd$ restricted to the boundary of $\ch$ in $O(n+m)$ time
using Lemma~\ref{lem:gamma_cap_fvd}.  Then for each $r$-path-cell, we
apply the algorithm in Section~\ref{sec:second-step} and compute
$\fvd[P,S]$ restricted to the cell in total $O((n+m)\log\log r)$ time.
For each lune-cell, we apply the algorithm in
Section~\ref{sec:third-step} and compute $\fvd[P,S]$ restricted to
each cell. The total complexity of $\fvd$ restricted to all lune-cells
and all $r$-path-cells is $O(n+m)$. Therefore, we can obtain $\fvd$
restricted to each lune-cell in $O(n+m)$ time in total.  Since the
cells are pairwise interior disjoint, we can obtain $\fvd[P,S]$ by
simply combining all of them.  Therefore, we can obtain $\fvd[P,S]$ in
$O((n+m)\log\log \min\{c,r\})$ time if $r<c$, and we have the
following theorem.

\begin{theorem}\label{thm:few}
  For a simple $n$-gon $P$ with $c$ convex vertices and $r$ reflex
  vertices, we can compute the farthest-point geodesic Voronoi diagram
  of a set of $m$ sites on the boundary of $P$ in
  $O((n+m)\log\log \min\{c,r\})$ time.
\end{theorem}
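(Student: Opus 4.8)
The plan is to split into two cases according to whether $r\ge c$ or $r<c$, and in each case to invoke an algorithm already developed in the paper while giving a sharper accounting of where the $\log\log$ factor comes from. In the first case I would simply run the algorithm of Theorem~\ref{thm:boundary} verbatim. The key observation is that the $\log\log n$ factor in its running time is precisely the number of iterations of the recursive subdivision of Section~\ref{sec:second-step}, and that count is governed by the sequence $t_1,t_2,\dots$ with $t_1$ equal to the number of convex vertices of the initial cell. Since $P$ itself is a $c$-path-cell (it has $c$ convex vertices), the recursion bottoms out after $O(\log\log c)$ rounds rather than $O(\log\log n)$, so the total time is $O((n+m)\log\log c)=O((n+m)\log\log\min\{c,r\})$. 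Everything else is unchanged: $\fvd\cap\bd P$ is computed in $O(n+m)$ time by Theorem~\ref{thm:restrict-boundary} together with Lemma~\ref{lemma:Matrix lemma for general S} in place of Lemma~\ref{lemma:Matrix lemma}, and each round of Step~2 as well as Step~3 costs $O(n+m)$ by Lemma~\ref{lem:complexity_tpathcell} and the base-cell lemma of Section~\ref{sec:third-step}.

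For the case $r<c$ I would first compute $\fvd[P,S]\cap\bd P$ in $O(n+m)$ time via Theorem~\ref{thm:restrict-boundary} (again using Lemma~\ref{lemma:Matrix lemma for general S}), then compute the geodesic convex hull $\ch$ of the reflex vertices of $P$ in $O(n+m)$ time and use $\bd\ch$ to subdivide $P$. The connected components of $\interior{\ch}$ have on their boundaries only reflex vertices of $P$ and vertices of $\ch$, so each is geodesically convex with at most $r$ convex vertices, i.e.\ an $r$-path-cell, and their total complexity is $O(r)$; each connected region of $P\setminus\ch$ is bounded by a subchain of $\bd\ch$ (a concave geodesic path) and a convex chain of convex vertices of $P$ and sites of $S$, hence is a lune-cell, with total complexity $O(n+m)$. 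I would then compute $\rfvd\cap\bd\ch$ in $O(n+m)$ time using Lemma~\ref{lem:gamma_cap_fvd}, run the subdivision of Section~\ref{sec:second-step} on each $r$-path-cell — costing $O((n+m)\log\log r)$ in total since every such cell starts as an $r$-path-cell — and run the Section~\ref{sec:third-step} algorithm on each lune-cell in $O(n+m)$ time in total. Merging the per-cell diagrams along their shared (already computed) boundaries yields $\fvd[P,S]$ in $O((n+m)\log\log r)=O((n+m)\log\log\min\{c,r\})$ time.

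The hard part is not the high-level structure but the bookkeeping that makes the subdivision by $\bd\ch$ clean: I would need to verify carefully that every connected region of $P\setminus\ch$ really is a lune-cell — in particular that its $\bd\ch$ side is a single geodesic path and its $\bd P$ side is a convex chain, accounting for the sites of $S$ that split edges of $P$ — that the connected components of $\interior{\ch}$ are geodesically convex with at most $r$ convex vertices, and, crucially, that $\sum_C\complexity{C}$ and $\sum_C\complexity{\rfvd\cap\bd C}$ over all resulting cells remain $O(n+m)$ so that the subsequent invocations of the Section~\ref{sec:second-step} and Section~\ref{sec:third-step} machinery respect the claimed bounds. Once these structural facts are established, the two cases combine to give the stated running time.
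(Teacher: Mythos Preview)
Your proposal is correct and follows essentially the same approach as the paper: the same case split on $r\ge c$ versus $r<c$, the same reanalysis of the iteration count as $O(\log\log c)$ in the first case, and in the second case the same decomposition of $P$ via the geodesic convex hull of its reflex vertices into $r$-path-cells (handled by Section~\ref{sec:second-step}) and lune-cells (handled by Section~\ref{sec:third-step}), with $\rfvd\cap\bd\ch$ obtained through Lemma~\ref{lem:gamma_cap_fvd}. The bookkeeping you flag as the ``hard part'' is exactly what the paper leaves at the same level of detail.
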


\section{Sites Lying in a Simple Polygon} \label{sec:general}
In this section, we consider a set $S$ of point sites lying in $P$.
It is known that a site of $S$ appears on the boundary of the geodesic convex hull
$\ch$ of $S$ if it has a nonempty Voronoi cell in $P$.  Thus, we first compute $\ch$ in
$O(n+m\log m)$ time~\cite{guibasShortestPathQueries}.  Since the
sites lying in the interior of $\ch$ do not have nonempty Voronoi
cells, we remove them from $S$.  Then every site of $S$ lies on the
boundary of $\ch$.

Our algorithm consists of two steps. In the first step,
we subdivide $P$ into three interior
disjoint subpolygons,
which we call \emph{funnels}, in $O(n+m)$ time. The boundary of a funnel
consists of two line segments and a part of the boundary of $P$.
In the second step,
we compute $\fvd[P,S]$ restricted to each
funnel in $O((n+m)\log\log n)$ time. By merging them, we can obtain
$\fvd[P,S]$ in $O((n+m)\log\log n)$ time excluding the time for
computing $\ch$, or $O(n\log\log n+m\log m)$ time including
the time for computing $\ch$.

\subsection{Subdivision of \texorpdfstring{$P$}{P} into Three Funnels}\label{sec:subdivision-general}
We compute the geodesic center $g$ of $S$ with respect to $P$. 
Recall that it is the point in $P$ that minimizes
the maximum geodesic distance to all sites of $S$. Moreover, it coincides with 
the geodesic center of $\ch$ with respect to $\ch$~\cite[Corollary 2.3.5]{aronov1993furthest}. Ahn at al.~\cite{1-center} presented an algorithm to
compute the geodesic center of a simple $n$-gon 
in $O(n)$ time. Since we have $\ch$, we can compute $g$ in $O(n+m)$ time.

We subdivide $P$ into three \emph{funnels} with respect to $g$ as
follows.  There are at most three sites equidistant from $g$ by the
general position assumption.  Let $s_1, s_2$ and $s_3$ be such sites
sorted in clockwise order along the boundary of $\ch$.  ($s_3$ might
not exist.)  While computing the center $g$, we can obtain such sites.
For each $i=1,2,3$, we extend the edge of $\pi(g,s_i)$ incident to the
center $g$ towards $g$ until it escapes from $P$.  Let $s_i'$ be the
point on $\bd P$ hit by this extension.  See
Figure~\ref{fig:funnel}(a).  Then the three line segments
$gs_1', gs_2'$ and $gs_3'$ subdivide $P$ into three regions whose
boundary consists of a part of $\bd P$ and two line segments sharing a
common endpoint $g$.  We call each region a \emph{funnel}. We call the
common boundary of $P$ and a funnel the \emph{bottom side} of the
funnel.

\begin{figure}
  \begin{center}
    \includegraphics[width=0.6\textwidth]{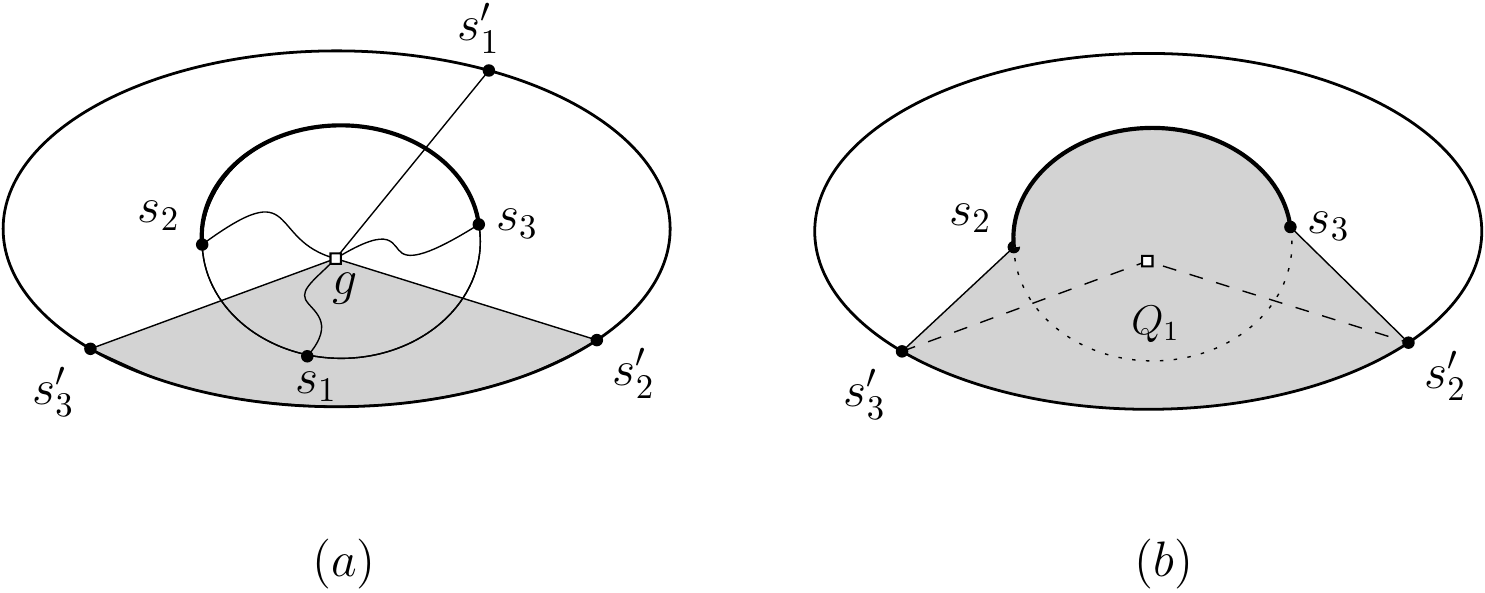}
    \caption{\small (a) The funnel $F_1$ (gray region). (b)
        There are some points $p, q\in Q_1$ such that the geodesic path between $p$ and $q$
         restricted to lie in $Q_1$ is not the same as $\pi(p,q)$, the
          geodesic path between $p$ and $q$ restricted to lie in $P$.}
      \label{fig:funnel}
    \end{center}
\end{figure}

We denote the funnel bounded by $gs_2'$ and $gs_3'$ by $F_1$. We
denote the set of the sites of $S$ lying on the part of the boundary of $\ch$
from $s_2$ to $s_3$ in clockwise order by $S_1$.  Similarly, we define
$F_2, F_3$ and $S_2, S_3$. Note that $F_i$'s are pairwise interior
disjoint.

We can compute $s_1',s_2'$ and $s_3'$ in $O(\log n)$
time~\cite{chazelle1994ray}.  Therefore, we can obtain $F_i$ and $S_i$
for $i=1,2,3$ in $O(n+m)$ time in total.  Now we consider a few
properties of the funnels.  By definition, the $S$-farthest neighbors of $g$ are
$s_1, s_2$ and $s_3$. Thus, $g$ lies on the common boundary of
$\mathsf{Cell}(S,s_i)$ for $i=1,2,3$. By
Corollary~\ref{cor:ray-in-non-refined-cell}, we have the following lemma.

\begin{lemma}\label{lem:wall}
  The line segment $gs_i'$ for $i=1,2,3$ is contained in
  $\mathsf{Cell}(S,s_i)$.
\end{lemma}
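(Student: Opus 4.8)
The plan is to show that the relative interior of $gs_i'$ lies in $\mathsf{Cell}(S,s_i)$, while $g$ itself lies on its boundary; this is the sense in which $gs_i'\subseteq \mathsf{Cell}(S,s_i)$. Fix $i$, and let $v_i$ be the neighbor of $g$ along $\pi(g,s_i)$, so that $v_ig$ is the last edge of $\pi(g,s_i)$ and, by construction, $s_i'$ is the point where the ray from $v_i$ through $g$ first leaves $P$. Thus the whole segment $gs_i'$ is collinear with $v_ig$, with $v_i$ lying on the opposite side of $g$. First I would record the key identity: for every point $x\in gs_i'$ one has $\pi(s_i,x)=\pi(s_i,g)\cdot gx$ and hence $d(s_i,x)=d(s_i,g)+\|g-x\|$. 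Indeed, $v_ix=v_ig\cup gx$ is a straight segment contained in $P$, so appending $gx$ to $\pi(s_i,g)$ creates no turn at $g$; the resulting path is locally shortest in $P$, hence equals $\pi(s_i,x)$ by uniqueness of geodesics in a simple polygon. In particular the neighbor of $x$ along $\pi(s_i,x)$ is $v_i$, and the ray from $v_i$ through $x$ leaves $P$ at $s_i'$.

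Next, for $x$ in the relative interior of $gs_i'$ and any site $s\in S\setminus\{s_i\}$, I would compare distances:
\[
d(s,x)\ \le\ d(s,g)+\|g-x\|\ \le\ d(s_i,g)+\|g-x\|\ =\ d(s_i,x),
\]
where the first inequality is the triangle inequality for the geodesic metric and the second uses that $s_i$ is an $S$-farthest neighbor of $g$. Equality throughout would force (i) $d(s,g)=d(s_i,g)$, i.e.\ $s$ is also equidistant from $g$, and (ii) $g\in\pi(s,x)$, i.e.\ $x$ lies on the ray that continues $\pi(s,g)$ straight through $g$; since $x$ already lies on the ray that continues $\pi(s_i,g)$ straight through $g$, condition (ii) says these two rays coincide, i.e.\ $\pi(g,s)$ and $\pi(g,s_i)$ leave $g$ in the same direction. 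By the general position assumption $g$ has at most three equidistant farthest sites, and — together with the characterization of the geodesic center from Ahn et al.~\cite{1-center} (see also~\cite{aronov1993furthest}) — distinct farthest sites of $g$ leave $g$ along pairwise distinct directions. Hence (i) and (ii) cannot hold simultaneously, the inequality is strict, and $x\in\mathsf{Cell}(S,s_i)$. Finally $g\in\bd\,\mathsf{Cell}(S,s_i)$ since $d(g,s_i)=d(g,s_j)$ for the other site(s) $s_j$ equidistant from $g$. (If one prefers to route this through Corollary~\ref{cor:ray-in-non-refined-cell}: take $x_0$ in the relative interior of $gs_i'$ close to $g$; the neighbor of $x_0$ along $\pi(s_i,x_0)$ is $v_i$ and the corresponding ray hits $\bd P$ at $s_i'$, so the corollary yields $x_0s_i'\subseteq\mathsf{Cell}(S,s_i)$, and letting $x_0\to g$ covers $gs_i'\setminus\{g\}$.)

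The step I expect to be the main obstacle is the non-degeneracy statement invoked above, namely that two distinct farthest sites of the center cannot have geodesic paths to $g$ sharing the last edge (equivalently, the unit ``gradient directions'' $\vec{u}_1,\vec{u}_2,\vec{u}_3$ from $g$ towards $v_1,v_2,v_3$ are pairwise distinct, and $g$ is not a reflex vertex of $P$). If this failed for $s_i$ and some equidistant $s_j$, then the key identity would give $d(s_i,x)=d(s_i,g)+\|g-x\|=d(s_j,g)+\|g-x\|=d(s_j,x)$ for every $x\in gs_i'$, so no interior point of $gs_i'$ would lie in $\mathsf{Cell}(S,s_i)$ and the lemma would be false. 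I would therefore isolate this as a short auxiliary lemma about the geodesic center, deriving it from the optimality condition for $g$ (the zero vector lies in the convex hull of $\vec{u}_1,\vec{u}_2,\vec{u}_3$) combined with general position, rather than reproving it inside this proof.
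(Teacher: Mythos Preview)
Your proof is correct and is essentially a careful expansion of the paper's argument, which is the single sentence preceding the lemma: ``By Corollary~\ref{cor:ray-in-non-refined-cell}, we have the following lemma.'' Since $g$ lies on the common boundary of the cells $\mathsf{Cell}(S,s_i)$ rather than in their interior, Corollary~\ref{cor:ray-in-non-refined-cell} does not literally apply with $x=g$; your direct triangle-inequality computation (or equivalently your limiting argument via an interior point $x_0$) fills exactly this gap, and the non-degeneracy condition you isolate is indeed implicit in the paper's standing general-position assumption together with the standard optimality characterization of the geodesic center.
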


Due to the following property, we can obtain $\fvd[P,S]$ by simply
merging $\fvd[P,S_i]$ restricted to $F_i$ for $i=1,2,3$.  We show how
to compute $\fvd[P,S_i]$ restricted to $F_i$ in the following subsection.

\begin{lemma}
  For each $i=1,2,3$, every point in $F_i$ has its $S$-farthest
  neighbor in $S_i$.
\end{lemma}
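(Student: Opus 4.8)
The plan is to show that the geodesic center $g$ acts as a natural "watershed" so that every point of a funnel $F_i$ is closest (in the farthest sense) to a site on the corresponding boundary arc. The key topological fact is Lemma~\ref{lem:wall}: each segment $gs_i'$ lies entirely inside $\mathsf{Cell}(S,s_i)$, so the two segments bounding $F_1$, namely $gs_2'$ and $gs_3'$, lie in $\mathsf{Cell}(S,s_2)$ and $\mathsf{Cell}(S,s_3)$ respectively. Intuitively this "pins" the Voronoi tree: the part of the boundary tree of $\fvd[P,S]$ separating cells of sites in $S_1$ from cells of sites outside $S_1$ cannot cross the walls $gs_2'$ and $gs_3'$, and therefore the entire interior of $F_1$ sees only farthest neighbors drawn from $S_1$.

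\begin{proof}
We prove the statement for $i=1$; the other cases are symmetric. Let $x$ be any point of $F_1$ and let $s=\ff{x}$ be its $S$-farthest neighbor. Assume for contradiction that $s\notin S_1$, i.e.\ $s$ lies on the part of $\bd\ch$ strictly outside the closed arc from $s_2$ to $s_3$ in clockwise order; in particular $s\neq s_2$ and $s\neq s_3$. Consider the geodesic path $\pi(x,s)$. Since $x\in F_1$ and $s\notin F_1$, the path $\pi(x,s)$ must cross $\bd F_1$. The boundary of $F_1$ consists of a portion of $\bd P$ together with the two walls $gs_2'$ and $gs_3'$. A geodesic path cannot cross $\bd P$, so $\pi(x,s)$ crosses one of the two walls, say $gs_2'$, at some point $y\in gs_2'$.

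By Lemma~\ref{lem:wall}, $y\in\mathsf{Cell}(S,s_2)$, hence $\ff{y}=s_2$, and in particular $d(y,s_2)\ge d(y,s)$ with equality only if $s=s_2$, which we excluded; so $d(y,s_2)>d(y,s)$. Now we compare farthest distances at $x$ using $y$ as an intermediate point. Because $y$ lies on $\pi(x,s)$, we have $d(x,s)=d(x,y)+d(y,s)$. On the other hand, by the triangle inequality for the geodesic metric, $d(x,s_2)\le d(x,y)+d(y,s_2)$, but this is the wrong direction; instead we argue at $y$. We know $d(y,s)<d(y,s_2)$, so $d(x,s)=d(x,y)+d(y,s)<d(x,y)+d(y,s_2)$. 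It remains to see that $d(x,y)+d(y,s_2)=d(x,s_2)$, i.e.\ that $y$ lies on $\pi(x,s_2)$ as well. This follows from Corollary~\ref{cor:ray-in-non-refined-cell} (or directly from the fact, used in Lemma~\ref{lem:wall}, that the segment $gs_2'$ lies in $\mathsf{Cell}(S,s_2)$ and every point of that segment reaches $s_2$ along the same combinatorial path): the walls are chosen as extensions of edges of the geodesic paths from $g$, so $\pi(x,s_2)$ passes through $y$ when $x$ and $y$ lie on the appropriate side. More carefully, $x\in F_1$ lies on the $F_1$-side of the wall $gs_2'$, and $s_2$ lies on the other side, so $\pi(x,s_2)$ crosses $gs_2'$; since the first point of $\pi(x,s)$ on the wall is $y$ and both $s$ and $s_2$ lie strictly beyond the wall, the convexity/straightness of geodesics near the wall forces $\pi(x,s_2)$ to pass through $y$ too. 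Hence $d(x,s_2)=d(x,y)+d(y,s_2)>d(x,s)$, contradicting that $s$ is an $S$-farthest neighbor of $x$. Therefore $s\in S_1$, as claimed.
\end{proof}

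\textbf{Main obstacle.} The delicate point is the last step: arguing that the same wall-crossing point $y$ lies on \emph{both} $\pi(x,s)$ and $\pi(x,s_2)$, so that I can chain the distance equalities cleanly. This is exactly where the choice of the walls as extensions of the geodesic edges from $g$ (together with Lemma~\ref{lem:wall} and Corollary~\ref{cor:ray-in-non-refined-cell}) must be invoked carefully; a sloppier argument that only uses the triangle inequality in the wrong direction will not close. An alternative, possibly cleaner, route is purely combinatorial: use Lemma~\ref{lem:ordering} (the Ordering Lemma) to observe that the clockwise order of the nonempty Voronoi cells along $\bd P$ matches the order of sites along $\bd\ch$, and combine it with Lemma~\ref{lem:wall} to conclude that $\mathsf{Cell}(S,s)\cap F_1=\emptyset$ whenever $s\notin S_1$, since the walls $gs_2',gs_3'$ belong to the cells of the two extreme sites $s_2,s_3$ of $S_1$ and thus confine all other cells of $S_1$ to $F_1$ and all cells outside $S_1$ to $P\setminus F_1$. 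I would likely present the combinatorial version as the primary proof and keep the metric argument as intuition.
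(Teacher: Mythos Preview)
Your primary metric argument has a genuine gap at exactly the point you flag as ``delicate.'' The assertion that the crossing point $y$ of $\pi(x,s)$ with the wall $gs_2'$ also lies on $\pi(x,s_2)$ is not justified, and in general it is false: two geodesic paths from the same point $x$ to two different targets beyond a chord of $P$ need not meet the chord at the same point. The hand-wave about ``convexity/straightness of geodesics near the wall'' does not establish this. Without $y\in\pi(x,s_2)$ you cannot convert $d(y,s_2)>d(y,s)$ into $d(x,s_2)>d(x,s)$; the triangle inequality only gives the wrong direction, as you yourself note. A variant using the crossing point of $\pi(x,s_2)$ with a wall runs into a different problem: you do not know \emph{which} wall that path crosses, so you cannot guarantee the crossing point lies in $\mathsf{Cell}(S,s_2)$ rather than $\mathsf{Cell}(S,s_3)$.

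The combinatorial alternative you sketch at the end is the correct route and is essentially what the paper does, but it needs one ingredient you omit: the result of Aronov et~al.\ that every nonempty Voronoi cell of $\fvd[P,S]$ is connected and incident to $\bd P$. With this in hand the argument is clean: if $s\neq s_2,s_3$ and $\mathsf{Cell}(S,s)$ meets $F_1$, then by connectivity and incidence to $\bd P$ the cell must reach $\bd P$; it cannot exit $F_1$ through $gs_2'\subset\mathsf{Cell}(S,s_2)$ or $gs_3'\subset\mathsf{Cell}(S,s_3)$ (Lemma~\ref{lem:wall}), so it meets the bottom side of $F_1$. Then the Ordering Lemma forces $s\in S_1$. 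Make this the primary proof and drop the metric argument.
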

\begin{proof}
  Aronov et al.~\cite{aronov1993furthest} showed that $\fvd[P,S]$
  forms a tree, that is, every nonempty Voronoi cell is incident to
  the boundary of $P$. Moreover, they showed that every Voronoi cell
  of $\fvd[P,S]$ is connected. By Lemma~\ref{lem:wall}, $gs_2'$ is
  contained in $\mathsf{Cell}(S,s_2)$ and $gs_3'$ is contained in
  $\mathsf{Cell}(S,s_3)$.  Therefore, for any site $s\in S$ with
  $\mathsf{Cell}(S,s)\cap F_i\neq\emptyset$, its Voronoi cell
  $\mathsf{Cell}(S,s)$ intersects the bottom side of $F_i$.
	
  The ordering lemma states that
  the order of sites along the boundary of $\ch$ is the same as the
  order of Voronoi cells along $\bd P$.  Therefore, a site $s$ whose
  Voronoi cell intersects the bottom side of $F_i$ is in $S_i$. Thus,
  the lemma holds.
\end{proof}

The following property is used to compute $\fvd[P,S_i]$ restricted to
$F_i$ for $i=1,2,3$ in the following section.
\begin{lemma}\label{lem:separate}
  For each funnel $F_i$, there are two points $p_1, p_2\in\bd P$ such
  that $\pi(p_1,p_2)$ separates $F_i$ and $S_i$.
\end{lemma}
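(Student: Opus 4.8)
Since the three funnels are symmetric, I would treat $F_1$, which is bounded by the spokes $gs_2'$ and $gs_3'$ and whose site set $S_1$ consists of the sites of $S$ on the clockwise arc $A_1$ of $\bd\ch$ from $s_2$ to $s_3$. The plan is to take $p_1,p_2$ to be the two points of $\bd P$ reached by the maximal prolongation of the geodesic $\pi(s_2,s_3)$. In detail: since $\ch$ is geodesically convex, $\pi(s_2,s_3)\subseteq\ch$ and this path is a chord splitting $\ch$ into two geodesically convex pieces; let $K$ be the piece bounded by $\pi(s_2,s_3)$ and $A_1$, so that $S_1\subseteq A_1\subseteq K$. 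Prolong $\pi(s_2,s_3)$ beyond $s_2$ along its edge at $s_2$ until it first meets $\bd P$ at a point $p_1$, and symmetrically beyond $s_3$ to a point $p_2\in\bd P$.

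First I would verify that $\sigma:=p_1s_2\cup\pi(s_2,s_3)\cup s_3p_2$ is exactly $\pi(p_1,p_2)$: the path $\sigma$ is straight at $s_2$ and at $s_3$ by construction, it turns only at the reflex vertices of $P$ at which $\pi(s_2,s_3)$ turns, and its two prolongations are chords of $P$; hence $\sigma$ is locally shortest, and in a simple polygon a locally shortest path between two boundary points is the (unique) geodesic. Next, because $s_2,s_3\in\bd\ch$ and $\pi(s_2,s_3)$ enters $\mathrm{int}(\ch)$ at both endpoints, geodesic convexity of $\ch$ forces each prolongation to leave $\ch$ immediately and never return; hence $\pi(p_1,p_2)\cap\ch=\pi(s_2,s_3)$. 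Therefore $\pi(p_1,p_2)$ cuts $P$ into two subpolygons, and the whole cap $K$ -- in particular $S_1$ -- lies in one of them.

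It remains to place $F_1$ on the other side, for which I would combine three facts. (i) The geodesic center $g$ lies in the geodesic convex hull $\triangle$ of its at most three farthest sites $s_1,s_2,s_3$ (the standard ``the center lies in the hull of the farthest points'' property); since $\pi(s_2,s_3)$ is a side of $\triangle$, the apex $g$ of $F_1$ lies on the $s_1$-side of $\pi(s_2,s_3)$, hence on the side of $\pi(p_1,p_2)$ not containing $K$. (ii) $F_1$ is contained in the closed angular sector at $g$ delimited by its two spokes, because a straight ray out of $g$ meets a spoke only at $g$, so a point of $P$ whose direction from $g$ lies in the sector of another funnel cannot lie in $F_1$. (iii) By the preceding lemma every point of $F_1$ has its $S$-farthest neighbor in $S_1$, so $F_1$ is covered by the Voronoi cells of the sites of $S_1$; by the Ordering Lemma (Lemma~\ref{lem:ordering}) these cells occur consecutively along $\bd P$, and by Lemma~\ref{lem:wall} the two spokes of $F_1$ lie in $\cell{s_2}$ and $\cell{s_3}$. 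Using (i)--(iii) one argues that neither the chord $\pi(s_2,s_3)\subseteq K$ nor the two prolongations enter the interior of the sector of $F_1$, i.e.\ $\pi(p_1,p_2)\cap\mathrm{int}(F_1)=\emptyset$; since $F_1$ is connected and meets $\pi(p_1,p_2)$ only at $g\in\bd F_1$, it lies wholly on the side of $\pi(p_1,p_2)$ containing $g$. Thus $\pi(p_1,p_2)$ separates $F_1$ and $S_1$.

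The delicate point -- the main obstacle -- is the last step: showing the extended geodesic stays out of the angular sector of $F_1$. Intuitively this is clear because the cap $K$ and the sector of $F_1$ sit on opposite sides of $g$, but making it precise requires careful bookkeeping of the cyclic order of $s_2,s_3,s_2',s_3'$ and of the corresponding Voronoi cells along $\bd P$, drawing on the Ordering Lemma, Lemma~\ref{lem:wall}, and the fact that the farthest neighbor of every point of $F_1$ lies in $S_1$. A variant that sidesteps the prolongations is to let $\pi(p_1,p_2)$ be the maximal straight chord of $P$ through $g$ whose direction is chosen strictly between the sector of $F_1$ and the cone subtended at $g$ by $S_1$; this chord is automatically a geodesic, but then the crux becomes proving that such a separating direction exists.
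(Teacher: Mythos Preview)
Your approach differs from the paper's in a crucial way, and the difference is exactly where your argument breaks down. The paper does \emph{not} extend $\pi(s_2,s_3)$; instead it uses the optimality of $g$ as the geodesic center of $\ch$ to find two points $q_1,q_2\in\bd\ch$ such that $\pi(q_1,q_2)$ \emph{passes through $g$} and separates $\{s_1\}$ from $\{s_2,s_3\}$ (otherwise one could move $g$ and shorten all three distances). Because this chord goes through $g$, the two spokes $gs_2',gs_3'$—which are the straight extensions of $\pi(g,s_2),\pi(g,s_3)$ past $g$—lie locally on the opposite side of the chord from $\pi(g,s_2),\pi(g,s_3)$; extending the chord to $\bd P$ then separates $F_1$ from $S_1$ with no further work.

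Your candidate $\pi(p_1,p_2)\supset\pi(s_2,s_3)$ does not pass through $g$, so the step you flag as ``delicate'' is a real gap, not just bookkeeping. Your item (i) puts $g$ on the $s_1$-side of $\pi(s_2,s_3)$, but that says nothing about whether the straight prolongations $p_1s_2$ and $s_3p_2$ avoid the interior of $F_1$; items (ii) and (iii) are statements about local sectors and Voronoi cells and do not control where those two segments land in $P$. In fact your sentence ``$F_1$ \ldots\ meets $\pi(p_1,p_2)$ only at $g\in\bd F_1$'' cannot be right as written, since your path does not contain $g$ at all. There is also a smaller slip: ``geodesic convexity of $\ch$ forces each prolongation to leave $\ch$ immediately and never return'' appeals to the wrong property—geodesic convexity constrains geodesics, not Euclidean segments, and $\ch$ can be Euclidean non-convex, so a straight prolongation may well re-enter it.

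The fix is to choose a separating geodesic through $g$, as the paper does; then the placement of $F_1$ follows from the definition of the spokes rather than from an ad hoc argument about prolongations.
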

\begin{proof}
  Since $g$ is the geodesic center of $\ch$, there are two points,
  $q_1$ and $q_2$, on the boundary of $\ch$ such that $\pi(q_1,q_2)$ contains $g$
  and $\pi(q_1,q_2)$ separates
  $\{s_2,s_3\}$ and $\{s_1\}$. Otherwise, we can move the
  position of $g$ slightly to reduce $d(g,s_i)$ for all $i=1,2,3$, which
  contradicts that $g$ is the geodesic center of $\ch$ and $s_i$'s are
  the $S$-farthest neighbors of $g$.  Note that one part of $\ch$
  bounded by $\pi(q_1,q_2)$ contains all of $\pi(g,s_2), \pi(g,s_3)$ and
  $S_1$.
	
  We extend the edge of $\pi(q_1,q_2)$ incident to $q_j$ towards $q_j$ until
  it escapes from $P$, and let $p_j$ be the point on $\bd P$ hit by the
  extension  for each $j=1,2$.
  Then $\pi(p_1,p_2)$ contains $\pi(q_1,q_2)$, and therefore
  a part of $P$ bounded by $\pi(p_1,p_2)$ contains $\pi(g,s_2)$,
  $\pi(g,s_3)$ and $S_1$.  Thus, $gs_3'$ and
  $gs_2'$ are contained in the other part of $P$ bounded by
  $\pi(p_1,p_2)$, and so does $F_1$.  This means that $\pi(p_1,p_2)$
  separates $F_1$ and $S_1$. The argument also works
  for the other pairs of $F_i$ and $S_i$ for $i=2,3$ analogously.
\end{proof}

\subsection{Computing \texorpdfstring{$\fvd$}{FVD} Restricted to Each
  Funnel}\label{sec:FVDfunnel}
Consider $F_1$ and $S_1$. We can handle $F_i$ and $S_i$ for $i=2,3$
analogously.  We want to compute $\fvd[P,S_1]$ restricted to $F_1$.
The algorithm in Theorem~\ref{thm:boundary} requires all sites to lie
on the boundary of a simple polygon. However, in our case, some sites of $S_1$
may not lie on the boundary of $P$.
 Our general strategy is to construct three pairs
$(P_j,S_j)$ of subpolygons $P_j$ of $P$ and subsets $S_j$ of $S$ with
$j=a,b,c$ such that the sites of $S_j$ lie on the boundary of
$P_j$. Then we apply the algorithm in Theorem~\ref{thm:boundary} to
each pair $(P_j,S_j)$ and compute the geodesic farthest-point Voronoi diagram
restricted to $P_j$.
We show that we can obtain $\fvd[P,S_1]$ restricted to
$F_1$ by combining the diagrams.

Consider the subpolygon $Q_1$ of $P$ whose boundary consists of the
bottom side of $F_1$, $\pi(s_2',s_3)$, $\pi(s_3',s_2)$ and the part of
$\ch$ from $s_2$ to $s_3$ in clockwise order.  See
Figure~\ref{fig:funnel}(b).  Note that all sites of $S_1$ are on the
boundary of $Q_1$. However, applying the algorithm in
Theorem~\ref{thm:boundary} with input polygon $Q_1$ may not
give a correct diagram in this case.  This is because there are
points $p, q\in Q_1$ such that the geodesic path between $p$ and $q$
is not contained in $Q_1$.

To avoid this, we consider the geodesic convex hull $H_1$ of $Q_1$
instead.  See Figure~\ref{fig:three-pairs}(a). Since $Q_1$ is a simple
polygon contained in $P$, we can compute $H_1$ in $O(n+m)$
time~\cite{toussaint}.  Let $t_2$ and $t_3$ be the sites of $S$ such
that $\pi(s_3',t_2)$ and $\pi(s_2',t_3)$ lie on the boundary of $H_1$
and intersect $\ch$ only at $t_2$ and $t_3$, respectively.  There
exist such two points by Lemma~\ref{lem:separate}.  Note that $H_1$
contains the geodesic path of any two points lying in $H_1$.

Recall that our goal is to compute $\fvd[P,S_1]$ restricted to $F_1$.
It coincides with $\fvd[H_1,S_1]$ restricted to $F_1$ since $H_1$
contains the geodesic path of any two points in $H_1$ and contains both
$S_1$ and $F_1$.  However, there might be some sites of
$S_1$ in the interior of $H_1$.

\begin{observation}\label{obs:hourglass}
  $\fvd[P,S_1]$ restricted to $F_1$ coincides with $\fvd[H_1,S_1]$
  restricted to $F_1$.
\end{observation}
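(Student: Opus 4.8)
The plan is to reduce the statement to the fact that the subpolygon $H_1$ preserves geodesic distances. I would first record the two containments $F_1\subseteq Q_1\subseteq H_1$ and $S_1\subseteq \bd Q_1\subseteq H_1$, so that every point of $F_1$ and every site of $S_1$ lies in $H_1$; the first follows because $F_1$ is bounded by the bottom side of the funnel together with $gs_2'$ and $gs_3'$, all of which lie in $Q_1$, and the second holds by the very definition of $Q_1$. The core claim is then that for any two points $p,q\in H_1$ we have $d_{H_1}(p,q)=d(p,q)$, where $d_{H_1}$ denotes the geodesic distance inside $H_1$. One inequality, $d_{H_1}(p,q)\ge d(p,q)$, is immediate since $H_1\subseteq P$ and hence every path inside $H_1$ is a path inside $P$. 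For the reverse inequality, recall that $H_1$ is the geodesic convex hull of $Q_1$ taken inside $P$, so it is geodesically convex: $\pi(p,q)\subseteq H_1$ for all $p,q\in H_1$. Thus $\pi(p,q)$ is itself a path lying in $H_1$, giving $d_{H_1}(p,q)\le d(p,q)$.

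Given this isometry, for every $x\in F_1$ and every $s\in S_1$ we have $d(x,s)=d_{H_1}(x,s)$, because both $x$ and $s$ belong to $H_1$. Consequently, for any two sites $s,s'\in S_1$ the relation $d(x,s)=d(x,s')$ restricted to $x\in F_1$ is identical to $d_{H_1}(x,s)=d_{H_1}(x,s')$, and likewise for the strict inequalities that define the Voronoi cells. Hence, for each $s\in S_1$, the Voronoi cell of $s$ in $\fvd[P,S_1]$ intersected with $F_1$ equals the Voronoi cell of $s$ in $\fvd[H_1,S_1]$ intersected with $F_1$; since the Voronoi tree of either diagram consists of arcs that are pieces of such bisectors (or sides of apexed triangles, which are determined by the same distance data), the trees restricted to $F_1$ also coincide. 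This is exactly the asserted equality $\fvd[P,S_1]\cap F_1=\fvd[H_1,S_1]\cap F_1$.

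I expect the convexity-implies-isometry step to be the substantive point, although it is short; the remaining work is the bookkeeping of the two containments, which are direct consequences of how $Q_1$ and $H_1$ were constructed. Note that, for this observation, it is irrelevant whether some site of $S_1$ lies in the interior of $H_1$ rather than on $\bd H_1$; that subtlety only matters when we subsequently invoke Theorem~\ref{thm:boundary}, which requires the sites on the boundary.
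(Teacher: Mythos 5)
Your overall route is the same as the paper's: the paper justifies this observation in a single sentence by noting that $H_1$ is geodesically convex and contains both $S_1$ and $F_1$, so that geodesic distances between points of $F_1$ and sites of $S_1$ are unchanged when the ambient polygon is replaced by $H_1$. Your convexity-implies-isometry step and the passage from equal distance functions to equal cells (and hence equal trees) on $F_1$ is exactly this argument, spelled out correctly, and your closing remark that sites of $S_1$ interior to $H_1$ are harmless for this particular observation is also right.

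The one step I would not accept as written is the containment $F_1\subseteq Q_1$, which you justify by asserting that $gs_2'$ and $gs_3'$ lie in $Q_1$. That is neither proved nor obvious: $Q_1$ is precisely the region the paper warns is \emph{not} geodesically convex, and $g$ is the geodesic center of $\ch$, which may lie strictly inside $\ch$, whereas $Q_1$ is bounded by the chain of $\bd \ch$ from $s_2$ to $s_3$; so one cannot simply read off from the construction that $g$, let alone the whole segments $gs_i'$, stays inside $Q_1$. The containment you actually need is $F_1\subseteq H_1$, and it has a clean proof that bypasses $Q_1$ entirely: the points $s_i$ and $s_i'$ lie on $\bd Q_1\subseteq H_1$, so by geodesic convexity $\pi(s_i',s_i)\subseteq H_1$; moreover, since $s_i'$ is by construction the point where the extension beyond $g$ of the first edge of $\pi(g,s_i)$ meets $\bd P$, the concatenation of $s_i'g$ with $\pi(g,s_i)$ is locally geodesic and hence equals $\pi(s_i',s_i)$, so $gs_i'\subseteq \pi(s_i',s_i)\subseteq H_1$ for $i=2,3$. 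Together with the bottom side of $F_1$, which lies on $\bd Q_1$, this places $\bd F_1$ inside the simply connected region $H_1$, whence $F_1\subseteq H_1$. With that replacement your proof is complete and matches the paper's intent.
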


\begin{figure}
  \begin{center}
    \includegraphics[width=0.8\textwidth]{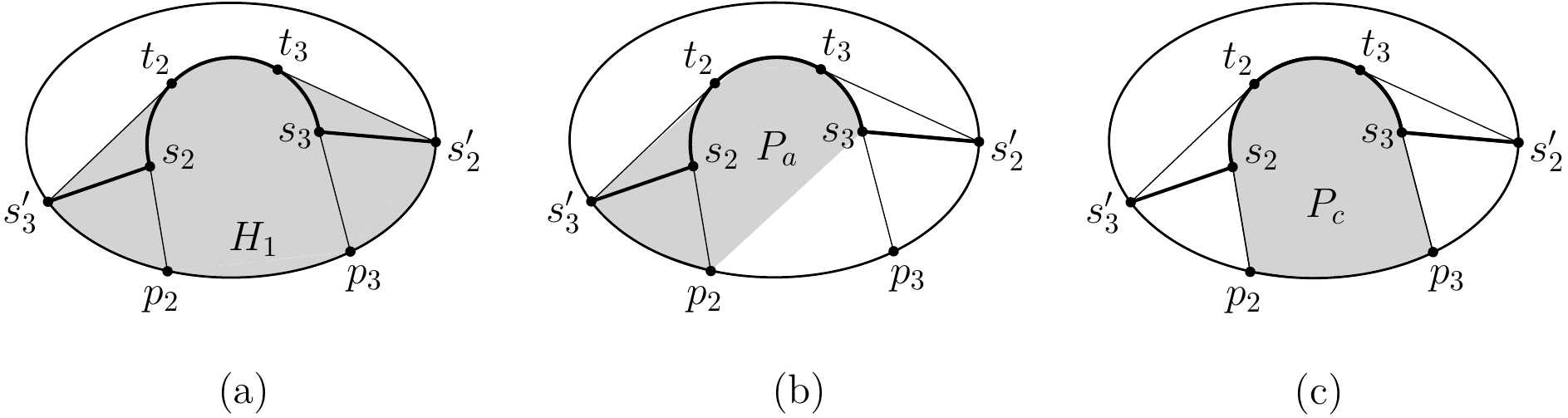}
    \caption {\small (a) The gray region $H_1$ contains the geodesic paths of any two points in $H_1$
      and contains both $S_1$ and $F_1$.  (b)	Every point in $P_a\setminus P_c$ has its $S_1$-farthest neighbor on
      $\ch[t_2,s_3]\cup\{s_2\}$. 
      (c) The boundary of $P_c$ contains all sites of $S_1$.}
    \label{fig:three-pairs}
  \end{center}
\end{figure}

We consider three subpolygons $P_j$ of $H_1$ associated with site sets $S_j$ with $j=a,b,c$ whose union is $H_1$.
We will see that every site in $S_j$ lies on the boundary of $P_j$.
For any two sites $s$ and $s'$ in $S_1$, we use $\ch[s,s']$ to denote 
the part of the boundary of $\ch$ lying from $s$ to $s'$ in clockwise order.
Similarly, for any two points $p$ and $p'$ on the boundary of $P$, we use $P[p,p']$ to denote the 
part of $\bd P$ lying from $p$ to $p'$ in clockwise order.

Consider $\ch[s_2,s_3]$.
We extend its two edges of adjacent to $s_2$ and to $s_3$ towards $s_2$ and $s_3$
until they escape from $P$ at points $p_2$ and $p_3$ of $\bd P$, respectively.
If $p_2$ or $p_3$ does not lie on the bottom side of $F_1$, we simply set
$p_2=s_3'$ or $p_3=s_2'$, respectively. 
See Figure~\ref{fig:three-pairs}(a).

\begin{lemma}\label{lem:order-footpoint}
  The four points $s_2', p_3, p_2$ and $s_3'$ lie on the boundary of
  $\bd P$ in clockwise order.
\end{lemma}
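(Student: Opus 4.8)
The statement has two parts: that $p_3$ and $p_2$ both lie on the bottom side of $F_1$, and that along that side they appear in the claimed order. The first part is essentially built into the construction, so the real work is the ordering. I would first pin down the orientation of the bottom side. By the construction of the funnels, the three chords $gs_1',gs_2',gs_3'$ leave $g$ in the same cyclic order as $s_1,s_2,s_3$ occur along $\bd\ch$ (geodesic paths from $g$ to the sites are pairwise non‑crossing, so the first edges of $\pi(g,s_i)$, hence their backward extensions to the $s_i'$, inherit the cyclic order of the sites). Hence the bottom side of $F_1$ is exactly the arc $\subchain{s_2'}{s_3'}$; moreover, since $gs_2'\subseteq\cell{s_2}$ and $gs_3'\subseteq\cell{s_3}$ by Lemma~\ref{lem:wall}, the Ordering Lemma~\ref{lem:ordering} tells us that the Voronoi cells met while traversing $\subchain{s_2'}{s_3'}$ clockwise are exactly those of $S_1$, starting with $\cell{s_2}$ and ending with $\cell{s_3}$. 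By construction each of $p_3,p_2$ lies on $\subchain{s_2'}{s_3'}$: if the redefinition was triggered it equals $s_2'$ resp.\ $s_3'$, and otherwise it lies on the bottom side by the condition that was checked. So the whole lemma reduces to: reading $\subchain{s_2'}{s_3'}$ clockwise, $p_3$ precedes $p_2$. If either point was redefined this is immediate ($p_3=s_2'$ is the first point of the arc, $p_2=s_3'$ the last), so the only real case is when neither was redefined.

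For that case, let $R_2$ be the ray that extends the edge of $\ch[s_2,s_3]$ incident to $s_2$ past $s_2$ until it meets $\bd P$ at $p_2$, and let $R_3$ be the analogous ray through $s_3$ ending at $p_3$. The edge of $\ch[s_2,s_3]$ at $s_2$ is a bridge of the geodesically convex hull $\ch$, so its supporting line $L_2$ has all of $\ch$ — in particular $s_3$ and the whole chain $\ch[s_2,s_3]$ — on one closed side; symmetrically for the supporting line $L_3$ of the edge at $s_3$. Consequently the portions of $R_2$ and $R_3$ lying inside $\ch$ do not cross. I would then invoke Lemma~\ref{lem:separate}: a geodesic path $\pi(q_1,q_2)$ separates $F_1$ from $S_1$, hence from $s_2,s_3$ and the hull edges at those vertices. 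Since $p_2,p_3$ lie on the bottom side of $F_1$, each of $R_2,R_3$ must cross $\pi(q_1,q_2)$, and I would argue it does so exactly once, so the part of $R_i$ inside the $F_1$‑side region $P'$ is a chord of $P'$ with one endpoint, $a_i$, on $\pi(q_1,q_2)$ and the other endpoint $p_i$ on $\subchain{s_2'}{s_3'}$. The key remaining geometric claim is that these two chords are disjoint inside $P'$; granting it, the order of $a_3,a_2$ along $\pi(q_1,q_2)$ matches the order of $p_3,p_2$ along $\subchain{s_2'}{s_3'}$, and since $s_3$ is strictly farther along the convex chain $\ch[s_2,s_3]$ than $s_2$ (and the chain stays on the far side of both $L_2$ and $L_3$), the crossing point of $R_3$ lies on the $s_2'$‑side of the crossing point of $R_2$. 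Walking $\bd F_1$ clockwise then forces the clockwise order $s_2',p_3,p_2,s_3'$ on $\bd P$.

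\textbf{Main obstacle.} The delicate step is controlling the extension rays \emph{after they leave} $\ch$: they are straight rays, so the full lines $L_2,L_3$ generally do meet at a point exterior to $\ch$, and one has to rule out that $R_2$ and $R_3$ cross inside the funnel $F_1$ (equivalently inside $P'$). I expect this to be handled by combining the convexity of $\ch$ (which controls the situation inside $\ch$ and the directions of the two outward rays) with the separation property of Lemma~\ref{lem:separate} (which confines each ray to pass once from the $S_1$‑side into $F_1$ and terminate on $\subchain{s_2'}{s_3'}$, thereby fixing the cyclic position of $p_2,p_3$ relative to $s_2'$ and $s_3'$). Everything else — the orientation of the bottom side via the funnel construction and the Ordering Lemma, and the trivial redefined cases — is routine once this non‑crossing fact is established.
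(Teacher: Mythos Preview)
You have correctly located the crux of the argument --- ruling out that the two extension rays $R_2=s_2p_2$ and $R_3=s_3p_3$ cross inside $F_1$ --- but you do not actually resolve it. Convexity of $\ch$ only tells you the rays do not cross while they are still inside $\ch$, and Lemma~\ref{lem:separate} only guarantees that each ray crosses the separating path once; neither fact, nor their combination, prevents the two straight rays from intersecting on the $F_1$-side of the separator. Your sentence ``I expect this to be handled by combining the convexity of $\ch$ \ldots\ with the separation property'' is exactly the missing step, not a proof of it. The subsequent claim that ``the crossing point of $R_3$ lies on the $s_2'$-side of the crossing point of $R_2$'' is likewise asserted without justification.

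The paper closes this gap with an angle argument you are missing. Let $p_2'$ be a point where $s_2p_2$ meets $\pi(s_3',s_2')$. The paper shows, via a lemma of Pollack--Sharir--Rote, that the clockwise angle at $p_2'$ from $p_2's_2$ to the edge of $\pi(p_2',s_2')$ is strictly less than $\pi/2$: if it were at least $\pi/2$, then $d(s_2',s')>d(s_2',s_2)$ for the clockwise neighbour $s'$ of $s_2$ on $\bd\ch$, contradicting that $s_2$ is the $S$-farthest neighbour of $s_2'$. The symmetric bound holds at $p_3'$. Now if $s_2p_2$ and $s_3p_3$ crossed at a point $x$, the three points $x,p_2',p_3'$ would bound a pseudo-triangle in which $\pi(p_3',p_2')$ is the concave side; concavity forces both interior angles at $p_2'$ and $p_3'$ to be at least $\pi/2$, contradicting the bound just obtained. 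This is the missing ingredient your plan needs.
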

\begin{proof}
  Lemma~\ref{lem:separate} implies that $\pi(s_3',s_2')$ intersects
  $s_2p_2$. Let $p_2'$ be an intersection point.  Consider the clockwise
  angle from the line segment $p_2's_2$ to the edge of $\pi(p_2',s_2')$
  incident to $p_2'$.  This angle must be less than $\pi/2$.
  Otherwise, we have 
  $d(s_2',s')>d(s_2',s_2)$ by~\cite[Corollary~2]{pollackComputingCenter},
  where $s'$ is the clockwise neighbor of $s_2$ along the boundary of $\ch$.
  This contradicts to the definition of $s_2'$.
  The same holds for $s_3'$ with respect to the intersection point $p_3'$ of 
  $\pi(s_2',s_3')$ with $s_3p_3$ and the counterclockwise neighbor of $s_3$.

  Now we claim that $s_2p_2$ and $s_3p_3$ does not intersect each other. 
  Assume to the contrary that they intersect each other at $x$.
  Since $\pi(s_3',s_2')$ separates $s_2$ and $s_3$ from $p_2$ and $p_3$,
    there are two cases: either $x$ lies in the subpolygon of $P$ induced by $\pi(s_3',s_2')$
    in which $s_2$ and $s_3$ lie or not.
  See Figure~\ref{fig:neighbor}(a-b).
  In any case, consider the pseudo-triangle with three corners $x, p_2'$ and $p_3'$.
  The path $\pi(p_3',p_2')$ is a concave chain with respect to this pseudo-triangle. 
  This contradicts that the angles at $p_3'$ and $p_2'$ are less than $\pi/2$.
  Therefore, the claim holds, and the four points $s_2', p_3, p_2$ and $s_3'$ lie on the boundary of
  $\bd P$ in clockwise order.
\end{proof}

Let $P_a$ be the subpolygon of $H_1$ whose boundary consists of
$\ch[t_2,s_3]$, $\pi(s_3,p_2)$, $P[p_2,s_3']$ and $\pi(s_3',t_2)$.
See Figure~\ref{fig:three-pairs}(b).  Similarly, let $P_b$ be the
subpolygon of $H_1$ whose boundary consists of $\ch[s_2,t_3]$,
$\pi(t_3,s_2')$, $P[s_2',p_3]$ and $\pi(p_3,s_2)$.  Let $P_c$ be the
subpolygon of $H_1$ whose boundary consists of $\ch[s_2,s_3]$,
$s_3p_3$, $P[p_3,p_2]$ and $p_2s_2$.  See
Figure~\ref{fig:three-pairs}(c).

\begin{lemma}\label{lem:neighbor}
	Every point in $P_a\setminus P_c$ has its $S_1$-farthest neighbor on
	$\ch[t_2,s_3]\cup\{s_2\}$. Similarly, every point in $P_b\setminus P_c$ has its
	$S_1$-farthest neighbor on $\ch[s_2,t_3]\cup\{s_3\}$.
\end{lemma}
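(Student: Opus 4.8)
We establish the claim for $P_a$; the claim for $P_b$ follows by the mirror‑symmetric argument, interchanging the roles of $s_2\leftrightarrow s_3$, $t_2\leftrightarrow t_3$ and $p_2\leftrightarrow p_3$. Since $H_1$ is geodesically convex and contains $S_1$, the $S_1$‑farthest neighbor of a point of $H_1$ is the same whether we measure distances in $P$ or inside $H_1$ (this is exactly what underlies Observation~\ref{obs:hourglass}), so the plan is to argue inside $H_1$. Suppose, for contradiction, that some point $x\in P_a\setminus P_c$ has an $S_1$‑farthest neighbor $s$ with $s\notin\ch[t_2,s_3]\cup\{s_2\}$. Because every site of $S_1$ lies on $\ch[s_2,s_3]$, the site $s$ then lies on $\ch[s_2,s_3]$ strictly between $s_2$ and $t_2$. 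I would then exhibit a site $s'\in\ch[t_2,s_3]\cup\{s_2\}$ with $d(x,s')\ge d(x,s)$, contradicting the choice of $s$ (more precisely, showing that $x$ cannot lie in the interior of the Voronoi cell of $s$, which is all that is needed).

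The first step is to locate where $\pi(x,s)$ leaves $P_a$. Recall that $\bd P_a$ consists of $\ch[t_2,s_3]$, $\pi(s_3,p_2)$, $P[p_2,s_3']$ and $\pi(s_3',t_2)$. I would first check that $s\notin P_a$: by the construction of $P_a$ (and Lemma~\ref{lem:separate}) the chain $\pi(s_3',t_2)\subseteq\bd H_1$ separates $P_a$ from the part of $H_1$ containing $\ch[s_2,s_3]$ minus $\ch[t_2,s_3]$, hence from $s$; equivalently $\overline{P_a}\cap\ch[s_2,s_3]=\ch[t_2,s_3]$. Since $\pi(x,s)\subseteq H_1$ cannot properly cross the sub‑chain $P[p_2,s_3']$ of $\bd P$, the geodesic $\pi(x,s)$ must meet $\ch[t_2,s_3]\cup\pi(s_3,p_2)\cup\pi(s_3',t_2)$; let $q$ be the first point of this union encountered along $\pi(x,s)$ from $x$, so that the sub‑path of $\pi(x,s)$ from $x$ to $q$ stays in $\overline{P_a}$ and $d(x,s)=d(x,q)+d(q,s)$. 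It then suffices to find a site $s'\in\ch[t_2,s_3]\cup\{s_2\}$ with $q\in\pi(x,s')$ and $d(q,s')\ge d(q,s)$, for then $d(x,s')=d(x,q)+d(q,s')\ge d(x,q)+d(q,s)=d(x,s)$.

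The remainder is a case analysis on which of the three chains contains $q$. If $q\in\ch[t_2,s_3]$, then $q$ is a convex vertex of $\ch$ and $\pi(x,s)$ turns at $q$ to ``go around'' $\ch[t_2,s_3]$ towards $s_2$; the same convex corner routes through $q$ the geodesic from $x$ to every site of $\ch[t_2,s_3]$ lying on the far side of $q$, and by the ordering lemma (Lemma~\ref{lem:ordering}) applied to $\fvd[H_1,S_1]$ the farthest such site — or $t_2$ at the end of the arc — serves as $s'$. If $q\in\pi(s_3',t_2)$, I would use that $\pi(s_3',t_2)$ is a boundary chain of the geodesically convex region $H_1$, so a geodesic from $x$ to any point lying past the $t_2$‑end of this chain — and $s$ is one such point — passes through $q$ and then through $t_2$; taking $s'=t_2\in\ch[t_2,s_3]$ and comparing $\pi(x,t_2)$ with $\pi(x,s)$ gives $d(q,t_2)\ge d(q,s)$. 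The last case, $q\in\pi(s_3,p_2)$, is where the hypothesis $x\in P_a\setminus P_c$ (rather than just $x\in P_a$) is indispensable: one shows $\pi(s_3,p_2)\subseteq\overline{P_c}$, so for $x\notin P_c$ the path $\pi(x,s)$ must in fact reach $s$ by exiting through the segment $p_2 s_2$ of $\bd P_c$, and the symmetric argument around the convex corner at $s_2$ then yields $s'=s_2$ — which is precisely why $s_2$ must appear in the statement. I expect the main obstacle to be making the two ``routing through a fixed point $q$'' claims precise for the geodesic chains $\pi(s_3',t_2)$ and $\pi(s_3,p_2)$: this requires combining the geodesic convexity of $H_1$, the ordering lemma, and Lemma~\ref{lem:separate} to be sure that $s$ and the candidate witness $s'$ lie in the same shadow of $q$ as seen from $x$.
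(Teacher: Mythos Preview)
Your approach is different from the paper's, and the case analysis contains a sign error that breaks the argument. In Case~2 ($q\in\pi(s_3',t_2)$) you assert that $\pi(x,s)$ passes through $q$ and then through $t_2$, and conclude $d(q,t_2)\ge d(q,s)$. But the former gives $d(q,s)=d(q,t_2)+d(t_2,s)>d(q,t_2)$, the \emph{opposite} inequality. The same reversal occurs in Cases~1 and~3: routing $\pi(x,s)$ through a vertex of $\ch[t_2,s_3]$ or through $s_2$ only shows that $s$ is \emph{farther} from $x$ than that vertex or than $s_2$, which is perfectly consistent with $s$ being the $S_1$-farthest neighbor of $x$ and yields no contradiction. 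More generally, the scheme ``find a farther witness $s'$ reached through the exit point $q$'' is the natural one for nearest-point diagrams but points the wrong way for farthest-point ones; there is no a~priori reason for any $s'\in\ch[t_2,s_3]\cup\{s_2\}$ to satisfy $d(q,s')\ge d(q,s)$.

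The paper instead reduces everything to a single boundary point. It proves the key claim that the $S_1$-farthest neighbor of the specific point $p_2$ already lies on $\ch[t_2,s_3]$, via a geometric contradiction that reuses the angle observation from the proof of Lemma~\ref{lem:order-footpoint} (the clockwise angle at $p_2'$ from $p_2's_2$ to the edge of $\pi(p_2',s_2')$ is below $\pi/2$) together with convexity of $\ch[s_2,t_2]$. Once this holds for $p_2$, the ordering lemma (Lemma~\ref{lem:ordering}) forces the $S_1$-farthest neighbor of every point of $P[p_2,s_3']$ into $\ch[t_2,s_3]$, and the remaining boundary chains of $P_a\setminus P_c$ are handled directly from the definitions of $s_2',s_3'$; finally Corollary~\ref{cor:ray-in-non-refined-cell} propagates the conclusion from $\bd(P_a\setminus P_c)$ to its interior.
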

\begin{proof}
  We prove only the first part of the lemma. The second part can be
  proved analogously.  We claim that the $S_1$-farthest neighbor of
  $p_2$ is in $\ch [t_2,s_3]$.  If the claim is true, every point in
  $P [p_2,s_3']$ has its $S_1$-farthest neighbor on $\ch[t_2,s_3]$ due
  to the ordering lemma~\cite[Corollary 2.7.4]{aronov1993furthest}.
  Moreover, every point on $\pi(s_3',t_2)$, $\ch[t_2,t_3]$, and
  $\pi(t_3,s_2')$ has its $S_1$-farthest neighbor on
  $\ch[t_2,s_3]\cup\{s_2\}$ due to the ordering lemma and the
  definition of $s_3'$ and $s_2'$.  Due to Corollary~\ref{cor:ray-in-non-refined-cell}, every point in $P_a\setminus P_c$ has its
  $S_1$-farthest neighbor on $\ch[t_2,s_3]\cup\{s_2\}$.
  
  To prove the claim, assume to the contrary that the
  $S_1$-farthest neighbor of $p_2$ is in
  $\ch[s_2,t_2]\setminus\{t_2\}$. 
  To make the description easier, we assume that $p_2s_2$ is
  vertical. See Figure~\ref{fig:neighbor}(c).  We first observe that
  $\ch[s_2,t_2]$ is a convex chain with respect to $\ch$.  If it is
  not true, there is some vertex of $P$ that appears on
  $\ch[s_2,t_2]\setminus\{t_2\}$, and $\pi(s_3',t_2)$ overlaps with
  $\ch[s_2,t_2]$ at the vertex, which contradicts that $\pi(s_3',t_2)$
  intersects $\ch$ only at $t_2$.
	
  \begin{figure}
    \begin{center}
      \includegraphics[width=0.9\textwidth]{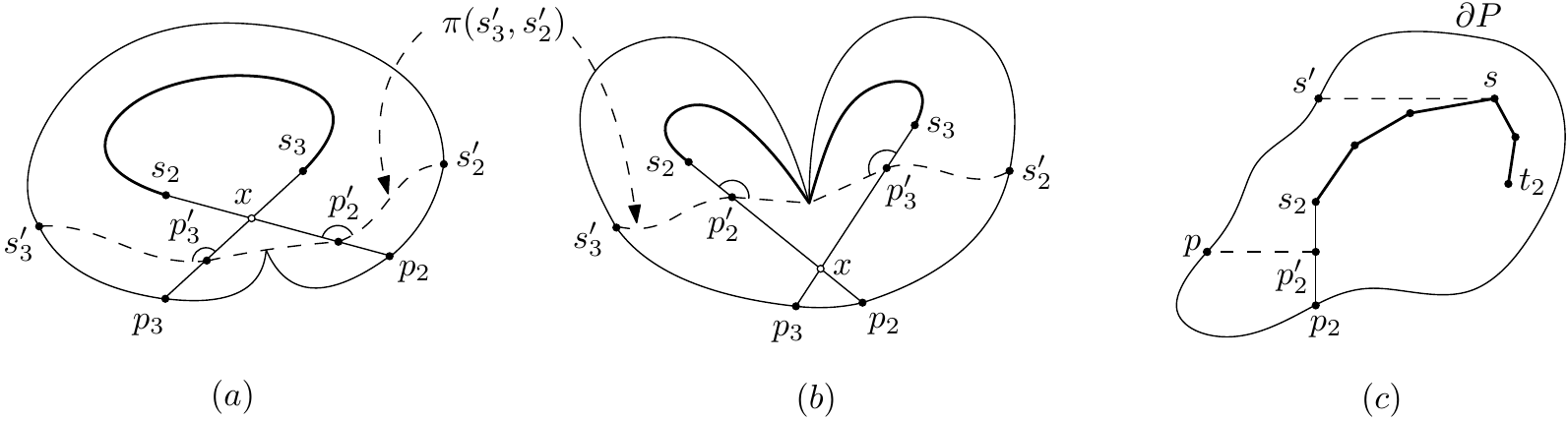}
      \caption {\small 
      	(a,b) If $s_2p_2$ and $s_3p_3$ intersect at $x$, the angles at $p_3'$
        and $p_2'$ are at least $\pi/2$.
      	(c) The proof of Lemma~\ref{lem:order-footpoint} implies
        that $s_3'$ lies on $P[p_2,p'']$.  However, since
        $\pi(s_3',t_2)$ intersects $\ch$ only at $t_2$, the point
        $s_3'$ lies on $P[s',p_2]$, which is a
        contradiction. \label{fig:neighbor}}
    \end{center}
  \end{figure}
	
  Since $\ch[s_2,t_2]$ is a convex chain and $t_2$ is not
  the $S_1$-farthest neighbor of $p_2$, $t_2$ is not the highest point of the chain.
  Let $s'$ be the first point on $\bd P$ hit by the ray from the highest point $s$ of the chain
  going to the left horizontally.
  Since $\pi(s_3',t_2)$
  intersects $\ch$ only at $t_2$, the point $s_3'$ lies on
  $P[s',p_2]$.
	
  Now, as we did in the proof of Lemma~\ref{lem:order-footpoint}, we
  consider an intersection point $p_2'$ between $\pi(s_3',s_2')$ and
  $s_2p_2$. We already showed that the clockwise angle from the line
  segment $p_2's_2$ to the edge of $\pi(p_2',s_2')$ incident to $p_2'$ is less than
  $\pi/2$.  Thus, the counterclockwise angle from $p_2's_2$ to the
  edge of $\pi(p_2',s_3')$ incident to $p_2'$ is larger than $\pi/2$.
  Let $p$ be the first point on $\bd P$ hit by the ray from $p_2'$
  going to the left horizontally.
  Since the counterclockwise angle is larger than $\pi/2$, the point
  $s_3'$ lies on $P[p_2,p]$.
	
  Since $s$ is the highest point of $\ch[s_2,t_2]$, the three points
  $p_2, p$ and $s'$ lie on the boundary of $P$ in clockwise order.
  Therefore, $P[s',p_2]$ and $P[p_2,p]$ are interior disjoint, which
  is a contradiction.
  Therefore, the claim holds.
\end{proof}

Moreover, the proof of this lemma implies the following corollary.
\begin{corollary}\label{cor:s}
  There is no point in $F_1\cap (P_a\setminus P_c)$ whose
  $S_1$-farthest neighbor is $s_2$.  Similarly, there is no point in
  $F_1\cap (P_b\setminus P_c)$ whose $S_1$-farthest neighbor is $s_3$.
\end{corollary}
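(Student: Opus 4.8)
By Lemma~\ref{lem:neighbor} every point of $P_a\setminus P_c$ already has its $S_1$-farthest neighbor in $\ch[t_2,s_3]\cup\{s_2\}$, and $s_2\notin\ch[t_2,s_3]$, so Corollary~\ref{cor:s} is equivalent to the assertion that no point of $F_1\cap(P_a\setminus P_c)$ lies in $\mathsf{Cell}(S_1,s_2)$. Since $H_1$ is geodesically convex and, by Observation~\ref{obs:hourglass}, $\fvd[H_1,S_1]$ and $\fvd[P,S_1]$ agree on $F_1$, it suffices to argue inside $H_1$. The plan is to pin down where $\mathsf{Cell}(S_1,s_2)$ touches $\bd H_1$ and then push this into the interior with Corollary~\ref{cor:ray-in-non-refined-cell}.

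The one fact I would take from the proof of Lemma~\ref{lem:neighbor} is that the $S_1$-farthest neighbor of $p_2$ lies in $\ch[t_2,s_3]$, so $p_2\notin\mathsf{Cell}(S_1,s_2)$. On the other hand $gs_2'\subseteq\mathsf{Cell}(S,s_2)$ by Lemma~\ref{lem:wall}, and as $s_2\in S_1$ this forces $gs_2'\subseteq\mathsf{Cell}(S_1,s_2)$; in particular $s_2'\in\mathsf{Cell}(S_1,s_2)$ and, symmetrically, $s_3'\in\mathsf{Cell}(S_1,s_3)$. Applying the ordering lemma (Lemma~\ref{lem:ordering}) to $(H_1,S_1)$, the cells meet the bottom side $P[s_2',s_3']$ of $F_1$ in the order $\mathsf{Cell}(S_1,s_2),\dots,\mathsf{Cell}(S_1,s_3)$ from the $s_2'$-end to the $s_3'$-end, so $\mathsf{Cell}(S_1,s_2)$ meets $P[s_2',s_3']$ in an initial sub-arc from $s_2'$ that avoids $p_2$, hence is contained in $P[s_2',p_2)$ by the clockwise order $s_2',p_3,p_2,s_3'$ of Lemma~\ref{lem:order-footpoint}. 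Since the two straight sides $gs_2',gs_3'$ of $F_1$ lie in the interior of $H_1$, we have $\bd F_1\cap\bd H_1=P[s_2',s_3']$, so $\mathsf{Cell}(S_1,s_2)$ meets $\bd H_1$ inside $\overline{F_1}$ only within $P[s_2',p_2)$.

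Now assume for contradiction that some $x\in F_1\cap(P_a\setminus P_c)$ lies in $\mathsf{Cell}(S_1,s_2)$. By Corollary~\ref{cor:ray-in-non-refined-cell} applied in $H_1$, the segment $xy$ stays inside $\mathsf{Cell}(S_1,s_2)$, where $y\in\bd H_1$ is the point hit by the ray from the $\pi(s_2,x)$-neighbor of $x$ through $x$. The chord $p_2s_2\subseteq\bd P_c$ separates $P_a\setminus P_c$ from the portion $P[s_2',p_2)$ of the bottom side: $P_a\setminus P_c$ meets $\overline{F_1}$ only on the $s_3'$-side of $p_2$ — on the sub-arc $P(p_2,s_3']$ and along the corner-cutting path $\pi(s_3',t_2)$ — whereas $\mathsf{Cell}(S_1,s_2)$, anchored at $s_2'$ and at $gs_2'$, cannot cross $gs_3'\subseteq\mathsf{Cell}(S_1,s_3)$. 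Hence $\mathsf{Cell}(S_1,s_2)\cap F_1$ would have to cross $p_2s_2$, and following the ray of Corollary~\ref{cor:ray-in-non-refined-cell} from a point of the cell just past $p_2s_2$ would produce a point of $\mathsf{Cell}(S_1,s_2)$ on $\bd H_1$ inside $P(p_2,s_3']$ — contradicting the previous paragraph, since every point of $P[p_2,s_3']$ has its $S_1$-farthest neighbor in $\ch[t_2,s_3]$. The second statement follows verbatim after replacing $s_2,p_2,t_2,P_a$ by $s_3,p_3,t_3,P_b$.

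The step I expect to be the real obstacle is the separation claim in the last paragraph: making rigorous that the chord $p_2s_2$ genuinely partitions $\overline{F_1}$ between the portion $P[s_2',p_2)$ of the bottom that carries $\mathsf{Cell}(S_1,s_2)$ and the portion $P(p_2,s_3']$ that carries $P_a\setminus P_c$, and that the ray of Corollary~\ref{cor:ray-in-non-refined-cell} really reaches $\bd P$ there rather than the corner-cutting part $\pi(s_3',t_2)$ of $\bd H_1$. This is exactly where the estimate obtained inside the proof of Lemma~\ref{lem:order-footpoint} — that the clockwise angle at the point $p_2'$ of $\pi(s_2',s_3')\cap s_2p_2$ is less than $\pi/2$, so $p_2s_2$ is met by the separating path $\pi(s_2',s_3')$ exactly once — does the work. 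Everything else is a routine combination of the ordering lemma, Lemma~\ref{lem:wall}, and Corollary~\ref{cor:ray-in-non-refined-cell}.
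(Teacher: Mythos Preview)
Your approach is essentially the one the paper has in mind: the paper gives no separate proof but simply notes that the corollary is implied by the proof of Lemma~\ref{lem:neighbor}. The relevant piece of that proof is exactly the one you single out, namely that the $S_1$-farthest neighbor of every point of $P[p_2,s_3']$ already lies in $\ch[t_2,s_3]$, so the extra ``$\cup\{s_2\}$'' in the statement of Lemma~\ref{lem:neighbor} is only needed for the corner-cutting pieces $\pi(s_3',t_2)$, $\ch[t_2,t_3]$, $\pi(t_3,s_2')$, none of which meet $F_1$ except at $s_3'$ (whose farthest neighbor is $s_3$). Your reduction via the ordering lemma and Lemma~\ref{lem:wall} is the right scaffolding.

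Where you can tighten things is precisely the step you flag as the obstacle. Rather than taking a point ``just past $p_2s_2$'' and trying to control where its Corollary~\ref{cor:ray-in-non-refined-cell} ray lands, argue directly that \emph{no point of the segment $s_2p_2$ lies in $\mathsf{Cell}(S_1,s_2)$}. Indeed, for any $y\in s_2p_2$ the geodesic $\pi(s_2,y)$ is the straight segment $s_2y$, so the ray of Corollary~\ref{cor:ray-in-non-refined-cell} from $y$ is the continuation of $s_2y$, and it hits $\bd P$ at $p_2$; if $y$ were in $\mathsf{Cell}(S_1,s_2)$ then so would $p_2$ be, contradicting the claim in the proof of Lemma~\ref{lem:neighbor}. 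Now $s_2p_2$ is the common boundary of $P_c$ and $P_a\setminus P_c$, and since $\mathsf{Cell}(S_1,s_2)$ is connected, meets $gs_2'\subset\mathsf{Cell}(S_1,s_2)$ (hence the $P_c$-side), and avoids $s_2p_2$, it cannot enter $P_a\setminus P_c$ at all. This removes the need for the angle estimate from Lemma~\ref{lem:order-footpoint} that you invoke, and makes the argument as short as the paper evidently intends.
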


After obtaining $P_a, P_b$ and $P_c$ in $O(n+m)$ time, we compute
$\fvd[H_1,S_1]$ restricted to $F_1$ as follows.  For $\fvd[P_c,S_1]$, we use the fact that
all sites of $S_1$ lie on the boundary of $P_c$. 
Since $P_c$ has $O(n)$ reflex vertices, we can compute
$\fvd[P_c,S_1]$ in $O((n+m)\log\log n)$ time using 
Theorem~\ref{thm:few}.
Then we cut $\fvd[P_c,S_1]$ along the boundary
of $F_1$.  Recall that $F_1$ is a subpolygon of $S_1$ bounded by two
line segments. Moreover, each line segment is contained in
$\mathsf{Cell}(S,s_i)$ for $i=2,3$. Thus, we can cut $\fvd[P_c,S_1]$
along these line segments and obtain $\fvd[P_c,S_1]$ restricted to
$F_1$ in $O(n+m)$ time once we have $\fvd[P_c,S_1]$. 

For $\fvd[P_a\setminus P_c, S_1]$ restricted to $F_1$, we use the fact that it coincides
with $\fvd[P_a,S_a]$ restricted to $F_1\cap (P_a\setminus P_c)$, where $S_a$ is
the set of sites of $S$ lying on $\ch[t_2,s_3]$, which follows from
Lemma~\ref{lem:neighbor} and Corollary~\ref{cor:s}. 
Note that all sites of $S_a$ lie on the
boundary of $P_a$.  Therefore, we can compute $\fvd[P_a, S_a]$ in
$O((n+m)\log\log n)$ time by Theorem~\ref{thm:few}.  Since we
already have $\fvd[P_c,S_1]$ restricted to $F_1$, we can obtain the part of
$\fvd[P_a,S_a]$ restricted to $F_1\cap (P_a\setminus P_c)$ in $O(n+m)$ by cutting
$\fvd[P_a,S_a]$ along the boundaries of $P_c$ and $F_1$.  Similarly, we can
compute $\fvd[P_b\setminus P_c,S_1]$ restricted to $F_1$ in the same time.

Since $P_c$, $P_a\setminus P_c$ and $P_b\setminus P_c$ are pairwise
interior disjoint, we can combine the three Voronoi diagrams easily
and obtain $\fvd[H_1,S_1]$ restricted to $F_1$ in $O(n+m)$ time.  Recall that our goal in
this subsection is to compute $\fvd[P,S_1]$ restricted to $F_1$.  By
Observation~\ref{obs:hourglass}, it is equivalent to 
$\fvd[H_1,S_1]$ restricted to $F_1$.
 Since
$\fvd[H_1,S_1]$ restricted to $F_1$ can be computed in $O((n+m)\log\log n)$ time, we can
compute $\fvd[P,S]$ in $O((n+m)\log\log n+m\log m)$ time in total including
the time for computing the geodesic convex hull $\ch$ of the sites.
For $n=O(m)$, we have $m\log\log n=O(m\log m)$.
For $n=\Omega(m)$, we have $m\log\log n=O(n\log\log n)$. Therefore,
we have the following theorem. 

\begin{theorem}
  The farthest-point geodesic Voronoi diagram of $m$ points in a
  simple $n$-gon can be computed in $O(n \log\log n+m\log m)$
  time.
\end{theorem}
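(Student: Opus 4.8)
The plan is to reduce the general problem to the boundary-sites case, for which Theorem~\ref{thm:few} already yields the bound $O((n+m)\log\log n)$. First I would compute the geodesic convex hull $\ch$ of $S$ in $O(n+m\log m)$ time~\cite{guibasShortestPathQueries}; since only sites on $\bd\ch$ can have nonempty Voronoi cells, I discard every site in $\interior{\ch}$, leaving $S\subseteq\bd\ch$. Next I compute the geodesic center $g$ of $S$ in $O(n+m)$ time by running the linear-time algorithm of~\cite{1-center} on $\ch$ (the center of $S$ with respect to $P$ equals the center of $\ch$ with respect to $\ch$), recovering also the at most three sites $s_1,s_2,s_3$ equidistant from $g$. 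Extending the last edge of each $\pi(g,s_i)$ past $g$ until it leaves $P$ produces three chords that cut $P$ into three interior-disjoint funnels $F_1,F_2,F_3$. Using Lemma~\ref{lem:wall}, which says each such chord lies inside a single Voronoi cell, together with connectivity of Voronoi cells and the Ordering Lemma (Lemma~\ref{lem:ordering}), every point of $F_i$ has its $S$-farthest neighbor in the set $S_i$ of sites on the boundary arc of $\ch$ cutting off $F_i$; hence it suffices to compute each $\fvd[P,S_i]$ restricted to $F_i$ and glue.

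The crux is computing $\fvd[P,S_i]$ restricted to $F_i$ when some sites of $S_i$ lie in $\interior{P}$, so Theorem~\ref{thm:few} is not directly applicable. Focus on $F_1$ (the other two funnels are symmetric). I would first replace $P$ by the geodesic convex hull $H_1$ of the subpolygon $Q_1$ bounded by $F_1$'s bottom side and the appropriate geodesic paths to $s_2,s_3$; this costs $O(n+m)$ time~\cite{toussaint}, and since $H_1$ is geodesically convex and contains both $F_1$ and $S_1$, the restrictions of $\fvd[P,S_1]$ and $\fvd[H_1,S_1]$ to $F_1$ agree (Observation~\ref{obs:hourglass}). I then exhibit three subpolygons $P_a,P_b,P_c$ covering $H_1$, each carrying its relevant sites on its own boundary: $P_c$ has all of $S_1$ on $\bd P_c$; $P_a$ has the set $S_a$ of sites on $\ch[t_2,s_3]$ on $\bd P_a$, and $P_b$ is symmetric. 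The correctness rests on Lemma~\ref{lem:order-footpoint} (the footpoints $s_2',p_3,p_2,s_3'$ occur in clockwise order), Lemma~\ref{lem:neighbor}, and Corollary~\ref{cor:s}: every point of $F_1\cap(P_a\setminus P_c)$ has its $S_1$-farthest neighbor in $\ch[t_2,s_3]\cup\{s_2\}$, and not $s_2$, so $\fvd[P_a\setminus P_c,S_1]$ restricted to $F_1$ equals $\fvd[P_a,S_a]$ restricted to $F_1\cap(P_a\setminus P_c)$ (symmetrically for $P_b$), while on $P_c$ all sites already lie on the boundary. For each of $P_a,P_b,P_c$ I invoke Theorem~\ref{thm:few} in $O((n+m)\log\log n)$ time and clip the output along the boundaries of $P_c$ and $F_1$; the clipping is $O(n+m)$ because those boundaries are chords lying in single Voronoi cells (Lemma~\ref{lem:wall}).

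Finally I glue: $P_c$, $P_a\setminus P_c$, and $P_b\setminus P_c$ are interior-disjoint and cover $H_1$, so the three clipped diagrams merge into $\fvd[H_1,S_1]$ restricted to $F_1$, hence $\fvd[P,S_1]$ restricted to $F_1$, in $O((n+m)\log\log n)$ time; and $F_1,F_2,F_3$ are interior-disjoint and cover $P$, so merging the three funnel diagrams gives $\fvd[P,S]$ in the same asymptotic time, plus the one-time $O(n+m\log m)$ for $\ch$. For the final bookkeeping, $m\log\log n=O(m\log m)$ when $n=O(m)$ and $m\log\log n=O(n\log\log n)$ when $n=\Omega(m)$, so the total collapses to $O(n\log\log n+m\log m)$.

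I expect the main obstacle to be the planar case analysis behind the three-subpolygon decomposition — proving Lemma~\ref{lem:order-footpoint} and Lemma~\ref{lem:neighbor}, which need the sub-$\pi/2$ angle bounds at the footpoints $p_2',p_3'$ (via~\cite{pollackComputingCenter}) and a careful argument that the extended edges $s_2p_2$ and $s_3p_3$ do not cross — rather than any algorithmic step, each of which is a direct call to an already-established subroutine (Theorems~\ref{thm:restrict-boundary},~\ref{thm:boundary},~\ref{thm:few}).
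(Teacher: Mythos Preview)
Your proposal is correct and mirrors the paper's own proof essentially step for step: the same reduction via $\ch$ and the geodesic center $g$, the same three-funnel decomposition (Lemma~\ref{lem:wall} plus the Ordering Lemma), the same passage to $H_1$ via Observation~\ref{obs:hourglass}, and the same three-subpolygon cover $P_a,P_b,P_c$ justified by Lemmas~\ref{lem:order-footpoint} and~\ref{lem:neighbor} and Corollary~\ref{cor:s}, each handled by Theorem~\ref{thm:few}. Your identification of the angle-and-noncrossing analysis behind Lemmas~\ref{lem:order-footpoint} and~\ref{lem:neighbor} as the technical heart is also exactly where the paper spends its effort.
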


\bibliographystyle{abbrvnat} 

\end{document}